\def\cmntsoff{}
\providecommand{\ignore}[1]{}
\newif\ifcmnt
    \providecommand{\aucmnt}[1]{#1}
    \providecommand{\Pcolor}{\color{gray}}
    \providecommand{\aucmnt}[1]{}
    \providecommand{\Pcolor}{}
\newcommand{\Pc}[1]{\aucmnt{{\Pcolor \textbf{[}Permanent comment: {\color{gray}#1}\textbf{]}}}}
\newcommand{\Prob}{\mathbb{P}}
\newcommand{\Exp}{\mathbb{E}}
\newcommand{\Var}{\mathrm{Var}}
\newcommand{\Probv}{\mathbb{P}}
\newcommand{\LR}{\mathrm{LR}}
\newcommand{\TV}{\mathrm{TV}}
\newcommand{\TMPS}{\mathrm{TMPS}}
\newcommand{\UPE}{\mathrm{UPE}}
\newcommand{\Unif}{\mathrm{Unif}}
\newcommand{\knuth}[1]{\left\llbracket #1 \right\rrbracket}
\newcommand{\epse}{\epsilon_{h}}
\newcommand{\epsx}{\epsilon_{x}}
\newcommand{\sige}{\sigma_{h}}
\newcommand{\nats}{\mathbb{N}}
\newcommand{\Rng}{\mathrm{Rng}}
\newcommand{\Supp}{\mathrm{Supp}}
\newcommand{\cCvx}{\overline{\mathrm{Cvx}}}
\newcommand{\xtrm}[1]{\mathrm{Extr}(#1)}
\newcommand{\Sfnt}[1]{\mathbf{#1}} 
\newcommand{\Pfnt}[1]{\textsf{#1}} 
\newcommand{\qvbar}{{|}}
\newcommand{\qrangle}{\rangle}
\newcommand{\qlangle}{\langle}
\newcommand{\ket}[1]{\qvbar{#1}\qrangle}
\newcommand{\bra}[1]{\qlangle{#1}\qvbar}
\newcommand{\ketbra}[2]{\ket{#1}\bra{#2}}
\newcommand{\cB}{\mathcal{B}}
\newcommand{\cC}{\mathcal{C}}
\newcommand{\cD}{\mathcal{D}}
\newcommand{\cE}{\mathcal{E}}
\newcommand{\cale}{\mathit{x}}
\newcommand{\cH}{\mathcal{H}}
\newcommand{\cM}{\mathcal{M}}
\newcommand{\cN}{\mathcal{N}}
\newcommand{\cO}{\mathcal{O}}
\newcommand{\cP}{\mathcal{P}}
\newcommand{\cQ}{\mathcal{Q}}
\newcommand{\cS}{\mathcal{S}}
\newcommand{\cU}{\mathcal{U}}
\newcommand{\cV}{\mathcal{V}}
\newcommand{\cX}{\mathcal{X}}
\newtheorem{theorem}{Theorem}
\newtheorem*{theorem*}{Theorem}
\newtheorem{lemma}[theorem]{Lemma}
\newtheorem*{lemma*}{Lemma}
\newtheorem{corollary}[theorem]{Corollary}
\newtheorem{definition}[theorem]{Definition}
\newtheorem*{definition*}{Definition}
\begin{document}

\title{Quantum Randomness Generation by Probability Estimation with
  Classical Side Information}

\author{Emanuel Knill}
\affiliation{National Institute of Standards and Technology, Boulder, Colorado 80305, USA}
\affiliation{Center for Theory of Quantum Matter, University of Colorado, Boulder Colorado 80309, USA}
\author{Yanbao Zhang}
\affiliation{NTT Basic Research Laboratories, NTT Corporation, 3-1
  Morinosato-Wakamiya, Atsugi, Kanagawa 243-0198, Japan}
\affiliation{NTT Research Center for Theoretical Quantum Physics, NTT Corporation, 
  3-1 Morinosato-Wakamiya, Atsugi, Kanagawa 243-0198, Japan}
\author{Peter Bierhorst}
\affiliation{National Institute of Standards and Technology, Boulder, Colorado 80305, USA}

\begin{abstract}
  We develop a framework for certifying randomness from Bell-test  trials based on directly estimating the probability of the  measurement outcomes with adaptive test supermartingales. The number  of trials need not be predetermined, and one can stop performing  trials early, as soon as the desired amount of randomness is  extractable. It can be used with arbitrary, partially known and  time-dependent probabilities for the random settings  choices. Furthermore, it is suitable for application to experimental  configurations with low Bell violation per trial, such as current  optical loophole-free Bell tests. It is possible to adapt to  time-varying experimental parameters.  We formulate the framework  for the general situation where the trial probability distributions  are constrained to a known set. Randomness expansion with  logarithmic settings entropy is possible for many relevant  configurations.  We implement probability estimation numerically and  apply it to a representative settings-conditional probability  distribution of the outcomes from an atomic loophole-free Bell test  [Rosenfeld et al., Phys. Rev. Lett. 119:010402 (2017),  arXiv:1611.04604 (2016)] to illustrate trade-offs between the amount  of randomness, error, settings entropy, unknown settings biases, and  number of trials. We then show that probability estimation yields  more randomness from the loophole-free Bell-test data analyzed in  [Bierhorst et al., arXiv:1702.05178 (2017)] and tolerates  adversarial settings probability biases.
\end{abstract}

\maketitle
\tableofcontents

\section{Overview}
\label{sec:intro}

\subsection{Introduction}

Device-independent quantum randomness generation exploits the fact
that there are quantum correlations with measurement outcomes that are
necessarily non-deterministic with respect to prior side information.
This non-determinism is ensured by non-signaling
constraints that can be enforced by causal separation of the relevant
events and certified by tests based on freely chosen measurement
settings. The certified randomness can
then be extracted by means of classical algorithms.
Device-independence means that the physical devices producing and
measuring the quantum correlations can be obtained from untrusted
manufacturers, without affecting the desired randomness
properties. Device-independent quantum randomness generation was
introduced in Colbeck's thesis~\cite{colbeck:2007} and has since been
studied by many researchers. See Refs.~\cite{brunner:qc2014a,acin:qc2016a} for reviews.

The types of randomness generation protocols are diverse and depend on
the context and assumptions. In general, a randomness generation
protocol produces a string of $\sigma$ bits certified to be uniformly
random within a total variation distance $\epsilon$. It requires
devices whose behaviors are known to satisfy restrictions based on
physical principles such as speed-of-light limits on communication.
Device-independence also requires some random input bits, usually used
for measurement settings choices, where different assumptions can be
made on their quality and source.  The certification is conditional on
the device-behavior restrictions as well as assumptions on outside
entities with respect to which the string of generated bits is to be
random. For example, it matters whether these entities are classical
or quantum and what type of access they had to the protocol devices in
the past.  This work considers entities holding classical side
information. That outside entities have only classical information can
be justified if they are not quantum-capable, if we built the devices
and the entities never had access to them, or if we verified the
absence of long-term quantum memory in the devices.  See
Sect.~\ref{subsec:ass_belltests}.  We assume that the source of input
bits is random relative to outside entities at the time they last
interacted with the protocol devices.

For device-independent randomness generation, we use experiments
modelled on Bell tests~\cite{Bell}. These tests are designed to show
that there are quantum correlations that cannot be mixtures of locally
determined probability distributions referred to as ``local
realistic'' (LR) distributions. See Ref.~\cite{genovese:qc2005a} for a
review.  A Bell test consists of a sequence of trials (sometimes
called ``rounds'') where two or more stations make measurements on a
shared physical state with randomly chosen measurement settings.  For
full device-independence, each trial needs to satisfy that the
different stations' time intervals between applying the settings to
the devices and determining the measurement outcomes are space-like
separated, preventing any communication between them. Furthermore,
trials must be committed to in advance, so that it is not possible to
postselect them on a success criterion.  Because these conditions
require large separation and/or fast devices as well as
high-efficiency measurements, only recently has it become possible to
perform successful Bell tests satisfying these criteria. Such Bell
tests may be referred to as ``loophole-free'', and the list of
successful loophole-free Bell tests includes ones based on heralded
atom entanglement~\cite{hensen:2015,rosenfeld:qc2016a} and ones
utilizing entangled photon-pairs with high-efficiency
detectors~\cite{giustina:2015,shalm:2015}.

Experimental certified randomness was first demonstrated in
Ref.~\cite{pironio:2010} (see also
Refs.~\cite{fehr:2013,pironio:2013}) with pairs of ions located in
separate traps. This demonstration claimed the presence of $42$ bits
of entropy with an error of $0.01$ in their string of measurement
outcomes, with respect to classical side information and restricting
correlations to quantum achievable ones. Extraction would have reduced
the number of bits produced and increased the error by an amount that
was not determined. Recently, we and our collaborators demonstrated
end-to-end randomness extraction~\cite{bierhorst:qc2017a}, producing
$256$ bits within $0.001$ of uniform from one of the data sets from
the loophole-free Bell test reported in Ref.~\cite{shalm:2015}. These
bits are certified with respect to classical side information and
non-signaling assumptions, which in principle allows for super-quantum
correlations.  Extracting randomness from today's optical
loophole-free Bell tests required the theoretical advances in
Ref.~\cite{bierhorst:qc2017a} to deal with the fact that each trial
demonstrates very little violation of Bell inequalities. Previous
works were not sufficient for certifying entropy without increasing
the number of trials by orders of magnitude. A specific comparison is
in Ref.~\cite{bierhorst:qc2017a}.

A benefit of the theory developed in Ref.~\cite{bierhorst:qc2017a} is
that it allows for an adaptive protocol that can track changes in the
trial statistics during the protocol. This is helpful in
current experiments, where we find that
measurable drifts in parameters can wipe out a randomness
certificate if not accounted for. The fact that the protocol can adapt
is inherited from its use of the ``probability-based ratio'' protocol
for obtaining $p$-value bounds against local realism (LR) in Bell
tests~\cite{zhang:2011,zhang:2013}.  Here we develop a different class
of randomness generation protocols based on ``probability
estimation.''  Probability estimation involves obtaining
high-confidence-level upper bounds on the actual probability of the
measurement outcomes given the known constraints on the
distributions. We show that randomness generation can be reduced to
probability estimation. Since probability estimation is a statistical
estimation problem, we then take advantage of the theory of test
supermartingales~\cite{shafer:qc2009a} to bypass the framework of Bell
inequalities and directly determine probability estimators expressed
as products of \emph{probability estimation factors} (PEFs). PEFs are
functions of a trial's settings and outcomes and provide a way of
multiplicatively accumulating probability estimates trial-by-trial.
While relationships between PEFs and Bell inequalities exist,
characteristic measures of quality for Bell inequalities, such as
violation signal-to-noise, winning probability or 
statistical strength for rejecting LR, are not good measures of PEF performance.  We
develop tools for obtaining PEFs. In particular, we show that when the
distributions of settings and outcomes are constrained to a convex
polytope, PEFs can be effectively optimized with convex optimization
over the polytope given one parameter, the ``power'' (see
Def.~\ref{def:pef}).  The optimization can explicitly take into
account the number of trials and the error goal.  In the limit where
the power parameter goes to zero, an asymptotic rate is obtained that
can be interpreted as the optimal rate for producing entropy for
random bits.  This generalizes and improves on min-entropy estimators
for Bell configurations described in works such as
Refs.~\cite{nieto:2014,bancal:2014,nieto-silleras:qc2016a}, which are
optimal for single-trial min-entropy estimation.

In a large class of situations including the standard Bell-test
configurations, PEFs directly lead to exponential expansion of input
randomness, as expected from previous
works~\cite{vazirani:qc2012b,miller_c:qc2014a,miller_c:qc2014b}, which
prove exponential expansion with more trials or worse settings entropy
than PEFs but secure against quantum side information. We prove that
asymptotically, the settings entropy can be logarithmic in the output
entropy. This is the best result so far for randomness expansion
without using a cross-feeding protocol, which can accomplish infinite
expansion~\cite{coudron:2014,chung:2014}.  To accomplish exponential
expansion we use highly biased settings distributions.  We point out
that it is not necessary to have independent and identically
distributed (i.i.d.) settings choices.  In particular, if the settings
are obtained by choosing, ahead of time, a random ``test'' trial among
a block of $2^{k}$ trials, we can achieve good expansion while eliminating
the need for decompressing uniformly random input bits into a stream
of highly biased and independent ones.

As a demonstration of the power of probability estimation, we show how
it would perform on a representative example for distributions
achieved in loophole-free Bell experiments with atoms based on
heralded entanglement~\cite{rosenfeld:qc2016a}. We then apply it to
the main data set from Ref.~\cite{shalm:2015} analyzed in
Ref.~\cite{bierhorst:qc2017a}, showing that we could improve the
amount of randomness extracted substantially, while ensuring that the
certificates are valid even if the input randomness is biased, a
problem that was noticed and accounted for in the report on the
loophole-free Bell test for this experiment~\cite{shalm:2015}.
Finally, we reanalyze the data from the first demonstration of
certified experimental randomness in a 
Bell test free of the detection loophole but subject to the locality 
loophole, which was based on ions~\cite{pironio:2010}. We demonstrate 
significantly more randomness than reported in this reference. 
These examples demonstrate that probability estimation
is a practical way for implementing entropy production in randomness
generation.

Our framework is in the spirit of the entropy accumulation framework
of Ref.~\cite{arnon-friedman:2018}, but takes
advantage of the simplifications possible for randomness generation
with respect to classical side information. In particular, the outside
entities have no interaction with the protocol devices after the
protocol starts, and the framework can be cast purely in terms of
random variables without invoking quantum states. This avoids the
complications of a full representation of the protocol in terms of
quantum processes. With these simplifications, our framework applies
to any situation with known constraints on past-conditional
probability distributions and accumulates the probability estimates
trial-by-trial. We interpret these estimates as smooth min-entropy
estimates, but prefer to certify the extracted randomness directly. 

In the entropy accumulation framework, the relevant estimators, called
min-tradeoff functions, must be chosen before the protocol, and the
final certificate is based on the sum of statistics derived from these
functions. Finding suitable min-tradeoff functions is in general
difficult.  In the probability estimation framework, probability
estimators can be adapted and accumulate multiplicatively.  For
relevant situations, PEFs are readily obtained and the tradeoff
between randomness and error can be effectively optimized.

The analog of
min-tradeoff functions in the probability estimation framework are
entropy estimators.  We show that logarithms of PEFs are
proportional to entropy estimators, and essentially all entropy
estimators are related to PEFs in this way.  In this sense, there
is no difficulty in finding entropy estimators.  However, PEFs are
more informative for applications, so except for illuminating
asymptotic behavior, there is little to be
gained by seeking entropy estimators directly.

A feature of entropy accumulation is optimality of asymptotic rates
for min-tradeoff functions. Probability estimation also achieves
optimal asymptotic rates. In both cases, the tradeoff between
error and amount of randomness makes these asymptotic rates less
relevant, which we demonstrate for probability estimation on
the Bell-test examples mentioned above. Furthermore, the framework can be used with any
randomness-generating device with randomized measurements to verify
the behavior of the device subject to trusted physical constraints.
The only requirement is that the constraints can be formulated as
constraints on the probability distributions and are
sufficiently strong to allow for randomness certification.  

The remainder of the paper is structured as follows. 
We summarize the main results in Sect.~\ref{s:somr}.  We lay out our
notation and define the basic concepts required for the probability
estimation framework in Sect.~\ref{sec:prelims}.  This section
includes introductions to less familiar material on classical smooth
min-entropies, test martingales and the construction of test
  martingales from Bell inequalities.  In Sect.~\ref{sec:probest}, we define exact and soft
probability estimation and show how randomness generation can be
reduced to probability estimation. The measurement outcomes can be fed
into appropriate randomness extractors, where the number of
near-uniform random bits is naturally related to the probability
estimate. We give three protocols that compose
probability estimators with randomness extractors. The
first is based on general relationships between probability estimation
and smooth min-entropy and reprises techniques from
Refs.~\cite{konig:2008,pironio:2013,bierhorst:qc2017a}.  The second
relies on banked randomness to avoid the possibility of protocol
failure. The third requires extractors that are well-behaved on
uniform inputs to enable a direct analysis of the composition.
Although we do not demonstrate an end-to-end randomness-generation
protocol including extraction here, our goal is to provide all the
information needed for implementing such a protocol in future work,
with all relevant constants given explicitly.  Sect.~\ref{sec:prots}
shows how to perform probability estimation for a sequence of trials
by means of implicit test supermartingales defined adaptively. The
main tool involves PEFs to successively accumulate probability
estimates.  The main results involve theorems showing how PEFs can be
``chained'' to form probability estimates.  PEFs are readily
constructed for distribution constraints relevant in Bell-test
configurations. We proceed to an exploration of basic PEF properties
in Sect.~\ref{sec:pef_props}, where we find that there is a close
relationship between the rates for PEFs and those for a class of
functions called ``entropy estimators'', which are the analog of
min-tradeoff functions in our framework.  We establish that for error
that is $e^{-o(n)}$, the achievable asymptotic rates are
optimal. Next, in Sect.~\ref{sec:pef_constructions}, we consider a
family of PEFs constructed from Bell functions whose expectations
bound maximum conditional probabilities in one trial. In Sect.~\ref{sec:reduce_entropy} we show that this
family can be used for exponential expansion by means of highly
  biased settings choices. Given a constant error
bound, the settings distribution can be interpreted as a random choice
of a constant (on average) number of test trials with uniformly
  random settings, where the remaining trials have fixed settings.
We note that this is a theoretical proof-of-principle of exponential
expansion. In practice, we prefer to numerically optimize the PEFs with respect to the desired
error and calibrated experimental distribution.  The final section
Sect.~\ref{sec:apps} explores the three examples mentioned above.

\subsection{Summary of Main Results}\label{s:somr}

This manuscript aims to establish the foundations of probability
estimation and contains a large number of mathematical results based
on mathematical concepts introduced here.  In this section we
summarize the main results without precise definitions, in more
familiar terms and with less generality than the probability
estimation framework established later.

The context for our work consists of experiments where a sequence of
trials is performed. In each trial, settings are chosen according to a
random variable (RV) $Z$ and outcomes are obtained according to an RV
$C$. The sequences of outcomes and settings obtained in the experiment
are denoted by $\Sfnt{C}$ and $\Sfnt{Z}$, where for $n$ trials,
$\Sfnt{C}=(C_{i})_{i=1}^{n}$ and $\Sfnt{Z}=(Z_{i})_{i=1}^{n}$, with
$C_{i}$ and $Z_{i}$ the $i$'th trial's outcomes and settings. The main
example of such an experiment is the standard Bell test, where there
are two physically separated stations.  In a trial, the stations
randomly select measurement settings $X$ and $Y$ (respectively) and
the stations' devices produce outcomes $A$ and $B$ (respectively).  In
this case $Z=XY$ and $C=AB$.  The physical separation of the stations
and the physics of the measurement devices constrain the distributions
of $ABXY$ to be non-signaling once loopholes are accounted
for. Classical devices are further constrained by LR, which can be
violated by quantum devices.  This violation is associated with
randomness that can be exploited for randomness
generation~\cite{colbeck:2007}, see the introduction above.  Here we
consider randomness generation where the generated bits are random
relative to any external entity holding classical side information
$E$.

The traditional approach to randomness generation is to first derive a
bound on the smooth min-entropy of the outcomes conditional on
settings and $E$ from the statistics of the observed value $\Sfnt{cz}$
of $\Sfnt{CZ}$ given constraints on the joint distribution of
$\Sfnt{CZ}$ and $E$. The smooth conditional min-entropy
$H_{\min,\mu}^{\epsilon}(\Sfnt{C}|\Sfnt{Z}E)$ for the joint
distribution $\mu$ is given by the negative logarithm of the maximum
probability of $\Sfnt{C}$ given $\Sfnt{Z}$ and $E$, averaged over
$\Sfnt{Z}$ and $E$, up to an error bound $\epsilon$, which is the
smoothness parameter.  It can be formally defined as the maximum
$\lambda\geq 0$ such that there exist $\mu'$ and $\nu$ where $\mu'$ is
within total variation distance $\epsilon$ of $\mu$ and
$\mu'(\Sfnt{c}\Sfnt{z}e)\leq 2^{-\lambda}\nu(\Sfnt{z}e)$ for all
$\Sfnt{cz}e$. (A more involved but equivalent definition based on
maximum probabilities is given in
Sect.~\ref{sec:minentropy_extraction},
Def.~\ref{def:smoothminentropy}.) The smooth conditional min-entropy
characterizes the number of near-uniform random bits that can be
extracted from the outcomes with a randomness extractor applied after
obtaining the min-entropy bound.  The distance from uniform of the
random bits obtained is parametrized by an error bound that, in the
simplest case, is the sum of the error bound of the smooth min-entropy
and an error parameter of the extractor used to obtain the random
bits. Here we reduce the problem of obtaining a smooth conditional
min-entropy bound to that of estimating the conditional probability
$\Prob(\Sfnt{c}|\Sfnt{z}e)$ for the observed values $\Sfnt{cz}$,
independent of the value $e$ of $E$.

Constraints on the joint distribution of $\Sfnt{CZ}$ and $E$ are
determined by a statistical model $\cH$ consisting of the allowed
joint probability distributions, which may enforce non-signaling
conditional on $E$ and other constraints such as that the conditional
distributions are quantum achievable by causally separated devices
sharing an initial state. Given $\cH$, a level-$\epsilon$
(conditional) probability estimator for $\cH$ and $\Sfnt{C}|\Sfnt{Z}$
is a function $U:\Sfnt{cz}\mapsto U(\Sfnt{cz})\in[0,1]$ such that for
all $\mu\in\cH$ and all values $e$ of $E$, the probability that
$\Sfnt{CZ}$ takes a value $\Sfnt{cz}$ for which $U(\Sfnt{cz})\geq
\Prob_{\mu}(\Sfnt{C}=\Sfnt{c}|\Sfnt{Z}=\Sfnt{z},E=e)$ is at least
$1-\epsilon$. The probability estimate $U(\Sfnt{cz})$ differs from a
smooth min-entropy estimate in that the quantity being estimated
depends on the data $\Sfnt{cz}$, while the smooth conditional
min-entropy is a characteristic of the overall distribution $\mu$. Our
first result is that one can obtain a smooth min-entropy estimate from
a probability estimate.
\begin{lemma*}
  \emph{(Lem.~\ref{lm:esupe_fail_esmaxprob})} Consider a
  level-$\epsilon$ probability estimator $U$ for $\cH$ and
  $\Sfnt{C}|\Sfnt{Z}$, $\mu\in\cH$ and $\kappa=\Prob_{\mu}(U\leq p)$.
  Define the probability distribution $\mu'$ by
  $\mu'(\Sfnt{cz}e)=\mu(\Sfnt{cz}e|U\leq p)$ for all $\Sfnt{cz}e$. Then
  $H_{\min,\mu'}^{\epsilon/\kappa}(\Sfnt{C}|\Sfnt{Z}E)\geq
  -\log(p/\kappa)$.
\end{lemma*}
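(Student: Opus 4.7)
The plan is to build a (possibly sub-normalized) distribution $\mu''$ that is within total variation distance $\epsilon/\kappa$ of $\mu'$ and that witnesses the claimed smooth min-entropy bound via Def.~\ref{def:smoothminentropy}. First I would unpack the hypothesis that $U$ is a level-$\epsilon$ probability estimator: for every fixed value $e$ of $E$,
\begin{equation*}
\Prob_\mu\!\bigl(U(\Sfnt{CZ}) \geq \Prob_\mu(\Sfnt{C}|\Sfnt{Z},E=e)\,\bigm|\,E=e\bigr) \;\geq\; 1-\epsilon.
\end{equation*}
Averaging this estimate over $e$ under $\mu$ shows that the ``bad'' event $B := \{U < \Prob_\mu(\Sfnt{C}|\Sfnt{Z}E)\}$ satisfies $\mu(B) \leq \epsilon$; this is the only step that actually exploits the estimator definition.

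Next, I would intersect the conditioning event $\{U \leq p\}$ with $B^c$: set $G := \{U \leq p\} \cap B^c$, so that $\mu(G) \geq \kappa - \epsilon$ by a union bound, and define $\mu''(\Sfnt{cz}e) := \mu'(\Sfnt{cz}e)\,\one_G(\Sfnt{cz},e)$, a sub-normalization of $\mu'$ that differs from $\mu'$ only on $G^c$. The total variation distance between $\mu'$ and $\mu''$ is then at most $\mu'(G^c) \leq \epsilon/\kappa$, as required by the smoothness parameter. On the support $G$ of $\mu''$, combining $U(\Sfnt{cz}) \leq p$ with $\Prob_\mu(\Sfnt{c}|\Sfnt{z}e) \leq U(\Sfnt{cz})$ yields
\begin{equation*}
\mu''(\Sfnt{cz}e) = \frac{\Prob_\mu(\Sfnt{c}|\Sfnt{z}e)\,\mu(\Sfnt{z}e)}{\kappa} \;\leq\; \frac{p}{\kappa}\,\mu(\Sfnt{z}e),
\end{equation*}
while off $G$ one has $\mu''(\Sfnt{cz}e)=0$. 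Taking $\nu(\Sfnt{z}e) := \mu(\Sfnt{z}e)$ as the $\Sfnt{Z}E$-marginal produces a pair $(\mu'',\nu)$ exhibiting $\mu''(\Sfnt{cz}e) \leq 2^{-\lambda}\,\nu(\Sfnt{z}e)$ with $\lambda = -\log(p/\kappa)$, so $H_{\min,\mu'}^{\epsilon/\kappa}(\Sfnt{C}|\Sfnt{Z}E) \geq -\log(p/\kappa)$.

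The main technical obstacle is the bookkeeping around sub-normalized distributions and the exact total-variation convention used in Def.~\ref{def:smoothminentropy}: one must verify that the restriction $\mu''$ is an admissible comparator there (and, if a normalized version is required, that renormalizing does not degrade the pointwise bound by more than the tolerated slack). Modulo this convention check, every remaining step is a one-line union-bound or substitution. It is also worth noting that the conclusion is informative only when $p<\kappa$; for $p\geq\kappa$ one has $-\log(p/\kappa)\leq 0$ and the inequality is vacuous.
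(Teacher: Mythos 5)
Your construction is, in spirit, the same as the paper's proof of Lem.~\ref{lm:esupe_fail_esmaxprob} specialized to hard (non-soft) estimators: restrict $\mu'$ to the good event $G$ where the estimator both succeeds and is at most $p$, note that there $\mu'(\Sfnt{cz}e)=\mu(\Sfnt{c}|\Sfnt{z}e)\mu(\Sfnt{z}e)/\kappa\leq(p/\kappa)\mu(\Sfnt{z}e)$, and smooth away the complement, whose $\mu'$-mass is at most $\epsilon/\kappa$. All of those steps are correct, and the use of the estimator property (uniformly in $e$, then averaged) is exactly right.

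The gap is the step you defer as a ``convention check.'' Both Def.~\ref{def:smoothminentropy} and the formal definition quoted in the overview require the smoothed witness to be a \emph{normalized} distribution within $\TV$ distance $\epsilon/\kappa$ of $\mu'$, and promoting your subnormalized $\mu''$ to such a witness is not cosmetic. The fix you hint at, renormalizing, does not work: dividing by the weight $w=w(\mu'')\geq 1-\epsilon/\kappa$ inflates the pointwise cap to $(p/(\kappa w))\mu(\Sfnt{z}e)$, which only yields $H_{\min,\mu'}^{\epsilon/\kappa}(\Sfnt{C}|\Sfnt{Z}E)\geq-\log(p/\kappa)+\log(w)$, strictly short of the claimed bound (the lemma asserts exactly $-\log(p/\kappa)$, with no slack). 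The correct move, which is what the paper does through Lem.~\ref{lm:tvsub} (applied per $z$ and per $e$ in its more general soft setting), is to add the missing mass $1-w$ back in \emph{underneath the cap}: find a normalized $\nu\geq\mu''$ with $\nu(\Sfnt{cz}e)\leq(p/\kappa)\mu(\Sfnt{z}e)$; then Lem.~\ref{lm:tvdist_weight} gives $\TV(\nu,\mu')\leq\epsilon/\kappa$, and summing the cap over $\Sfnt{z}e$ gives average $\Sfnt{Z}E$-conditional max-prob at most $p/\kappa$. This completion is possible only because the total room under the cap, $(p/\kappa)|\Rng(\Sfnt{C})|-w$, is at least $1-w$, which uses the hypothesis $p\geq 1/|\Rng(\Sfnt{C})|$ carried by the full Lem.~\ref{lm:esupe_fail_esmaxprob} (suppressed in the overview statement you were given). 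That hypothesis cannot be avoided: any normalized distribution has average conditional max-prob at least $1/|\Rng(\Sfnt{C})|$, so for $p/\kappa<1/|\Rng(\Sfnt{C})|$ the stated inequality is false, and any correct proof must invoke it where yours currently waves it through. (A minor further difference: the paper's per-$z$, per-$e$ construction also keeps the witness's $\Sfnt{Z}E$-marginal equal to $\mu'[\Sfnt{Z}E]$, a property it uses in later remarks; your joint construction need not provide this, though it is not needed for the lemma itself.)
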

We establish this lemma for the larger class of soft probability
estimators, which provide extensions that may be useful in some
applications. In particular, softness enhances adaptability and
enables use of trial information not determined by $\Sfnt{CZ}$.  For
simplicity, we do not consider softening in this overview. 

Here is a sketch of one way to generate randomness from $\Sfnt{C}$
using the lemma above.  First determine a level-$\epsilon^{2}$
probability estimator $U$, then run an experiment to obtain an
instance $\Sfnt{cz}$ of $\Sfnt{CZ}$. If $U(\Sfnt{cz})> p$, the
protocol failed. If not, apply a classical-proof extractor $\cE$ to
$\Sfnt{c}$ with input min-entropy $-\log_{2}(p/\epsilon)$ to produce
random bits. The number of random bits produced can be close to the
input min-entropy.  The definition and properties of extractors are
summarized in Sect.~\ref{sec:minentropy_extraction}.  The parameters
chosen ensure that if the probability of success is at least
$\epsilon$, then conditional on success, the random bits produced are
uniform within $\TV$ distance $\epsilon+\epsilon_{x}$, where
$\epsilon_{x}$ is the extractor error.  In
Sect.~\ref{subsec:protocols}, we provide details for three protocols
for randomness generation from probability estimators that take
advantage of features of probability estimation to improve on the
sketch just given.

We previously developed a powerful method for constructing Bell
functions that optimize the statistical strength for testing LR by
multiplying ``probability based
ratios''~\cite{zhang:2011,zhang:2013}. The method can be seen as an
application of the theory of test
supermartingales~\cite{shafer:qc2009a}.  The definitions and basic
properties of test supermartingales are given in
Sect.~\ref{subsec:supermart}.  Here we show that this theory can be
applied to the problem of constructing probability estimators.  The
basic tool is to construct \emph{probability estimation factors} (PEFs) that
are computed for each trial of an experiment. Let $\cC$ be the trial
model.  A PEF with power $\beta>0$ for $\cC$ and $C|Z$ is a
non-negative function $F:cz\mapsto F(cz)$ such that for all
$\mu\in\cC$ we have $\Exp_{\mu}(F(CZ)\mu(C|Z)^{\beta})\leq 1$. The
fundamental theorem of PEFs is that they can be ``chained'' over
trials to construct probability estimators.  When chaining trials, the
experimental model $\cH$ is constructed from individual trial models
$\cC$ by requiring that each trial's probability distribution
conditional on the past is in $\cC$. The trial models and PEFs may
depend on past settings and outcomes.  Conditioning on settings
requires an additional conditional independence property as specified
in Sect.~\ref{subsect:standardmodels}.  A simplified version of the
fundamental theorem for the case where all trial models and PEFs are
the same is the following:
\begin{theorem*}
  \emph{(Thm.~\ref{thm:uest_constr})} Let $F$ be a PEF with power
  $\beta$ for $\cC$ and $C|Z$ and define
  $T(\Sfnt{CZ})=\prod_{i=1}^{n}F(C_{i}Z_{i})$. Then $1/(\epsilon
  T(\Sfnt{CZ}))^{1/\beta}$ is a level-$\epsilon$ probability estimator
  for $\cH$ and $\Sfnt{C}|\Sfnt{Z}$.
\end{theorem*}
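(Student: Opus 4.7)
My plan is to prove the theorem by recognizing that the product $T(\Sfnt{CZ})$ multiplied by the appropriate power of the true conditional probability forms a nonnegative supermartingale of mean at most one, and then to apply Markov's inequality.

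To set up, I would first unpack the conclusion. Writing $U(\Sfnt{cz}) = 1/(\epsilon T(\Sfnt{cz}))^{1/\beta}$, the event $U(\Sfnt{cz}) \geq \Prob_\mu(\Sfnt{c}\qvbar\Sfnt{z},e)$ is equivalent to $T(\Sfnt{cz}) \cdot \Prob_\mu(\Sfnt{c}\qvbar\Sfnt{z},e)^\beta \leq 1/\epsilon$. So being a level-$\epsilon$ probability estimator is equivalent to the statement that, for every $\mu \in \cH$ and every value $e$ of $E$,
\begin{equation*}
\Prob_\mu\!\left( T(\Sfnt{CZ}) \cdot \Prob_\mu(\Sfnt{C}\qvbar\Sfnt{Z},E{=}e)^\beta > 1/\epsilon \,\Big|\, E = e \right) \leq \epsilon.
\end{equation*}
Because $T$ and the conditional probabilities are nonnegative, Markov's inequality reduces this to showing that $\Exp_\mu\!\left( T(\Sfnt{CZ}) \cdot \Prob_\mu(\Sfnt{C}\qvbar\Sfnt{Z},E{=}e)^\beta \,\big|\, E=e \right) \leq 1$.

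The heart of the proof is to build a test supermartingale witnessing this bound. For each $k \leq n$ define
\begin{equation*}
M_k \;=\; \prod_{i=1}^{k} F(C_i Z_i) \, \mu(C_i \qvbar C_{<i}, Z_{\leq i}, E{=}e)^{\beta},
\end{equation*}
with $M_0 = 1$. Fixing $E=e$ and conditioning on the past $(C_{<k}, Z_{<k})$, the chaining hypothesis for $\cH$ says that the conditional distribution $\mu_k$ of $(C_k, Z_k)$ lies in $\cC$. Since $F$ is a PEF with power $\beta$ for $\cC$, I get
\begin{equation*}
\Exp_\mu\!\left( F(C_k Z_k) \mu_k(C_k\qvbar Z_k)^\beta \,\big|\, C_{<k}, Z_{<k}, E{=}e \right) = \Exp_{\mu_k}\!\left( F(CZ) \mu_k(C\qvbar Z)^\beta \right) \leq 1,
\end{equation*}
so that $\Exp_\mu(M_k \qvbar C_{<k}, Z_{<k}, E{=}e) \leq M_{k-1}$. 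Iterating and taking total expectation gives $\Exp_\mu(M_n \qvbar E{=}e) \leq 1$.

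It remains to identify $M_n$ with $T(\Sfnt{CZ}) \cdot \Prob_\mu(\Sfnt{C}\qvbar\Sfnt{Z},E{=}e)^\beta$. The product of the $F$-factors is exactly $T(\Sfnt{CZ})$, so what is needed is the factorization
\begin{equation*}
\prod_{i=1}^{n} \mu(C_i \qvbar C_{<i}, Z_{\leq i}, E{=}e) \;=\; \mu(\Sfnt{C} \qvbar \Sfnt{Z}, E{=}e).
\end{equation*}
The chain rule gives the analogous identity with $\Sfnt{Z}$ in place of $Z_{\leq i}$ on the left; replacing $\Sfnt{Z}$ by $Z_{\leq i}$ in each factor requires that the past outcomes be independent of the future settings conditional on past settings and $E$, which is precisely the conditional independence built into the standard models in Sect.~\ref{subsect:standardmodels}. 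This is the step that I expect to be the main subtlety: it is the no-signaling-from-the-future condition that makes the supermartingale built trial-by-trial actually coincide with the quantity I want to control globally. Once this factorization is in hand, combining $\Exp_\mu(M_n \qvbar E{=}e) \leq 1$ with the Markov step above completes the proof.
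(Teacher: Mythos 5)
Your proposal is correct and follows essentially the same route as the paper's proof: you form the supermartingale $M_k = T_k\,\Probv(\Sfnt{C}_{\leq k}|\Sfnt{Z}_{\leq k})^{\beta}$ trial-by-trial from the PEF property, use the standard-model conditional-independence of settings to establish the factorization $\prod_i \mu(C_i|C_{<i},Z_{\leq i},e)=\mu(\Sfnt{C}|\Sfnt{Z},e)$, and finish with Markov's inequality, exactly as in the paper. The subtlety you flag (that the per-trial chaining only matches the global conditional probability because of the Markov-type condition on the settings) is indeed the same step the paper handles by its inductive identity at the start of its proof.
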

The proof is enabled by martingale theory and requires constructing a
test supermartingale.

From the fundamental theorem for PEFs, we can see that up to
adjustments for probability of success and the error bound, the
min-entropy per trial witnessed by a PEF $F$ with power $\beta$ at
distribution $\mu$ is expected to be the ``log-prob rate''
$\Exp_{\mu}(\log(F(cz)))/\beta$. In a sequence of trials with devices
behaving according to specifications, we can expect the trial
distributions to be approximately i.i.d.  with known distribution
$\mu$. In this case the RVs $\log(F(C_{i}Z_{i}))/\beta$ are also
approximately i.i.d., and their sum is typically close to
$n\Exp_{\mu}(\log(F(cz)))/\beta$. Since the sum determines the
conditional min-entropy witnessed by the probability estimator
obtained from $F$, a goal of PEF construction is to maximize the
log-prob rate. We show that PEF construction and optimization reduces
to the problem of maximizing a concave function over a convex
domain. If the trial model is a convex polytope, the convex domain is
defined by finitely many extreme points and PEF optimization has an
effective implementation (Thm.~\ref{thm:prob_constr}).  For the
standard Bell-test configuration, convex polytopes that include the
model and that have a manageable number of extreme points exist. We
implement and apply PEF optimization in Sect.~\ref{sec:apps}.

Given a trial model and a distribution $\mu$ in the model, the maximum
number of near-uniform random bits that can be produced per trial in
an asymptotically long sequence of trials is given by the minimum
conditional entropy of the outcomes given the settings and $E$. The
minimum is over all distributions $\nu$ of $CZE$ such that
$\nu(CZ|e)\in\cC$ for all $e$ and $\nu(CZ)=\mu(CZ)$. This is a
consequence of the asymptotic equipartition
property~\cite{tomamichel:qc2009a}.  We prove that PEFs can achieve
the maximum rate in the asymptotic limit.
\begin{theorem*}
  \emph{(Thm.~\ref{thm:gainrate_optimality})} For any trial model
  $\cC$ and distribution $\mu\in\cC$ with minimum conditional entropy
  $g$, the supremum of the log-prob rates of PEFs for $\cC$ is $g$.
\end{theorem*}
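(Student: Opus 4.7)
My plan is to establish the equality by matching upper and lower bounds on the supremum. The upper bound follows directly from the PEF defining inequality and Jensen; the lower bound is an explicit construction producing, for each small $\beta>0$, a PEF whose log-prob rate tends to $g$ as $\beta\to 0$.

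For the upper bound, fix any PEF $F$ with power $\beta$ and any feasible extension $\nu$ of $\mu$, so that $\nu(CZ)=\mu(CZ)$ and $\nu_e := \nu(CZ|e)\in\cC$ for every $e$. Applying the PEF constraint to $\nu_e$ gives $\Exp_{\nu_e}(F(CZ)\nu_e(C|Z)^\beta)\leq 1$, and concavity of the logarithm then yields
$$\Exp_{\nu_e} \log F \leq \beta H_{\nu_e}(C|Z).$$
Averaging over $e$ and using $\Exp_\nu \log F = \Exp_\mu \log F$ (the $CZ$-marginal of $\nu$ equals $\mu$) produces $\Exp_\mu \log F / \beta \leq H_\nu(C|ZE)$. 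Taking the infimum over $\nu$ gives $\Exp_\mu \log F / \beta \leq g$, bounding the supremum over PEFs by $g$.

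For the lower bound, the key observation is that $g$, viewed as a function of the target $\mu\in\cC$, is the convex envelope on $\cC$ of the concave function $\mu'\mapsto H_{\mu'}(C|Z)$: every feasible $\nu$ corresponds to a convex decomposition of $\mu$ within $\cC$, and $H_\nu(C|ZE)$ is precisely the corresponding average of $H_{\mu'}(C|Z)$. Convex duality (LP duality in the polytope case) then supplies a function $K:cz\mapsto K(cz)$ whose expectation defines an affine supporting functional of this envelope at $\mu$:
$$\Exp_{\mu'} K \leq H_{\mu'}(C|Z) \text{ for all } \mu'\in\cC, \qquad \Exp_\mu K = g.$$
In the polytope setting that is the paper's principal context, the inequality need only be verified at the finitely many extreme points of $\cC$; it then propagates to all of $\cC$ by concavity of $\mu'\mapsto H_{\mu'}(C|Z)$.

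Given such $K$, I would take as candidate PEF $F_\beta(cz) = e^{\beta K(cz)}/Z_\beta$ with $Z_\beta = \sup_{\mu'\in\cC} \Exp_{\mu'}(e^{\beta K(CZ)} \mu'(C|Z)^\beta)$, so that $F_\beta$ is a PEF by construction. A second-order Taylor expansion in $\beta$ yields
$$\Exp_{\mu'}\bigl(e^{\beta K(CZ)} \mu'(C|Z)^\beta\bigr) = 1 + \beta\bigl(\Exp_{\mu'} K - H_{\mu'}(C|Z)\bigr) + O(\beta^2),$$
with the $O(\beta^2)$ remainder controlled uniformly over $\mu'\in\cC$ by the boundedness of $K$ on the finite alphabet and the boundary behavior $x(\log x)^k\to 0$ that keeps the moments of $\log\mu'(C|Z)$ finite. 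Since the first-order coefficient is nonpositive throughout $\cC$, one has $Z_\beta\leq 1+c\beta^2$, so $\log(Z_\beta)/\beta=O(\beta)$ and the log-prob rate evaluates to $\Exp_\mu K - \log(Z_\beta)/\beta = g + O(\beta)\to g$. The main obstacle I anticipate is making the convex-duality step fully rigorous outside the polytope setting, where one needs compactness of $\cC$ and a bounded representation of the supporting hyperplane; the polytope case covers all the paper's applications and sidesteps this difficulty entirely.
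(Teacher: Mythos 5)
Your proposal is correct and follows essentially the same route as the paper: your upper bound is the content of Thm.~\ref{thm:pef_to_eest} (Jensen's inequality in place of $e^{x}\geq 1+x$) combined with the averaging-over-$e$ step in the proof of Thm.~\ref{thm:gainrate_optimality}, and your lower bound is exactly the paper's Lem.~\ref{lem:g=g+}/Thm.~\ref{thm:eest_to_pef} strategy — identify $g$ with the convex roof of $\mu'\mapsto H(C|Z;\mu')$ over $\cC$, extract an affine minorant $K$ (an entropy estimator), and exponentiate it into PEFs $e^{\beta K}$ of vanishing power, with the same uniform second-order Taylor control (your normalization by $Z_\beta$ plays the role of the paper's slack $e^{-\epsilon\gamma}$). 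The one substantive difference is your insistence on an \emph{exact} supporting functional with $\Exp_{\mu}K=g$: this is fine for polytopes, where the roof is polyhedral, but at boundary points of a general closed convex model the envelope may have no subgradient, so the claim is too strong as stated. The obstacle you flag outside the polytope setting dissolves exactly as in the paper: pass to $\cCvx(\cC)$ without loss of generality and take, for each $\epsilon>0$, an affine minorant with value at least $g-\epsilon$ at $\mu$ (Ref.~\cite{boyd:qc2004a}, Exercise~3.28); since the theorem only asserts a supremum over PEFs, the $\epsilon$-approximate supporting functional suffices and no restriction to polytopes is needed.
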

To prove optimality we define entropy estimators for model $\cC$ as
real-valued functions $K(CZ)$ such that $\Exp_{\mu}(K(CZ))$ is a lower
bound on the conditional entropy for all $\mu\in\cC$. We then show
that there are entropy estimators for which the entropy estimate
$\Exp_{\mu}(K(CZ))$ approaches the minimum conditional entropy, and
that for every entropy estimator, there are PEFs whose log-prob rates
approach the entropy estimate.

It is desirable to minimize the settings entropy used for randomness
generation. For this we consider \emph{maximum probability estimators}
for a model $\cC$, which are defined as functions $G(CZ)$ such that
$\Exp_{\mu}(G(CZ)) \geq \max_{cz}\mu(c|z)$ for all $\mu\in\cC$.
Non-trivial maximum probability estimators exist. For example, every
Bell inequality for the standard two-settings, two-outcomes Bell-test
configuration for two parties has associated maximum probability
estimators $G(CZ)$, and a distribution that violates the Bell
inequality satisfies $\Exp_{\mu}(G(CZ))<1$.  For every maximum
probability estimator $G(CZ)$, we construct a family of PEFs for which
the log-prob rates at $\mu\in\cC$ approach
$-\log(\Exp_{\mu}(G(CZ)))$. We analyze this family of PEFs to
determine how log-prob rates depend on number of trials and power, and
find that with this family it is possible to achieve exponential
expansion of settings entropy.  For this, we consider models
$\cC_{C|Z}$ of distributions of $C$ conditional on $Z$.  An
unconditonal model $\cC$ is obtained by specifying a probability
distribution for the settings.
\begin{theorem*}
  \emph{(Thm.~\ref{thm:expexp})} Let $G(CZ)$ be a maximum probability
  estimator for $\cC$ where $\cC$ is determined by the conditional model
  $\cC_{C|Z}$ with the uniform settings distribution. Assume given
  $\mu\in\cC$ such that $\Exp_{\mu}(G(CZ))<1$.  Then for a constant
  error bound, there exists a family of PEFs and settings probability
  distributions determined by the number of trials $n$ such that the
  settings entropy is $O(\log(n))$ and the smooth conditional
  min-entropy of the outcomes is $\Omega(n)$.
\end{theorem*}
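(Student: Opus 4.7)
The plan is to combine (i) a biased i.i.d.\ per-trial settings distribution $\nu_n$ whose joint entropy over $n$ trials is $O(\log n)$, and (ii) a family of PEFs from the construction of Sect.~\ref{sec:pef_constructions} with parameters tuned to $n$, so that the log-prob rate under $\mu$ with biased settings is bounded away from zero and accumulates over $n$ trials to witnessed min-entropy $\Omega(n)$.

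Specifically, I would set $\nu_n(z_0) = 1 - p_n$ for a distinguished generation setting $z_0$ and distribute the remaining $p_n$ mass uniformly over $Z \setminus \{z_0\}$, with $p_n = c/n$ for an appropriate constant $c$. The per-trial settings entropy is $H(\nu_n) = O(p_n \log(|Z|/p_n)) = O((\log n)/n)$, so the total over $n$ trials is $O(\log n)$. The expected number of test trials (where $Z_i \neq z_0$) is then $O(1)$, matching the informal description in the overview.

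For the PEFs, I would invoke the family constructed from $G$ in Sect.~\ref{sec:pef_constructions}, with power $\beta_n = \Theta(p_n) = \Theta(1/n)$ chosen to match the dilution rate of the test trials. The PEF on generation trials would take the form $F_{\beta_n}(c,z_0) \approx 1 + \Theta(\beta_n)$ (a small perturbation of the trivial value), while on test trials it is derived directly from $G$ in the manner of Sect.~\ref{sec:pef_constructions}. The PEF inequality $\Exp_\sigma[F_{\beta_n}(CZ)\,\sigma(C|Z)^{\beta_n}] \leq 1$ for every $\sigma$ consistent with $\nu_n$ is verified by a first-order expansion in $\beta_n$ and $p_n$, using the max prob estimator property $\Exp_\sigma[G(CZ)] \geq \max_{cz}\sigma(c|z)$ to control the test-trial contribution. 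The log-prob rate per trial under $\mu$ becomes $\Exp_\mu[\log F_{\beta_n}(CZ)]/\beta_n \approx \Theta(p_n)/\beta_n = \Theta(1)$. Applying Thm.~\ref{thm:uest_constr} to $T = \prod_{i=1}^n F_{\beta_n}(C_i Z_i)$ and converting via Lem.~\ref{lm:esupe_fail_esmaxprob} then yields smooth conditional min-entropy $\Omega(n)$ for any constant error bound.

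The main obstacle is verifying the PEF inequality for the constructed $F_{\beta_n}$ and showing that its log-prob rate under the biased settings remains $\Omega(1)$. A naive PEF that is nontrivial only on test trials would accumulate $O(np_n) = O(1)$ total log-prob, insufficient for $\Omega(n)$ min-entropy. The construction must therefore be nontrivial on the dominant generation trials as well; this requires exploiting $G$ indirectly through the PEF inequality itself, so that the assumption $\Exp_\mu[G(CZ)] < 1$ forces any $\sigma \in \cC$ passing the test to admit constrained conditional distributions even on $z_0$, leaving room for a small but positive PEF contribution on generation trials. Carefully balancing the scaling of $p_n$, $\beta_n$, and the test-vs.-generation components of $F_{\beta_n}$ so that both the PEF inequality holds for all $\sigma\in\cC$ and the log-prob rate under $\mu$ stays $\Omega(1)$ is the central technical step of the proof.
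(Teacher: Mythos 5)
Your skeleton (spot-checking settings with test probability $r_n=\Theta(1/n)$, a PEF built from the maximum probability estimator via the tangent-line construction of Thm.~\ref{thm:pef_uniformbnd}, power $\beta_n=\Theta(r_n)$, chaining via Thm.~\ref{thm:uest_constr} and converting to smooth min-entropy) is the paper's route, but the obstacle you single out as ``the central technical step'' rests on a miscount. The witnessed entropy is the log-prob $\sum_i\log F_i/\beta$, not $\sum_i\log F_i$: with $\beta_n=\Theta(1/n)$, each of the $O(1)$ expected test trials, on which $\log F=\Theta(1)$, witnesses $\Theta(1/\beta_n)=\Theta(n)$ entropy, so the ``naive PEF that is nontrivial only on test trials'' which you dismiss as giving $O(1)$ log-prob in fact gives $\Theta(n)$. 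That is precisely the paper's construction (Eqs.~\ref{eq:def_B_r} and~\ref{eq:pef_r_beta}): $B_r=0$ on non-test trials, so on generation trials $F_{r,\beta}$ is a constant whose logarithm is slightly \emph{negative}, $B_r=B/r$ on test trials, and Thm.~\ref{thm:frbeta_gain} gives rate $g_{r,\beta}\ge-\log(1-\bar b)-d'\beta/r$; the paper states explicitly that the full gain is witnessed by a constant number of test trials. Your proposed fix---engineering positive gain on the fixed-setting trials by arguing that $\Exp_{\mu}(G(CZ))<1$ ``forces any $\sigma\in\cC$ passing the test'' to be constrained at $z_0$---also misreads the PEF definition: the inequality $\Exp_{\sigma}(F\,\sigma(C|Z)^{\beta})\le 1$ must hold unconditionally for every $\sigma$ in the model, with no post-selection on passing, and the deterministic LR extreme points (which put all mass at $z_0$ on an arbitrary outcome) cap $F(c,z_0)$ at roughly $1/\nu_n(z_0)=1+O(1/n)$ for every outcome $c$, so no useful positive rate is available on generation trials anyway.

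The places where real work is needed are different from the one you identified. First, the error-bound penalty $\log(1/\epse)/\beta_n$ is itself $\Theta(n)$ when $\beta_n=\Theta(1/n)$, so the constants must be balanced; the paper takes $\beta=cr$, $r=c'/n$ with $c\le\min(d,g_0/(3d'))$ and $c'=3\log(1/\epse)/(cg_0)$ (expected number of test trials proportional to $\log(1/\epse)$), yielding net log-prob at least $ng_0/3$. Second, the uniform-settings maximum-probability bound must be transferred to the biased settings by importance weighting, $B\mapsto B\,\Unif_Z(Z)/\nu(Z)$ (or via the test-bit model $\cC_{C|ZT}$ the paper uses), and the PEF inequality has to hold exactly, not to first order---the paper gets this from the exact tangent-line bound in Thm.~\ref{thm:pef_uniformbnd}, so your ``first-order expansion in $\beta_n$ and $p_n$'' should be replaced by that argument. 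Third, since the entire gain rides on $O(1)$ test trials, the realized log-prob has variance of order $n^2$ (Lem.~\ref{lem:var_frb}), so completeness---actually achieving $\Omega(n)$ net log-prob with good probability---requires a concentration step, which the paper supplies via Chebyshev by further inflating $c'$. With these repairs your outline collapses onto the paper's proof of Thm.~\ref{thm:expexp}.
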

The constants in the construction for exponential expansion are not of
excessive size, but we consider the construction a proof of principle,
not a practical proposal. Given that for relevant configurations, we
can optimize PEFs directly, for finite experiments and the best
expansion, it is preferable to use directly optimized PEFs.  We
determine expansion opportunities for the trial distribution observed
in an atomic Bell test in Sect.~\ref{sec:apps}.

\section{Basic Concepts}
\label{sec:prelims}

\subsection{Notation}

Much of this work concerns stochastic sequences of random variables
(RVs). RVs are functions on an underlying probability space.  The
range of an RV is called its value space.  Here, all RVs have finite
value spaces. We truncate sequences of RVs so that we only consider
finitely many RVs at a time. With this we may assume that the
underlying probability space is finite too. We use upper-case letters
such as $A,B,\ldots,X,Y,\ldots$ to denote RVs.  The value space of an
RV such as $X$ is denoted by $\Rng(X)$. The cardinality of the
  value space of $X$ is $|\Rng(X)|$.  Values of RVs are denoted by
the corresponding lower-case letters. Thus $x$ is a value of $X$,
often thought of as the particular value realized in an experiment.
In the same spirit, we use $\Omega$ to denote the \emph{universal} RV
defined as the identity function on the set of the underlying
probability space.  Values of $\Omega$ are denoted by $\omega$.  When
using symbols for values of RVs, they are implicitly assumed to be
members of the range of the corresponding RV.  In many cases, the
value space is a set of letters or a set of strings of a given length.
We use juxtaposition to denote concatenation of letters and strings.
For a string $s$, $|s|$ denotes its length. Unless stated otherwise, a
string-valued RV $S$ produces fixed length strings. $|S|$ denotes the
length of the strings, $S_{i}$ is the $i$'th letter of the string, and
$S_{\le i}$ is the length $i$ prefix of the string. By default,
strings are binary, which implies, for example,
$|S|=\log_{2}(|\Rng(S)|)$.  Sequence RVs (stochastic sequences) are
denoted by capital bold-face letters, with the corresponding
lower-case bold-face letters for their values. For example, we write
$\Sfnt{A}=(A_{i})_{i=1}^{N}$ and $\Sfnt{A}_{\le m}=(A_{i})_{i=1}^{m}$.
Our conventions for indices are that we generically use $N$ to denote
a large upper bound on sequence lengths, $n$ to denote the available
length and $i,j,k,l,m$ as running indices.  By convention,
$\Sfnt{A}_{\le 0}$ is the empty sequence of RVs. Its value is
constant, independent of $\Omega$.  When multiple stochastic sequences
are in play, we refer to the collection of $i$'th RVs in the sequences
as the data from the $i$'th \emph{trial}. We typically imagine the trials
as happening in time and being performed by an experimenter.  We refer
to the data from the trials preceding the upcoming one as the
\emph{past}. The past can also include initial conditions and any
additional information that may have been obtained. These are normally
implicit when referring to or conditioning on the past.

Probabilities are denoted by $\Prob(\ldots)$.  If there are multiple
probability distributions involved, we disambiguate with a subscript
such as in $\Prob_{\nu}(\ldots)$ or simply $\nu(\ldots)$, where $\nu$
is a probability distribution. We generally reserve the symbol $\mu$
for the global, implicit probability distribution, and may write
$\mu(\ldots)$ instead of $\Prob(\ldots)$.  Expectations are similarly
denoted by $\Exp(\ldots)$ or $\Exp_{\mu}(\ldots)$.  If $\phi$ is a
logical expression involving RVs, then $\{\phi\}$ denotes the event
where $\phi$ is true for the values realized by the RVs.  For example,
$\{f(X)>0\}$ is the event $\{\omega:f(X(\omega))>0\}$ written in full
set notation.  The brackets $\{\ldots\}$ are omitted for events inside
$\Prob(\ldots)$ or $\Exp(\ldots)$. As is conventional, commas
separating logical expressions are interpreted as conjunction.  When
the capital/lower-case convention can be unambiguously interpreted, we
abbreviate ``$X=x$'' by ``$x$''. For example, with this convention,
$\Prob(x,y)=\Prob(X=x,Y=y)$.  Furthermore, we omit commas in the
abbreviated notation, so $\Prob(xy)=\Prob(x,y)$.  RVs or functions of
RVs appearing outside an event but inside $\Prob(\ldots)$ or after the
conditioner in $\Exp(\ldots|\ldots)$ result in an expression that is
itself an RV. We can define these without complications because of our
assumption that the event space is finite.  Here are two
examples. $\Probv(f(X)|Y)$ is the RV whose value at $\omega$ is
$\Prob(f(X)=f(X(\omega))|Y=Y(\omega))$.  This is a function of the RVs
$X$ and $Y$ and can be described as the RV whose value is
$\Prob(f(X)=f(x)|Y=y)$ whenever the values of $X$ and $Y$ are $x$ and
$y$, respectively. Similarly $\Exp(X|Y)$ is the RV defined as a
function of $Y$, with value $\Exp(X|Y=y)$ whenever $Y$ has value $y$.
Note that $X$ plays a different role before the conditioners in
$\Exp(\ldots)$ than it does in $\Prob(\ldots)$,
as $\Exp(X|Y)$ is not a function of $X$, but only of $Y$.  We comment
that conditional probabilities with conditioners having probability
zero are not well-defined, but in most cases can be defined
arbitrarily. Typically, they occur in a context where they are
multiplied by the probability of the conditioner and thereby
contribute zero regardless. An important context involves
expectations, where we use the convention that when expanding an
expectation over a finite set of values as a sum, zero-probability
values are omitted. We do so without explicitly adding the constraints
to the summation variables.  We generally use conditional
probabilities without explicitly checking for probability-zero
conditioners, but it is necessary to monitor for well-definedness of
the expressions obtained.

To denote general probability distributions, usually on the joint
value spaces of RVs, we use symbols such as $\mu,\nu,\rho$, with
modifiers as necessary. As mentioned, we reserve the unmodified $\mu$
for the distinguished global distribution under consideration, if
there is one. Other symbols typically refer to probability
distributions defined on the joint range of some subset of the
available RVs.  The set of probability distributions on $\Rng(X)$ is
denoted by $\cS_{X}$.  We usually just say ``distribution'' instead of
``probability distribution''.  The terms ``distributions on
$\Rng(X)$'' and ``distributions of $X$'' are synonymous.  The support
of a distribution $\nu$ on $X$ is denoted by
$\Supp(\nu)=\{x|\nu(x)>0\}$.  When multiple RVs are involved we denote
marginal and conditional distributions by expressions such as
$\mu[X|Y=y]$ for the distribution of $X$ on its value space
conditional on $\{Y=y\}$.  The probability of
$x$ for this distribution can
be denoted by $\mu[X|Y=y](x)$, which is well-defined for
$\Prob(Y=y)>0$ and therefore well-defined with probability one
because of our finiteness assumptions.  Note the use of square
brackets to distinguish the distribution specification from the
argument determining the probability at a point.  If $\nu$ is a joint
distribution of RVs, then we extend the conventions for arguments of
$\Prob(\ldots)$ to arguments of $\nu$, as long as all the arguments
are determined by the RVs for which $\nu$ is defined. For example, if
$\nu$ is a joint distribution of $X$, $Y$, and $Z$, then $\nu(x|y)$
has the expected meaning, as does the RV $\nu(X|Y)$ in contexts
requiring no other RVs.  We denote the uniform distribution on
$\Rng(X)$ by $\Unif_{X}$, omitting the subscript if the value space is
clear from context.  If $R$ and $S$ are independent RVs with marginal
distributions $\nu=\mu[R]$ and $\nu'=\mu[S]$ on their ranges, then
their joint distribution is denoted by $\mu[R,S]=\nu\otimes\nu'$.

In our work, probability distributions are constrained by a
statistical \emph{model}, which is defined as a set of distributions and
denoted by letters such as $\cH$ or $\cC$. The models for trials to be
considered here are usually convex and closed.  For a model $\cC$, we
write $\xtrm{\cC}$ for the set of extreme points of $\cC$ and
$\cCvx(\cC)$ for the convex closure of $\cC$ defined as the smallest
closed convex set containing $\cC$.

The total variation (TV) distance between $\nu$ and $\nu'$ is defined
as
\begin{equation}
  \TV(\nu,\nu') = \sum_{x}(\nu(x)-\nu'(x))\knuth{\nu(x)\geq\nu'(x)}
  = \frac{1}{2}\sum_{x}|\nu(x)-\nu'(x)|,
  \label{eq:def_tv}
\end{equation}
where $\knuth{\phi}$ for a logical expression $\phi$ denotes the
$\{0,1\}$-valued function evaluating to $1$ iff $\phi$ is true.
True to its name, the TV distance satisfies the triangle
inequality. Here are three other useful properties.  First, if 
$\nu$ and $\nu'$ are joint distributions of $X$ and $Y$ and the 
marginals satisfy $\nu[Y]=\nu'[Y]$, then the TV distance between 
$\nu$ and $\nu'$ is the average of the TV distances of the 
$Y$-conditional distributions:
\begin{align}
  \TV(\nu,\nu') &= \sum_{y}\sum_{x}(\nu(x,y)-\nu'(x,y))\knuth{\nu(x,y)\ge\nu'(x,y)}\notag\\
  &= \sum_{y}\sum_{x}(\nu(x|y)\nu(y)-\nu'(x|y)\nu'(y))\knuth{\nu(x|y)\nu(y)\ge\nu'(x|y)\nu'(y)}\notag\\
  &= \sum_{y}\sum_{x}(\nu(x|y)\nu(y)-\nu'(x|y)\nu(y))\knuth{\nu(x|y)\nu(y)\ge\nu'(x|y)\nu(y)}\notag\\
  &= \sum_{y}\nu(y)\sum_{x}(\nu(x|y)-\nu'(x|y))\knuth{\nu(x|y)\ge\nu'(x|y)}\notag\\
  &= \sum_{y}\nu(y) \TV(\nu[X|y],\nu'[X|y]).
  \label{eq:tv_samemarg}
\end{align}
Second, if for all $y$, $\nu[X|y]=\nu'[X|y]$, then the TV distance
between $\nu$ and $\nu'$ is given by the TV distance between the marginals on $Y$:
\begin{align}
  \TV(\nu,\nu') &= \sum_{y}\sum_{x}(\nu(x,y)-\nu'(x,y))\knuth{\nu(x,y)\ge\nu'(x,y)}\notag\\
  &= \sum_{y}\sum_{x}(\nu(x|y)\nu(y)-\nu'(x|y)\nu'(y))\knuth{\nu(x|y)\nu(y)\ge\nu'(x|y)\nu'(y)}\notag\\
  &= \sum_{y}\sum_{x}(\nu(x|y)\nu(y)-\nu(x|y)\nu'(y))\knuth{\nu(x|y)\nu(y)\ge\nu(x|y)\nu'(y)}\notag\\
  &= \sum_{y}\sum_{x}\nu(x|y)(\nu(y)-\nu'(y))\knuth{\nu(y)\ge\nu'(y)}\notag\\
  &= \sum_{y}(\nu(y)-\nu'(y))\knuth{\nu(y)\ge\nu'(y)}\notag\\
  &= \TV(\nu[Y],\nu'[Y]).
  \label{eq:tv_samecond}
\end{align}
Third, the TV distance satisfies the data-processing inequality. That
is, for any stochastic process $\cE$ on $\Rng(X)$ and distributions
$\nu$ and $\nu'$ of $X$, $\TV(\cE(\nu),\cE(\nu'))\leq \TV(\nu,\nu')$.
We use this property only for functions $\cE$, but for 
general forms of this result, see Ref.~\cite{pardo:1997}.  

When constructing distributions close to a given one in TV distance,
it is often convenient to work with subprobability distributions.  A
subprobability distribution of $X$ is a sub-normalized non-negative
measure on $\Rng(X)$, which in our case is simply a non-negative
function $\tilde\nu$ on $\Rng(X)$ with \emph{weight}
$w(\tilde\nu)=\sum_{x}\tilde\nu(x)\leq 1$. For expressions not
involving conditionals, we use the same conventions for subprobability
distributions as for probability distributions.  When comparing
subprobability distributions, $\tilde\nu\leq\tilde\nu'$ means that
for all $x$, $\tilde\nu(x)\leq\tilde\nu'(x)$, and we say that
$\tilde\nu'$ \emph{dominates} $\tilde\nu$.

\begin{lemma}\label{lm:tvdist_weight}
  Let $\tilde\nu$ be a subprobability distribution of $X$ of weight
  $w=1-\epsilon$.  Let $\nu$ and $\nu'$ be distributions of $X$
  satisfying $\tilde\nu\leq\nu$ and $\tilde\nu\leq \nu'$. Then
  $\TV(\nu,\nu')\leq \epsilon$.
\end{lemma}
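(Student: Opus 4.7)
The plan is to prove this directly from the definition of total variation distance and the non-negativity of the ``residuals'' $\nu - \tilde\nu$ and $\nu' - \tilde\nu$. The key observation is that domination by $\tilde\nu$ forces both $\nu$ and $\nu'$ to agree with $\tilde\nu$ on a large portion of their mass, and the leftover mass on each side has total weight only $1 - w = \epsilon$.

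First I would set $r(x) = \nu(x) - \tilde\nu(x)$ and $r'(x) = \nu'(x) - \tilde\nu(x)$. The hypotheses $\tilde\nu \leq \nu$ and $\tilde\nu \leq \nu'$ ensure $r, r' \geq 0$, and normalization of $\nu, \nu'$ together with $w(\tilde\nu) = 1-\epsilon$ gives $\sum_x r(x) = \sum_x r'(x) = \epsilon$. Then I would write $\nu(x) - \nu'(x) = r(x) - r'(x)$, so by the second (symmetric) expression for TV distance in Eq.~\eqref{eq:def_tv},
\begin{equation}
  \TV(\nu,\nu') = \tfrac{1}{2}\sum_x |r(x) - r'(x)| \leq \tfrac{1}{2}\sum_x (r(x) + r'(x)) = \tfrac{1}{2}(\epsilon + \epsilon) = \epsilon,
\end{equation}
using the elementary inequality $|a-b| \leq a+b$ for $a,b \geq 0$.

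Alternatively, and perhaps more in line with the first formulation of $\TV$ in Eq.~\eqref{eq:def_tv}, I could argue: on the set $\{x : \nu(x) \geq \nu'(x)\}$ we have $\nu(x) - \nu'(x) \leq \nu(x) - \tilde\nu(x)$ since $\tilde\nu(x) \leq \nu'(x)$; summing only over this set and then extending the sum to all $x$ (which can only increase it, as the summands are non-negative) yields
\begin{equation}
  \TV(\nu,\nu') \leq \sum_x (\nu(x) - \tilde\nu(x)) = 1 - w(\tilde\nu) = \epsilon.
\end{equation}
Either route is a two-line calculation. There is no genuine obstacle here; the only thing to watch is keeping track that both $r$ and $r'$ have total mass exactly $\epsilon$, which comes immediately from the weight hypothesis on $\tilde\nu$ and the normalization of $\nu$ and $\nu'$.
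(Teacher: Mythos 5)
Your proposal is correct, and your second route is exactly the paper's own proof: bound $\nu(x)-\nu'(x)$ by $\nu(x)-\tilde\nu(x)$ on the set where $\nu\geq\nu'$ (using $\tilde\nu\leq\nu'$), then drop the indicator (using $\tilde\nu\leq\nu$) to get $1-w=\epsilon$. Your first route, via $|r(x)-r'(x)|\leq r(x)+r'(x)$ with the residuals each of mass $\epsilon$, is an equally valid minor variant using the symmetric form of Eq.~\eqref{eq:def_tv}.
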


\begin{proof}
  Calculate 
  \begin{align}
    \TV(\nu,\nu') &=
    \sum_{x}(\nu(x)-\nu'(x))\knuth{\nu(x)\geq \nu'(x)}\notag\\
    &\leq \sum_{x}(\nu(x)-\tilde\nu(x))\knuth{\nu(x)\geq \tilde\nu(x)}\notag\\
    &=\sum_{x}(\nu(x)-\tilde\nu(x))\notag\\
    &=1-w=\epsilon.
  \end{align}
\end{proof}

\begin{lemma}\label{lm:tvsub}
  Assume that $p\geq1/|\Rng(X)|$.
  Let $\nu$ be a distribution of $X$ and $\tilde\nu\leq\nu$ a subprobability
  distribution of $X$ with weight $w=1-\epsilon$
  and $\tilde\nu\leq p$. Then there exists a distribution $\nu'$ of $X$
  with $\nu'\geq \tilde\nu$, $\nu'\leq p$, and $\TV(\nu,\nu')\leq \epsilon$.
\end{lemma}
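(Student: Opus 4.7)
The plan is to construct $\nu'$ explicitly as a perturbation of $\tilde\nu$ that adds exactly $\epsilon$ units of mass while staying below $p$ everywhere, and then invoke Lem.~\ref{lm:tvdist_weight} to get the TV bound for free.

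First I would observe that since we need $\nu' \geq \tilde\nu$ and $\sum_x \nu'(x) = 1$, we must distribute $1 - w = \epsilon$ units of additional mass among the values of $X$, subject to the per-value ``headroom'' constraint that no $\nu'(x)$ exceeds $p$, i.e., we can add at most $p - \tilde\nu(x)$ at each $x$. The key arithmetic check is that the total available headroom is sufficient:
\begin{equation}
  \sum_{x}(p-\tilde\nu(x)) = p|\Rng(X)| - w \geq 1 - w = \epsilon,
\end{equation}
where the inequality uses the hypothesis $p \geq 1/|\Rng(X)|$. So there is enough room to accommodate the required mass.

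A clean construction is to fill proportionally to headroom. Let $C = \sum_{x}(p - \tilde\nu(x)) \geq \epsilon$ and define
\begin{equation}
  \nu'(x) = \tilde\nu(x) + \frac{\epsilon}{C}\bigl(p - \tilde\nu(x)\bigr).
\end{equation}
Then $\nu'(x) \geq \tilde\nu(x)$ because $\tilde\nu(x) \leq p$; $\nu'(x) \leq p$ because $\epsilon/C \leq 1$ makes $\nu'$ a convex combination of $\tilde\nu(x)$ and $p$; and $\sum_x \nu'(x) = w + (\epsilon/C)\cdot C = 1$, so $\nu'$ is a probability distribution.

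Finally, since $\tilde\nu$ has weight $1 - \epsilon$ and is dominated by both $\nu$ (by assumption) and $\nu'$ (by construction), Lem.~\ref{lm:tvdist_weight} gives $\TV(\nu, \nu') \leq \epsilon$. The only real ``obstacle'' in the argument is checking the capacity inequality above, which is precisely the role of the assumption $p \geq 1/|\Rng(X)|$; without it the headroom could be less than $\epsilon$ and no such $\nu'$ would exist.
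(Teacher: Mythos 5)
Your proof is correct and follows essentially the same route as the paper: the capacity check $\sum_{x}(p-\tilde\nu(x)) = p|\Rng(X)| - w \geq 1-w$ using $p\geq 1/|\Rng(X)|$, followed by an application of Lem.~\ref{lm:tvdist_weight} to the two distributions $\nu$ and $\nu'$ dominating $\tilde\nu$. The only difference is that you give an explicit proportional-fill formula for $\nu'$ where the paper simply asserts existence, which is a harmless (and slightly more constructive) refinement.
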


\begin{proof}
  Because $p\geq 1/|\Rng(X)|$ and $\sum_{x}(p-\tilde\nu(x))=
  |\Rng(X)|p-w\geq 1-w$, there exists a distribution
  $\nu'\geq\tilde\nu$ with $\nu'\leq p$. Since $\nu'$ and $\nu$ are
  distributions dominating $\tilde\nu$ and by
  Lem.~\ref{lm:tvdist_weight}, $\TV(\nu,\nu')\leq \epsilon$.
\end{proof}

\subsection{Bell-Test Configurations}
\label{subsec:setup_belltests}

The standard example of a Bell-test configuration involves a pair of
devices located at two stations, $\Pfnt{A}$ and $\Pfnt{B}$.  The
stations engage in a sequence of trials.  In a trial, each station
chooses one of two settings and obtains a measurement outcome, either
$0$ or $1$, from their device. A Bell test with this configuration is
called a $(2,2,2)$ Bell test, where the numbers indicate the number of
stations, the number of settings available to each station and the
number of outcomes at each setting. The test produces two stochastic
sequences from each station. The sequence of settings choices is
denoted by $\Sfnt{X}$ for $\Pfnt{A}$ and $\Sfnt{Y}$ for $\Pfnt{B}$,
and the sequence of measurement outcomes is denoted by $\Sfnt{A}$ and
$\Sfnt{B}$, respectively.  In using this notation, we allow for
arbitrary numbers of settings and measurement outcomes. For a
$(2,k,l)$ Bell-test configuration, $|\Rng(X_{i})|=|\Rng(Y_{i})|=k$
and $|\Rng(A_{i})|=|\Rng(B_{i})| = l$. The main role of separating
the configuration by station is to justify the assumptions, which can
be generalized to more stations if desired. The assumptions constrain
the models consistent with the configuration. In this work, once the
constraints are determined, the separation into stations plays little
role, but we continue to identify settings and outcomes RVs.  We use
$\Sfnt{Z}$ and $\Sfnt{C}$ to denote the corresponding stochastic
sequences. For the case of two stations, $\Sfnt{Z} =
(X_{i}Y_{i})_{i=1}^{N}$ and $\Sfnt{C}=(A_{i}B_{i})_{i=1}^{N}$.

We also refer to RVs $\Sfnt{D}$ and $\Sfnt{R}$. For $\Sfnt{D}$, we
assume that $D_{i}$ is determined by (that is a function of)
$C_{i}$. We use $\Sfnt{D}$ to study protocols where the randomness is
extracted from $\Sfnt{D}$ instead of $\Sfnt{C}$.  For example, this
includes protocols where only $\Pfnt{A}$'s measurement outcomes are
used, in which case we set $\Sfnt{D}=\Sfnt{A}$.  We assume that
$C_{i}$ is determined by $R_{i}$, and we use $\Sfnt{R}$ to contain
additional information accumulated during the experiment that can be
used to adapt the protocol but is kept private.  The main purpose of
the RV $\Sfnt{Z}$ is to contain data that may become public. This
applies to the settings if the source of the random settings choices
is public or otherwise not sufficiently trusted.  In situations where
there is only one trial under consideration or we do not need the
structure of the stochastic sequences, we use the non-boldface
variants of the RVs, that is $D$, $C$, $R$, and $Z$.

\subsection{Assumptions for Bell Tests}
\label{subsec:ass_belltests}

We are interested in limiting the information held by an external
entity.  This work is concerned with external entities holding
classical side information, so the external entity $\Pfnt{E}$'s state
is characterized by an RV $E$. Physical entities and systems are
denoted by capital sans-serif letters, a convention we already used to
refer to stations. All our results are proven for finite event spaces.
While this is reasonable with respect to the RVs representing
information explicitly used by the protocols, it is an unverifiable
restriction on $E$.  For countable $E$ one can use the observation
that conditioning on large-probability finite subsets of $E$ gives
distributions that are arbitrarily close in $\TV$ distance to the
unconditioned distribution.

Our theory is formulated in the untrusted device
model~\cite{colbeck:2007} where $\Pfnt{E}$ may have had the devices in
possession before the trials of the protocol of interest start. Once
the protocol starts, $\Pfnt{E}$ cannot receive information from the
devices.  Note that this does not preclude that $\Pfnt{E}$ sends
information to the devices via one-way channels, see below.  This
simplification is possible for local randomness-generation protocols
where the participating parties need no remote classical
communication. In many other applications such as quantum key
distribution, the protocols involve both quantum and classical
communication, in which case $\Pfnt{E}$ can gain additional
information at any time.

To ensure that $\Pfnt{E}$ cannot receive information from the devices
after the start of the protocol requires physical security to isolate
the joint location of the devices and the stations.  Formalizing this
context requires a physical framework for which subsystems and
interactions can be defined, with classical subsystems playing a
special role. In the case of applications to Bell-test configurations,
we motivate the constraints on the models with physical arguments, but
we do not prove the constraints with respect to a formal physical
framework.  Let $\Pfnt{D}$ be the pair of devices with which
$\Pfnt{A}$ and $\Pfnt{B}$ implement the protocol. The absence of
information passing from $\Pfnt{ABD}$ to $\Pfnt{E}$ is ensured if
there is no physical interactions between the two sets of systems
after the protocol starts. While $\Sfnt{Z}$ is private, we can time
shift any one-way communication from $\Pfnt{E}$ to $\Pfnt{ABD}$, or
non-interacting dynamics of $\Pfnt{E}$ to before the beginning of the
protocol.  A communication from $\Pfnt{E}$ to $\Pfnt{ABD}$ can be
described by adding the communication as a system $\Pfnt{C}$ to
$\Pfnt{E}$, which interacts with $\Pfnt{E}$ then becomes isolated from
$\Pfnt{E}$ and later becomes part of the devices. This makes time
shifting possible, after which we can assume that any physical
processes relevant to the protocol act only on $\Pfnt{ABD}$. Of
course, we insist that after time-shifting, the state of $\Pfnt{E}$ is
classical.  This justifies the use of a single RV $E$ to determine the
state of $\Pfnt{E}$. In this work, we make few assumptions on the
physics of $\Pfnt{D}$ and formulate constraints on distributions
purely in terms of the RVs produced by the protocol, conditional on
$E$.

The formal restriction that $\Pfnt{E}$ holds only classical side
information allows for quantum $\Pfnt{E}$ provided that there is an
additional quantum system $\Pfnt{H}$ independent of $\Pfnt{E}$ such that the
joint state at the start of the protocol of the systems $\Pfnt{ABD}$,
$\Pfnt{H}$ and $\Pfnt{E}$ forms a quantum Markov
chain $\Pfnt{ABD}\leftrightarrow
\Pfnt{H}\leftrightarrow\Pfnt{E}$~\cite{hayden:qc2004b}. This is a
quantum generalization of stating that $\Pfnt{E}$ and $\Pfnt{ABD}$ are
independent conditionally on $\Pfnt{H}$. The system $\Pfnt{H}$ needs
to have no interaction with \Pfnt{E} and \Pfnt{ABD}
after the start of the
protocol. For example, $\Pfnt{H}$ can be part of a generic decohering
environment.  The Markov chain property implies that after an
extension of $\Pfnt{E}$, without loss of generality, the information
held by $\Pfnt{E}$ can be assumed to be classical. Operationally, this
situation can be enforced if we trust or ensure that the devices have
no long-term quantum memory.

Two assumptions constraining the possible distributions are required
for randomness generation with Bell tests.  The settings constraints
restrict the settings distributions, and the non-signaling constraints
enforce the absence of communication between the stations during a
trial. There are different ways in which the settings distributions
can be constrained.  The strongest constraint we use is that $Z_{i+1}$
is uniform, independent of $\Sfnt{Z}_{\le i}$, $\Sfnt{C}_{\le i}$ and
$E$. In general, we can consider weaker constraints on $Z_{i+1}$. For
example, $Z_{i+1}$ may have any distribution that is independent of
$E$ but determined in a known way by the past.  Most generally,
$Z_{i+1}$ has a distribution belonging to a specified set conditional
on $E$ and the past.  If we later allow $\Sfnt{Z}$ to become public, or
if $\Sfnt{Z}$ is from a public source of randomness, then $Z_{i+1}$
must be conditionally independent of $\Sfnt{R}_{\le i}$ given
$\Sfnt{Z}_{\le i}$ and $E$.  This constraint can be avoided if we
trust that the data that normally contributes to $\Sfnt{Z}$ remains
private, in which case we can exploit that our framework is flexible
in how the experimental data relates to the random variables: If
normally $\Sfnt{Z}$ contains settings choices, we can instead
let $\Sfnt{Z}$ be empty, add the
settings choices into the RV $\Sfnt{C}$ and extract randomness from
all or some of $\Sfnt{C}$.  With this assignment of RVs, conditional
independence is not required.  However, for non-trivial randomness
generation, it is then desirable to extract strictly more random bits
than were required for the settings choices. Otherwise the randomness
obtained may just be due to the input randomness with no other
contribution.

To state the non-signaling constraints requires
explicitly referencing
each station's RVs.  The non-signaling constraints are assumed to
be satisfied for trial $i$ conditional on the past and $E$. With
conditioning on these RVs implicit and for two stations, the
non-signaling constraints consist of the identities
\begin{equation}
  \Probv(A_{i}|X_{i}Y_{i}) = \Probv(A_{i}|X_{i}),\;\;
  \Probv(B_{i}|X_{i}Y_{i}) = \Probv(B_{i}|Y_{i}),
  \label{eq:non-signaling_constraints}
\end{equation}
which assert remote settings independence of measurement outcomes.
The non-signaling constraints can be strengthened using the
semidefinite-programming hierarchy for quantum
distributions~\cite{navascues:2007} when the devices are assumed to be
quantum.

In any given implementation of randomness generation, the settings and
non-signaling constraints must be justified by physical arguments.
The settings constraints require a source of random bits that are
sufficiently independent of external entities at the last point in
time where they had interactions with the devices and the stations.
For protocols such as those discussed here, where the output's
randomness is assured conditionally on the settings, the settings
randomness can become known after the last interaction, so it is
possible to use public sources of randomness. However, if the same
public string is used by many randomness generators, a scattershot
attack may be possible.  To justify the non-signaling constraints,
this randomness must not leak to the stations' devices before it is
used to apply the device settings.  Thus, we need to trust that it is
possible to generate bits that are, before the time of use,
unpredictable by the stations' devices.  For the non-signaling
constraints, we also rely on physical isolation of the stations during
a trial, preferably enforced by relativistic causality.  It usually
goes without saying that we trust the timing, recording and computing
devices that monitor and analyze the data according to the
protocol. However, given the complexity of modern electronics and the
prevalence of reliability and security issues, caution is advised.

\subsection{Min-Entropy and Randomness Extraction}
\label{sec:minentropy_extraction}

To be able to extract some amount of randomness from a bit-string RV
$D$, it suffices to establish a min-entropy bound for the distribution
of $D$.  The min-entropy of $D$ is $H_{\min}(D)=-\log(p_{\max}(D))$,
where $p_{\max}(D) = \max_{d}\Prob(d)$. By default, logarithms are
base $e$, so entropy is expressed in \emph{nits} (or $e$-bits), which
simplifies calculus with entropic quantities. We convert to bits to
  determine string lengths as needed. We switch to
logarithms base $2$ in Sect.~\ref{sec:apps} for quantitative
comparisons.  Since we usually work with bounds on min-entropies
rather than exact values, we say that $D$ has min-entropy $-\log(p)$
or $D$ has max-prob $p$ if $p_{\max}\leq p$ and add the adjective
``exact'' to refer to $-\log(p_{\max})$ or $p_{\max}$. If $D$ has
min-entropy $\sigma\log(2)$, then it is possible to extract close to
$\sigma$ near-uniform bits from $D$ given some uniform seed bits.  The
actual number of near-uniform bits that can be safely extracted
depends on $|D|$, how many uniform seed bits are used, the maximum
acceptable TV distance from uniform of the near-uniform output bits,
and the \emph{extractor} used.  An extractor $\cE$ has as input a bit
string $D$, a string of uniform seed bits $S$ independent of all other
relevant RVs, and the following parameters: A lower bound
$\sige\log(2)$ on the min-entropy of $D$, the desired number $\sigma$
of uniform bits, and the maximum acceptable TV distance from uniform
$\epsx $ of the output.  We write $\cE(D,S;\sige,\sigma,\epsx)$ for
the length $\sigma$ bit string produced by the extractor.  This bit
string is close to uniform provided the input satisfies
\emph{extractor constraints} that depend on the specific extractor
used.  Formally, a \emph{strong extractor} has the property that if
the parameters $n=|D|,l=|S|,\sige ,\sigma,\epsx $ satisfy its
extractor constraints, $S$ is an independent and uniform bit string,
and $D$ has min-entropy at least $\sige\log(2)$, then
\begin{equation}
  \TV(\mu[\cE(D,S;\sige,\sigma,\epsx)S],\Unif)\leq\epsx.
  \label{eq:strongext}
\end{equation}
Extractors used in this work are strong by default.  The
statement that $(n,\sige ,\sigma,\epsx )$ satisfies the extractor
constraints means that there exists $l$ so that the full list of
parameters $(n,l,\sige ,\sigma,\epsx )$ satisfies the extractor
constraints.  We always include $\sige\leq n$, $1\leq \sigma\leq
\sige$ and $0<\epsilon_{x}\leq 1$ among the extractor
constraints. Extractor constraints are partially determined by
information theoretic bounds, but constraints for practical extractors
are typically stronger. See Ref.~\cite{vadhan:2012}.

In previous work~\cite{bierhorst:qc2017a}, we have used an
implementation of Trevisan's strong extractor based on the framework
of Mauerer, Portmann and Scholz~\cite{mauerer:2012} that we called the
TMPS extractor $\cE_{\TMPS}$. The extractor constraints for
$\cE_{\TMPS}$ are not optimized. Simplified constraints are
$0<\epsx \leq 1$, $2\leq\sigma\leq\sige \leq n$,
$\sigma\in\nats$, and
\begin{align}
  \sigma+4\log_{2}(\sigma) &\le \sige -  4\log_2(1/\epsx )-6, \notag\\
  l &\ge 36 \log_{2}(\sigma) (\log_2(4n\sigma^2/\epsx ^2))^2.
  \label{eq:TMPS_bounds}
\end{align}
See Ref.~\cite{bierhorst:qc2017a} for the smaller expression for $l$
used by the implementation.  The TMPS extractor is designed to be
secure against quantum side information.  In general, strong
extractors can be made classical-proof, that is, secure against
classical side information, with a modification of the extractor
constraints. Some strong extractors such as Trevisan's can also be
made quantum-proof, that is, secure against quantum side information,
with a further modification of the constraints.  See
Ref.~\cite{mauerer:2012} for explanations and references to the
relevant results, specifically Lem.~A.3, Thm.~B.4, Lem.~B.5 and
Lem.~B.8.  All our protocols assume strong extractors and are
secure against classical side information. But they do
not require that the strong extractor be classical- or quantum-proof:
In one case, the relevant constraint modification is implicit in our
proof, and in the other cases, we take advantage of special properties
of our techniques.  The TMPS extractor constraints given above include
the modifications required for being quantum-proof; we did not roll
back these modifications.

The TMPS extractor also satisfies that conditional on the
seed bits, the output is a linear hash of $D$. For bits, a linear
hash of $D$ is a parity of $D$, computed as $\sum_{i}h_{i}D_{i}
\mod(2)$, where the $h_{i}\in\{0,1\}$ form the parity vector.  We
call extractors with this property \emph{linear extractors}.

\Pc{Here is the exact expression for and derivation of the bound on
  $l$ in Eq.~\ref{eq:TMPS_bounds}.

  For the TMPS extractor, the number of seed bits $l$ is constrained
  according to 
  \begin{equation*}
    l \ge w^2\max \left(2, 1+
      \left\lceil[\log_2(\sigma-e)-\log_2(w-e)]/[\log_2e-\log_2(e-1)]\right\rceil\right) 
  \end{equation*}
  where $w$ is the smallest prime number satisfying $w\geq
  2\lceil\log_2(4n\sigma^2/\epsx^2)\rceil$.  First drop the
  negative contribution $-e$ inside the first logarithm and the summand
  $-\log_{2}(w-e)$ from the expression and apply
  $\log_{2}(e)-\log_{2}(e-1)\approx 0.66173>1/2$ to show that $l \ge
  w^2\max(2,1+ \lceil 2\log_2(\sigma)\rceil)$
  implies the exact constraint. 
  Next, we increase the right-hand side according to
  \begin{equation*}
    \max(2,1+\lceil 2\log_{2}(\sigma)\rceil)
    \leq \max(2,2+2\log_{2}(\sigma))
    =2+2\log_{2}(\sigma)\leq 4\log_{2}(\sigma),
  \end{equation*}
  taking advantage of the constraint $\sigma\geq 2$.  Now we have
  $l\ge 4 w^{2}\log_{2}(\sigma)$.  The lower bound on $w$ is at least
  $10$.  According to the improvement on Bertrand's postulate
  in Ref.~\cite{nagura:qc1952a}, for any integer $x\ge 9$, there exists a 
  prime $p$ with $x<p<4x/3$. With $x=2\lceil\log_{2}(4n\sigma^{2}/\epsx^{2})\rceil-1\geq 9$, we have $w<4x/3$. Since
  $2\lceil y\rceil\leq 2(y+1)$, the upper bound on $w$ is
  \begin{equation*}
    w\le \lfloor (4/3)(\lceil 2(\log_{2}(4n\sigma^{2}/\epsx^{2}\rceil-1)\rfloor
    = \lfloor (8/3)\log_{2}(4n\sigma^{2}/\epsx^{2})+4/3\rfloor
    \le 3\log_{2}(4n\sigma^{2}/\epsx^{2}),
  \end{equation*}
  since $(1/3)\log_{2}(4n\sigma^{2})\ge 4/3$.
  The simplified constraint in
  Eq.~\ref{eq:TMPS_bounds} therefore suffices.  }

Given that the extracted bits are allowed a non-zero distance from
uniform, it is not necessary to satisfy a strict min-entropy bound on
the extractor input $D$. It suffices for the distribution of $D$ to
have some acceptably small TV distance $\epse $ from a distribution
with min-entropy $\sige\log(2) $.  We say that $D$ has $\epse $-smooth
min-entropy $\sige\log(2) $ if its distribution is within TV distance
$\epse$ of one with min-entropy $\sige\log(2) $.  For the situation
considered so far in this section, the error bound (or
\emph{smoothness}) $\epse $ and the extractor error $\epsx $ can be
added when applying the extractor to $D$. In this work, we generally
work directly with the maximum probability.  We say that $D$ has
$\epse $-smooth max-prob $p$ if it has $\epse$-smooth min-entropy
$-\log(p)$.

\Pc{The following lemma is not used in
  the paper.

  \begin{lemma*}
    Let $p\geq 1/|\Rng(D)|$.  Suppose that there is an event $\{\phi\}$
    such that $\Prob(\phi)\geq 1-\epse $ and $\Probv(D,\phi) \leq
    p$ as a function on event space. Then $D$ has
    $\epse $-smooth max-prob $p$. Conversely, if $D$ has
    $\epse $-smooth max-prob $p$, then there is an event $\{\phi\}$,
    possibly in an extended event space, such that $\Prob(\phi)\geq
    1-\epse $ and $\Probv(D, \phi) \leq p$.
  \end{lemma*}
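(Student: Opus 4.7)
The plan is to prove the two directions separately, relying on the subprobability-distribution machinery already established in Lem.~\ref{lm:tvdist_weight} and Lem.~\ref{lm:tvsub}. The bridge between the two formulations is to identify the subprobability distribution $\tilde\nu(d) = \Prob(D=d, \phi)$ with the pointwise minimum $\min(\mu[D](d), \nu'(d))$ of the true and smoothed distributions.

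For the forward direction, suppose $\{\phi\}$ is an event with $\Prob(\phi)\geq 1-\epse$ and $\Prob(D=d,\phi)\leq p$ for all $d$. I would define the subprobability distribution $\tilde\nu(d) = \Prob(D=d,\phi)$ on $\Rng(D)$. It clearly satisfies $\tilde\nu\leq \mu[D]$ (since $\Prob(D=d,\phi)\le \Prob(D=d)$), $\tilde\nu\le p$ by hypothesis, and has weight $\sum_d \tilde\nu(d) = \Prob(\phi) \geq 1-\epse$. Applying Lem.~\ref{lm:tvsub} (using the hypothesis $p\ge 1/|\Rng(D)|$) to $\mu[D]$ and $\tilde\nu$ produces a distribution $\nu'$ with $\nu'\leq p$ (so max-prob at most $p$) and $\TV(\mu[D],\nu')\leq\epse$, which is exactly the definition of $D$ having $\epse$-smooth max-prob $p$.

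For the converse, suppose $D$ has $\epse$-smooth max-prob $p$, so there exists a distribution $\nu'$ of $D$ with $\nu'\leq p$ and $\TV(\mu[D],\nu')\leq\epse$. Define the pointwise minimum $\tilde\nu(d)=\min(\mu[D](d),\nu'(d))$. This is a subprobability distribution dominated by both $\mu[D]$ and $\nu'$, so $\tilde\nu\le p$, and from the defining formula of TV distance as $\frac{1}{2}\sum_d|\mu[D](d)-\nu'(d)|$ together with $\sum_d \tilde\nu(d)=1-\sum_d(\mu[D](d)-\tilde\nu(d))$, we get that its weight equals $1-\TV(\mu[D],\nu')\geq 1-\epse$. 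It remains to realize $\tilde\nu$ as the joint probability $\Prob(D=d,\phi)$ for some event $\phi$. I would do this by extending the event space with an independent RV $U$ uniform on $[0,1]$ and setting
\begin{equation*}
\phi = \bigl\{U \le \tilde\nu(D)/\mu[D](D)\bigr\},
\end{equation*}
with the ratio understood as $0$ whenever $\mu[D](D)=0$ (a probability-zero event). By independence of $U$ from $D$, $\Prob(D=d,\phi)=\mu[D](d)\cdot\tilde\nu(d)/\mu[D](d)=\tilde\nu(d)\leq p$ on the support of $D$, and $\Prob(\phi)=\sum_d \tilde\nu(d)\geq 1-\epse$.

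The main subtlety, and the reason the converse requires extending the event space, is that the original space need not contain enough structure to carve out an event whose joint probabilities with $D$ exactly equal the prescribed $\tilde\nu$; introducing the auxiliary uniform RV $U$ is the standard way to manufacture such an event, and no deeper obstacle arises once this extension is in place.
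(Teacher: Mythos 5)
Your proof is correct and follows essentially the same route as the paper's: the forward direction identifies $\tilde\nu(d)=\Prob(d,\phi)$ as a subprobability distribution and invokes Lem.~\ref{lm:tvsub}, and the converse takes the pointwise minimum of $\mu[D]$ and the smoothed distribution and realizes it as $\Prob(d,\phi)$ via an auxiliary uniform RV on an extended space, exactly as in the paper. The only detail the paper adds is the remark that the auxiliary $[0,1]$-valued RV can be replaced by a finite-valued one to respect the standing finiteness assumptions.
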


  \begin{proof} Define the subprobability distribution $\tilde\nu$ of
    $D$ by $\tilde\nu(d)=\Prob(d,\phi)$. Then $\tilde\nu$ has weight
    \begin{equation}
      w=\sum_{d}\tilde\nu(d)=\sum_{d}\Prob(d,\phi)=\Prob(\phi)\geq 1-\epse 
      \notag
    \end{equation}
    and is dominated by $\mu[D]$.  We have $\tilde\nu\leq p$, so by
    Lem.~\ref{lm:tvsub}, there is a distribution $\nu'\geq\tilde\nu$
    with $\nu'\leq p$.  By Lem.~\ref{lm:tvdist_weight},
    $\TV(\nu',\mu[D])\leq 1-w=\epse$.  It follows that $\mu[D]$ has $\epse
    $-smooth max-prob $p$.

    For the converse, let $\nu$ be a max-prob $p$ distribution on
    $\Rng(D)$ within $\TV$ distance $\epse $ of $\mu[D]$. Define
    $\tilde\nu$ by $\tilde\nu(d)=\min(\nu(d),\mu(d))$. Then $\tilde\nu$
    has weight $w\geq 1-\epse $.  Extend the event space so that we have
    an RV $F$ uniformly distributed on $[0,1]$ and independent of all
    original events.  We can now define $\{\phi\}= \{F\leq
    \tilde\nu(D)/\mu(D)\}$ where we let $\tilde\nu(d)/\mu(d)=0$ if
    $\mu(d)=0$.  By independence of $F$, $\Prob(d,\phi) =
    \Prob(d)\Prob(\phi|c)=\Prob(d)\tilde\nu(d)/\mu(d) =
    \tilde\nu(d) \leq p$ (because by definition $\Prob(d)=\mu(d)$), and
    $\Prob(\phi)=\sum_{d}\Prob(d,\phi) = \sum_{d}\tilde\nu(d)=w\geq
    1-\epse $.  While $F$ as introduced is not finite-valued, only its 
    cumulative distribution at values achieved by $\tilde\nu(D)/\mu(D)$ 
    matter, so we can replace it with a finite-valued RV to preserve 
    our finiteness assumptions, if necessary.
  \end{proof}
}

We generally want to generate bits that are near-uniform conditional
on $E$ and often other variables such as $Z$. For our analyses, $E$ is
not particularly an issue because our results hold uniformly for all
values of $E$, that is, conditionally on $\{E=e\}$ for each
$e$. However this is not the case for $Z$.

\begin{definition}\label{def:smoothminentropy}
  The distribution $\mu$ of $DZE$ has \emph{$\epsilon$-smooth
    $ZE$-conditional max-prob $p$} if the following two conditions
  hold: 1) For each $ze$ there exists a subprobability distribution
  $\tilde\mu_{ze}$ of $D$ such that $\tilde\mu_{ze}\leq\mu[D|ze]$ and
  $\tilde\mu_{ze}\leq p$; 2) the total weight of these subprobability
  distributions satisfies $\sum_{ze}w(\tilde\mu_{ze})\mu(ze)\geq
  1-\epsilon$.  The minimum $p$ for which $\mu$ has $\epsilon$-smooth
  $ZE$-conditional max-prob $p$ is denoted by $P^{{\rm
      u},\epsilon}_{\max,\mu}(D|ZE)$. (The superscript ${\rm u}$
  alludes to the uniformity of the bound with respect to $ZE$.)

  The distribution $\mu$ of $DZE$ has \emph{$\epsilon$-smooth average
    $ZE$-conditional max-prob $p$} if there exists a
  distribution $\nu$ of $DZE$ with $\TV(\nu,\mu)\leq\epsilon$ and
  $\sum_{ze}\max_{d}(\nu(d|ze))\nu(ze) \leq p$. The minimum $p$ for
  which $\mu$ has $\epsilon$-smooth average $ZE$-conditional
  max-prob $p$ is denoted by $P^{\epsilon}_{\max,\mu}(D|ZE)$.  The quantity
  $H^{\epsilon}_{\min,\mu}(D|ZE)=-\log(P^{\epsilon}_{\max,\mu}(D|ZE))$ is the
  \emph{$\epsilon$-smooth $ZE$-conditional min-entropy}.
\end{definition}

We refer to the smoothness parameters as \emph{error bounds}.  Observe
that the definitions are monotonic in the error bound.  For example,
if $P^{\epsilon}_{\max,\mu}\leq p$ and $\epsilon'\geq \epsilon$, then
$P^{\epsilon'}_{\max,\mu}\leq p$.  The quantity
$\sum_{ze}\max_{d}(\nu(d|ze))\nu(ze)$ in the definition of
$P^{\epsilon}_{\max,\mu}$ can be recognized as the (average) maximum
guessing probability of $D$ given $Z$ and $E$ (with respect to $\nu$),
whose negative logarithm is the guessing entropy defined, for
example, in Ref.~\cite{konig:2008}.  The relationships established
below reprise results from the references. We use them to prove
soundness of the first two protocols for composing probability
estimation with randomness extractors
(Thms.~\ref{thm:prot_chainlemmas} and~\ref{thm:estext_nofail}) but
bypass them for soundness of the third (Thm.~\ref{thm:estext_fail}).

We focus on probabilities rather than entropies because in this
work, we achieve good performance for finite data by direct
estimates of probabilities of actual events, not entropies of
distributions. While entropy estimates may be considered the
ultimate goal for existing applications, they are not fundamental in
our approach. The focus on probabilities helps us to avoid
introducing logarithms unnecessarily.

A summary of the relationships between conditional min-entropies and
randomness extraction is given in Ref.~\cite{koenig:qc2009a} for the
quantum case and can be specialized to the classical case.  When so
specialized, the definition of smooth conditional min-entropy in, for
example, Ref.~\cite{koenig:qc2009a} differs from the one above in that
Ref.~\cite{koenig:qc2009a} uses one of the fidelity-related
distances. One such distance reduces to the Hellinger distance $h$ for
probability distributions for which $h^{2}\leq \TV \leq \sqrt{2}h$.

\Pc{This is to fill in the
  details on the connection between our definitions of smooth max-prob
  and the quantum conditional min-entropies.

  It takes a bit of work to relate the typical definition of quantum
  conditional min-entropy to the classical definition given here. For
  establishing the relationship, we set the smoothness parameter to
  $0$.  For a quantum state $\rho$ and a positive semidefinite operator $\sigma$ on
  systems $\Pfnt{AB}$, define $d_{\max}(\rho|\sigma)=\min\{p: \rho\leq
  p\sigma\}$.  The conditional min-entropy of $\rho$ given $\Pfnt{B}$
  is the negative log of the minimum value of
  $d_{\max}(\rho|I\otimes\sigma)$ over states $\sigma$ of
  $\Pfnt{B}$. Let $\tilde d_{\max}(\rho|\Pfnt{B})$ be this minimum
  value.  We show that for classical-classical states $\rho$ on
  $\Pfnt{AB}$, we have $\tilde
  d_{\max}(\rho|\Pfnt{B})=P^{0}_{\max,\mu}(A|B)$, where $\mu$ is the
  probability distribution defined by the diagonal of $\rho$. By
  definition, $\rho$ is classical-classical if it is diagonal in the
  logical (or classical) basis of $\Pfnt{AB}$, so that
  $\rho=\sum_{ab}\rho_{ab}\hat{a}\otimes\hat{b}$ with $\hat u =
  \ketbra{u}{u}$. We then define $\mu$ by $\mu(ab)=\rho_{ab}$.
  
  We first show that $\tilde d_{\max}(\rho|\Pfnt{B})\geq
  P^{0}_{\max,\mu}(A|B)$.  For all $p$ and $\sigma$ such that
  $\rho\leq pI\otimes\sigma$, we have $\rho\leq
  pI\otimes\tilde\sigma$, where $\tilde\sigma=\sum_{j}\nu(b)\hat b$ is
  $\sigma$ decohered in the classical basis. That is,
  $\nu(b)=\sigma_{bb}$ with index notation for $\sigma$.  The
  inequality is equivalent to the statement that for all $b$,
  $\max_{a}(\mu(ab))\leq p\nu(b)$. This implies
  $\sum_{b}\max_{a}(\mu(a|b))\mu(b)=
  \sum_{b}\max_{a}(\mu(a|b)\mu(b))= \sum_{b}\max_{a}\mu(ab)\leq
  \sum_{b}p\nu(b)=p$. That is, $P^{0}_{\max,\mu}(A|B)\leq p$.  The
  definition of $\tilde d_{\max}(\rho|\Pfnt{B})$ and arbitrariness of
  $p$ and $\sigma$ subject to $\rho\leq pI\otimes\sigma$ implies the
  desired inequality.

  To show that $\tilde d_{\max}(\rho|\Pfnt{B})\leq
  P^{0}_{\max,\mu}(A|B)$, suppose that
  $\sum_{b}\max_{a}(\mu(a|b))\mu(b)\leq p$, that is,
  $P^{0}_{\max,\mu}(A|B)\leq p$. Define
  $\tilde\nu(b)=\max_{a}(\mu(a|b))\mu(b)/p$.  Then $\tilde\nu(b)$ is a
  subprobability distribution and $\max_{a}(\mu(ab))/p =
  \tilde\nu(b)$. Let $\nu$ be a distribution dominating $\tilde\nu$.
  Then the density matrix $\sigma=\sum_{b}\nu(b)\hat b$ contributes to
  the definition of $\tilde d_{\max}(\rho|\Pfnt{B})$, which
  establishes the desired inequality.
}

Uniform or average bounds on $Z$-conditional max-probs with respect to
$E=e$ can be lifted to the $ZE$-conditional max-probs, as formalized
by the next lemma.

\begin{lemma}\label{lm:pmax_noe}
  If for all $e$, $P^{{\rm u},\epsilon}_{\max,\mu[DZ|e]}(D|Z)\leq p$, then
  $P^{{\rm u},\epsilon}_{\max,\mu[DZE]}(D|ZE)\leq p$. Suppose that for
  all $e$, $P^{\epsilon_{e}}_{\max,\mu[DZ|e]}(D|Z)\leq p_{e}$, and
  let $\bar\epsilon=\sum_{e}\epsilon_{e}\mu(e)$ and $\bar
  p=\sum_{e}p_{e}\mu(e)$. Then
  $P^{\bar\epsilon}_{\max,\mu[DZE]}(D|ZE)\leq \bar p$.
\end{lemma}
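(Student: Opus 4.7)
The plan is to treat the two claims as separate definition-chasing exercises, in each case gluing the $e$-conditional witnesses supplied by the hypothesis into a global witness for the joint distribution of $DZE$ using $\mu(e)$ as weights. Both halves work because the marginal on $E$ is preserved by the construction, which keeps the error-budget accounting clean.

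For the uniform statement, the hypothesis gives, for every $e$, a family $\{\tilde\mu_{z,e}\}_{z}$ of subprobability distributions on $\Rng(D)$ with $\tilde\mu_{z,e}\leq \mu[D|ze]$, $\tilde\mu_{z,e}\leq p$, and $\sum_{z}w(\tilde\mu_{z,e})\mu(z|e)\geq 1-\epsilon$. I would simply reindex this family by the pair $ze$ and feed it into Def.~\ref{def:smoothminentropy} for the joint distribution $\mu[DZE]$: the pointwise bounds $\tilde\mu_{ze}\leq\mu[D|ze]$ and $\tilde\mu_{ze}\leq p$ carry over verbatim, and the required total weight satisfies
\begin{equation*}
\sum_{ze}w(\tilde\mu_{ze})\mu(ze)=\sum_{e}\mu(e)\sum_{z}w(\tilde\mu_{z,e})\mu(z|e)\geq\sum_{e}\mu(e)(1-\epsilon)=1-\epsilon.
\end{equation*}

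For the average statement, the hypothesis furnishes, for each $e$, a distribution $\nu_{e}$ on $\Rng(DZ)$ with $\TV(\nu_{e},\mu[DZ|e])\leq\epsilon_{e}$ and $\sum_{z}\max_{d}(\nu_{e}(d|z))\nu_{e}(z)\leq p_{e}$. I would define a joint distribution $\nu$ of $DZE$ by $\nu(dze)=\mu(e)\nu_{e}(dz)$, which is manifestly a probability distribution with marginal $\mu(e)$ on $E$ and $E$-conditional $\nu[DZ|e]=\nu_{e}$. Because $\nu$ and $\mu[DZE]$ share the same $E$-marginal, Eq.~\ref{eq:tv_samemarg} gives $\TV(\nu,\mu[DZE])=\sum_{e}\mu(e)\TV(\nu_{e},\mu[DZ|e])\leq\bar\epsilon$. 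The average max-prob factors in the same way: $\nu(d|ze)=\nu_{e}(d|z)$ and $\nu(ze)=\mu(e)\nu_{e}(z)$, so
\begin{equation*}
\sum_{ze}\max_{d}(\nu(d|ze))\nu(ze)=\sum_{e}\mu(e)\sum_{z}\max_{d}(\nu_{e}(d|z))\nu_{e}(z)\leq\sum_{e}\mu(e)p_{e}=\bar p,
\end{equation*}
which is exactly the condition for $P^{\bar\epsilon}_{\max,\mu[DZE]}(D|ZE)\leq \bar p$.

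I do not anticipate a genuine obstacle; both parts are essentially unrollings of Def.~\ref{def:smoothminentropy} combined with the same-marginal TV identity established earlier. The only thing worth double-checking is that the glued object $\nu$ is a bona fide probability distribution (which is immediate from $\sum_{e}\mu(e)=1$ and each $\nu_{e}$ being normalized) and that the subprobability family in the uniform case lives in the right space after reindexing (immediate). No new construction or estimate is needed beyond these bookkeeping steps.
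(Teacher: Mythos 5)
Your proposal is correct and follows essentially the same route as the paper's own proof: reindexing the per-$e$ subprobability witnesses for the uniform case, and gluing the per-$e$ witnesses via $\nu(dze)=\mu(e)\nu_{e}(dz)$ together with Eq.~\ref{eq:tv_samemarg} for the average case. No substantive differences.
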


\begin{proof}
  For the first claim, for each $e$, let $\tilde\mu_{ze}$ be
  subprobability distributions witnessing that $P^{{\rm
      u},\epsilon}_{\max,\mu[DZ|e]}(D|Z)\leq p$ according to the
  definition.  Then $\tilde\mu_{ze}\leq \mu[D|ze]$,
  $\tilde\mu_{ze}\leq p$ and $\sum_{z}w(\tilde\mu_{ze})\mu(z|e)\geq
  1-\epsilon$, from which we get
  \begin{equation}
    \sum_{ze}w(\tilde\mu_{ze})\mu(ze) 
    =\sum_{e}\mu(e)\sum_{z}w(\tilde\mu_{ze})\mu(z|e)
    \geq \sum_{e}\mu(e)(1-\epsilon)=(1-\epsilon).
  \end{equation}

  For the second claim, for each $e$, let $\nu_{e}$ witness
  $P^{\epsilon_{e}}_{\max,\mu[DZ|e]}(D|Z)\leq p_{e}$.
  Then $\TV(\nu_{e},\mu[DZ|e])\leq\epsilon_{e}$ and 
  $\sum_{z}\max_{d}(\nu_{e}(d|z))\nu_{e}(z) \leq p_{e}$.
  Define $\nu$ by $\nu(dze)=\nu_{e}(dz)\mu(e)$.
  Then $\nu[E]=\mu[E]$, so we can apply Eq.~\ref{eq:tv_samemarg} for
  \begin{equation}
    \TV(\nu,\mu)=\sum_{e}\TV(\nu_{e},\mu[DZ|e])\mu(e)
    \leq \sum_{e}\epsilon_{e}\mu(e)=\bar\epsilon.
  \end{equation}
  Furthermore,
  \begin{align}
    \sum_{ze}\max_{d}(\nu(d|ze))\nu(ze)
    &= \sum_{e}\mu(e)\sum_{z}\max_{d}(\nu_{e}(d|z))\nu(z|e)\notag\\
    &\leq \sum_{e}\mu(e) p_{e} = \bar p,
  \end{align}
  as required for the conclusion.
\end{proof}

The main relationships between conditional and average conditional
max-probs are determined by the next lemma.

\begin{lemma}\label{lm:pmax_avpmax}
  Let $p\geq 1/|\Rng(D)|$, $\epsilon>0$ and $\mu$ a distribution of
  $DZE$. If $P^{{\rm u},\epsilon}_{\max,\mu}(D|ZE)\leq p$, then
  $P^{\epsilon}_{\max,\mu}(D|ZE)\leq p$. If
  $P^{\epsilon}_{\max,\mu}(D|ZE)\leq p\delta$, then there exists a
  distribution $\nu'$ of $DZE$ with $\TV(\nu',\mu)\leq\epsilon+\delta$ such
  that $P^{{\rm u},0}_{\max,\nu'}(D|ZE)\leq p$.  If
  $P^{\epsilon}_{\max,\mu}(D|ZE)\leq p\delta$, then
  $P^{{\rm u},2\epsilon+\delta}_{\max,\mu}(D|ZE)\leq p$.
\end{lemma}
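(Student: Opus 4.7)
The plan is to handle the three claims in order, chaining TV-distance estimates via the triangle inequality, using Lem.~\ref{lm:tvsub}, the conditional TV decomposition identities Eqs.~\ref{eq:tv_samemarg} and~\ref{eq:tv_samecond}, and a Markov-style estimate on a ``bad'' subset of $(z,e)$ pairs. For claim~1, start from subprobability distributions $\tilde\mu_{ze}\leq\mu[D|ze]$ witnessing $P^{{\rm u},\epsilon}_{\max,\mu}(D|ZE)\leq p$. Apply Lem.~\ref{lm:tvsub} for each $(z,e)$ with $\mu(ze)>0$, using $p\geq 1/|\Rng(D)|$, to obtain $\nu_{ze}\geq\tilde\mu_{ze}$ with $\nu_{ze}\leq p$ and $\TV(\nu_{ze},\mu[D|ze])\leq 1-w(\tilde\mu_{ze})$. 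Glue these into $\nu(dze)=\mu(ze)\nu_{ze}(d)$, whose $ZE$ marginal equals that of $\mu$; Eq.~\ref{eq:tv_samemarg} then gives $\TV(\nu,\mu)\leq 1-\sum_{ze}\mu(ze)w(\tilde\mu_{ze})\leq\epsilon$. Since $\max_{d}\nu(d|ze)\leq p$, $\nu$ witnesses $P^{\epsilon}_{\max,\mu}(D|ZE)\leq p$.

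For claim~2, let $\nu$ witness $P^{\epsilon}_{\max,\mu}(D|ZE)\leq p\delta$ and define the bad set $B=\{(z,e):\nu(ze)>0,\;\max_{d}\nu(d|ze)>p\}$. A Markov-type calculation, using $\sum_{ze}\nu(ze)\max_{d}\nu(d|ze)\leq p\delta$, yields $\nu(B)\leq\delta$. Construct $\nu'$ with $\nu'[ZE]=\nu[ZE]$ by keeping $\nu'[D|ze]=\nu[D|ze]$ off $B$ and replacing it by $\Unif_{D}$ on $B$, which is permissible because $p\geq 1/|\Rng(D)|$. Then $\nu'(d|ze)\leq p$ everywhere, so the conditionals themselves serve as the witnessing subprobability distributions of weight $1$ and $P^{{\rm u},0}_{\max,\nu'}(D|ZE)\leq p$; Eq.~\ref{eq:tv_samemarg} combined with the triangle inequality gives $\TV(\nu',\mu)\leq\delta+\epsilon$.

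For claim~3, start from the $\nu'$ of claim~2 and retarget it to carry the correct $ZE$ marginal by defining $\nu''(dze)=\mu(ze)\nu'(d|ze)$, taking $\nu'[D|ze]=\Unif_{D}$ wherever $\nu'(ze)=0$. Then $\nu''[ZE]=\mu[ZE]$ and $\nu''(d|ze)\leq p$. The main obstacle, where one can easily lose a factor, is the TV bookkeeping here: by construction $\nu'[ZE]=\nu[ZE]$, so $\TV(\nu'',\nu')=\TV(\mu[ZE],\nu[ZE])\leq\TV(\mu,\nu)\leq\epsilon$ (using Eq.~\ref{eq:tv_samecond} together with a direct computation for the $\nu'(ze)=0$ terms), \emph{not} the looser $\TV(\mu,\nu')\leq\epsilon+\delta$ bound one would get via naive triangulation; combined with the triangle inequality this gives $\TV(\mu,\nu'')\leq 2\epsilon+\delta$. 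Finally, take the truncations $\tilde\mu_{ze}(d)=\min(\mu(d|ze),p)$, which automatically satisfy $\tilde\mu_{ze}\leq\mu[D|ze]$ and $\tilde\mu_{ze}\leq p$; since $\nu''(d|ze)\leq p$, the inequality $\max(\mu(d|ze)-p,0)\leq\max(\mu(d|ze)-\nu''(d|ze),0)$ holds term by term, and summing against $\mu(ze)$ yields $\sum_{ze}\mu(ze)(1-w(\tilde\mu_{ze}))\leq\TV(\mu,\nu'')\leq 2\epsilon+\delta$, exactly the defining condition for $P^{{\rm u},2\epsilon+\delta}_{\max,\mu}(D|ZE)\leq p$.
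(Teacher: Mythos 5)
Your proposal is correct and follows essentially the same route as the paper's proof: Lem.~\ref{lm:tvsub} gluing with Eq.~\ref{eq:tv_samemarg} for the first claim, the Markov-inequality truncation to uniform on the bad $(z,e)$ set for the second, and re-marginalizing onto $\mu[ZE]$ with the tighter $\TV(\nu'',\nu')=\TV(\nu[ZE],\mu[ZE])\leq\epsilon$ accounting for the third. The only difference is cosmetic: you take $\tilde\mu_{ze}(d)=\min(\mu(d|ze),p)$ instead of the paper's $\min(\nu''(d|ze),\mu(d|ze))$, and your comparison $\max(\mu(d|ze)-p,0)\leq\max(\mu(d|ze)-\nu''(d|ze),0)$ recovers the same weight bound $1-(2\epsilon+\delta)$.
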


\begin{proof}
  Suppose that $P^{{\rm u},\epsilon}_{\max,\mu}(D|ZE)\leq p$. Let
  $\tilde\mu_{ze}$ be subprobability distributions witnessing this
  inequality as required by the definition.  Since $p\geq 1/|\Rng(D)|$
  and by Lem.~\ref{lm:tvsub}, there exist distributions $\nu_{ze}$
  with $\nu_{ze}\geq\tilde\mu_{ze}$, $\nu_{ze}\leq p$ and
  $\TV(\nu_{ze},\mu[D|ze])\leq 1-w(\tilde\mu_{ze})$.
  Define $\nu$ by $\nu(dze)=\nu_{ze}(d)\mu(ze)$. Then $\nu[ZE]=\mu[ZE]$, and
  according to Eq.~\ref{eq:tv_samemarg}
  \begin{align}
    \TV(\nu,\mu)&=\sum_{ze}\TV(\nu_{ze},\mu[D|ze])\mu(ze)\notag\\
    &\leq \sum_{ze}(1-w(\tilde\mu_{ze}))\mu(ze) \le 1-(1-\epsilon)=\epsilon.
   \label{eq:tv_lemma6}
  \end{align}
  Since $\sum_{ze}\max_{d}(\nu_{ze}(d))\nu(ze)\leq \sum_{ze}p\mu(ze)=p$,
  the first part of the lemma follows.

  For the second part of the lemma, suppose that
  $P^{\epsilon}_{\max,\mu}(D|ZE)\leq p\delta$ and let $\nu$ be a
  distribution witnessing this inequality. By definition,
  $\TV(\nu, \mu)\leq \epsilon$ and $\sum_{ze}\max_{d}(\nu(d|ze))\nu(ze)\leq p\delta$.  Let
  $m(ze)=\max_{d}(\nu(d|ze))$.  By Markov's inequality,
  $\nu(m(ZE)>p)\leq\delta$. Let $\nu'(dze)=\nu(dze)$ if $m(ze)\leq p$
  and $\nu'(dze)=\nu(ze)/|\Rng(D)|$ otherwise. Then $\TV(\nu',\nu)\leq
  \delta$ and for all $ze$, $\max_{d}(\nu'(d|ze))\leq p$.  Note that
  $\nu'[ZE]=\nu[ZE]$.  By the triangle inequality, $\TV(\nu',\mu)\leq
  \epsilon+\delta$.  By letting $\tilde\nu'_{ze}=\nu'[D|ze]$ in the
  definition for $ZE$-conditional max-prob, we see that $P^{{\rm
      u},0}_{\max,\nu'}(D|ZE)\leq p$. 

  For the last part of the lemma, define
  $\nu''(dze)=\nu'(d|ze)\mu(ze)$.  Then
  \begin{equation}
    \TV(\nu',\nu'')=\TV(\nu'[ZE],\nu''[ZE])=\TV(\nu[ZE],\mu[ZE])\leq
    \epsilon,
  \end{equation}
  where we used the identities $\nu'[D|ze]=\nu''[D|ze]$ and
  Eq.~\ref{eq:tv_samecond}, and applied the data-processing
  inequality. 
  Let
  $\tilde\mu_{ze}(d)=\min(\nu''(d|ze),\mu(d|ze))=
  \min(\nu'(d|ze),\mu(d|ze))$. Then the $\tilde\mu_{ze}$ are
  subprobability distributions, $\tilde\mu_{ze}\leq p$,
  $\tilde\mu_{ze}\leq \mu[D|ze]$ and
  \begin{align}
    \sum_{ze}w(\tilde\mu_{ze})\mu(ze) &=
    \sum_{zed}\min(\nu''(d|ze),\mu(d|ze))\mu(ze) \notag\\
    &= \sum_{zed}\min(\nu''(dze),\mu(dze))\notag\\
    &= \sum_{zed}\mu(dze)-\sum_{zed}(\mu(dze)-\nu''(dze))\knuth{\mu(dze)>\nu''(dze)}\notag\\
    &= 1-\TV(\nu'',\mu)\notag\\
    &\geq 1-\TV(\nu'',\nu')-\TV(\nu',\nu)- \TV(\nu,\mu)\notag\\
    &\geq 1-2\epsilon-\delta,
  \end{align}
  so $\tilde\mu_{ze}$ is as required by the definition.
\end{proof}

We remark for emphasis that as noted in the proof of the
lemma, the construction of $\nu'$ satisfies that
$\nu'[ZE]=\nu[ZE]$, where $\nu$ is the witness of the assumption in
the second part. 

The smoothness parameter for smooth conditional max-prob composes
well with strong extractors.  See also Ref.~\cite{konig:2008}, Prop. 1.

\begin{lemma}\label{lm:composition}
  Suppose that the distribution $\mu$ of $DZE$ satisfies
  $P^{{\rm u},\epse}_{\max,\mu}(D|ZE)\leq 2^{-\sige}$, and $S$ is a uniform
  and independent seed string with respect to $\mu$. If
  $(n=|D|,l=|S|,\sige , \sigma, \epsx )$ satisfies the extractor
  constraints, then $\TV(\mu[\cE(D,S;\sige ,\sigma,\epsx
  )SZE],\Unif\otimes\mu[ZE])\leq\epsx +\epse $.
\end{lemma}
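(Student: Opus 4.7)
The plan is to first use the smooth uniform max-prob hypothesis to construct a distribution $\nu$ on $DZE$ that is within TV distance $\epse$ of $\mu$, agrees with $\mu$ on the $ZE$-marginal, and satisfies $\nu[D|ze] \le 2^{-\sige}$ pointwise for every $ze$. Under $\nu$, each conditional $\nu[D|ze]$ is then a min-entropy-$\sige\log(2)$ source to which the strong extractor guarantee Eq.~\ref{eq:strongext} applies with the independent uniform seed $S$. Averaging that pointwise extractor bound over $ze$ and then transferring from $\nu$ back to $\mu$ via the data-processing inequality will give the advertised $\epse+\epsx$.

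For the construction of $\nu$: The hypothesis $P^{{\rm u},\epse}_{\max,\mu}(D|ZE)\leq 2^{-\sige}$ supplies subprobability distributions $\tilde\mu_{ze}$ with $\tilde\mu_{ze}\leq\mu[D|ze]$, $\tilde\mu_{ze}\leq 2^{-\sige}$, and $\sum_{ze}w(\tilde\mu_{ze})\mu(ze)\geq 1-\epse$. The extractor constraint $\sige\leq n$ gives $2^{-\sige}\geq 1/|\Rng(D)|$, so Lem.~\ref{lm:tvsub} yields distributions $\nu_{ze}$ of $D$ dominating $\tilde\mu_{ze}$, bounded above by $2^{-\sige}$, and satisfying $\TV(\nu_{ze},\mu[D|ze])\leq 1-w(\tilde\mu_{ze})$. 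Setting $\nu(dze)=\nu_{ze}(d)\mu(ze)$ preserves the $ZE$-marginal, and Eq.~\ref{eq:tv_samemarg} then gives
\begin{equation}
\TV(\nu,\mu)=\sum_{ze}\mu(ze)\TV(\nu_{ze},\mu[D|ze])\leq\sum_{ze}(1-w(\tilde\mu_{ze}))\mu(ze)\leq\epse.
\end{equation}
I extend $\nu$ to $S$ exactly as in $\mu$, keeping $S$ uniform and independent of $DZE$.

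Now I apply Eq.~\ref{eq:strongext} under $\nu$ conditional on each value $ze$: the input $D$ has $\nu$-min-entropy at least $\sige\log(2)$ and $S$ is uniform and independent, so
\begin{equation}
\TV\bigl(\nu[\cE(D,S;\sige,\sigma,\epsx)S\,|\,ze],\,\Unif\bigr)\leq\epsx.
\end{equation}
Both $\nu[\cE(D,S;\sige,\sigma,\epsx)SZE]$ and $\Unif\otimes\mu[ZE]=\Unif\otimes\nu[ZE]$ share the $ZE$-marginal $\nu[ZE]$, so Eq.~\ref{eq:tv_samemarg} averages these pointwise bounds to
\begin{equation}
\TV\bigl(\nu[\cE(D,S;\sige,\sigma,\epsx)SZE],\,\Unif\otimes\mu[ZE]\bigr)\leq\epsx.
\end{equation}

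To finish, note that $\cE(D,S;\sige,\sigma,\epsx)SZE$ is a deterministic function of $DSZE$, so the data-processing inequality gives $\TV(\mu[\cE(\cdot)SZE],\nu[\cE(\cdot)SZE])\leq\TV(\mu[DSZE],\nu[DSZE])$. Because $S$ is uniform and independent of $DZE$ under both $\mu$ and $\nu$, this last quantity equals $\TV(\mu,\nu)\leq\epse$. The triangle inequality then yields $\TV(\mu[\cE(D,S;\sige,\sigma,\epsx)SZE],\Unif\otimes\mu[ZE])\leq\epse+\epsx$, as claimed. The only mildly delicate point is bookkeeping: arranging $\nu$ so that $\nu[ZE]=\mu[ZE]$ (which is what lets Eq.~\ref{eq:tv_samemarg} split TV distances along $ZE$) and keeping $S$ independent and uniform under both measures so data-processing bounds translate cleanly. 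No deeper obstacle appears; the argument is a smoothing step followed by the standard strong-extractor bound lifted by averaging.
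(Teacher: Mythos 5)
Your proof is correct and follows essentially the same route as the paper's: construct the smoothed distribution $\nu$ via Lem.~\ref{lm:tvsub} and Eq.~\ref{eq:tv_samemarg} exactly as in the first part of Lem.~\ref{lm:pmax_avpmax}, apply the strong-extractor guarantee conditionally on each $ze$, average with Eq.~\ref{eq:tv_samemarg}, and finish with data-processing and the triangle inequality. Your explicit check that $\sige\leq n$ gives $2^{-\sige}\geq 1/|\Rng(D)|$ is a nice touch the paper leaves implicit.
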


\begin{proof} 
  Let $\tilde\mu_{ze}$ be subprobability distributions witnessing
  $P^{{\rm u},\epse}_{\max,\mu}(D|ZE)\leq 2^{-\sige}$ as required by
  the definition, and define $\nu$ as we did at the beginning of the
  proof of the first part of Lem.~\ref{lm:pmax_avpmax}, with
  $p=2^{-\sige}$ and $\epsilon=\epse$.  Then $\nu[D|ZE]\leq p$,
    $\nu[ZE]=\mu[ZE]$, and by Eq.~\ref{eq:tv_lemma6},
    $\TV(\nu,\mu)\leq\epse$.  According to the extractor
    specification, for all $ze$, we have
    $\TV(\nu[\cE(D,S)S|ze],\Unif)\leq\epsilon_{x}$. Since
    $\nu[ZE]=\mu[ZE]$, we can apply Eq.~\ref{eq:tv_samemarg} to get
  \begin{align}
    \TV(\nu[\cE(D,S)SZE],\Unif\otimes\mu[ZE]) 
    &= \sum_{ze}\TV(\nu[\cE(D,S)S|ze],\Unif)\mu(ze)\notag\\
    &\leq \sum_{ze}\epsilon_{x}\mu(ze)=\epsilon_{x}.
  \end{align}  
  By the data-processing inequality,
  $\TV(\nu[\cE(D,S)SZE],\mu[\cE(D,S)SZE])\leq \TV(\nu,\mu)\leq \epse$.
  The result now follows by the triangle inequality.
\end{proof}

By means of the third part of
Lem.~\ref{lm:pmax_avpmax} we can also compose smooth average
conditional max-prob with a strong extractor.

\begin{lemma}\label{lm:composition2}
  Suppose that the distribution $\mu$ of $DZE$ satisfies
  $P^{\epse}_{\max,\mu}(D|ZE)\leq 2^{-\sige}\delta$, and $S$ is a uniform
  and independent seed string with respect to $\mu$. If
  $(n=|D|,l=|S|,\sige , \sigma, \epsx )$ satisfies the extractor
  constraints, then $\TV(\mu[\cE(D,S;\sige ,\sigma,\epsx
  )SZE],\Unif\otimes\mu[ZE])\leq \epsx +2\epse+\delta$.
\end{lemma}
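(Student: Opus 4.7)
The plan is to reduce the claim to the previously established Lemma~\ref{lm:composition} by using the third part of Lemma~\ref{lm:pmax_avpmax} as a conversion tool. The bound $P^{\epse}_{\max,\mu}(D|ZE)\leq 2^{-\sige}\delta$ is expressed in the average (smooth) form, but Lemma~\ref{lm:composition} is phrased in terms of the uniform smooth max-prob $P^{{\rm u},\cdot}_{\max,\mu}(D|ZE)$, so the first job is to convert between the two notions at the cost of inflating the smoothness parameter.

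Concretely, I would first apply Lemma~\ref{lm:pmax_avpmax}, part three, with $p=2^{-\sige}$, smoothness $\epsilon=\epse$, and slack $\delta=\delta$. The hypothesis $P^{\epse}_{\max,\mu}(D|ZE)\leq p\delta = 2^{-\sige}\delta$ is exactly the one required, and the conclusion yields
\begin{equation}
P^{{\rm u},\,2\epse+\delta}_{\max,\mu}(D|ZE)\leq 2^{-\sige}.
\end{equation}
Note that this step implicitly needs $p=2^{-\sige}\geq 1/|\Rng(D)|$, but this is guaranteed by the extractor constraint $\sige\leq n=|D|$ listed in Section~\ref{sec:minentropy_extraction}, so the hypothesis of Lemma~\ref{lm:pmax_avpmax} is satisfied.

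Next, I would feed this uniform bound into Lemma~\ref{lm:composition} with the same parameters $(n,l,\sige,\sigma,\epsx)$, which by assumption satisfies the extractor constraints, and with smoothness $\epse$ there replaced by $2\epse+\delta$. The hypothesis that $S$ is a uniform and independent seed is unchanged. Lemma~\ref{lm:composition} directly yields
\begin{equation}
\TV\bigl(\mu[\cE(D,S;\sige,\sigma,\epsx)SZE],\,\Unif\otimes\mu[ZE]\bigr)\leq \epsx + (2\epse+\delta),
\end{equation}
which is exactly the stated conclusion.

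I do not expect any real obstacle here; the lemma is essentially a bookkeeping corollary, and the only mildly subtle point is checking that the condition $p\geq 1/|\Rng(D)|$ required by Lemma~\ref{lm:pmax_avpmax} is implied by the standing extractor constraints, so that the conversion from average-smooth to uniform-smooth max-prob is legitimate before invoking Lemma~\ref{lm:composition}.
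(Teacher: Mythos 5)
Your proposal is correct and follows essentially the same route as the paper, which likewise obtains $P^{{\rm u},2\epse+\delta}_{\max,\mu}(D|ZE)\leq 2^{-\sige}$ from the last part of Lem.~\ref{lm:pmax_avpmax} and then invokes Lem.~\ref{lm:composition}. Your explicit check that $2^{-\sige}\geq 1/|\Rng(D)|$ follows from the extractor constraint $\sige\leq n$ is a sound addition that the paper leaves implicit.
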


\begin{proof}
  From the last part of Lem.~\ref{lm:pmax_avpmax}, 
  we have $P^{{\rm u},2\epse+\delta}_{\max,\mu}(D|ZE)\leq 2^{-\sige}$.
  The result now follows from Lem.~\ref{lm:composition}.
\end{proof}

\subsection{Randomness Generation Protocols}

A randomness generation protocol $\cP$ is parameterized by $\sigma$,
the requested number of uniform bits and $\epsilon$, the protocol
error bound, defined as the $\TV$ distance from uniform of these
bits. Minimally, the output consists of a string of $\sigma$ bits. In
addition, the output can contain an additional string containing other
random bits that may have been used internally for seeding the
extractor or choosing settings, in case these bits can be used
elsewhere.  Further, the protocol may output a flag indicating success
or failure.  If the probability of success can be less than $1$,
the protocol may require as input a minimum acceptable probability
of success $\kappa$. We write $\cP =
(\cP_{X},\cP_{S},\cP_{P})$, where $\cP_{X}$ is the principal
$\sigma$-bit output, $\cP_{S}$ is the possibly empty but potentially
reusable string of seed and other random bits, and $\cP_{P}$ is the
success flag, which is $0$ (``fail'') or $1$ (``pass''). Here, we have
suppressed the arguments $\sigma$ and $\epsilon$ (or $\sigma$,
$\epsilon$ and $\kappa$) of $\cP,\cP_{X},\cP_{S},$ and $\cP_{P}$,
since they are clear from context.  The outputs of $\cP$ are treated
as RVs, jointly distributed with $E$ and any other RVs relevant to the
situation.  In this context, we constrain the joint distribution of
the RVs according to a model $\cH$.  When $\cP_{S}$ is non-empty, we
assume that the marginal distribution $\mu[\cP_{S}]$ is known to
the user.

The property that a protocol output satisfies the request is
called \emph{soundness}. We formulate soundness so that the $\TV$ distance
is conditional on success and $\epsilon$ absorbs the probability of
success criterion in a sense to be explained.

\begin{definition}
  The randomness generation protocol $\cP$ is
  \emph{$(\sigma,\epsilon)$-sound with respect to $E'$ and model $\cH$} if
  for all $\mu\in\cH$, $|\cP_{X}|=\sigma$ and there is a distribution
  $\nu_{E'}$ of $E'$ such that
  \begin{equation}
    \TV(\mu[\cP_{X}\cP_{S}E'|\cP_{P}=1],\Unif_{\cP_{X}}
    \otimes\mu[\cP_{S}]\otimes\nu_{E'})
    \Prob(\cP_{P}=1)
    \leq\epsilon.
    \label{eq:rngsoundness}
  \end{equation}
\end{definition}

We normally use $E'=ZE$, but $E'=E$ is an option if the settings
choices are private. The definition
ensures that $\cP$ is nearly indistinguishable, namely within $\TV$
distance $\epsilon$, from an \emph{ideal} protocol with the same
probability of success, namely such a protocol for which $\cP_{X}$
is perfectly uniform and independent of other variables conditional on success.

An alternative definition of soundness is to make the minimum
acceptable probability of success $\kappa$ explicit and require that
either $\Prob(\cP_{P}=1)<\kappa$ or the conditional $\TV$ distance
in Eq.~\ref{eq:rngsoundness} is bounded by $\epsilon$.  We refer to
a protocol satisfying this property as
$(\sigma,\epsilon,\kappa)$-sound.  The two definitions are closely
related:

\begin{lemma}\label{lm:smallkappa_or}
  Fix $E'$ and $\cH$.  Let $\kappa>0$ and $\epsilon>0$. Suppose
  that $\cP$ is a $(\sigma,\epsilon\kappa)$-sound randomness
  generation protocol, then $\cP$ is
  $(\sigma,\epsilon,\kappa)$-sound. If $\cP$ is
  $(\sigma,\epsilon,\kappa)$-sound, then $\cP$ is
  $(\sigma,\max(\epsilon,\kappa))$-sound.
\end{lemma}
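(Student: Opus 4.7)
The plan is to unpack both definitions of soundness and proceed by a case split on whether $\Prob(\cP_P=1)<\kappa$ or $\Prob(\cP_P=1)\geq\kappa$. Let me abbreviate the TV-distance quantity in Eq.~\ref{eq:rngsoundness} by
\begin{equation}
T(\mu) = \TV\bigl(\mu[\cP_{X}\cP_{S}E'|\cP_{P}=1],\Unif_{\cP_{X}}\otimes\mu[\cP_{S}]\otimes\nu_{E'}\bigr),\notag
\end{equation}
optimized over the choice of $\nu_{E'}$. Note $0\leq T(\mu)\leq 1$, since it is a TV distance between probability distributions.

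For the first implication, I assume $(\sigma,\epsilon\kappa)$-soundness, so $T(\mu)\Prob(\cP_{P}=1)\leq\epsilon\kappa$ for every $\mu\in\cH$. I split into cases: if $\Prob(\cP_{P}=1)<\kappa$, then the first disjunct in the definition of $(\sigma,\epsilon,\kappa)$-soundness holds and there is nothing more to show. If instead $\Prob(\cP_{P}=1)\geq\kappa>0$, I divide the inequality $T(\mu)\Prob(\cP_{P}=1)\leq\epsilon\kappa$ through by $\Prob(\cP_{P}=1)$ to get $T(\mu)\leq\epsilon\kappa/\Prob(\cP_{P}=1)\leq\epsilon$, which verifies the second disjunct.

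For the second implication, I assume $(\sigma,\epsilon,\kappa)$-soundness and want to bound $T(\mu)\Prob(\cP_{P}=1)$ by $\max(\epsilon,\kappa)$. Again I case split on the disjunction supplied by the hypothesis: if $\Prob(\cP_{P}=1)<\kappa$, then I use the trivial bound $T(\mu)\leq 1$ to obtain $T(\mu)\Prob(\cP_{P}=1)<\kappa\leq\max(\epsilon,\kappa)$; if instead $T(\mu)\leq\epsilon$, then I use the trivial bound $\Prob(\cP_{P}=1)\leq 1$ to obtain $T(\mu)\Prob(\cP_{P}=1)\leq\epsilon\leq\max(\epsilon,\kappa)$.

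There is essentially no obstacle here; the lemma is a bookkeeping statement comparing the product-form error bound with the disjunctive success-conditional bound, and the case splits suffice. The only minor point worth noting is that when $\Prob(\cP_{P}=1)=0$ the conditional distribution on the left side of Eq.~\ref{eq:rngsoundness} is not well-defined, but in that case $\Prob(\cP_{P}=1)<\kappa$ holds vacuously and the product $T(\mu)\Prob(\cP_{P}=1)$ is zero under any convention, so both implications still hold.
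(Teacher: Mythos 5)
Your proof is correct and follows essentially the same route as the paper's: divide the product bound by the success probability when it is at least $\kappa$ for the first implication, and use the trivial bounds $\TV\leq 1$ and $\Prob(\cP_P=1)\leq 1$ in the two cases for the second. The only cosmetic difference is that you phrase the case split for the second implication on the disjunction itself rather than on $\Prob(\cP_P=1)$ versus $\kappa$, which amounts to the same argument.
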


\begin{proof}
  Suppose that $\cP$ is $(\sigma,\epsilon\kappa)$-sound.
  Consider any $\mu\in\cH$ and let $\nu_{E'}$ be the 
  distribution of $E'$ in Eq.~\ref{eq:rngsoundness}.
  If $\Prob(\cP_{P}=1)\ge\kappa$, then the desired $\TV$ distance is at most
  $\epsilon\kappa/\kappa=\epsilon$. Since $\mu\in\cH$ is arbitrary,
  $\cP$ is $(\sigma,\epsilon,\kappa)$-sound.

  For the other direction, again consider any $\mu\in\cH$. If
  $\Prob(\cP_{P}=1)<\kappa$, then the left-hand side
  of Eq.~\ref{eq:rngsoundness} is at most $\kappa$, because the $\TV$
  distance is bounded by $1$. If $\Prob(\cP_{P}=1)\geq\kappa$,
  let $\nu_{E'}$ be such that the $\TV$ distance in Eq.~\ref{eq:rngsoundness}
  is bounded by $\epsilon$. Then the full left-hand side
  of this equation is bounded by $\epsilon\Prob(\cP_{P}=1)\leq\epsilon$.
  The conclusion follows.  
\end{proof}

We prefer to use $(\sigma,\epsilon)$-soundness, since this is
often more informative than $(\sigma,\epsilon,\kappa)$-soundness,
allowing for flexibility in how the protocol is used without looking
inside the soundness proof. However, in both cases, soundness proofs
may contain additional information about the relationship between
the probability of success and the error bound that is useful when
implementing the protocol in a larger context. For example, the
soundness established in Thm.~\ref{thm:prot_chainlemmas} establishes
a relationship between relevant error bounds that does not match
either definition, but readily implies forms of either one.

\Pc{The version of soundness in
  Ref.~\cite{arnon-friedman:2018} is closely related to
  $(\sigma,\epsilon)$-soundness.}

In studies of specific randomness generation protocols, an important
consideration is \emph{completeness}, which requires that the actual
probability of success of the protocol is non-trivially large,
preferably exponentially close to $1$ with respect to relevant
resource parameters. Completeness is readily satisfied by our
  protocols for typical Bell-test configurations.

The actual probability of success should be distinguished from any
relevant minimum acceptable probability of success in view of the
discussion above.  Experimental implementations so far demonstrate
that success probabilities are acceptable, but not arbitrarily close
to $1$. As discussed in the introduction, theory shows that there are
randomness generation protocols using quantum systems that can achieve
high success probabilities.  Here we emphasize protocols for which the
success probability is $1$ by allowing for injection of banked
randomness when insufficient randomness is available from the allotted
resources.  Assuming completeness, we can also take advantage of the
ability to stop the protocol only when enough randomness is generated,
but since this requires care when extracting near-uniform bits from
potentially long strings, we do not develop this approach further
here.

\subsection{Test Factors and Test Supermartingales}
\label{subsec:supermart}

\begin{definition}
  A \emph{test supermartingale~\cite{shafer:qc2009a} with respect to a
    stochastic sequence $\Sfnt{R}$ and model $\cH$} is a stochastic
  sequence $\Sfnt{T}=(T_{i})_{i=0}^{N}$ with the properties that
  $T_{0}=1$, for all $i$ $T_{i}\ge 0$, $T_{i}$ is determined by
  $\Sfnt{R}_{\leq i}$ and the governing distribution, and for all
  distributions in $\cH$, $\Exp(T_{i+1}|\Sfnt{R}_{\le i})\leq T_{i}$.
  The ratios $F_{i}=T_{i}/T_{i-1}$ with $F_{i}=1$ if $T_{i-1}=0$ are
  called the \emph{test factors of $\Sfnt{T}$}.
\end{definition}

Here $\Sfnt{R}$ captures the relevant information that accumulates in
a sequence of trials. It does not need to be accessible to the
experimenter.  The $\sigma$-algebras induced by $\Sfnt{R}_{\leq i}$
define the nested sequence of $\sigma$-algebras used in more general
formulations.  Between trials $i$ and $i+1$, the sequence
$\Sfnt{R}_{\le i}$ is called the past. In the definition, we allow for
$T_{i}$ to depend on the governing distribution $\mu$.  With this, for
a given $\mu$, $T_{i}$ is a function of $\Sfnt{R}_{\leq i}$.  Below,
when stating that RVs are determined, we implicitly include the
possibility of dependence on $\mu$ without mention.  The
$\mu$-dependence can arise through expressions such as
$\Exp_{\mu}(G|\Sfnt{R}_{\le i})$ for some $G$, which is determined by
$\Sfnt{R}_{\leq i}$ given $\mu$.  One way to formalize this is to
consider $\mu$-parameterized families of RVs. We do not make this
explicit and simply allow for our RVs to be implicitly parameterized
by $\mu$. We note that the governing distribution in a given
experiment or situation is fixed but usually unknown with most of its
features inaccessible. As a result, many RVs used in mathematical
arguments cannot be observed even in principle. Nevertheless, they
play important roles in establishing relationships between observed
and inferred quantities.

Defining $F_{i}=1$ when $T_{i-1}=0$ makes sense because given
$\{T_{i-1}=0\}$, we have $\{T_{i}=0\}$ with probability $1$.  The
sequence $\Sfnt{F}$ satisfies the conditions that for all $i$, 1)
$F_{i}\geq 0$, 2) $F_{i}$ is determined by $\Sfnt{R}_{\le i}$, and 3)
for all distributions in $\cH$, $\Exp(F_{i+1}|\Sfnt{R}_{\le i})\leq 1$.  
We can define test supermartingales in terms of such sequences: Let $\Sfnt{F}$ be a
stochastic sequence satisfying the three conditions.  Then the
stochastic sequence with members $T_0=1$ and $T_{i}=\prod_{1\leq j\leq
  i}F_{j}$ for $i\geq1$ is a test supermartingale. It suffices to
check that $\Exp(T_{i+1}|\Sfnt{R}_{\leq i}) \leq T_{i}$. This follows
from
\begin{equation}
  \Exp(T_{i+1}|\Sfnt{R}_{\leq i})=
  \Exp(F_{i+1}T_{i}|\Sfnt{R}_{\leq i})
  = \Exp(F_{i+1}|\Sfnt{R}_{\leq i})T_{i}\leq T_{i},
\end{equation}
where we pulled out the determined quantity $T_{i}$ from the
conditional expectation.  In this work, we construct test
supermartingales from sequences $\Sfnt{F}$ with the above
properties. We refer to any such sequence as a sequence of test
factors, without necessarily making the associated test
supermartingale explicit. We extend the terminology by calling an RV
$F$ a test factor with respect to $\cH$ if $F\geq 0$ and
$\Exp(F)\leq 1$ for all distributions in $\cH$.

For an overview of test supermartingales and their properties, see
Ref.~\cite{shafer:qc2009a}. The notion of test supermartingales
and proofs of their basic properties are due to
Ville~\cite{ville:qc1939a} in the same work that introduced the
notion of martingales.  The name ``test supermartingale''
appears to have been introduced in Ref.~\cite{shafer:qc2009a}. Test
supermartingales play an important theoretical role in proving many
results in martingale theory, including that of proving tail bounds
for large classes of martingales. They have been studied and applied
to Bell tests~\cite{zhang:2011,zhang:2013,christensen:qc2015a}.

The definition implies that for a test supermartingale $\Sfnt{T}$, for
all $n$, $\Exp(T_{n})\leq 1$. This follows inductively from
$\Exp(T_{i+1})=\Exp(\Exp(T_{i+1}|\Sfnt{R}_{\le i}))\le \Exp(T_i)$ and
$T_0=1$. An application of Markov's inequality shows that for all
$\epsilon>0$,
\begin{equation}
  \Prob(T_{n} \geq 1/\epsilon)\leq \epsilon.
  \label{eq:testmart_markov}
\end{equation}
Thus, a large final value $t=T_{n}$ of the test supermartingale is
evidence against $\cH$ in a hypothesis test with $\cH$ as the
(composite) null hypothesis.  Specifically, the RV $1/T$ is a
$p$-value bound against $\cH$, where in general, the RV $U$ is a
$p$-value bound against $\cH$ if for all distributions in $\cH$,
$\Prob(U\leq\epsilon)\leq\epsilon$.  This result can be
strengthened as follows. Define $T^{*} = \max_{1\leq i\leq n}T_{i}$.

\begin{theorem}\label{thm:testmart_max}
  For a test supermartingale $\Sfnt{T}$ with respect to $\Sfnt{R}$ and $\cH$,
  the RV $1/T^{*}$ is a $p$-value bound against $\cH$.
\end{theorem}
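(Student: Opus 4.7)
The plan is to prove this by a standard stopping-time argument, converting the maximum bound into an ordinary bound on a stopped test supermartingale, and then applying Eq.~\ref{eq:testmart_markov}.

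Fix $\mu \in \cH$ and $\epsilon > 0$. First I would introduce the stopping time $\tau = \min\{i : 1 \leq i \leq n,\ T_{i} \geq 1/\epsilon\}$ with the convention $\tau = n+1$ if no such $i$ exists. The crucial observation is that because $T_{i}$ is determined by $\Sfnt{R}_{\le i}$, the event $\{\tau \le i\}$ is also determined by $\Sfnt{R}_{\le i}$. Next I would define the stopped sequence $\Sfnt{T}'$ by $T'_{i} = T_{\min(i,\tau)}$ for $0 \le i \le n$, so that $T'_{i} = T_{i}$ whenever $\tau > i$ and $T'_{i} = T_{\tau}$ whenever $\tau \le i$.

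The central step is to verify that $\Sfnt{T}'$ is itself a test supermartingale with respect to $\Sfnt{R}$ and $\cH$. The conditions $T'_{0} = 1$, $T'_{i} \ge 0$, and determinacy of $T'_{i}$ by $\Sfnt{R}_{\le i}$ follow from the corresponding properties of $\Sfnt{T}$ together with the fact that $\{\tau \le i\}$ is determined by $\Sfnt{R}_{\le i}$. For the supermartingale inequality, I would write
\begin{equation}
T'_{i+1} = T_{i+1}\,\knuth{\tau > i} + T_{\tau}\,\knuth{\tau \le i},
\end{equation}
pull the $\Sfnt{R}_{\le i}$-determined indicators outside the conditional expectation, and apply $\Exp(T_{i+1}\mid \Sfnt{R}_{\le i}) \le T_{i}$ on the event $\{\tau > i\}$ to obtain $\Exp(T'_{i+1}\mid \Sfnt{R}_{\le i}) \le T_{i}\,\knuth{\tau > i} + T_{\tau}\,\knuth{\tau \le i} = T'_{i}$.

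With $\Sfnt{T}'$ a test supermartingale, Eq.~\ref{eq:testmart_markov} applied to $T'_{n}$ gives $\Prob(T'_{n} \ge 1/\epsilon) \le \epsilon$. To finish, I would verify the set identity $\{T^{*} \ge 1/\epsilon\} = \{T'_{n} \ge 1/\epsilon\}$: if $T^{*} \ge 1/\epsilon$, then $\tau \le n$ and $T'_{n} = T_{\tau} \ge 1/\epsilon$ by construction of $\tau$; conversely, if $T'_{n} \ge 1/\epsilon$, then either $\tau \le n$ and $T_{\tau} \ge 1/\epsilon$, or $\tau = n+1$ and $T'_{n} = T_{n} \ge 1/\epsilon$, and in either case $T^{*} \ge 1/\epsilon$. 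Hence $\Prob(1/T^{*} \le \epsilon) = \Prob(T^{*} \ge 1/\epsilon) \le \epsilon$ for every $\mu \in \cH$, which is the defining property of a $p$-value bound.

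The only subtle point, and the step that deserves the most care, is the verification that the stopped sequence remains a test supermartingale; the argument rests entirely on the fact that stopping at $\tau$ is adapted to the past $\Sfnt{R}_{\le i}$, which allows the indicator functions to be treated as constants under the conditional expectation. Everything else is bookkeeping with Markov's inequality.
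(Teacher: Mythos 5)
Your proof is correct, but it takes a different route from the paper. You fix $\epsilon$, stop the supermartingale at the first time $T_i\ge 1/\epsilon$, check directly (by pulling the $\Sfnt{R}_{\le i}$-determined indicators $\knuth{\tau>i}$, $\knuth{\tau\le i}$ out of the conditional expectation) that the stopped sequence is again a test supermartingale, and then apply the already-established Markov bound of Eq.~\ref{eq:testmart_markov} to $T'_n$ together with the event identity $\{T^{*}\ge 1/\epsilon\}=\{T'_n\ge 1/\epsilon\}$; all of these steps are sound, and the argument uses nothing beyond what is already in the section (indeed the paper itself remarks, in the discussion of adaptive stopping, that stopped processes remain test supermartingales). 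The paper instead proves the theorem via Doob's maximal inequality for non-negative submartingales: it sets $\bar T_{i+1}=\Exp(T_{i+1}|\Sfnt{R}_{\le i})$, forms the auxiliary test martingale $T'_i=\prod_{j\le i}T_j/\bar T_j$, shows $T'_i\ge T_i$ by induction, and then applies $\Prob(\max_i T'_i\ge\lambda)\le\Exp(T'_n)/\lambda$. What your approach buys is self-containedness and elementarity—no appeal to an external maximal inequality, and no auxiliary ratio process (which in the paper requires some care when $\bar T_{j}$ could vanish); what the paper's approach buys is a direct connection to the standard martingale maximal-inequality machinery and to Ville's original treatment. One cosmetic remark: your stopped process depends on $\epsilon$, which is harmless since the $p$-value bound property is verified separately for each $\epsilon$, but it is worth stating explicitly that this is why the construction may vary with $\epsilon$.
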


This theorem is due to
Ville~\cite{ville:qc1939a} Ch. V.2.2 (pgs. 97--98).

\begin{proof}
  The theorem follows from Doob's maximal inequalities. The particular
  inequality we need is normally stated for a non-negative
  submartingale $\Sfnt{T}'$ in the form
  \begin{equation}
    \Prob(\max_{1\leq i\leq n}T'_{i}\geq \lambda)\leq \Exp(T'_{n})/\lambda,
  \end{equation}
  where $\lambda=1/\epsilon$ for our purposes. Statements and proofs
  of this inequality are readily found online. A textbook treatment is
  in Ref.~\cite{dellacherie:qc2011a} Ch. V, Cor. 22.  To apply the
  above maximal inequality to the test supermartingale $\Sfnt{T}$, let
  $\bar T_{i+1}=\Exp(T_{i+1}|\Sfnt{R}_{\le i})$. Note that $\bar
  T_{i+1}$ is determined by $\Sfnt{R}_{\le i}$ and by the definition of
  supermartingales, $\bar T_{i+1}\leq T_{i}$.  Define $T'_{0}=1$ and
  $T'_{i} = \prod_{1\leq j\leq i}T_{j}/\bar T_{j}$.  By pulling out
  determined quantities from the conditional expectation, we get
  \begin{align}
    \Exp(T'_{i+1}|\Sfnt{R}_{\le i})&= T'_{i}\Exp(T_{i+1}/\bar T_{i+1}|\Sfnt{R}_{\le i})\notag\\
    &=    T'_{i}\Exp(T_{i+1}|\Sfnt{R}_{\le i})/\bar T_{i+1} = T'_{i}. 
  \end{align}
  Hence $\Sfnt{T}'$ is a test martingale. In particular, it is a
  non-negative submartingale with $\Exp(T'_{n})=1$.  We claim that
  $T'_{i}\geq T_{i}$.  This holds for $i=0$. For a proof by induction,
  compute $T'_{i+1}=T'_{i}T_{i+1}/\bar T_{i+1}\geq T_{i} T_{i+1}/\bar
  T_{i+1}\ge T_iT_{i+1}/T_i= T_{i+1}$. This gives
  $\max_{1\leq i\leq n}T'_{i}\geq T^{*}$, so the event
  $\{T^{*}\geq\lambda\}$ implies the event in the maximal inequality,
  and monotonicity of probabilities implies the theorem.
\end{proof}

One can produce a test supermartingale adaptively by determining the
test factors $F_{i+1}$ to be used at the next trial.  If the $i$'th
trial's data is $R_{i}$, including any incidental information
obtained, $F_{i+1}$ is expressed as a function of $\Sfnt{R}_{\le i}$
and data from the $i+1$'th trial (a \emph{past-parameterized} function of
$R_{i+1}$), and constructed to satisfy $F_{i+1}\geq 0$ and
$\Exp(F_{i+1}|\Sfnt{R}_{\le i})\leq 1$ for any distribution in 
the model $\cH$.
Note that in between trials, we can effectively stop the experiment by
assigning all future $F_{i+1}=1$ conditional on the past. This is
equivalent to constructing the stopped process relative to a stopping
rule. This argument also shows that the stopped process is still a
test supermartingale.  Here we consider only bounded length test
supermartingales (with very large bound $N$ if necessary), so are not
concerned with questions of convergence as $N\rightarrow\infty$.  But
we note that since $\Exp(|F_{i}|)=\Exp(F_{i})\leq 1$ for all $i$,
Doob's martingale convergence theorem applies, and
$\lim_{i\rightarrow\infty}F_{i}$ exists almost surely and is
integrable. Furthermore, this limit also has expectation at most $1$.
See, for example, Ref.~\cite{dellacherie:qc2011a} Ch. V, Thm. 28. 

More generally, we use test supermartingales for estimating lower
bounds on products of positive stochastic sequences $\Sfnt{G}$. Such
lower bounds are associated with unbounded-above confidence
intervals. We need the following definition:

\begin{definition} 
  Let $U,V,X$ be RVs and $1\ge\epsilon\ge 0$.  $I=[U,V]$ is a
  (conservative) \emph{confidence interval for $X$ at level $\epsilon$
    with respect to $\cH$} if for all distributions in $\cH$ 
    we have
    $\Prob(U\leq X\leq V)\geq 1-\epsilon$. The quantity
    $\Prob(U\leq X\leq V)$ is called the \emph{coverage probability}.
\end{definition}

As noted above, the RVs $U$, $V$ and $X$ may be $\mu$-dependent.  For
textbook examples of confidence intervals, $X$ is a parameter
determined by $\mu$, and $U$ and $V$ are calculated
from an estimation error. We need the full generality of the
definition above.  The quantity $\epsilon$ in the definition is a
significance level, which corresponds to a confidence level of
$(1-\epsilon)$.

\begin{lemma}
  Let $\Sfnt{F}$ and $\Sfnt{G}$ be two stochastic sequences with
  $F_{i}\in[0,\infty)$, $G_{i}\in (0,\infty]$, and $F_{i}$ and
  $G_{i}$ determined by $\Sfnt{R}_{\leq i}$.  Define $T_{0}=1$,
  $T_{i}=\prod_{1\leq j \leq i}F_{i}$ and $U_{0}=1$,
  $U_{i}=\prod_{1\leq j\leq i}G_{i}$, and suppose that for all
  $\mu\in\cH$, $\Exp(F_{i+1}/G_{i+1}|\Sfnt{R}_{\le i}) \leq 1$.
  Then $[T_{n}\epsilon,\infty)$ is a confidence interval for $U_{n}$
  at level $\epsilon$ with respect to $\cH$.  If $U_{i}$ is
  monotone non-decreasing, then $[T^{*}\epsilon,\infty)$ is a
  confidence interval for $U_{n}$ at level $\epsilon$ with respect
  to $\cH$.  
\end{lemma}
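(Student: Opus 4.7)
The plan is to reduce both claims to properties of an auxiliary test supermartingale. Define $V_i = T_i/U_i$ for $0\le i\le n$, using the natural conventions $1/\infty=0$ and $0\cdot\infty=0$ to handle the possibility $G_j=\infty$ (hence $U_j=\infty$ and $V_j=0$). Then $V_0=1$, $V_i\ge 0$, and $V_i$ is determined by $\Sfnt{R}_{\le i}$ since both $T_i$ and $U_i$ are. The key calculation is
\begin{equation}
\Exp(V_{i+1}\,|\,\Sfnt{R}_{\le i})
= V_i\,\Exp\!\left(F_{i+1}/G_{i+1}\,\big|\,\Sfnt{R}_{\le i}\right)
\le V_i,
\end{equation}
where we pulled the past-determined quantity $V_i$ out of the conditional expectation and applied the hypothesis on $F_{i+1}/G_{i+1}$. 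Thus $\Sfnt{V}$ is a test supermartingale with respect to $\Sfnt{R}$ and $\cH$.

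For the first claim, apply Markov's inequality for test supermartingales (Eq.~\ref{eq:testmart_markov}) to $V_n$: for all $\mu\in\cH$, $\Prob(V_n\ge 1/\epsilon)\le\epsilon$, i.e.\ $\Prob(T_n/U_n<1/\epsilon)\ge 1-\epsilon$. Rearranging and noting that the event $\{T_n\epsilon\le U_n\}$ contains $\{T_n/U_n<1/\epsilon\}$ (and holds trivially when $U_n=\infty$) yields the desired coverage $\Prob(U_n\in[T_n\epsilon,\infty))\ge 1-\epsilon$.

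For the second claim, apply Thm.~\ref{thm:testmart_max} to $\Sfnt{V}$: with $V^{*}=\max_{1\le i\le n}V_i$, we have $\Prob(V^{*}\ge 1/\epsilon)\le\epsilon$. When $U_i$ is monotone non-decreasing, $U_n\ge U_i$ for every $i\le n$, so $T_i/U_n\le T_i/U_i=V_i$ (again trivial if $U_i=\infty$). Taking the maximum over $i$ yields $T^{*}/U_n\le V^{*}$, hence the inclusion of events $\{T^{*}\epsilon\ge U_n\}\subseteq\{V^{*}\ge 1/\epsilon\}$. Monotonicity of probability then gives $\Prob(U_n\in[T^{*}\epsilon,\infty))\ge 1-\epsilon$.

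There is no serious obstacle beyond verifying the supermartingale property; the only subtlety is bookkeeping at the infinite values, which is handled uniformly by the conventions above since $U_j=\infty$ forces $V_k=0$ for all $k\ge j$ and also makes every inequality of the form $T_k\epsilon\le U_n$ trivially true. The argument relies only on tools already established in the excerpt, namely Eq.~\ref{eq:testmart_markov} for the first part and Thm.~\ref{thm:testmart_max} for the second.
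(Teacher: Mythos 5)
Your proof is correct and follows essentially the same route as the paper: form the test supermartingale $T_i/U_i$ from the test factors $F_{i+1}/G_{i+1}$, then apply Eq.~\ref{eq:testmart_markov} for the first claim and Thm.~\ref{thm:testmart_max} together with $T_i/U_n\le T_i/U_i$ (from monotonicity of $U$) for the second. Your explicit handling of the infinite values of $G_i$ is a minor bookkeeping addition the paper leaves implicit.
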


\begin{proof}
  The assumptions imply that the $F_{i+1}/G_{i+1}$ form a sequence of
  test factors with respect to $\cH$ and generate the test
  supermartingale $\Sfnt{T}/\Sfnt{U}$, where division in this
  expression is term-by-term.  Therefore,
  \begin{equation}
    \Prob(T_{n}\epsilon \geq U_{n}) = \Prob(T_{n}/U_{n}\geq 1/\epsilon) \leq \epsilon,
    \label{eq:itestmart_markov}
  \end{equation}
  So $[T_{n}\epsilon,\infty)$ is a confidence
  interval for $U_{n}$ at level $\epsilon$.  

  If $U_{i}$ is monotone non-decreasing, observe that the event
  $\{\max_{1\leq i\leq n}T_{i}\epsilon \geq U_{n}\}$ is the same as
  $\{(\max_{1\leq i\leq n}T_{i})/U_{n}\geq 1/\epsilon\}$. This event in turn
  implies $\{\max_{1\leq i\leq n}T_{i}/U_{i}\geq 1/\epsilon\}$, because 
  $U_{n}$ being non-decreasing implies that for all $i$,
  $T_{i}/U_{n}\leq T_{i}/U_{i}$.  Applying Thm.~\ref{thm:testmart_max}
  we obtain
  \begin{equation}
    \Prob(T^{*}\epsilon\geq U_{n})=\Prob(T^{*}/U_{n}\geq 1/\epsilon)
    \leq \Prob(\max_{1\leq i\leq n}T_{i}/U_{i}\geq 1/\epsilon)\leq \epsilon.
    \label{eq:itestmart_max}
  \end{equation}
  It follows that $[T^{*}\epsilon,\infty)$ is a confidence interval
  for $U_{n}$ at level $\epsilon$. 
\end{proof}

\subsection{From Bell Functions to Test Factors}
\label{subsec:bell_to_test}

Consider a specific trial subject to settings and non-signaling
constraints, where the settings distribution is fixed and known. To
simplify notation, we omit trial indices and conditioning on the
past. Let $\nu=\mu[Z]$ be the settings distribution. A \emph{Bell function}
$B$ maps settings and outcomes to real values and satisfies
$\Exp(B(CZ))\leq 0$ for all \emph{local realistic} (LR) distributions
with the given settings distribution.    The set of LR distributions given that the
settings distribution is $\nu$ consists of mixtures of non-signaling
distributions with outcomes determined by the settings.  These are
distributions such that $C=AB$ is a function of $Z=XY$ for which $A$
does not depend on $Y$ and $B$ does not depend on $X$.  See
Sect.~\ref{sec:apps}, Eq.~\ref{eq:lrf} for more detail.  The
expression $\Exp(B(CZ))\leq 0$ is closely related to traditional Bell
inequalities, which may be expressed in the \emph{conditional form}
$\sum_{z}\Exp(B'(Cz)|z)\leq 0$.  Provided $\nu(z)= 0$ implies
$B'(cz)=0$ for all $c$, any Bell inequality in this form can be
converted to a Bell function by defining
$B(cz)=B'(cz)/\nu(z)$. Conversely, a Bell function $B$ for settings
distribution $\nu$ determines a Bell inequality by defining
$B'(cz)=B(cz)\nu(z)$.  Bell inequalities apply to LR distributions
independent of what the settings probabilities are, and these settings
probabilities are then considered to be the free choice of the
experimenter.  We do not use this perspective here.

Let $-l$ be a lower bound for the Bell function $B$ with $l> 0$.  Then
$F=(B+l)/l$ is a test factor with respect to LR distributions. Such
test factors can provide an effective method for rejecting local
realism (LR) at high significance levels by use of
Eq.~\ref{eq:testmart_markov}.  As an example, consider the $(2,2,2)$
Bell-test configuration, with the uniform settings distribution,
$\nu=1/4$.  The ranges of the settings $X$, $Y$ and the outcomes $A$,
$B$ are $\{0,1\}$.  The following is a Bell function:
\begin{equation}
  B(xyab) = (\knuth{xy=11}-\knuth{xy=01}-\knuth{xy=10}-\knuth{xy=00})|a-b|.
  \label{eq:chshvariant}
\end{equation}
The inequality $\Exp(B(CZ))\leq 0$ is equivalent to the well-known
Clauser-Horne-Shimony-Holt (CHSH) inequality~\cite{clauser:qc1969a}.
We give the function in this form for simplicity and
because one way to verify that $B$ is a Bell function is to note that
$d(a,b)=|a-b|$ satisfies the triangle inequality, so this Bell
function belongs to the class of distance-based Bell functions. See
Refs.~\cite{schumacher:qc1991a,knill:qc2014a}.  Since for all
arguments, exactly one of the expressions inside $\knuth{\ldots}$ is true,
the minimum value of $B$ is $-1$.  Thus $B+1$ is a test factor. More
generally, $1+\lambda B$ is a test factor for any $0\leq\lambda\leq
1$~\cite{zhang:2013}. By optimizing these test factors, asymptotically 
optimal statistical strength (expected $-\log(p\text{-value})$ per trial) for
rejecting LR can be achieved~\cite{zhang:2011,zhang:2013}. 

While probability estimation uses probability estimation factors
(PEFs) to assign numerical values to experimental outcomes, it does
not require Bell functions or even Bell-test configurations. The
connection of Bell functions to test factors serves as an instructive
example and can yield PEFs useful for probability estimation as
witnessed implicitly by Ref.~\cite{bierhorst:qc2017a}.  See also
Sect.~\ref{sec:maxe_pefs}.  In general, optimal PEFs cannot be
constructed from Bell functions. Therefore, in this work we prefer to
directly optimize PEFs without referencing Bell functions, see
Sects.~\ref{sec:prots} and~\ref{sec:pef_constructions}.

\section{Probability Estimation and Randomness Extraction}
\label{sec:probest}

\subsection{Probability Estimation}

Consider RVs $C$, $Z$ and $E$ whose joint distribution is constrained
to be in the model $\cH$. In an experiment, we observe $Z$ and $C$,
but not $E$. For this section, it is not necessary to structure
the RVs as stochastic sequences, so we use $D$, $C$, $Z$ and $R$ in
place of $\Sfnt{D}$, $\Sfnt{C}$, $\Sfnt{Z}$ and $\Sfnt{R}$, but
follow the conventions introduced in
Sect.~\ref{subsec:setup_belltests}.  In particular, we assume that $D$
is determined by $C$, and $C$ is determined by $R$.  There is no
loss of generality in this assumption, it simply allows us to omit $C$
(or $D$) as arguments to functions and in conditioners when $R$ is
already present.  The distinction between $C$ and $R$ only appears
when we consider time-ordered trials, so for the remainder of this
section, only $R$ is used.

We focus on probability estimates with coverage probabilities that do
not depend on $E$, formalized as follows.
\begin{definition} Let $1\geq \epsilon \geq 0$.
  The function $F:\Rng(RZ)\rightarrow [0,\infty)$ is a
  \emph{level-$\epsilon$ $E$-uniform probability estimator for $\cH$}
  ($\epsilon$-$\UPE$ or with specifics,
  $\epsilon$-$\UPE(D{:}R|Z;E,\cH)$) if for all $e$ and
  distributions $\mu$ in $\cH$, we have
  $\mu(F(RZ)\geq\mu(D|Ze)|e)\geq 1-\epsilon$.  We omit specifics such
  as $\cH$ if they are clear from context.  The function
  $\cU:\Rng(RZ)\times[0,1]\rightarrow [0,\infty)$ is a $\UPE$ if it is
  monotone non-decreasing in the second argument (the confidence
  level) and for each $\epsilon$, $\cU(RZ,1-\epsilon)$ is an
  $\epsilon$-$\UPE$.
  \label{def_UPE}
\end{definition}
The level of a probability estimator relates to the smoothness
parameter for smooth min-entropy via the relationships 
established below in Sect.~\ref{sec:petominent}.  We also use
the term \emph{error bound} to refer to the level of a probability
estimator.  The first condition on $\cU$ ensures that the
confidence upper bounds that it provides are non-decreasing with
confidence level, so that the corresponding confidence intervals are
consistently nested. The second is the required minimum coverage
probability for confidence regions at a given confidence level. Our
inclusion of the random variable $E$ here and in the next
definition is in a sense redundant: Uniformity of the estimator
means that we could instead have considered the model $\cH'$ of
distributions on $Z$ and $R$ consisting of distributions $\mu[RZ|e]$
over all $\mu\in\cH$ and all $e$.  We refer to $E$ explicitly because
of the role played by external entities in this work. 

For the most general results, we need a softening of the above
definition.  The softening adds smoothing and averaging similar to
what is needed to define smooth average conditional min-entropy.

\begin{definition} Let $1\geq \epsilon \geq 0$.
  The function $F:\Rng(RZ)\rightarrow[0,\infty)$ is a
  \emph{level-$\epsilon$ $E$-uniform soft probability estimator for
    $\cH$} ($\epsilon$-soft $\UPE$, or with specifics, $\epsilon$-soft
  $\UPE(D{:}R|Z;E,\cH)$) if for all $e$ and distributions $\mu$
  in $\cH$ the following hold: 1) There exist subprobability
  distributions $\tilde\mu_{z}\le \mu[DR|ze]$ of $DR$ of total weight
  $\sum_{z}w(\tilde\mu_{z})\mu(z|e)\geq 1-\epsilon$; 2) there exists a non-negative
  function $q(R|DZ)$ of $DRZ$ such that 
  $\sum_{r}q(r|dz)=1$ for all $dz$, and
  \begin{equation}
    \tilde\mu_{z}(dr) \leq F(rz) q(r|dz),\label{eq:supes}
  \end{equation}
  for all $drz$. The function
  $\cU:\Rng(RZ)\times[0,1]\rightarrow [0,\infty)$ is a soft $\UPE$ if
  it is monotone non-decreasing in the second argument, and for each
  $\epsilon$, $\cU(RZ,1-\epsilon)$  is an $\epsilon$-soft $\UPE$.
  \label{def_soft_UPE}
\end{definition}
Note that in the definition of soft probability estimators, we have
chosen to leave the dependence of $\tilde\mu_{z}$ and $q$ on $e$
implicit.

The notion of a soft $\UPE$ is weaker than that of a $\UPE$, see the
next lemma. Note that in this definition we explicitly consider the
joint distribution of $DR$, even though we assume that $D$ is
determined by $R$. Considering the joint distribution helps simplify
the notation for probabilities.  However, we take this assumption into
account by not explicitly including $D$ as an argument to $F$.  In
this definition, for any given $e$, $F$ estimates the probability of $d$
conditional on $z$ but requires $r$. One can interpret
$F(rz)q(r|dz)$ as an implicit and distribution-dependent estimate
of the probability of $dr$ given $z$, where $q(r|dz)$ accounts for the
probability of $r$ conditional on $dz$. 

\begin{lemma}\label{lm:eupe_esupe}
  If $F$ is an $\epsilon$-$\UPE$, then it is an $\epsilon$-soft $\UPE$.
\end{lemma}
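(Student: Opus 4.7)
The plan is to take the $\epsilon$-$\UPE$ hypothesis and build, for each fixed $e$, explicit witnesses $\tilde\mu_{z}$ and $q(r|dz)$ for the soft condition. The most natural guess is to obtain $\tilde\mu_{z}$ by restricting $\mu[DR|ze]$ to the ``good'' event on which $F(rz)\geq \mu(d|ze)$, and to obtain $q$ from the conditional distribution of $R$ given $D$, $Z$, and $E$ under $\mu$. Since $D$ is determined by $R$, I will write $d(r)$ for the unique value of $D$ at $R=r$, which simplifies bookkeeping.

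Concretely, I would fix $e$ and $\mu\in\cH$ and set
\begin{equation}
  \tilde\mu_{z}(dr) = \mu(r|ze)\,\knuth{d=d(r)}\,\knuth{F(rz)\geq\mu(d|ze)}.
\end{equation}
By construction $\tilde\mu_{z}\leq\mu[DR|ze]$, and the weight calculation is immediate:
\begin{align}
  \sum_{z}w(\tilde\mu_{z})\mu(z|e)
  &= \sum_{z}\mu(z|e)\sum_{r}\mu(r|ze)\knuth{F(rz)\geq \mu(d(r)|ze)}\notag\\
  &= \mu(F(RZ)\geq\mu(D|Ze)\mid e)\;\geq\;1-\epsilon,
\end{align}
where the last inequality is exactly the $\UPE$ hypothesis. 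This handles condition~1 of Def.~\ref{def_soft_UPE}.

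For condition~2, I would define
\begin{equation}
  q(r|dz) = \begin{cases}\mu[R|D=d,Z=z,E=e](r) & \text{if }\mu(d|ze)>0,\\ \Unif_{R}(r) & \text{otherwise},\end{cases}
\end{equation}
which is non-negative and sums to one in $r$, as required (this is permitted because the definition lets $q$ depend implicitly on $e$). The remaining task is to verify $\tilde\mu_{z}(dr)\leq F(rz)q(r|dz)$ pointwise. If $d\neq d(r)$ or $F(rz)<\mu(d|ze)$, then $\tilde\mu_{z}(dr)=0$ and there is nothing to check. In the nontrivial case $d=d(r)$ and $F(rz)\geq\mu(d|ze)>0$, one computes
\begin{equation}
  F(rz)q(r|dz) = F(rz)\,\frac{\mu(r|ze)}{\mu(d|ze)} \geq \mu(r|ze) = \tilde\mu_{z}(dr),
\end{equation}
so the inequality holds.

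The only mildly subtle point will be the degenerate case $\mu(d|ze)=0$, where $q$ is defined arbitrarily; but then $d=d(r)$ with positive probability is impossible, forcing $\mu(r|ze)=0$ whenever $d(r)=d$, so $\tilde\mu_{z}(dr)=0$ and the inequality is trivial. I expect this to be the entire substance of the proof, with no deeper obstacle; the lemma is essentially a translation between the pointwise probability-bound formulation of $\UPE$ and the dominated-subprobability formulation of the soft version.
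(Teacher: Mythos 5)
Your proof is correct and follows essentially the same route as the paper: the same witnesses $\tilde\mu_{z}(dr)=\mu(dr|ze)\knuth{F(rz)\geq\mu(d|ze)}$ and $q(r|dz)=\mu(r|dze)$, the same weight calculation from the $\UPE$ hypothesis, and the same pointwise domination check. Your explicit treatment of the $\mu(d|ze)=0$ case is just a spelled-out version of the paper's standing convention on zero-probability conditioners.
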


\begin{proof}
  For any given $e$,
  define $\tilde\mu_{z}(dr)=\mu(dr|ze)\knuth{F(rz)\ge \mu(d|ze)}$.
  Then by the definition of $\epsilon$-$\UPE$s,
  $\sum_{z}w(\tilde\mu_{z})\mu(z|e)\geq 1-\epsilon$.  Let
  $q(r|dz)=\mu(r|dze)$. It suffices to check Eq.~\ref{eq:supes} for elements of
  $\{drz:F(rz)\ge\mu(d|ze)\}$:
  \begin{align}
    \tilde\mu_{z}(dr) &= \mu(dr|ze)\notag\\
    &=\mu(r|dze)\mu(d|ze)\notag\\
    &\leq q(r|dz)F(rz).
  \end{align}
\end{proof}

\subsection{From Probability Estimation to Min-Entropy}
\label{sec:petominent}

The next lemma shows that $\epsilon$-soft $\UPE$s 
with constant upper bounds certify smooth min-entropy.

\begin{lemma}\label{lm:esupe_const_esmaxprob}
  Suppose that $F$ is an $\epsilon$-soft $\UPE(D{:}R|Z;E,\cH)$ such
  that $F\leq p$ for some constant $p$.  Then for each
  $e$ and $\mu\in\cH$, $P^{{\rm u},\epsilon}_{\max,\mu[DZ|e]}(D|Z)\leq
  p$. Consequently, $P^{{\rm u},\epsilon}_{\max,\mu[DZE]}(D|ZE)\leq
  p$.
\end{lemma}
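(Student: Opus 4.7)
The plan is to pass from the joint subprobability distributions on $DR$ provided by the soft $\UPE$ definition to marginal subprobability distributions on $D$ alone, then verify that these witness the uniform smooth max-prob bound.

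Fix $e$ and $\mu\in\cH$. By the definition of $\epsilon$-soft $\UPE$, there exist subprobability distributions $\tilde\mu_{z}$ of $DR$ with $\tilde\mu_{z}\le\mu[DR|ze]$ and $\sum_{z}w(\tilde\mu_{z})\mu(z|e)\ge 1-\epsilon$, together with a non-negative function $q(r|dz)$ satisfying $\sum_{r}q(r|dz)=1$ and $\tilde\mu_{z}(dr)\le F(rz)q(r|dz)$ for all $drz$. I would then define the $D$-marginals
\begin{equation}
\tilde\nu_{z}(d)=\sum_{r}\tilde\mu_{z}(dr),
\end{equation}
and show that the collection $\{\tilde\nu_{z}\}_{z}$ witnesses $P^{{\rm u},\epsilon}_{\max,\mu[DZ|e]}(D|Z)\le p$ per Def.~\ref{def:smoothminentropy}.

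The three required properties each follow from a one-line computation. First, $\tilde\nu_{z}(d)=\sum_{r}\tilde\mu_{z}(dr)\le\sum_{r}\mu(dr|ze)=\mu(d|ze)$, so $\tilde\nu_{z}\le\mu[D|ze]$. Second, using $F\le p$ and $\sum_{r}q(r|dz)=1$,
\begin{equation}
\tilde\nu_{z}(d)\le\sum_{r}F(rz)q(r|dz)\le p\sum_{r}q(r|dz)=p,
\end{equation}
so $\tilde\nu_{z}\le p$. Third, marginalization preserves total weight, $w(\tilde\nu_{z})=\sum_{d,r}\tilde\mu_{z}(dr)=w(\tilde\mu_{z})$, so $\sum_{z}w(\tilde\nu_{z})\mu(z|e)\ge 1-\epsilon$. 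This establishes the first conclusion.

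For the ``consequently'' part, I would simply invoke Lem.~\ref{lm:pmax_noe}: the first conclusion gives $P^{{\rm u},\epsilon}_{\max,\mu[DZ|e]}(D|Z)\le p$ uniformly in $e$, and the first claim of that lemma lifts this to $P^{{\rm u},\epsilon}_{\max,\mu[DZE]}(D|ZE)\le p$. There is no serious obstacle here; the only subtle point is recognizing that the role of $q(r|dz)$ in the soft $\UPE$ definition is precisely to allow $F(rz)$, a function of $r$ rather than $d$, to bound a probability of $d$ once marginalized — and the normalization $\sum_{r}q(r|dz)=1$ makes the constant bound $F\le p$ transfer cleanly from the $DR$ level to the $D$ level.
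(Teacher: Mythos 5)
Your proof is correct and follows essentially the same route as the paper: marginalize the witnessing subprobability distributions $\tilde\mu_{z}$ over $r$, use $\sum_{r}q(r|dz)=1$ together with $F\le p$ to get the pointwise bound $p$, note that domination and total weight are preserved under marginalization, and then invoke Lem.~\ref{lm:pmax_noe} for the $ZE$-conditional statement.
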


\begin{proof}
  Fix $e$ and let $\tilde\mu_{z}(DR)$ and $q(R|DZ)$ be as in the
  definition of soft $\UPE$s.  According to our conventions,
  $\tilde\mu_{z}[D](d)=\tilde\mu_{z}(d)=\sum_{r}\tilde\mu_{z}(dr)$.
  Therefore $\sum_{z}w(\tilde\mu_{z}[D])\mu(z|e)\geq 1-\epsilon$.  Since
  $\sum_{r}q(r|dze)=1$, for all $dz$ we have
  \begin{equation}
    \tilde\mu_{z}(d) \leq \sum_{r}F(rz)q(r|dz) \leq \sum_{r}pq(r|dz)=p.
  \end{equation}
  Further, $\tilde\mu_{z}(d)\leq\mu(d|ze)$ for all $dz$, as can be
    seen by summing the defining inequality for $\tilde\mu_{z}$ over
  $r$. It follows that the $\tilde\mu_{z}[D]$ witness that $P^{{\rm
      u},\epsilon}_{\max,\mu[DZ|e]}(D|Z)\leq p$.  For the last
  statement, apply Lem.~\ref{lm:pmax_noe}.
\end{proof}

A weaker relationship holds for general soft $\UPE$s.

\begin{lemma}\label{lm:esupe_fail_esmaxprob}
  Suppose that $F$ is an $\epsilon$-soft $\UPE(D{:}R|Z;E,\cH)$,
  $\mu\in\cH$, $p\geq 1/|\Rng(D)|$, and let $\kappa=\Prob(F\leq
  p)$. Then $P^{\epsilon/\kappa}_{\max,\mu[DZE|F\leq p]}(D|ZE) \leq
  p/\kappa$.
\end{lemma}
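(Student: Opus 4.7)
The plan is to verify the definition of $P^{\epsilon/\kappa}_{\max,\mu'}(D|ZE)\le p/\kappa$ directly, by constructing a probability distribution $\nu$ of $DZE$ with $\TV(\nu,\mu')\le\epsilon/\kappa$ and $\sum_{ze}\max_{d}\nu(d|ze)\nu(ze)\le p/\kappa$, where $\mu'=\mu[DZE|F\le p]$. I would build $\nu$ in two stages: first construct a subprobability distribution $\tau\le\mu'$ of weight at least $1-\epsilon/\kappa$ whose sum-of-maxes is already bounded by $p/\kappa$, then extend $\tau$ to a probability distribution $\nu\ge\tau$ preserving that sum-of-maxes bound. Once such a $\nu$ is in hand, Lem.~\ref{lm:tvdist_weight} applied to $\tau$, $\mu'$, and $\nu$ delivers $\TV(\nu,\mu')\le 1-w(\tau)\le\epsilon/\kappa$, completing the verification.

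Using the soft UPE witnesses $\tilde\mu_{ze}(dr)$ and $q(r|dz)$ supplied by Def.~\ref{def_soft_UPE}, I would set
\begin{equation*}
  \tau(dze)=\frac{\mu(ze)}{\kappa}\sum_{r}\tilde\mu_{ze}(dr)\knuth{F(rz)\le p}.
\end{equation*}
The domination $\tau\le\mu'$ follows from $\tilde\mu_{ze}\le\mu[DR|ze]$. The defining inequality $\tilde\mu_{ze}(dr)\le F(rz)q(r|dz)$, together with $F(rz)\le p$ on the indicator's support and $\sum_{r}q(r|dz)=1$, yields $\max_{d}\tau(dze)\le p\mu(ze)/\kappa$, and hence $\sum_{ze}\max_{d}\tau(dze)\le p/\kappa$. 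For the weight of $\tau$, the filtered quantity $w'(\tilde\mu_{ze})=\sum_{dr}\tilde\mu_{ze}(dr)\knuth{F(rz)\le p}$ obeys $w'(\tilde\mu_{ze})\ge w(\tilde\mu_{ze})-(1-\Prob(F\le p|ze))$ since $\tilde\mu_{ze}\le\mu[DR|ze]$. Averaging this over $z$ with weight $\mu(z|e)$ and using the soft-UPE weight condition $\sum_{z}w(\tilde\mu_{ze})\mu(z|e)\ge 1-\epsilon$ gives $\sum_{z}\mu(z|e)w'(\tilde\mu_{ze})\ge\Prob(F\le p|e)-\epsilon$, and a further average over $e$ yields $\kappa w(\tau)\ge\kappa-\epsilon$, i.e., $w(\tau)\ge 1-\epsilon/\kappa$.

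For the extension step, I would argue that a probability distribution $\nu\ge\tau$ with $\sum_{ze}\max_{d}\nu(dze)\le p/\kappa$ exists by a capacity argument that uses the hypothesis $p\ge 1/|\Rng(D)|$. Setting $M_{ze}=\max_{d}\tau(dze)$, the total slot capacity at a proposed maxima profile $M'_{ze}\ge M_{ze}$ is $\sum_{ze}|\Rng(D)|M'_{ze}$. If $\sum_{ze}M_{ze}$ is already at least $1/|\Rng(D)|$, the existing capacity absorbs the missing mass $1-w(\tau)\le\epsilon/\kappa$ by distributing it into slots sitting strictly below $M_{ze}$; otherwise one chooses $M'_{ze}\ge M_{ze}$ with $\sum_{ze}M'_{ze}$ in $[1/|\Rng(D)|,p/\kappa]$, which is nonempty because $p/\kappa\ge 1/|\Rng(D)|$, and fills the new slack. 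The hard part will be making this transportation argument precise, but its structure parallels Lem.~\ref{lm:tvsub}, where the same hypothesis $p\ge 1/|\Rng(D)|$ drives an analogous extension. Once the extension is produced, the TV bound from Lem.~\ref{lm:tvdist_weight} yields the claim.
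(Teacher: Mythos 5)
Your proof is correct, and it takes a genuinely different route from the paper's. The kernel is shared: you filter the soft-UPE witnesses by the event $\{F\leq p\}$ and renormalize by $\kappa$, and the inequality $\tilde\mu_{ze}(dr)\leq F(rz)q(r|dz)$ together with $\sum_r q(r|dz)=1$ caps each cell at $p\,\mu(ze)/\kappa$, just as in the paper. But the paper then works conditionally: it fixes $e$, passes to the $z$-conditional subdistributions, extends each to a distribution $\nu_z\leq p/\kappa_z$ via Lem.~\ref{lm:tvsub}, reassembles with Eq.~\ref{eq:tv_samemarg}, and finally averages over $e$ with Lem.~\ref{lm:pmax_noe} using $\sum_e \kappa_e^{-1}\mu(e|F\leq p)=1/\kappa$. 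You instead stay with one unnormalized object $\tau$ on $DZE$, bound the sum of per-$(z,e)$ maxima by $p/\kappa$, and perform a single global extension, with Lem.~\ref{lm:tvdist_weight} giving the TV bound; the capacity step you flag as the hard part is in fact routine and goes through: choosing caps $M'_{ze}\geq\max_d\tau(dze)$ with $1/|\Rng(D)|\leq\sum_{ze}M'_{ze}\leq p/\kappa$ (possible because $\sum_{ze}\max_d\tau(dze)\leq p/\kappa$ and $p/\kappa\geq p\geq 1/|\Rng(D)|$) makes the available room $|\Rng(D)|\sum_{ze}M'_{ze}-w(\tau)\geq 1-w(\tau)$, exactly as in the proof of Lem.~\ref{lm:tvsub}. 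What your route buys is brevity: no per-$z$ conditional bookkeeping and no appeal to Lem.~\ref{lm:pmax_noe}. What the paper's route buys is the extra property, noted in the remark following the lemma, that the witness satisfies $\nu[ZE]=\mu[ZE|F\leq p]$; this is exploited later (in Eq.~\ref{eq:thm:prot_chainlemmas_improved} and in the proof of Thm.~\ref{thm:estext_fail}), and your global filling does not automatically preserve it—although it easily could, by filling within each $(z,e)$ block under the cap $\max\bigl(\max_d\tau(dze),\,\mu(ze|F\leq p)/|\Rng(D)|\bigr)\leq p\,\mu(ze)/\kappa$.
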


\begin{proof}
  Let $\kappa_{e}=\Prob(F\leq p|e)$.
  Below we show that for all values $e$ of $E$,
  $P^{\epsilon/\kappa_{e}}_{\max,\mu[DZ|e,F\leq p]}(D|Z)\leq p/\kappa_{e}$.
  Once this is shown, we can use
  \begin{equation}
    \sum_{e}\frac{1}{\kappa_{e}}\mu(e|F\leq p) 
    = \sum_{e}\frac{1}{\Prob(F\leq p|e)}\mu(e|F\leq p)
    = \sum_{e}\frac{\mu(e)}{\mu(F\leq p)} = 1/\kappa,
  \end{equation}
  and Lem.~\ref{lm:pmax_noe} to complete the proof.  For the
  remainder of the proof, $e$ is fixed, so we simplify the notation
  by universally conditioning on $\{E=e\}$ and omitting the explicit
  condition. Further, we omit $e$ from suffixes.  Thus
  $\kappa=\kappa_{e}$ from here on.  

  Let $\chi(RZ)=\knuth{F(RZ)\leq p}$.
  Then $\kappa=\Exp(\chi)$. Let $\kappa_{z}=\mu(F\leq p|z)$.  We
  have $\sum_{z}\kappa_{z}\mu(z)=\kappa$ and
  $\kappa_{z}=\mu(z|F\leq p)\kappa/\mu(z)$.

  Let $\tilde\mu_{z}$ and $q$ witness that $F$ is an $\epsilon$-soft
  $\UPE(D{:}R|Z;E,\cH)$.  Define $\tilde\nu(drz) =
  \tilde\mu_{z}(dr)\chi(rz)\mu(z)/\kappa$.  The weight of $\tilde\nu$
  satisfies
  \begin{align}
    w(\tilde\nu) &= \sum_{drz}\tilde\mu_{z}(dr)\chi(rz)\mu(z)/\kappa\notag\\
    &\leq \sum_{drz}\mu(drz)\chi(rz)/\kappa \notag\\
    & = \Prob(F\leq p)/\kappa = 1,\\
    w(\tilde\nu) &= 
    \sum_{drz}\mu(drz)\chi(rz)/\kappa 
    -\sum_{drz}(\mu(drz)-\tilde\mu_{z}(dr)\mu(z))\chi(rz)/\kappa \notag\\
    &=1-\sum_{drz}(\mu(drz)-\tilde\mu_{z}(dr)\mu(z))\chi(rz)/\kappa \notag\\
    &\geq 1-\sum_{drz}(\mu(drz)-\tilde\mu_{z}(dr)\mu(z))/\kappa \notag\\
    &\geq 1- (1-(1-\epsilon))/\kappa = 1-\epsilon/\kappa.
  \end{align}
  Thus $\tilde\nu$ is a subprobability distribution of weight at least
  $1-\epsilon/\kappa$.  We use it to construct the distribution $\nu$
  witnessing the conclusion of the lemma. For each $drz$ we bound
  \begin{equation}
    \tilde\nu(drz)/\mu(z|F\leq p) 
    = \tilde\mu_{z}(dr)\chi(rz)/\kappa_{z}
    \leq F(rz)q(r|dz)\chi(rz)/\kappa_{z}
    \leq p q(r|dz)/\kappa_{z},\label{eq:lm:esupe_fail_esmaxprob1}
  \end{equation}
  from which we get $\tilde\nu(dz)/\mu(z|F\leq p)\leq p/\kappa_{z}$ by
  summing over $r$.  Define $\tilde\nu[D|z](d) =
  \tilde\nu(dz)/\mu(z|F\leq p)$ and let $w_{z}=w(\tilde\nu[D|z])$,
  where this extends the conditional probability notation to the
  subprobability distribution $\tilde{\nu}$ with the understanding
  that the conditionals are with respect to $\mu$ given $\{F\leq
  p\}$. Applying the first step of
  Eq.~\ref{eq:lm:esupe_fail_esmaxprob1} and continuing from there, we
  have
  \begin{align}
    \tilde\nu[D|z](d) &= \sum_{r}\tilde\nu(drz)/\mu(z|F\leq p)\notag\\
    &=\sum_{r}\tilde\mu_{z}(dr)\chi(rz)/\kappa_{z}\notag\\
    &\leq \sum_{r}\mu(dr|z)\chi(rz)/\kappa_{z}\notag\\
    &= \sum_{r}\mu(dr|z)\knuth{F(rz)\leq p}/\kappa_{z}\notag\\
    &=\mu(d,F\leq p|z)/\mu(F\leq p|z) = \mu(d|z,F\leq p),  
  \end{align}
  which also implies that $w_{z}\leq 1$.  Since $\tilde\nu[D|z](d)\leq
    p/\kappa_z$ and $p/\kappa_{z}\geq p\geq 1/|\Rng(D)|$, we can apply
    Lem.~\ref{lm:tvsub} to obtain
  distributions $\nu_{z}$ of $D$ such that $\nu_{z}\geq
  \tilde\nu[D|z]$, $\nu_{z}\leq p/\kappa_{z}$, and
  $\TV(\nu_{z},\mu[D|z,F\leq p])\leq 1-w_{z}$.  Now we can define
  $\nu(dz)=\nu_{z}(d)\mu(z|F\leq p)$.  From Eq.~\ref{eq:tv_samemarg},
  \begin{align}
    \TV(\nu,\mu[DZ|F\leq p]) &= \sum_{z}\TV(\nu_{z},\mu[D|z,F\leq p])\mu(z|F\leq p)\notag\\
    &\leq \sum_{z}(1-w_{z})\mu(z|F\leq p)\notag\\
    &= 1-\sum_{z}w(\tilde\nu[D|z])\mu(z|F\leq p)\notag\\
    &= 1-\sum_{z}\sum_{dr}(\tilde\nu(drz)/\mu(z|F\leq p))\mu(z|F\leq p)\notag\\
    &= 1-w(\tilde\nu)\leq \epsilon/\kappa.\label{eq:lm:esupe_fail_esmaxprob_1}
  \end{align}
  For the average maximum probability of $\nu$, we get
  \begin{align}
    \sum_{z}\max_{d}\nu(d|z)\nu(z) &=
    \sum_{z}\max_{d}\nu_{z}(d)\mu(z|F\leq p)\notag\\
    &\leq p\sum_{z}\mu(z|F\leq p)/\kappa_{z}\notag\\
    &= p\sum_{z}\mu(z)/\kappa = p/\kappa,
  \end{align}
  which together with the argument at the beginning of the proof
  establishes the lemma.
\end{proof}

We remark that the witness $\nu$ constructed in the proof of the lemma
satisfies $\nu[ZE]=\mu[ZE|F\leq p]$.  We refer to $\nu$ in the next
section, where to emphasize that $\nu$ is constructed conditionally on
$\{F\leq p\}$ we write $\nu[DZE|F\leq p]$ instead of $\nu[DZE]$.

\subsection{Protocols}
\label{subsec:protocols}

Our goal is to construct probability estimators from test
supermartingales and use them in randomness generation by composition
with an extractor. The required supermartingales are introduced in
Sect.~\ref{sec:prots}. Here we give three ways in which probability
estimators can be composed with an extractor for randomness generation
protocols.  For the first, given a probability estimator, we can
estimate smooth min-entropy and compose with an extractor by chaining
Lem.~\ref{lm:esupe_fail_esmaxprob} with
Lem.~\ref{lm:composition2}. The second draws on banked randomness to
avoid failure. The third requires a linear extractor.  The first
protocol is given in the following theorem:

\begin{theorem}
  \label{thm:prot_chainlemmas}
  Let $F$ be an $\epse$-soft $\UPE(D{:}R|Z;E,\cH)$ with $\Prob(F\leq
  p\delta)=\kappa$, $D$ a bitstring of length $n$, and $p\delta\geq
  1/2^{n}$.  Write $\sige=-\log_{2}(p)$ and suppose that $\cE$ is a
  strong extractor such that $(n,l,\sige,\sigma,\epsx)$ satisfies the
  extractor constraints. Let $S$ be a length $l$ uniform and
  independent seed string. Abbreviate
  $\cE=\cE(D,S;\sige,\sigma,\epsx)$.  Then
  \begin{equation}
    \TV(\mu[\cE SZE|F\leq p\delta],\Unif_{\cE \cS}\otimes\mu[ZE|F\leq p\delta])\leq\epsx+(2\epse+\delta)/\kappa.
    \label{eq:thm:prot_chainlemmas}
  \end{equation}
  In particular, $\cP_{X}=\cE,\cP_{S}=S,\cP_{P}=\knuth{F\leq p\delta}$
  defines a $(\sigma,\epsx+2\epse+\delta)$-sound randomness generation
  protocol.
\end{theorem}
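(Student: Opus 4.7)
The plan is to chain Lem.~\ref{lm:esupe_fail_esmaxprob} with Lem.~\ref{lm:composition2} applied to the distribution conditioned on the event $\{F\leq p\delta\}$. First, since $F$ is an $\epse$-soft $\UPE$, $\Prob(F\leq p\delta)=\kappa$, and the hypothesis $p\delta\geq 1/2^{n}=1/|\Rng(D)|$ is exactly the precondition of Lem.~\ref{lm:esupe_fail_esmaxprob} applied with the bound $p\delta$, I would invoke that lemma to obtain
\begin{equation}
P^{\epse/\kappa}_{\max,\mu[DZE|F\leq p\delta]}(D|ZE)\leq p\delta/\kappa = 2^{-\sige}(\delta/\kappa).
\end{equation}

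Next I would verify that $S$ remains uniform and independent of $DZE$ under the conditional distribution $\mu[\cdot|F\leq p\delta]$. This holds because $S$ is assumed independent of all other relevant RVs, and the conditioning event $\{F\leq p\delta\}$ is determined by $RZ$ (and hence independent of $S$); therefore conditioning does not couple $S$ to $DZE$. With this in hand, I would apply Lem.~\ref{lm:composition2} to the distribution $\mu[\cdot|F\leq p\delta]$, setting the lemma's smoothness parameter to $\epse/\kappa$ and its $\delta$-parameter to $\delta/\kappa$, for a target threshold $2^{-\sige}(\delta/\kappa)$. Since $(n,l,\sige,\sigma,\epsx)$ satisfies the extractor constraints by hypothesis, the lemma yields
\begin{equation}
\TV\bigl(\mu[\cE SZE|F\leq p\delta],\,\Unif_{\cE S}\otimes\mu[ZE|F\leq p\delta]\bigr)\leq \epsx+2(\epse/\kappa)+\delta/\kappa,
\end{equation}
which is the displayed inequality of the theorem.

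For the ``in particular'' claim about $(\sigma,\epsx+2\epse+\delta)$-soundness, I would set $\cP_{P}=\knuth{F\leq p\delta}$, $\cP_{X}=\cE$, $\cP_{S}=S$, and choose the witness $\nu_{E'}=\mu[ZE|F\leq p\delta]$ for $E'=ZE$. The length of $\cP_{X}$ is $\sigma$ by the extractor specification. Since $S$ is independent of $F$ (which depends only on $RZ$), $\mu[\cP_{S}]=\mu[S|F\leq p\delta]=\Unif_{S}$, so the factor $\Unif_{\cE S}\otimes\nu_{E'}$ matches the form $\Unif_{\cP_{X}}\otimes\mu[\cP_{S}]\otimes\nu_{E'}$ required in the soundness definition. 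Multiplying the $\TV$ bound above by $\Prob(\cP_{P}=1)=\kappa\leq 1$ gives
\begin{equation}
\TV(\cdots)\,\Prob(\cP_{P}=1)\leq \kappa\epsx+2\epse+\delta\leq \epsx+2\epse+\delta,
\end{equation}
which is the desired soundness bound.

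I expect no deep obstacle; the argument is essentially bookkeeping. The only subtlety is keeping track of how the ``$\delta$'' parameter in Lem.~\ref{lm:composition2} interacts with the $\kappa$ factor coming out of Lem.~\ref{lm:esupe_fail_esmaxprob}, so that the $2\epse$ and $\delta$ contributions divide by $\kappa$ in the conditional $\TV$ bound but recombine cleanly (without any $1/\kappa$) once multiplied by $\Prob(\cP_{P}=1)$ in the soundness inequality, where the $\epsx$ term already accounts for its own worst case without the $\kappa$ factor.
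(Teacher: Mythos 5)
Your proposal is correct and follows essentially the same route as the paper: Lem.~\ref{lm:esupe_fail_esmaxprob} applied at threshold $p\delta$ to get $P^{\epse/\kappa}_{\max,\mu[DZE|F\leq p\delta]}(D|ZE)\leq p\delta/\kappa$, then Lem.~\ref{lm:composition2} with $\epse\mapsto\epse/\kappa$ and $\delta\mapsto\delta/\kappa$, and finally multiplication by $\kappa$ using $\epsx\kappa\leq\epsx$ for soundness. Your explicit checks that $S$ stays uniform and independent after conditioning on $\{F\leq p\delta\}$ (since $F$ depends only on $RZ$) and that $\mu[\cP_S]=\Unif_S$ are details the paper leaves implicit, and they are handled correctly.
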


\begin{proof}
  By Lem.~\ref{lm:esupe_fail_esmaxprob},
  $P^{\epse/\kappa}_{\max,\mu[DZE|F\leq p\delta]}(D|ZE)\leq
  p\delta/\kappa$. We obtain Eq.~\ref{eq:thm:prot_chainlemmas} by
  applying Lem.~\ref{lm:composition2} with $\epse$ there
  replaced by $\epse/\kappa$ here and $\delta$ there with
  $\delta/\kappa$ here. The soundness statement
  follows by multiplying both sides by $\kappa$ in consideration of
  $\epsx\kappa\leq\epsx$.
\end{proof}

We can use the remarks after Lem.~\ref{lm:pmax_avpmax} and
Lem.~\ref{lm:esupe_fail_esmaxprob} to improve the factor of $2$ on
$\epse$ in the statement of the theorem. For this, let $\nu$ witness
that $P^{\epse/\kappa}_{\max,\mu[DZE|F\leq p]}(D|ZE)\leq
p\delta/\kappa$ according to Lem.~\ref{lm:esupe_fail_esmaxprob},
where from the remark after this lemma, $\nu[ZE|F\leq
p\delta]=\mu[ZE|F \leq p\delta]$. Then apply the second part of
Lem.~\ref{lm:pmax_avpmax} and the remark thereafter to obtain a
distribution $\nu'$ of $DZE$ with $\TV(\nu',\nu[DZE|F\leq
p\delta])\leq$ $\delta/\kappa$ and
$\nu'[ZE]=\nu[ZE|F\leq p\delta]=\mu[ZE|F\leq p\delta]$, where
$P^{{\rm u},0}_{\max,\nu'}(D|ZE)\leq p$. From
Lem.~\ref{lm:composition}, we have $\TV(\nu'[\cE(D,S;\sige,
\sigma,\epsx )SZE],\Unif_{\cE \cS}\otimes\nu'[ZE])\leq\epsx$.  The observation
now follows from $\nu'[ZE]=\mu[ZE|F\leq p\delta]$ and the triangle
and data-processing inequalities.  Namely, we get
\begin{equation}
  \TV(\mu[\cE SZE|F\leq p\delta],\Unif_{\cE \cS}\otimes\mu[ZE|F\leq p\delta])\leq\epsx+(\epse+\delta)/\kappa.
  \label{eq:thm:prot_chainlemmas_improved}
\end{equation}

The protocol performance and soundness proof of
Thm.~\ref{thm:prot_chainlemmas} is closely related to the performance
and proof of the Protocol Soundness Theorem in
Ref.~\cite{bierhorst:qc2017a}, which in turn are based on results of
Refs.~\cite{konig:2008,pironio:2013}. The parameters in these theorems
implicitly compensate for the fact that we do not require the
extractor to be classical-proof, see the comments on extractor
constraints in Sect.~\ref{sec:minentropy_extraction}.

The next protocol has no chance of failure but needs access to banked
randomness. The banked randomness cannot be randomness from previous
instances of the protocol involving the same devices. Let $\cU$ be a soft $\UPE(D{:}R|Z;E)$ for
$\cH$. Let $\sigma\in\nats_{+}$ and $\epsilon>0$ be the requested
number of bits and error bound.  Assume we have access to a source of
uniform bits $S$ that is independent of all other relevant RVs. For
banked randomness, we also have access to a second source of such
random bits $S_{b}$.  Let $\cE$ be a strong extractor.  We define a
randomness generation protocol $\cP(\sigma,\epsilon;\cU)$ by the
following steps:
\begin{description}[\compact]
\item[ $\cP(\sigma,\epsilon;\cU)$:] 
\item[]
  \begin{description}[\compact]
  \item[0.] Choose $n$, $\sige $, $\epse >0$,
    $\epsx >0$ and $l$ so that
    $(n+\sige ,l,\sige ,\sigma,\epsx )$ satisfies the extractor
    constraints and $\epse +\epsx =\epsilon$.

  \item[1.] Perform trials to obtain values $r$ and $z$ of $R$ and $Z$
    expressed as bit strings, with $|D|=n$. Let $d$ be the value taken
    by $D$.  (Recall that by our conventions, $D$ is determined by
    $R$.)

  \item[2.] Determine $p=\cU(rz,1-\epse )$ and let
    $k=\max(0,\lceil\sige -\log_{2}(1/p)\rceil)$. 

  \item[3.] Obtain values $(s)_{\le l}$ and $(s_{b})_{\le k}$ from $S_{\le l}$ and
    $(S_{b})_{\le k}$.

  \item[4.] Let $d'=d(s_{b})_{\le k}0^{\sige-k}$, where $0^{\sige-k}$
    is a string of zeros of length $\sige-k$.

  \item[5.] If the distribution of $Z$ is known and independent of
    $E$, let $z'=z$, otherwise let $z'$ be the empty string.   

  \item[6.] Output $\cP_{X}=\cE(d',(s)_{\le l},\sige ,\sigma,\epsx )$,
    $\cP_{S}=z'(s)_{\le l}$ and $\cP_{P}=1$.
  \end{description}
\end{description}

Note that we have left the parameter choices made in the first step
free, subject to the given constraints. They can be made to minimize
the expected amount of banked randomness needed given information on
device performance and resource bounds. It is important that the
choices be made before obtaining $r$, $z$ and the seed bits. That is,
they are conditionally independent of these RVs given the pre-protocol
past, where $\cH$ is satisfied conditionally on this past. The length
of $D$ must be fixed before performing trials, but this does not
preclude stopping trials early if an adaptive probability estimator is
used. In this case $D$ can be zero-filled to length $n$. The
requirement for fixed length $n$ is imposed by the extractor
properties: We do not have an extractor that can take advantage of
variable- and unbounded-length inputs. 

\begin{theorem}\label{thm:estext_nofail}
  The protocol $\cP(\sigma,\epsilon;\cU)$ is a
  $(\sigma,\epsilon)$-sound and complete randomness generation
  protocol with respect to $\cH$.
\end{theorem}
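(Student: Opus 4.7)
The plan is to reduce $(\sigma,\epsilon)$-soundness to the single claim that the padded extractor input $D'=D(S_{b})_{\le k(R,Z)}0^{\sige-k(R,Z)}$ of length $n+\sige$ constructed in steps~3--4 satisfies $P^{{\rm u},\epse}_{\max,\mu}(D'|ZE)\le 2^{-\sige}$ for every $\mu\in\cH$; Lem.~\ref{lm:composition} applied with the uniform independent seed $S$ and the prescribed extractor parameters then yields $\TV(\mu[\cE S ZE],\Unif_{\cE S}\otimes\mu[ZE])\le\epsx+\epse=\epsilon$, from which $(\sigma,\epsilon)$-soundness follows with the appropriate choice of $E'$ and $\nu_{E'}$ described below. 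Completeness is immediate because $\cP_{P}=1$ always.

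The key arithmetic is that $k(r,z)=\max(0,\lceil\sige-\log_{2}(1/\cU(rz,1-\epse))\rceil)$ forces $0\le k\le\sige$ and
$$2^{-k(r,z)}\,\cU(rz,1-\epse)\le 2^{-\sige},$$
by separate inspection of the cases $k=0$ (where $\cU(rz,1-\epse)\le 2^{-\sige}$) and $k\ge 1$ (where $k\ge\sige+\log_{2}\cU(rz,1-\epse)$). Fix a value $e$ of $E$ and let $\tilde\mu_{z}$ and $q(r|dz)$ witness the $\epse$-soft $\UPE$ property of $F(rz)=\cU(rz,1-\epse)$ at $e$ per Def.~\ref{def_soft_UPE}. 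I would define, on the range of $D'$,
$$\tilde\mu_{ze}(d')=\sum_{r}\tilde\mu_{z}(D(r),r)\,2^{-k(r,z)}\chi(d',r,z),$$
where $\chi(d',r,z)=1$ iff the first $n$ bits of $d'$ equal $D(r)$ and the last $\sige-k(r,z)$ bits of $d'$ are zero---equivalently, $d'$ is a value that $D(r)(S_{b})_{\le k(r,z)}0^{\sige-k(r,z)}$ can take. Three short computations then close the argument: (i)~$\tilde\mu_{ze}\le\mu[D'|ze]$, since $\tilde\mu_{z}(D(r),r)\le\mu(r|ze)$ (as $D$ is determined by $R$) and $\mu(d'|ze)=\sum_{r}\mu(r|ze)2^{-k(r,z)}\chi(d',r,z)$ by the independence and uniformity of $S_{b}$; (ii)~$\tilde\mu_{ze}(d')\le 2^{-\sige}$, from $\tilde\mu_{z}(dr)\le\cU(rz,1-\epse)q(r|dz)$ combined with the arithmetic bound above and $\sum_{r}q(r|d'_{1:n},z)\le 1$; (iii)~$w(\tilde\mu_{ze})=w(\tilde\mu_{z})$, because the $2^{k(r,z)}$ free choices of the middle bits of $d'$ cancel the factor $2^{-k(r,z)}$, giving $\sum_{z}w(\tilde\mu_{ze})\mu(z|e)\ge 1-\epse$.

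With (i)--(iii) in hand, Lem.~\ref{lm:pmax_noe} lifts the per-$e$ bounds to $P^{{\rm u},\epse}_{\max,\mu}(D'|ZE)\le 2^{-\sige}$, and the composition with $\cE$ proceeds as above. To translate the resulting TV bound into the soundness definition, I would handle the two branches of step~5 separately: when $z'$ is empty, take $E'=ZE$, $\cP_{S}=S$, $\nu_{E'}=\mu[ZE]$, and the required bound is exactly the conclusion of Lem.~\ref{lm:composition}; when $z'=z$, take $E'=E$, $\cP_{S}=ZS$, $\nu_{E'}=\mu[E]$, and use the step~5 assumption (distribution of $Z$ known and independent of $E$) together with the independence of $S$ to identify $\mu[Z]\otimes\Unif_{S}\otimes\mu[E]$ with $\mu[ZSE]$, after which the required bound is again a reordering of the lemma's output. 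The main obstacle is the construction and bookkeeping for $\tilde\mu_{ze}$ in the face of the data-dependent split between the banked bits and the padding zeros; once (i)--(iii) are verified, the rest is a routine application of the composition and probability-lifting lemmas, and completeness requires no additional argument.
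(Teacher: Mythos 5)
Your proof is correct and follows essentially the same route as the paper's: the same arithmetic bound $2^{-k}\,\cU(rz,1-\epse)\le 2^{-\sige}$, a smoothing witness for the padded string $D'$ that is exactly the $D'$-marginal of the paper's subnormalized distributions $\tilde\nu_{z}(d'r')=\tilde\mu_{z}(dr)2^{-k}$, and the same final chain through Lem.~\ref{lm:pmax_noe} and Lem.~\ref{lm:composition}. The only differences are cosmetic: you inline the content of Lem.~\ref{lm:esupe_const_esmaxprob} by working directly with Def.~\ref{def:smoothminentropy} rather than exhibiting a constant soft UPE on $D'R'$, and you spell out the two branches of step~5, which the paper leaves implicit.
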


\begin{proof}
  The protocol has zero probability of failure, so completeness is
  immediate.  But note that for the protocol to be non-trivial, the
  expected number of banked random bits used should be small
  compared to $\sigma$.

  Let $K$,  $D'$ and $Z'$ be the RVs corresponding to the values
  $k$, $d'$ and $z'$ constructed in the protocol. Given the
  parameter choices that are made first, $K$ is determined by $R$ and
  $Z$.  Define $S_{K}=(S_{b})_{\leq K}0^{\sige-K}$ and
  $R'=RS_{K}$. Taking advantage of uniformity in $e$, we fix $e$ and
  implicitly condition everything on $\{E=e\}$.  Let $\tilde\mu_{z}$
  and $q$ be the functions witnessing that $p=\cU(rz,1-\epse )$ is
  an $\epse $-soft $\UPE$.  
  Now let $\tilde\nu_{z}$ be the family of subprobability distributions on $D'R'$ defined by
  $\tilde\nu_{z}(d'r')=\tilde\mu_{z}(dr)\mu(s_{k}|drz) =
  \tilde\mu_{z}(dr)2^{-k}$, where we used that $K$ is determined by
  $RZ$.  Then $\tilde\nu_{z}(d'r')\leq p2^{-k} q(r|dz)\leq
  2^{-\sige }q(r|dz)$.  Furthermore,
  \begin{align}
    \sum_{z}w(\tilde\nu_{z})\mu(z)
    &= \sum_{z}\sum_{d'r'}\tilde\nu_{z}(d'r')\mu(z)\notag\\
    &= \sum_{z}\sum_{d'r'}\tilde\mu_{z}(dr)\mu(s_{k}|drz)\mu(z)\notag\\
    &=\sum_{z}\sum_{dr}\tilde\mu_{z}(dr)\mu(z)\sum_{k,s_{k}}\mu(s_{k}|drz)\notag\\
    &= \sum_{z}\sum_{dr}\tilde\mu_{z}(dr)\mu(z)\notag\\
    &=\sum_z \omega(\tilde\mu_z)\mu(z)\notag\\
    &\geq 1-\epse .
  \end{align}
  It follows that the function $d'r'z\mapsto 2^{-\sige }$ is a
  $\epse$-soft $\UPE(D'{:}R'|Z;E, \cH)$ with
  a constant upper bound, witnessed by
  $\tilde\nu_{z}$ and $q(r'|d'z)=q(r|dz)$.  From
  Lem.~\ref{lm:esupe_const_esmaxprob}, $D'$ has $\epsilon$-smooth
  $Z$-conditional max-prob $2^{-\sige}$ given
  $\{E=e\}$ for each $e$.  The theorem now follows from the
  composition lemma Lem.~\ref{lm:composition} 
  after taking note of Lem.~\ref{lm:pmax_noe}.
\end{proof}

The third protocol is the same as the second except that if the
estimated probability is too large in step 2, the protocol
fails. In this case, banked random bits are not used.  This gives a
protocol $\cQ(\sigma,\epsilon;\cU)$ whose steps are:
\begin{description}[\compact]
\item[$\cQ(\sigma,\epsilon;\cU)$:]
\item[]
  \begin{description}[\compact]
  \item[0.] Choose $n$, $\sige $, $\epse >0$, $\epsx >0$ and $l$ so
    that $\left(n,l,-\log_{2}(2^{-\sige }+2^{-n}),\sigma,\epsx \right)$
    satisfies the extractor constraints, $\epse +\epsx =\epsilon$ and
    $2^{-n}\leq 2^{-\sige }$.

  \item[1.] Perform trials to obtain values $r$ and $z$ of $R$ and $Z$ expressed
    as bit strings,  with $|D|=n$. 

  \item[2.] Determine $F(rz) =\cU(rz,1-\epse )$. 

  \item[3.] If $F(rz)> 2^{-\sige }$, return
    with $\cQ_{X}$ and $\cQ_{S}$ the empty strings and $\cQ_{P}=0$
    to indicate failure.

  \item[4.] Obtain value $(s)_{\le l}$  from $S_{\le l}$.

  \item[5.] Output $\cQ_{X}=\cE(d,(s)_{\le l}, -\log_{2}(2^{-\sige
      }+2^{-n}),\sigma,\epsx )$, $\cQ_{S}=(s)_{\le l}$ and
    $\cQ_{P}=1$.
  \end{description}
\end{description}

\begin{theorem}\label{thm:estext_fail}
  The protocol $\cQ(\sigma,\epsilon;\cU)$ defined above when used with
  a linear strong extractor is a $(\sigma,\epsilon)$-sound
  randomness generation protocol with respect to $\cH$.
\end{theorem}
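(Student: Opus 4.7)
The plan is to condition on each $\{E=e\}$ separately, build for each settings value $z$ a probability distribution on $\Rng(D)$ whose max-prob matches the extractor parameter $2^{-\sige}+2^{-n}$, and then exploit linearity of $\cE$ to pass the extractor error cleanly through the composition. Let $\tilde\mu_z$ and $q$ witness the soft $\UPE$ property, and let $\kappa_e=\Prob(F\le 2^{-\sige}|e)$ and $\kappa_z=\Prob(F\le 2^{-\sige}|ze)$. Define the subprobability $\tilde\nu_z(d)=\sum_{r}\tilde\mu_z(dr)\knuth{F(rz)\le 2^{-\sige}}$ on $\Rng(D)$. The bound $\tilde\mu_z(dr)\le F(rz)q(r|dz)$ restricted to the success event yields $\tilde\nu_z(d)\le 2^{-\sige}$, while the domination $\tilde\mu_z\le\mu[DR|ze]$ gives $\tilde\nu_z(d)\le\kappa_z\mu(d|z,F\le 2^{-\sige})$. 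Promote $\tilde\nu_z$ to a probability distribution $\alpha_z=\tilde\nu_z+(1-w(\tilde\nu_z))\Unif_{D}$ whose max-prob is at most $2^{-\sige}+2^{-n}$, matching the extractor's min-entropy parameter.

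The strong extractor property applied to $\alpha_z$ gives $\TV(\cE(\alpha_z,S)S,\Unif)\le\epsx$. Linearity of $\cE$ provides $\cE(\alpha_z,S)=\cE(\tilde\nu_z,S)+(1-w(\tilde\nu_z))\cE(\Unif_D,S)$ as signed measures on the output-seed space, and a surjective linear hash of a uniform input is uniform, so $\cE(\Unif_D,S)=\Unif$ as well. Writing $\mu[D|z,F\le 2^{-\sige}]=\tilde\nu_z/\kappa_z+\psi_z$ for the non-negative residual $\psi_z$ of mass $1-w(\tilde\nu_z)/\kappa_z$ and applying linearity again, I rearrange to
\begin{equation}
\cE(\mu[D|z,F\le 2^{-\sige}],S)-\Unif=\tfrac{1}{\kappa_z}\bigl(\cE(\alpha_z,S)-\Unif\bigr)+\bigl(\cE(\psi_z,S)-(1-w(\tilde\nu_z)/\kappa_z)\Unif\bigr).
\end{equation}
Taking $L^1$ norms yields the per-$z$ conditional bound $\TV\le\epsx/\kappa_z+(1-w(\tilde\nu_z)/\kappa_z)$.

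To close, I average over $z$ using weights $\mu(z|e,F\le 2^{-\sige})=\mu(z|e)\kappa_z/\kappa_e$ and Eq.~\ref{eq:tv_samemarg}. The crucial input is the estimate $\sum_{z}\mu(z|e)w(\tilde\nu_z)\ge\kappa_e-\epse$, obtained from the soft $\UPE$ weight condition $\sum_z\mu(z|e)w(\tilde\mu_z)\ge 1-\epse$ by subtracting the complementary-event contribution $\Prob(F>2^{-\sige}|e)=1-\kappa_e$. The per-$z$ bounds then collapse to $(\epsx+\epse)/\kappa_e$. A second application of Eq.~\ref{eq:tv_samemarg} over $e$ with target marginal $\mu[ZE|F\le 2^{-\sige}]$ gives conditional TV at most $(\epsx+\epse)/\kappa=\epsilon/\kappa$; multiplying by $\Prob(\cQ_P=1)=\kappa$ as in the soundness definition delivers $\epsilon$.

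The main obstacle is that the naive chain via Lem.~\ref{lm:composition2} is too blunt: its bound $\epsx+2\epse+\delta$ with $\delta=2^{-\sige}/((2^{-\sige}+2^{-n})\kappa)\ge 1/(2\kappa)$ multiplied by $\kappa$ delivers an order-one constant that swamps $\epsilon$. Linearity is what rescues the argument --- it turns the decomposition $\alpha_z=\tilde\nu_z+(1-w(\tilde\nu_z))\Unif_D$ into the \emph{exact} operator identity $\cE(\alpha_z,S)=\cE(\tilde\nu_z,S)+(1-w(\tilde\nu_z))\Unif$, so the uniform fill-in needed to lift the subprobability $\tilde\nu_z$ to a probability passes through $\cE$ as a genuine uniform component that cancels when forming the deviation from uniform, leaving only the extractor error $\epsx/\kappa_z$ which telescopes correctly under averaging.
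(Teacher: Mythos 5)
Your architecture is essentially the paper's: pad the certified sub-probability mass with a uniform distribution on $\Rng(D)$ so the extractor input has max-prob at most $2^{-\sige}+2^{-n}$, use linearity of the extractor to dispose of the uniform filler, and average over $z$ and $e$ using $\sum_z\mu(z|e)\,w(\tilde\nu_z)\ge\kappa_e-\epse$; that bookkeeping, including the residual $\psi_z$ of mass $1-w(\tilde\nu_z)/\kappa_z$, the two applications of Eq.~\ref{eq:tv_samemarg}, and the final multiplication by $\kappa$, is correct. The gap is the single step where you claim $\cE(\Unif_D,S)=\Unif$ because ``a surjective linear hash of a uniform input is uniform.'' Surjectivity is not among the hypotheses: the paper's notion of a linear extractor only says that for each seed $s$ the map $d\mapsto\cE(d,s)$ is a linear hash, and its own proof explicitly allows the seed-dependent range $R_s=\Rng(\cE(D,s))$ to have $r_s=|R_s|\le 2^{\sigma}$, in which case $\cE(\Unif_D,s)$ is uniform on $R_s$, not on $\{0,1\}^{\sigma}$. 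Your exact operator identity then fails, and the cancellation becomes only approximate; the cheap patch (applying the extractor guarantee to $\Unif_D$ itself so that $\TV(\cE(\Unif_D,S)S,\Unif)\le\epsx$) leaves an extra term of order $\epsx(1-\kappa_e+\epse)/\kappa_e$ per value of $e$, so after multiplying by $\kappa$ you only certify soundness with error roughly $2\epsx+\epse$, which overshoots the budget $\epsilon=\epse+\epsx$ fixed in step 0 of $\cQ$.

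The repair is to use the uniform filler only through a one-sided bound, which is exactly what the paper's proof does. First replace $\psi_z$ by a uniform chunk of the same mass, $\hat\mu_z=\tilde\nu_z/\kappa_z+\bigl((\kappa_z-w(\tilde\nu_z))/\kappa_z\bigr)\Unif_D$, at TV cost $(\kappa_z-w(\tilde\nu_z))/\kappa_z$ (two non-negative measures of equal mass). Then note $\alpha_z=\kappa_z\hat\mu_z+(1-\kappa_z)\Unif_D$, and for each seed $s$ the filler output $\cE(\Unif_D,s)$ puts mass $1/r_s\ge 2^{-\sigma}$ on every point of $R_s$, which contains the support of $\cE(\hat\mu_z,s)$; monotonicity of $\sum_{\cale}\bigl(f(\cale)\bigr)\knuth{f(\cale)>0}$ then gives $\kappa_z\,\TV(\cE(\hat\mu_z,s),\Unif)\le\TV(\cE(\alpha_z,s),\Unif)$, exactly as in the chain leading to Eq.~\ref{eq:thm:estext_fail2b}. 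Averaging over the uniform independent seed recovers your per-$z$ bound $\epsx/\kappa_z+(\kappa_z-w(\tilde\nu_z))/\kappa_z$ without ever needing $R_s=\{0,1\}^{\sigma}$, and the rest of your argument goes through verbatim. Alternatively you could add per-seed surjectivity as an explicit hypothesis, but that is strictly stronger than ``linear strong extractor'' as assumed in the theorem, and the TMPS construction is not guaranteed to satisfy it for every seed.
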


\begin{proof}
  Because the distribution of $Z$ conditional on $\{\cQ_{P}=1\}$ can
  differ significantly from its initial distribution in a way that is
  not known ahead of time, $Z$'s randomness cannot be
  reused. Consequently, in this proof, $Z$ and $E$ always occur
  together, so we let $Z$ stand for the joint RV $ZE$ throughout.
  
  The proof first replaces the distribution $\mu[DZS\cQ_{P}]$ by
  $\nu[DZS\cQ_{P}]$ so that $\nu[D|Z]\leq 2^{-\sige}+2^{-n}$ and
  conditionally on $\{\cQ_{P}=1\}$, the
  distribution's change is small in a sense to be defined. Since the
  conclusion only depends on $\{\cQ_{P}=1\}$, changes when
  $\{\cQ_{P}=0\}$ can be arbitrary.  We therefore define
  $\TV_{\mathrm{pass}}$ so that for global distributions $\nu$ and
  $\nu'$ satisfying $\nu[\cQ_{P}]=\nu'[\cQ_{P}]$ and for all RVs $U$,
  \begin{equation}
    \TV_{\mathrm{pass}}(\nu[U\cQ_{P}],\nu'[U\cQ_{P}]) =
    \TV(\nu[U|\cQ_{P}=1], \nu'[U|\cQ_{P}=1])\nu(\cQ_{P}=1).  
  \end{equation}
  Note that $\TV_{\mathrm{pass}}$ is the same as the total variation
  distance between $\nu[\cM(U\cQ_{P})\cQ_{P}]$ and
  $\nu'[\cM(U\cQ_{P})\cQ_{P})$ for any process $\cM$ satisfying
  $\cM(U1)=U$ and $\cM(U0)=v$ for a fixed value $v$. Such processes
  forget $U$ when $\cQ_{P}=0$. Note that $(\sigma,\epsilon)$-soundness
  is defined in terms of $\TV_{\mathrm{pass}}$.

  Write $\kappa=\Prob(\cQ_{P}=1)$ and $p=2^{-\sige }$. 
  We omit the
  subscript ${\leq}l$ from $S_{\leq l}$.  To construct $\nu$ we refine
  the proof of Lem.~\ref{lm:esupe_fail_esmaxprob}.  Write
  $F(rz)=\cU(rz,1-\epse )$, and let $\tilde\mu_{z}$ and $q$ witness
  that $F$ is an $\epse$-soft $\UPE(D{:}R|Z;E, \cH)$.  The event
  $\{\cQ_{P}=1\}$ is the same as $\{F\leq p\}$.  The distribution
  $\nu$ to be constructed satisfies $\nu[ZS\cQ_{P}]=\mu[ZS\cQ_{P}]$.
  We maintain that $S$ is uniform and independent of the other RVs for
  both $\nu$ and $\mu$. Thus it suffices to construct $\nu[DZ\cQ_{P}]$
  and define $\nu = \nu[DZ\cQ_{P}]\otimes\Unif_{S}$. We start by
  defining $\nu[DZ|\cQ_{P}=1]$ here to be $\nu[DZ]$ as constructed in
  the proof of Lem.~\ref{lm:esupe_fail_esmaxprob}.  Consistent with
  the notation in that proof, we also use the notation
  $\nu_{z}=\nu[D|z,\cQ_{P}=1]$.  Note that
  $\nu[Z|\cQ_{P}=1](z)=\mu(z|\cQ_{P}=1)$.  We then set
  \begin{align}
    \nu(dz,\cQ_{P}=b)&= \nu[DZ|\cQ_{P}=1](dz)\kappa\knuth{b=1}
    + \Unif_{D}(d)\mu(z,\cQ_{P}=0)\knuth{b=0}\notag\\
    &= \nu_{z}(d)\mu(z,\cQ_{P}=1)\knuth{b=1}
    + \Unif_{D}(d)\mu(z,\cQ_{P}=0)\knuth{b=0}.
  \end{align}
  This ensures that $\nu[ZS\cQ_{P}]=\mu[ZS\cQ_{P}]$ after adding in the uniform and
  independent RV $S$. The distribution $\nu$ satisfies the following additional properties by
  construction:
  \begin{align}
    \nu(d|z) &= \nu(d,\cQ_{P}=1|z)+\nu(d,\cQ_{P}=0|z)\notag\\
    &= \nu_{z}(d)\mu(\cQ_{P}=1|z) + \mu(\cQ_{P}=0|z)/|\Rng(D)|\notag\\
    &\leq p+2^{-n}= 2^{-\sigma_h}+2^{-n}, \notag\\
  \end{align}
  and
  \begin{align}
    \TV_{\mathrm{pass}}(\nu[DZS\cQ_{P}],\mu[DZS\cQ_{P}])
    &=\TV(\nu[DZS|\cQ_{P}=1],\mu[DZS|\cQ_{P}=1])\kappa\notag\\
    &\leq \epse,\label{eq:tvpass_mu_nu}
  \end{align}
  in view of Eq.~\ref{eq:lm:esupe_fail_esmaxprob_1} and the preceding paragraph.
  
  We can now apply the extractor to get
  \begin{equation}
    \TV(\nu[\cE(D,S)SZ], \Unif_{\cE S}\otimes\nu[Z])\leq \epsx .
    \label{eq:thm:estext_fail2a}
  \end{equation}
  Here we transferred the extractor guarantee conditionally on $Z=z$
  for each $z$, noting that the marginal distribution on $Z$ is the
  same for both arguments.  From here on we abbreviate $\cE=\cE(D,S)$.

  The main task for completing the proof of the theorem is to bound
  $\TV(\nu[\cE SZ|\cQ_{p}=1],\Unif_{\cE S}\otimes\nu[Z|\cQ_{P}=1])$,
  which requires transferring the distance of
  Eq.~\ref{eq:thm:estext_fail2a} to the conditional distributions.
  For this, we need to show that for all $z$
  \begin{equation}
    \mu(\cQ_{P}=1|z)\TV(\nu[\cE S|z, \cQ_{P}=1],\Unif_{\cE S}) 
    \le    \TV(\nu[\cE S|z],\Unif_{\cE S}).\label{eq:thm:estext_fail2b}
  \end{equation}
  Before proving this inequality we use it to prove the theorem as follows:
  \begin{align}
    \TV_{\mathrm{pass}}(\nu[\cE SZ\cQ_{P}],\Unif_{\cE S}\otimes \mu[Z\cQ_{P}]) \hspace*{-1in}&\notag\\
    &=
    \mu(\cQ_{P}=1) \TV(\nu[\cE SZ | \cQ_{P}=1], \Unif_{\cE
      S}\otimes \mu[Z|\cQ_{P}=1]) \notag\\
    &=\mu(\cQ_{P}=1) \sum_{z} \mu(z|\cQ_{P}=1)
    \TV(\nu[\cE S|z, \cQ_{P}=1],\Unif_{\cE S}) \notag \\
    &= \sum_{z} \mu(z,\cQ_{P}=1)\TV(\nu[\cE S|z,\cQ_{P}=1],\Unif_{\cE S}) \notag\\
    &= \sum_{z} \mu(z) \mu(\cQ_{P}=1|z)\TV(\nu[\cE S|z, \cQ_{P}=1],\Unif_{\cE S}) \notag\\
    &\leq \sum_{z}\mu(z) \TV(\nu[\cE S|z],\Unif_{\cE S}) \notag\\
    &= \TV(\nu(\cE SZ),\Unif_{\cE S}\otimes\nu(Z))\notag\\
    &\leq \epsx,
  \end{align}
  where we applied Eq.~\ref{eq:tv_samemarg} (twice, first with
  $\mu[Z|\cQ_{P}=1] = \nu[Z|\cQ_{P}=1]$, then with $\mu[Z]=\nu[Z]$),
  Eq.~\ref{eq:thm:estext_fail2b} (to get the third-to-last line), and
  Eq.~\ref{eq:thm:estext_fail2a}.  The theorem now follows via the
  triangle inequality for $\TV_{\textrm{pass}}$, adding the
  $\TV_{\textrm{pass}}$-distance from $\nu[\cE SZ\cQ_{P}]$ to
  $\mu[\cE SZ\cQ_{P}]$.  According to Eq.~\ref{eq:tvpass_mu_nu} and
  the data-processing inequality, this distance is bounded by
  $\epse$.

  It remains to establish Eq.~\ref{eq:thm:estext_fail2b}.  We start
  by taking advantage of the linearity of the extractor.  We have
  \begin{align}
    \nu(\cE=\cale,s|z)  &=
    \nu(\cE=\cale,s|z,\cQ_{P}=1)\nu(\cQ_{P}=1|z)\notag\\
    &\hphantom{=\;\;}
    + \nu(\cE=\cale,s|z,\cQ_{P}=0)\nu(\cQ_{P}=0|z)\notag\\
    &=
    \nu(\cE=\cale,s|z,\cQ_{P}=1)\mu(\cQ_{P}=1|z)\notag\\
    &\hphantom{=\;\;}
    + \nu(\cE=\cale,s|z,\cQ_{P}=0)\mu(\cQ_{P}=0|z).
    \label{eq:thm:estext_fail3}
  \end{align}
  Define $R_{s}=\Rng(\cE(D,s))$ and $r_{s}=|R_{s}|\leq
  |\Rng(\cE(D,S)|=2^{\sigma}$.  Since $\cE$ is a linear extractor,
  $\Rng(D)$ can be treated as a vector space over a finite field. For
  all $\cale\in R_{s}$, $N_{s\cale}=\{d:\cale=\cE(d,s)\}$ is a
  translation of the null space of $d\mapsto \cE(d,s)$. Hence
  $n_{s}=|N_{s\cale}|$ is independent of $\cale$, and
  $r_{s}n_{s}=|\Rng(D)|$. Since $\nu[D|z,\cQ_{P}=0]$ is uniform by
  design, for $\cale\in R_{s}$
  \begin{equation}
    \nu(\cE=\cale,s|z,\cQ_{P}=0) = \frac{|N_{s\cale}|}{|\Rng(D)|} =
    \frac{1}{r_{s}}\geq 2^{-\sigma}.
  \end{equation}
  For $\cale\not\in R_{s}$,
  $\nu(\cE=\cale,s|z,\cQ_{P}=b)=0$.  In the last two equations,
  we can move $s$ into the conditioner in each expression and use
  independence and uniformity of $S$ with respect to $Z$ and
  $\cQ_{P}$ to get
  \begin{align}
    \nu(\cE=\cale|zs)
    &=      \nu(\cE=\cale|zs,\cQ_{P}=1)\knuth{\cale\in R_{s}}\mu(\cQ_{P}=1|z)\notag\\
    &\hphantom{=\;\;}
    + \frac{1}{r_{s}}\knuth{\cale\in R_{s}}\mu(\cQ_{p}=0|z).
  \end{align}
  Abbreviate $\nu_{zs}=\nu[\cE|zs]$, and
  $\nu_{zsb}=\nu[\cE|zs,\cQ_{P}=b]$, which are conditional
  distributions of $\cE$. Then $\nu_{zs0}(\cale)=\knuth{\cale\in
    R_{s}}/r_{s}$.  Write
  $\kappa_{z}=\mu(\cQ_{P}=1|z)=\mu(\cQ_{P}=1|zs)$ for the conditional
  passing probabilities. With these definitions, we can write
  $\nu_{zs} = \kappa_{z}\nu_{zs1}+(1-\kappa_{z})\nu_{zs0}$ and
  calculate
  \begin{align}
    \TV(\nu_{zs},\Unif_{\cE}) &= \sum_{\cale} (\nu_{zs}(\cale)-2^{-\sigma})\knuth{\nu_{zs}(\cale)>2^{-\sigma}} 
    \notag\\
    &= 
    \sum_{\cale\in R_s}(\kappa_{z} \nu_{zs1}(\cale)+ (1-\kappa_{z})/r_s - 2^{-\sigma})\knuth{\kappa_{z} \nu_{zs1}(\cale)+ (1-\kappa_{z})/r_s >2^{-\sigma}}\notag\\
    &= \sum_{\cale\in R_s}(\kappa_{z} (\nu_{zs1}(\cale)-2^{-\sigma})+ (1-\kappa_{z})(1/r_s - 2^{-\sigma}))\notag\\
    &\hphantom{=\;\;}\times\knuth{\kappa_{z} (\nu_{zs1}(\cale)-2^{-\sigma})+ (1-\kappa_{z})(1/r_s-2^{-\sigma})>0} \notag\\
    &\ge \sum_{\cale\in R_s}(\kappa_{z} (\nu_{zs1}(x)-2^{-\sigma}))\knuth{\kappa_{z} (\nu_{zs1}(x)-2^{-\sigma})>0} \notag\\
    &= \kappa_{z} \sum_{\cale\in R_s}(\nu_{zs1}(\cale)-2^{-\sigma})\knuth{(\nu_{zs1}(\cale)-2^{-\sigma})>0} \notag\\
    &= \kappa_{z} \TV(\nu_{zs1},\Unif_{\cE}).
  \end{align}
  Here, we used the fact that $\sum_\cale f(\cale)\knuth{f(\cale)>0}$
  is monotone increasing with $f$, noting that $1/r_{s}\geq
  2^{-\sigma}$. 

  Since $\nu(s|z)=\nu(s|z,\cQ_{P}=1)=\Unif_{S}(s)$, we can apply
  Eq.~\ref{eq:tv_samemarg} to obtain
  \begin{align}
    \TV(\nu[\cE S|z],\Unif_{\cE S}) 
    &= \sum_s \TV(\nu[\cE|zs],\Unif_{\cE}) \nu(s|z) \notag\\
    &= \sum_s \TV(\nu_{zs},\Unif_{\cE})/|\Rng(S)|,
  \end{align}
  and
  \begin{align}
    \TV(\nu[\cE S|z, \cQ_{P}=1],\Unif_{\cE S})\hspace*{-1.5in}\notag\\ &= 
    \sum_s \TV(\nu[\cE|zs, \cQ_{p}=1],\Unif_{\cE}) \nu(s|z, \cQ_{P}=1) 
    \notag\\
    &= \sum_s \TV(\nu_{sz1},\Unif_{\cE})/|\Rng(S)|.
  \end{align}
  Applying the previous two displayed equations and the inequality
  before that, we conclude that for all $z$
  \begin{align}
    \mu(\cQ_{P}=1|z)\TV(\nu[\cE S|z, \cQ_{P}=1],\Unif_{\cE S}) \hspace*{-1in}&\notag\\
    &= \kappa_{z}\TV(\nu[\cE S|z, \cQ_{P}=1],\Unif_{\cE S}) \notag\\
    &= \kappa_{z} \sum_s \TV(\nu_{sz1},\Unif_{\cE})/|\Rng(S)| \notag\\
    &\le  \sum_s \TV(\nu_{sz},\Unif_{\cE})/|\Rng(S)| \notag\\
    &= \TV(\nu[\cE S|z],\Unif_{\cE S}).
  \end{align}
\end{proof}

\section{Test Supermartingales for Uniform Probability Estimation}
\label{sec:prots}

\subsection{Standard Models for Sequences of Trials}
\label{subsect:standardmodels}

Each of Thms.~\ref{thm:prot_chainlemmas},~\ref{thm:estext_nofail},
and~\ref{thm:estext_fail} reduces the problem of randomness generation
protocols to that of probability estimation.  We now consider the
situation where $\Sfnt{CZ}$ is a stochastic sequence of $n$ trials,
with the distributions of the $i$'th trial RVs $C_{i}Z_{i}$ in a model
$\cC_{i}$ conditional on the past and on $E$.  For the remainder of
the paper, the conditioning on $E$ applies universally, and relevant
statements are uniform in the values of $E$. We therefore no longer
mention $E$ explicitly. For the most general treatment, we define
$R_{i}=C_{i}R_{i}'$ where $R_{i}'$ is additional information
obtainable in a trial and $R_{0}=R_{0}'$ is information available
initially. Test supermartingales and test factors are with respect to
$\Sfnt{R}_{\leq i}\Sfnt{Z}_{\leq i}$.  According to our convention,
$D_{i}$ is a function of $C_{i}$.  Other situations can be cast into
this form by changing the definition of $C_{i}$, for example by
extending $C_{i}$ with other information, and modifying models
accordingly.  

Formally, we are considering models $\cH(\cC)$ of distributions of
$\Sfnt{RZ}$ defined by a family of conditionally defined 
models $\cC_{i+1|\Sfnt{r}_{\le i}\Sfnt{z}_{\le i}}$ of
$C_{i+1}Z_{i+1}$. $\cH(\cC)$ consists of the distributions $\mu$ with
the following two properties. First, for all $i$ and $\Sfnt{r}_{\le
  i}\Sfnt{z}_{\le i}$,
\begin{equation}
  \mu[C_{i+1}Z_{i+1}|\Sfnt{r}_{\le i}\Sfnt{z}_{\le i}]\in
  \cC_{i+1|\Sfnt{r}_{\le i}\Sfnt{z}_{\le i}}.
\end{equation}
Second, $\mu$ satisfies that $Z_{i+1}$ is independent of
$\Sfnt{R}_{\le i}$ conditionally on $\Sfnt{Z}_{\le i}$ (and $E$).
Models $\cH(\cC)$ satisfying these conditions are called \emph{standard}.
The second condition is needed in order to be able to estimate
$\Sfnt{Z}$-conditional probabilities of $\Sfnt{D}$ and corresponds
to the Markov-chain condition in the entropy accumulation
framework~\cite{arnon-friedman:2018}.  

In many cases, the sets
$\cC_{i+1|\Sfnt{r}_{\le i}\Sfnt{z}_{\le i}}$ do not depend on
$\Sfnt{r}$, but we take advantage of dependence on $\Sfnt{z}_{\le
  i}$.  In our applications, the sets capture the settings and
non-signaling constraints.  For simplicity, we write
$\cC_{i+1}=\cC_{i+1|\Sfnt{r}_{\le i}\Sfnt{z}_{\le i}}$, leaving the
conditional parameters implicit.

Normally, models for trials are convex and closed.  If
$\Sfnt{Z}$ is non-trivial, the second condition on standard models
prevents their being convex closed for $n>1$, but we note that our
results generally extend to the convex closure of the model used.

\subsection{UPE Constructions}
\label{subsec:upe_constructions}

Let $F_{i+1}$ be non-negative functions of $C_{i+1}Z_{i+1}$, parameterized by
$\Sfnt{R}_{\leq i}$ and $\Sfnt{Z}_{\leq i}$.  We call such functions
``past-parameterized''.  Let $T_{0}=1$ and $T_{i}=\prod_{1\leq j\leq
  i}F_{j}$ for $i\geq 1$.  
For $\beta>0$, define $\cU_{\beta}$ and $\cU_{\beta}^{*}$ by
\begin{align}
  \cU_{\beta}(\Sfnt{RZ},1-\epsilon) &=
  (T_{n}\epsilon)^{-1/\beta},\label{eq:upe_final}\\
  \cU^{*}_{\beta}(\Sfnt{RZ},1-\epsilon) &= \left(\max_{i\leq n}(T_{i}\epsilon)\right)^{-1/\beta}.
  \label{eq:upe_max}
\end{align}
We choose the $F_{i}$ so that they are \emph{probability estimation
  factors} according to the following definition.
\begin{definition}
  \label{def:pef}
  Let $\beta>0$, and let $\cC$ be any model, not necessarily convex.
  A \emph{probability estimation factor (PEF) with power $\beta$ for
    $\cC$ and $D|Z$} is a non-negative RV $F=F(CZ)$ such that for all
  $\nu\in\cC$, $\Exp_{\nu}(F\nu(D|Z)^{\beta})\leq 1$.
\end{definition}
As usual, if the parameters are clear from context, we may omit
them. In particular, unless differently named RVs are involved, PEFs
are always for $D|Z$.  

In the case where $\Sfnt{R}=\Sfnt{C}=\Sfnt{D}$, we show that
$\cU_{\beta}$ and $\cU^{*}_{\beta}$ are $\UPE$s. In the general case,
$\cU_{\beta}$ is a soft $\UPE$. We start with the special case.

\begin{theorem}\label{thm:uest_constr}
  Fix $\beta>0$ and assume $\Sfnt{R}=\Sfnt{C}=\Sfnt{D}$.  Let
  $\Sfnt{F}$ be a sequence of past-parameterized PEFs, with $F_i$ a PEF
  with power $\beta$ for $\cC_i$. Then $\cU_{\beta}(\Sfnt{RZ},1-\epsilon)$ and
  $\cU_{\beta}^{*}(\Sfnt{RZ},1-\epsilon)$ as defined in Eqs.~\eqref{eq:upe_final}
  and~\eqref{eq:upe_max} are
  $\epsilon$-$\UPE(\Sfnt{C}{:}\Sfnt{C}|\Sfnt{Z};\cH(\cC))$.
\end{theorem}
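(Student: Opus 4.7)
The plan is to construct an auxiliary test supermartingale $M_i = T_i \mu(\Sfnt{C}_{\leq i}|\Sfnt{Z}_{\leq i})^{\beta}$ with respect to $\Sfnt{R}_{\leq i}\Sfnt{Z}_{\leq i}$ and model $\cH(\cC)$, and then apply the appropriate tail bound: Markov's inequality (Eq.~\ref{eq:testmart_markov}) for $\cU_{\beta}$, and Ville's maximal inequality (Thm.~\ref{thm:testmart_max}) for $\cU_{\beta}^{*}$. Note $M_0 = 1$ and $M_i \ge 0$; only the conditional-expectation inequality needs to be checked.

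For the supermartingale step, I would use the standard-model Markov condition that $Z_{i+1}$ is independent of $\Sfnt{R}_{\leq i} = \Sfnt{C}_{\leq i}$ given $\Sfnt{Z}_{\leq i}$, which yields $\mu(\Sfnt{C}_{\leq i}|\Sfnt{Z}_{\leq i+1}) = \mu(\Sfnt{C}_{\leq i}|\Sfnt{Z}_{\leq i})$ and hence the telescoping identity
\begin{equation}
\mu(\Sfnt{C}_{\leq i+1}|\Sfnt{Z}_{\leq i+1}) = \mu(\Sfnt{C}_{\leq i}|\Sfnt{Z}_{\leq i})\,\mu(C_{i+1}|\Sfnt{C}_{\leq i}\Sfnt{Z}_{\leq i+1}).
\end{equation}
Since $M_i$ and $\mu(\Sfnt{C}_{\leq i}|\Sfnt{Z}_{\leq i})$ are both determined by $\Sfnt{R}_{\leq i}\Sfnt{Z}_{\leq i}$, I would pull them out of the conditional expectation and obtain
\begin{equation}
\Exp(M_{i+1}|\Sfnt{R}_{\leq i}\Sfnt{Z}_{\leq i}) = M_i \cdot \Exp\!\left(F_{i+1}\,\mu(C_{i+1}|\Sfnt{C}_{\leq i}\Sfnt{Z}_{\leq i+1})^{\beta}\;\big|\;\Sfnt{R}_{\leq i}\Sfnt{Z}_{\leq i}\right).
\end{equation}
Writing $\nu = \mu[C_{i+1}Z_{i+1}|\Sfnt{r}_{\leq i}\Sfnt{z}_{\leq i}]$ for the trial's conditional distribution, which lies in $\cC_{i+1}$ by the definition of $\cH(\cC)$, the inner conditional expectation equals $\Exp_{\nu}(F_{i+1}\,\nu(C_{i+1}|Z_{i+1})^{\beta}) \le 1$ by the PEF property of $F_{i+1}$. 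Thus $\Exp(M_{i+1}|\Sfnt{R}_{\leq i}\Sfnt{Z}_{\leq i}) \le M_i$, as required.

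Granted the supermartingale property, the $\cU_{\beta}$ claim follows from $\Prob(M_n \ge 1/\epsilon) \le \epsilon$ by rearranging $M_n < 1/\epsilon$ into $\mu(\Sfnt{C}|\Sfnt{Z}) < (T_n\epsilon)^{-1/\beta} = \cU_{\beta}(\Sfnt{RZ},1-\epsilon)$. For $\cU_{\beta}^{*}$, the extra ingredient is monotonicity: iterating the Markov-condition identity shows $\mu(\Sfnt{C}_{\leq i}|\Sfnt{Z}_{\leq i}) \ge \mu(\Sfnt{C}_{\leq n}|\Sfnt{Z}_{\leq n}) = \mu(\Sfnt{C}|\Sfnt{Z})$ for every $i \le n$, so that $M_i \ge T_i\,\mu(\Sfnt{C}|\Sfnt{Z})^{\beta}$ pointwise and hence $\max_i M_i \ge (\max_i T_i)\,\mu(\Sfnt{C}|\Sfnt{Z})^{\beta}$. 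Applying Thm.~\ref{thm:testmart_max} to $\Sfnt{M}$ then gives $\Prob(\max_i T_i \cdot \mu(\Sfnt{C}|\Sfnt{Z})^{\beta} \ge 1/\epsilon) \le \epsilon$, which rearranges to the $\cU_{\beta}^{*}$ coverage statement.

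The main obstacle is the careful use of the standard-model conditional-independence property to justify both the telescoping of $\mu(\Sfnt{C}_{\leq i}|\Sfnt{Z}_{\leq i})$ and the monotonicity needed for the $\cU_{\beta}^{*}$ variant; everything else is a clean combination of the PEF definition, tower property, and martingale tail bounds already established in Sect.~\ref{subsec:supermart}. Zero-probability conditioners should not cause trouble since, as noted in Sect.~\ref{sec:prelims}, they can be handled by omitting the corresponding summands, and $M_{i+1}$ is automatically zero on $\{M_i = 0\}$.
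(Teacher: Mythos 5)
Your proof is correct and follows essentially the same route as the paper's: both establish, via the standard-model conditional independence of $Z_{i+1}$, the telescoping identity for $\Probv(\Sfnt{C}_{\le i}|\Sfnt{Z}_{\le i})$, show that $T_i\Probv(\Sfnt{C}_{\le i}|\Sfnt{Z}_{\le i})^{\beta}$ is a test supermartingale using the PEF inequality trialwise, and then apply Markov's inequality for $\cU_{\beta}$ and Ville's maximal inequality together with the monotone non-increase of $\Probv(\Sfnt{C}_{\le i}|\Sfnt{Z}_{\le i})$ for $\cU^{*}_{\beta}$. The only cosmetic difference is that the paper channels the last step through its confidence-interval lemma (Eqs.~\ref{eq:itestmart_markov} and~\ref{eq:itestmart_max}) with $U_i=\Probv(\Sfnt{C}_{\le i}|\Sfnt{Z}_{\le i})^{-\beta}$, whereas you inline that argument.
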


Note that $\beta$ cannot be adapted during the trials.  On the other
hand, before the $i$'th trial, we can design the PEFs $F_{i}$ for
the particular constraints relevant to the $i$'th trial.

\begin{proof}
  We first observe that
  \begin{equation}
    \prod_{j=0}^{i-1}\Probv(C_{j+1}|Z_{j+1}\Sfnt{Z}_{\le j}\Sfnt{C}_{\le j}) =
    \Probv(\Sfnt{C}_{\leq i}|\Sfnt{Z}_{\leq i}).
  \end{equation} 
  This follows by induction with the identity
  \begin{align}
    \Probv(\Sfnt{C}_{\le j+1}|\Sfnt{Z}_{\le j+1}) &=
    \Probv(C_{j+1}|Z_{j+1}\Sfnt{Z}_{\le j}\Sfnt{C}_{\le j})
    \Probv(\Sfnt{C}_{\le j}|Z_{j+1}\Sfnt{Z}_{\le j})\notag\\
    &= \Probv(C_{j+1}|Z_{j+1}\Sfnt{Z}_{\le j}\Sfnt{C}_{\le j})
    \Probv(\Sfnt{C}_{\le j}|\Sfnt{Z}_{\le j})
  \end{align}
  by conditional independence of $Z_{j+1}$.

  We claim that $F_{i+1}\Probv(C_{i+1}|Z_{i+1}\Sfnt{Z}_{\leq
    i}\Sfnt{C}_{\leq i})^{\beta}$ is a test factor determined by
  $\Sfnt{C}_{\leq i+1}\Sfnt{Z}_{\leq i+1}$.  To prove this claim, for all
  $\Sfnt{c}_{\leq i}\Sfnt{z}_{\leq i}$,
  $\nu=\mu[C_{i+1}Z_{i+1}|\Sfnt{c}_{\leq i}\Sfnt{z}_{\leq
    i}]\in\cC_{i+1}$.  With
  $F_{i+1}=F_{i+1}(C_{i+1}Z_{i+1};\Sfnt{c}_{\leq i}\Sfnt{z}_{\leq
    i})$, we obtain the bound
  \begin{align}
    \Exp\left(F_{i+1}\Probv(C_{i+1}|Z_{i+1}\Sfnt{z}_{\leq
        i}\Sfnt{c}_{\leq i})^{\beta}|\Sfnt{c}_{\leq i}\Sfnt{z}_{\leq i}\right)
    &=\Exp_{\nu}\left(F_{i+1}\nu(C_{i+1}|Z_{i+1})^{\beta}\right)\notag\\
    &\leq 1,
  \end{align}
  where we invoked the assumption that $F_{i+1}$ is a PEF with
  power $\beta$ for $\cC_{i+1}$.  By arbitrariness of
  $\Sfnt{c}_{\leq i}\Sfnt{z}_{\leq i}$, and because the factors are
  determined by $\Sfnt{C}_{\leq i+1}\Sfnt{Z}_{\leq i+1}$, the claim
  follows.  The product of these test factors is
  \begin{align}
    \prod_{j=0}^{i-1}F_{j+1}\Probv(C_{j+1}|Z_{j+1}\Sfnt{Z}_{\leq
      j}\Sfnt{C}_{\leq j})^{\beta} &= T_{i}\prod_{j=0}^{i-1}\Probv(C_{j+1}|Z_{j+1}\Sfnt{Z}_{\leq
      j}\Sfnt{C}_{\leq j})^{\beta}\notag\\
    &= T_{i}\Probv(\Sfnt{C}_{\leq i}|\Sfnt{Z}_{\leq i})^{\beta},
  \end{align}
  with $T_{i}=\prod_{j=1}^{i}F_{j}$.  Thus, the sequence
  $\left(T_{i}\Probv(\Sfnt{C}_{\leq i}|\Sfnt{Z}_{\leq
      i})^{\beta}\right)_{i}$ is a test supermartingale, where the
  inverse of the second factor is monotone non-decreasing.  We
  remark that as a consequence, $T_{n}$ is a PEF with power $\beta$
  for $\cH(\cC)$, that is, for $\Sfnt{R}=\Sfnt{C}=\Sfnt{D}$, chaining
  PEFs yields PEFs for standard models.
  
  From Eqs.~\ref{eq:itestmart_markov}
  and~\ref{eq:itestmart_max} with $U_{i}=\Probv(\Sfnt{C}_{\leq
    i}|\Sfnt{Z}_{\leq i})^{-\beta}$ and manipulating the inequalities
  inside $\Prob(.)$, we get
  \begin{eqnarray}
    \Prob(\Probv(\Sfnt{C}_{\leq n}|\Sfnt{Z}_{\leq n}) \geq (T_{n}\epsilon)^{-1/\beta}) 
    & \leq \epsilon,\label{eq:thm:uest_constr:pefupe}\\
    \Prob(\Probv(\Sfnt{C}_{\leq n}|\Sfnt{Z}_{\leq n}) \geq (\max_{i}T_{i}\epsilon)^{-1/\beta}) 
    & \leq \epsilon.
  \end{eqnarray}
  To conclude that $\cU_\beta$ and $\cU_\beta^*$ are UPEs, it now
  suffices to refer to their defining identities
  Eqs.~\eqref{eq:upe_final} and~\eqref{eq:upe_max}.
\end{proof}

That $F_{i+1}$ can be parameterized in terms of the past as
$F_{i+1}=F_{i+1}(C_{i+1}Z_{i+1}; \Sfnt{C}_{\leq i}\Sfnt{Z_{\leq
    i}})$ allows for adapting the PEFs based on
$\Sfnt{C}\Sfnt{Z}$, but no other information in $\Sfnt{R}$
can be used. Since $T_{i}$ is a function of $\Sfnt{C}_{\leq
  i}\Sfnt{Z_{\leq i}}$, this enables stopping when a target value of
$T_{i}$ is achieved.

\Pc{Here is a trick by which one can
  adapt based on the above theorem, without invoking the next theorem.

  Whenever there is a need, one can insert special trials with
  $Z_{i+1}=\textrm{"special"}$ and $C_{i+1}$ containing information not
  available in $\Sfnt{C}_{\leq i}\Sfnt{Z}_{\leq i}$. This requires
  modifying the $\cC_{i+1}$ so that the original constraints are
  preserved conditional on ${Z_{i+1}\ne\textrm{"special"}}$ and no other
  constraints are added.  The value of $F_{i+1}$ on these new events is
  set to $1$.  With this trick, it is possible to use general adaptation
  strategies, at the cost of increasing the length of the extractor
  input according to the size of the information included in the inserted
  special trials.  However, the next theorem shows that it is not
  necessary to use this trick for randomness generation. } 

In order to use additional information available in $\Sfnt{R}$,
we now treat the case where $C,R\not= D$. While not necessary at this point,
we do this with respect to a softening of the PEF properties.

\begin{definition} \label{def_soft_pef} Let $\beta>0$, and let $\cC$
  be any model, not necessarily convex.  A \emph{soft probability
    estimation factor (soft PEF) with power $\beta$ for $\cC$ and
    $D|Z$} is a non-negative RV $F=F(CZ)$ such that for all
  $\nu\in\cC$, there exists a function $q(C|DZ)\geq 0$ of $CDZ$ with
  $\sum_c q(c|dz)\leq 1$ for all $dz$ and
  \begin{equation}
    \Exp_{\nu}\left(\frac{F(CZ)}{q(C|DZ)^\beta}\nu(C|Z)^\beta\right)\leq 1.
    \label{eq:spefdef}
  \end{equation}
\end{definition}
If $\sum_{c}q(c|dz)<1$ for some $dz$, we can increase $q$ to ensure
$\sum_{c}q(c|dz)=1$ without increasing the left-hand side of
Eq.~\ref{eq:spefdef}. Hence, without loss of generality, we can let
$\sum_c q(c|dz)=1$ in the above definition. We always set $q(c|dz)=0$
for probability-zero values $cdz$.  This does not cause problems with
Eq.~\ref{eq:spefdef} given the convention that when expanding an
expectation over a finite set of values, probability-zero values are
omitted from the sum.

The direct calculation in the proof of the next lemma shows that
PEFs are soft PEFs.

\begin{lemma}
  \label{lem:pef_softpef}
  If $F$ is a PEF with power $\beta$ for $\cC$, then it is
  a soft PEF with power $\beta$ for $\cC$.
\end{lemma}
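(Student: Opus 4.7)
The plan is to produce an explicit choice of $q(C|DZ)$ witnessing that $F$ is a soft PEF, namely $q(c|dz)=\nu(c|dz)$ for the given $\nu\in\cC$. Since $\sum_c \nu(c|dz)=1$ whenever $\nu(dz)>0$, this $q$ automatically meets the normalization requirement (on values of $dz$ of probability zero we can set $q$ arbitrarily subject to the sum constraint, since such values will be dropped from the expectation by the paper's convention).

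The key algebraic observation, which is where the assumption $D=D(C)$ really bites, is that for any $c$ with $\nu(cz)>0$, the value $d$ of $D$ is the unique value determined by $c$, so that
\begin{equation}
  \nu(c|dz)=\frac{\nu(c|z)}{\nu(d|z)}.
\end{equation}
Substituting $q(c|dz)=\nu(c|dz)$ into the soft-PEF integrand therefore gives
\begin{equation}
  \frac{F(cz)}{q(c|dz)^\beta}\nu(c|z)^\beta
  = F(cz)\left(\frac{\nu(c|z)}{\nu(c|dz)}\right)^{\!\beta}
  = F(cz)\,\nu(d|z)^\beta,
\end{equation}
at every $cz$ of positive probability.

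Taking expectations under $\nu$ and invoking the PEF hypothesis,
\begin{equation}
  \Exp_\nu\!\left(\frac{F(CZ)}{q(C|DZ)^\beta}\nu(C|Z)^\beta\right)
  = \Exp_\nu\!\left(F(CZ)\,\nu(D|Z)^\beta\right)\leq 1.
\end{equation}
This is Eq.~\ref{eq:spefdef}, so $F$ is a soft PEF with power $\beta$ for $\cC$, as required. The only conceptual step is recognizing that the determinism of $D$ given $C$ makes $\nu(C|Z)/\nu(C|DZ)$ collapse to $\nu(D|Z)$; once this is noted, the conclusion is a one-line consequence of the PEF definition, so no obstacle is anticipated.
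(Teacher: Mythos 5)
Your proposal is correct and matches the paper's own proof essentially verbatim: both take $q(c|dz)=\nu(c|dz)$ and use the determinism of $D$ given $C$ to collapse $\nu(C|Z)/\nu(C|DZ)$ to $\nu(D|Z)$, after which the PEF inequality finishes the argument. No gaps.
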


\begin{proof}
  Let $F$ be a PEF with power $\beta$ for $\cC$ and
  $\nu\in\cC$. Define $q(C|DZ)=\nu(C|DZ)$. Then $q$ satisfies the
  condition in the definition of soft PEFs.  Since $D$ is a function
  of $C$, we have $\nu(dcz)=\nu(cz)\knuth{d=D(c)}$. From this identity
  we can deduce
  \begin{align}
    \Exp_{\nu}\left(\frac{F(CZ)}{q(C|DZ)^\beta}\nu(C|Z)^\beta\right)
    &= \Exp_{\nu} \left(F(CZ)
      \left(\frac{\nu(C|Z)}{q(C|DZ)}\right)^{\beta}\right)
    \notag\\
    &= \Exp_{\nu} \left(F(CZ)
      \left(\frac{\nu(CZ)/\nu(Z)}{\nu(CDZ)/\nu(DZ)}\right)^{\beta}\right)
    \notag\\
    &= \Exp_{\nu} \left(F(CZ)
      \left(\frac{\nu(DZ)}{\nu(Z)}\right)^{\beta}\right)
    \notag\\
    &= \Exp_{\nu}\left(F(CZ)\nu(D|Z)^{\beta}\right)\notag\\
    &\leq 1.
  \end{align}
  Since $\nu\in\cC$ is arbitrary, this verifies that every PEF $F$ is
  a soft PEF.
\end{proof}

\begin{lemma}
  \label{lem:spef_supe}
  Let $F(CZ)$ be a soft PEF with power $\beta$ for $\cH$.  Then
  $\cV=(F\epsilon)^{-1/\beta}$ is an $\epsilon$-soft $\UPE(D{:}C|Z;\cH)$.
\end{lemma}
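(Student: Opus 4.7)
The plan is to convert the expectation bound in the soft PEF definition into a high-probability bound via Markov's inequality, then use the failure complement to build the subprobability distributions $\tilde\mu_{z}$ witnessing the soft $\UPE$ property, reusing the very same function $q(C|DZ)$.

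First, fix $\mu\in\cH$. Since $F$ is a soft PEF with power $\beta$ for $\cH$, there is a non-negative $q(C|DZ)$ with $\sum_{c}q(c|dz)=1$ for all $dz$ such that
\begin{equation}
  \Exp_{\mu}\!\left(\frac{F(CZ)}{q(C|DZ)^{\beta}}\mu(C|Z)^{\beta}\right)\leq 1.
\end{equation}
By Markov's inequality applied to the non-negative RV inside the expectation,
\begin{equation}
  \mu\!\left(\frac{F(CZ)}{q(C|DZ)^{\beta}}\mu(C|Z)^{\beta}>\frac{1}{\epsilon}\right)\leq \epsilon.
\end{equation}
Let $A$ denote the complementary event. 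On $A$ we have $F(CZ)\epsilon\mu(C|Z)^{\beta}\leq q(C|DZ)^{\beta}$, equivalently $\mu(C|Z)\leq (F(CZ)\epsilon)^{-1/\beta}q(C|DZ)=\cV(CZ)q(C|DZ)$. Since $D$ is determined by $C$, the indicator $\chi(cz)=\knuth{(c,z)\in A}$ is a well-defined function of $(c,z)$ alone (the argument $d$ in $q(c|dz)$ is pinned to $d=D(c)$).

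Next, define the subprobability distributions $\tilde\mu_{z}$ on $DC$ by $\tilde\mu_{z}(dc)=\mu(dc|z)\chi(cz)$. Clearly $\tilde\mu_{z}\leq\mu[DC|z]$, and the total weight satisfies
\begin{equation}
  \sum_{z}w(\tilde\mu_{z})\mu(z)=\sum_{z,d,c}\mu(dcz)\chi(cz)=\mu(A)\geq 1-\epsilon.
\end{equation}
For the dominance inequality, note that $\mu(dc|z)=0$ unless $d=D(c)$, in which case $\mu(dc|z)=\mu(c|z)$. So when $d=D(c)$ and $\chi(cz)=1$,
\begin{equation}
  \tilde\mu_{z}(dc)=\mu(c|z)\leq \cV(cz)q(c|D(c)z)=\cV(cz)q(c|dz),
\end{equation}
while for $d\neq D(c)$ or $\chi(cz)=0$ the left-hand side is zero, so the inequality holds trivially.

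Thus the witnesses $\tilde\mu_{z}$ and $q$ verify the two conditions in the definition of an $\epsilon$-soft $\UPE(D{:}C|Z;\cH)$. The main (mild) obstacle is just the bookkeeping that $q(C|DZ)$ is effectively a function of $(C,Z)$ once we use that $D=D(C)$, so that the event $A$ is indeed determined by $(C,Z)$ and the same $q$ supplied by the soft PEF property can be re-employed verbatim as the conditional required by the soft $\UPE$ definition.
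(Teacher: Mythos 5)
Your proof is correct and follows essentially the same route as the paper's: fix a distribution in $\cH$, apply Markov's inequality to the defining expectation of the soft PEF, and use the indicator of the good event together with the same witness $q(C|DZ)$ to build the subprobability distributions $\tilde\mu_{z}$ required by the soft $\UPE$ definition. The bookkeeping point you flag (that $d$ is pinned to $D(c)$, so the event is determined by $(c,z)$ and the domination inequality is nontrivial only when $d=D(c)$) is handled the same way in the paper via $\nu(dc|z)=\nu(c|z)\knuth{d=D(c)}$.
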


\begin{proof}
  Fix $\nu\in\cH$ and let $q(C|DZ)$ be the function witnessing that
  $F$ is a soft PEF for this $\nu$.
  Define 
  \begin{equation}
    \tilde\nu_{z}(dc)=\nu(dc|z)\knuth{\cV(cz)\geq\nu(c|z)/q(c|dz)}.
  \end{equation}
  The Markov inequality
  and the definition of soft PEFs implies that
  \begin{align}
    \Prob_{\nu}(\cV(CZ)<\nu(C|Z)/q(C|DZ))
    &=\Prob_{\nu}\left((\nu(C|Z)/q(C|DZ))^{\beta}>\cV(CZ)^{\beta}\right)\notag\\
    &=\Prob_{\nu}(F(CZ)(\nu(C|Z)/q(C|DZ))^{\beta}>1/\epsilon)\notag\\
    &\leq \epsilon.
  \end{align}
  Hence
  \begin{align}
    \sum_{dcz}\tilde\nu_{z}(dc)\nu(z)
    &=\sum_{dcz}\nu(dcz)\knuth{\cV(cz)\geq\nu(c|z)/q(c|dz)}\notag\\
    &= \Prob_{\nu}(\cV(CZ)\geq\nu(C|Z)/q(C|DZ))\notag\\
    &\geq 1-\epsilon.
  \end{align}
  The definition of $\tilde \nu_{z}(dc)$ and
  $\nu(dc|z)=\nu(c|z)\knuth{d=D(c)}$ 
  ensure that
  \begin{align}
    \tilde\nu_{z}(dc)\leq \cV(cz)q(c|dz),
  \end{align}
  as required for soft $\UPE$s. Since $\nu$ is arbitrary, the above
  verifies that $\cV=(F\epsilon)^{-1/\beta}$ is an $\epsilon$-soft
  $\UPE$.
\end{proof}

\begin{theorem}\label{thm:suest_constr}
  Fix $\beta> 0$.  Let $\Sfnt{F}$ be a sequence of
  past-parameterized soft PEFs, with $F_{i}$ a soft PEF with power
  $\beta$ for $\cC_{i}$. Then $\cU_{\beta}(\Sfnt{RZ},1-\epsilon)$ as defined
  in Eqs.~\eqref{eq:upe_final} is an $\epsilon$-soft
  $\UPE(\Sfnt{D}{:}\Sfnt{R}|\Sfnt{Z};\cH(\cC))$.    
\end{theorem}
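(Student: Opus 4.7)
The plan is to mirror the proof of Thm.~\ref{thm:uest_constr} while using the soft-PEF witnesses $q_{j}$ both to handle the gap between $\Sfnt{D}$ and $\Sfnt{R}$ and to soften the per-trial expectation bound. Fix $\mu\in\cH(\cC)$; uniformity in $e$ allows us to suppress the conditioning on $E$. For each trial $j$, let $q_{j}(C_{j}|D_{j}Z_{j})$ be the soft-PEF witness for the trial distribution $\nu_{j}=\mu[C_{j}Z_{j}|\Sfnt{R}_{\le j-1}\Sfnt{Z}_{\le j-1}]\in\cC_{j}$, automatically past-parameterized through its dependence on $\nu_{j}$. Writing $R_{j}=C_{j}R_{j}'$, extend $q_{j}$ to a distribution on $R_{j}$ by
\[
\tilde q_{j}(r_{j}|d_{j}z_{j})=q_{j}(c_{j}|d_{j}z_{j})\,\mu(r_{j}'|c_{j},\Sfnt{R}_{\le j-1},\Sfnt{Z}_{\le j}),
\]
and set $q(\Sfnt{r}|\Sfnt{dz})=\prod_{j=1}^{n}\tilde q_{j}(r_{j}|d_{j}z_{j})$. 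Summing from $r_{n}$ inward verifies $\sum_{\Sfnt{r}}q(\Sfnt{r}|\Sfnt{dz})=1$, so $q$ qualifies as a witness in Def.~\ref{def_soft_UPE}.

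Next I show that $T_{n}\bigl(\mu(\Sfnt{R}|\Sfnt{Z})/q(\Sfnt{R}|\Sfnt{DZ})\bigr)^{\beta}$ is the terminal value of a test supermartingale with respect to $\Sfnt{R}\Sfnt{Z}$. The chain rule together with the standard-model assumption that $Z_{j}$ is independent of $\Sfnt{R}_{\le j-1}$ given $\Sfnt{Z}_{\le j-1}$ gives $\mu(\Sfnt{R}|\Sfnt{Z})=\prod_{j}\mu(R_{j}|\Sfnt{R}_{\le j-1},\Sfnt{Z}_{\le j})$, exactly as in the proof of Thm.~\ref{thm:uest_constr}. Splitting $R_{j}=C_{j}R_{j}'$ and noting that the $R_{j}'$-factor we placed in $\tilde q_{j}$ matches the one appearing in $\mu(R_{j}|\Sfnt{R}_{\le j-1},\Sfnt{Z}_{\le j})$, these pieces cancel and leave
\[
\frac{\mu(\Sfnt{R}|\Sfnt{Z})}{q(\Sfnt{R}|\Sfnt{DZ})}=\prod_{j=1}^{n}\frac{\nu_{j}(C_{j}|Z_{j})}{q_{j}(C_{j}|D_{j}Z_{j})}.
\]
Hence $T_{n}\bigl(\mu(\Sfnt{R}|\Sfnt{Z})/q(\Sfnt{R}|\Sfnt{DZ})\bigr)^{\beta}=\prod_{j}F_{j}(C_{j}Z_{j})\bigl(\nu_{j}(C_{j}|Z_{j})/q_{j}(C_{j}|D_{j}Z_{j})\bigr)^{\beta}$. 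Each $j$-factor is determined by $\Sfnt{R}_{\le j}\Sfnt{Z}_{\le j}$, and its past-conditional expectation equals $\Exp_{\nu_{j}}\bigl(F_{j}\nu_{j}(C_{j}|Z_{j})^{\beta}/q_{j}(C_{j}|D_{j}Z_{j})^{\beta}\bigr)\le 1$ by Def.~\ref{def_soft_pef}. Iterating yields $\Exp_{\mu}\bigl(T_{n}(\mu(\Sfnt{R}|\Sfnt{Z})/q(\Sfnt{R}|\Sfnt{DZ}))^{\beta}\bigr)\le 1$.

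To finish, apply Markov's inequality to obtain
\[
\Prob_{\mu}\bigl(\cU_{\beta}(\Sfnt{RZ},1-\epsilon)<\mu(\Sfnt{R}|\Sfnt{Z})/q(\Sfnt{R}|\Sfnt{DZ})\bigr)\le\epsilon,
\]
and use that $\Sfnt{D}$ is a function of $\Sfnt{R}$ to deduce that $\mu(\Sfnt{DR}|\Sfnt{Z})=\mu(\Sfnt{R}|\Sfnt{Z})$ on the $\mu$-support. Define
\[
\tilde\mu_{\Sfnt{z}}(\Sfnt{dr})=\mu(\Sfnt{dr}|\Sfnt{z})\knuth{\cU_{\beta}(\Sfnt{rz},1-\epsilon)\ge\mu(\Sfnt{dr}|\Sfnt{z})/q(\Sfnt{r}|\Sfnt{dz})}.
\]
Then $\tilde\mu_{\Sfnt{z}}\le\mu[\Sfnt{DR}|\Sfnt{z}]$ pointwise, $\sum_{\Sfnt{z}}\mu(\Sfnt{z})w(\tilde\mu_{\Sfnt{z}})=\Prob_{\mu}(\cU_{\beta}\ge\mu(\Sfnt{DR}|\Sfnt{Z})/q)\ge 1-\epsilon$, and by construction $\tilde\mu_{\Sfnt{z}}(\Sfnt{dr})\le\cU_{\beta}(\Sfnt{rz},1-\epsilon)q(\Sfnt{r}|\Sfnt{dz})$. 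These subprobability distributions together with $q$ fulfill Def.~\ref{def_soft_UPE}, proving the claim.

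The main obstacle is choosing the witness $q$ so that $\mu(\Sfnt{R}|\Sfnt{Z})/q(\Sfnt{R}|\Sfnt{DZ})$ telescopes into a product of per-trial ratios whose product with $T_{n}$ is a genuine test supermartingale. Extending each $q_{j}$ by the exact conditional $\mu(r_{j}'|c_{j},\Sfnt{R}_{\le j-1},\Sfnt{Z}_{\le j})$ achieves this cancellation, and the standard-model conditional-independence property is what allows both the factorization of $\mu(\Sfnt{R}|\Sfnt{Z})$ over trials and the identification of the per-trial conditional distribution $\nu_{j}$ as an element of $\cC_{j}$ to which the soft-PEF bound applies.
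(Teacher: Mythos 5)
Your proposal is correct and follows essentially the same route as the paper: your witness $q=\prod_{j}\tilde q_{j}$, with $\tilde q_{j}$ the per-trial soft-PEF witness multiplied by the true conditional $\mu(r_{j}'|c_{j},\mathrm{past})$, is exactly the paper's $p_{\leq n}=Q_{n}P_{R,n}$, and your cancellation leading to the test factors $F_{j}\bigl(\nu_{j}(C_{j}|Z_{j})/q_{j}(C_{j}|D_{j}Z_{j})\bigr)^{\beta}$ reproduces the paper's supermartingale argument. The only organizational difference is that the paper first concludes that the chained product $T_{n}$ is itself a soft PEF for $\cH(\cC)$ and then invokes Lem.~\ref{lem:spef_supe}, whereas you inline that lemma's Markov-inequality and subprobability-witness construction directly, which is an equivalent presentation.
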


\begin{proof}
  Below we first show that chaining soft PEFs yields soft PEFs.  Then,
  by applying Lem.~\ref{lem:spef_supe} we prove the theorem.

  For this result, direct chaining of the probabilities
  fails. Instead, we decompose alternately according to $C_{i}$ and
  $R_{i}$ given $C_{i}$, as follows: 
  \begin{align}
    \Prob(\Sfnt{C}_{\leq i+1}\Sfnt{R}_{\leq i+1}|\Sfnt{Z}_{\leq i+1})
    &=\Prob(\Sfnt{C}_{\leq i}\Sfnt{R}_{\leq i}|Z_{i+1}\Sfnt{Z}_{\leq i})\notag\\
    &\hphantom{=\;\;}\times
    \Prob(C_{i+1}|Z_{i+1}\Sfnt{C}_{\leq i}\Sfnt{R}_{\leq i}\Sfnt{Z}_{\leq i})
    \Prob(R_{i+1}|C_{i+1}Z_{i+1}\Sfnt{C}_{\leq i}\Sfnt{R}_{\leq i}\Sfnt{Z}_{\leq i})\notag\\
    &=\Prob(\Sfnt{C}_{\leq i}\Sfnt{R}_{\leq i}|\Sfnt{Z}_{\leq i})
    \Prob(C_{i+1}|Z_{i+1}\Sfnt{R}_{\leq i}\Sfnt{Z}_{\leq i})
    \Prob(R_{i+1}|C_{i+1}Z_{i+1}\Sfnt{R}_{\leq i}\Sfnt{Z}_{\leq i}),
  \end{align}
  where we simplified in the last step by applying the fact that
  $Z_{i+1}$ is conditionally independent of $\Sfnt{R}_{\leq i}$ given
  $\Sfnt{Z}_{\leq i}$ and taking advantage of the assumption that
  $C_{i}$ is determined by $R_{i}$. The identity can be expanded
  recursively, replacing the first term in the last expression
  obtained each time.  We identify two products after the expansion, namely
  \begin{equation}
    P_{C,i}(\Sfnt{C}_{\leq i}\Sfnt{R}_{\leq i}\Sfnt{Z}_{\leq i}) = 
    \prod_{j=0}^{i-1}\Prob(C_{j+1}|Z_{j+1}\Sfnt{R}_{\leq j}\Sfnt{Z}_{\leq j})
  \end{equation}
  and
  \begin{equation}
    P_{R,i}(\Sfnt{R}_{\leq i}\Sfnt{Z}_{\leq i}) =
    \prod_{j=0}^{i-1}\Prob(R_{j+1}|C_{j+1}Z_{j+1}\Sfnt{R}_{\leq j}\Sfnt{Z}_{\leq j}),
  \end{equation}
  which satisfy
  \begin{equation}
    \Prob(\Sfnt{C}_{\leq i}\Sfnt{R}_{\leq i}|\Sfnt{Z}_{\leq i})
    =\Prob(\Sfnt{R}_{\leq i}|\Sfnt{Z}_{\leq i}) =
    P_{C,i}P_{R,i},
    \label{eq:thm:suest_constr1}
  \end{equation}
  whenever $\Prob(\Sfnt{C}_{\leq i}\Sfnt{R}_{\leq i}|\Sfnt{Z}_{\leq i})$ is not zero.
  Again, we took advantage of our assumption that
  $C_{i}$ is a function of $R_{i}$ to omit unnecessary arguments.

  Let $\mu$ be an arbitrary distribution in $\cH(\cC)$ and note that
  for all $\Sfnt{r}_{\leq i}\Sfnt{z}_{\leq i}$,
  $\nu_{i+1}=\mu[R_{i+1}Z_{i+1}|\Sfnt{r}_{\leq i}\Sfnt{z}_{\leq i}]$
  satisfies that $\nu_{i+1}[C_{i+1}Z_{i+1}]\in\cC_{i+1}$. According to
  the definition of soft PEFs, there exists
  $q_{i+1}(C_{i+1}|D_{i+1}Z_{i+1};\Sfnt{r}_{\leq i}\Sfnt{z}_{\leq i})$
  such that
  \begin{equation}
    \Exp(F_{i+1} (\nu_{i+1}(C_{i+1}|Z_{i+1})/q_{i+1}(C_{i+1}|D_{i+1}Z_{i+1}))^{\beta})
    \leq 1
  \end{equation}
  and $\sum_{c_{i+1}}q_{i+1}(c_{i+1}|d_{i+1}z_{i+1})\leq 1$ for all $d_{i+1}z_{i+1}$, 
   where the past parameters are implicit. We define the additional product
  \begin{equation} 
    Q_{i}(\Sfnt{C}_{\leq i}|\Sfnt{D}_{\leq i}\Sfnt{Z}_{\leq i};\Sfnt{R}_{\leq i-1})
    = \prod_{j=0}^{i-1}q_{j+1}(C_{j+1}|D_{j+1}Z_{j+1};\Sfnt{R}_{\leq j}\Sfnt{Z}_{\leq j}).
  \end{equation}
  Later we find that the function $q(\Sfnt{R}|\Sfnt{D}\Sfnt{Z})$
  needed for the final soft UPE construction according to
    Def.~\ref{def_soft_UPE} is the $i=n$ instance of
  \begin{equation}
    p_{\leq i}(\Sfnt{R}_{\leq i}|\Sfnt{D}_{\leq i}\Sfnt{Z}_{\leq i}) = 
    Q_{i}(\Sfnt{C}_{\leq i}|\Sfnt{D}_{\leq i}\Sfnt{Z}_{\leq i};\Sfnt{R}_{\leq i-1})P_{R,i}(\Sfnt{R}_{\leq i}\Sfnt{Z}_{\leq i}).\notag
  \end{equation}
  Write 
  \begin{equation}  
    p_{i+1}(R_{i+1}|D_{i+1}Z_{i+1};\Sfnt{R}_{\leq
      i}\Sfnt{Z}_{\leq i})=q_{i+1}(C_{i+1}|D_{i+1}Z_{i+1};\Sfnt{R}_{\leq
      i}\Sfnt{Z}_{\leq i})
    \Prob(R_{i+1}|C_{i+1}Z_{i+1}\Sfnt{R}_{\leq i} \Sfnt{Z}_{\leq i}) 
  \label{eq:thm:suest_constr2}
  \end{equation}    
  for the incremental factor defining
  $p_{\leq i+1}$, so that $p_{\leq i+1}=p_{i+1}p_{\leq i}$.  We show by
  induction that $p_{\leq i}$ satisfies the required normalization
  condition. The initial value is $p_{\leq 0}=1$ (empty products
  evaluate to $1$), and for the induction step,
    for all $\Sfnt{d}_{\leq i+1}\Sfnt{z}_{\leq i+1}$
  \begin{align}
    \sum_{\Sfnt{r}_{\leq i+1}}p_{\leq i+1}(\Sfnt{r}_{\leq
      i+1}|\Sfnt{d}_{\leq i+1}\Sfnt{z}_{\leq i+1})
    \hspace*{-1in}&\notag\\
    &= \sum_{\Sfnt{r}_{\leq i}}p_{\leq i}(\Sfnt{r}_{\leq
      i}|\Sfnt{d}_{\leq i}\Sfnt{z}_{\leq i})
    \notag\\
    &\hphantom{=\;\;}\times
    \sum_{r_{i+1}}q_{i+1}(c_{i+1}|d_{i+1}z_{i+1};\Sfnt{r}_{\leq
      i}\Sfnt{z}_{\leq i})
    \Prob(r_{i+1}|c_{i+1}z_{i+1}\Sfnt{r}_{\leq i} \Sfnt{z}_{\leq i}).
    \label{eq:thm:suest_constr3}
  \end{align}
  The inner sum on the right-hand side evaluates to
  \begin{align}
    \sum_{r_{i+1}}q_{i+1}(c_{i+1}|d_{i+1}z_{i+1};\Sfnt{r}_{\leq
      i}\Sfnt{z}_{\leq i})
    \Prob(r_{i+1}|c_{i+1}z_{i+1}\Sfnt{r}_{\leq i} \Sfnt{z}_{\leq i})\hspace*{-3in}&
    \notag\\
    &= \sum_{c_{i+1}}\left(\sum_{r_{i+1}:C_{i+1}(r_{i+1})=c_{i+1}}
      \Prob(r_{i+1}|c_{i+1}z_{i+1}\Sfnt{r}_{\leq i} \Sfnt{z}_{\leq i})\right)
    q_{i+1}(c_{i+1}|d_{i+1}z_{i+1};\Sfnt{r}_{\leq i}\Sfnt{z}_{\leq i})    \notag\\
    &=\sum_{c_{i+1}}q_{i+1}(c_{i+1}|d_{i+1}z_{i+1};\Sfnt{r}_{\leq i}\Sfnt{z}_{\leq i})
    \notag\\
    &\leq 1.
  \end{align}
  Substituting back in Eq.~\ref{eq:thm:suest_constr3} and applying
  the induction hypothesis gives the normalization condition.

  We claim that 
  \begin{align}
    G_{i+1}(R_{i+1}Z_{i+1};\Sfnt{R}_{\leq i}\Sfnt{Z}_{\leq i})
    \hspace*{-1in}&
    \notag\\
    &= \frac{F_{i+1}(C_{i+1}Z_{i+1};\Sfnt{R}_{\leq i}\Sfnt{Z}_{\leq
        i})}{p_{i+1}(R_{i+1}|D_{i+1}Z_{i+1};\Sfnt{R}_{\leq
        i}\Sfnt{Z}_{\leq
        i})^{\beta}}\Probv(R_{i+1}|Z_{i+1}\Sfnt{R}_{\leq
      i}\Sfnt{Z}_{\leq i})^{\beta}
  \end{align} 
  is a test factor determined by $\Sfnt{R}_{\leq i+1}\Sfnt{Z}_{\leq
    i+1}$.  To prove this claim, with
  $F_{i+1}=F_{i+1}(C_{i+1}Z_{i+1};\Sfnt{r}_{\leq i}\Sfnt{z}_{\leq
    i})$ and omitting the past parameters of $p_{i+1}$, 
    for each $\Sfnt{r}_{\leq i}\Sfnt{z}_{\leq i}$ we compute 
  \begin{align}
    \Exp\left(F_{i+1}
      \left(\frac{\Probv(R_{i+1}|Z_{i+1}\Sfnt{r}_{\leq i}\Sfnt{z}_{\leq i})}{p_{i+1}(R_{i+1}|D_{i+1}Z_{i+1})}\right)^{\beta}\middle|\Sfnt{r}_{\leq
        i}\Sfnt{z}_{\leq i}\right)
    \hspace*{-2.5in}\notag\\
    &=
    \Exp\left(F_{i+1}
      \left(\frac{
          \Probv(R_{i+1}|C_{i+1}Z_{i+1}\Sfnt{r}_{\leq i}\Sfnt{z}_{\leq i})
          \Probv(C_{i+1}|Z_{i+1}\Sfnt{r}_{\leq i}\Sfnt{z}_{\leq i})}
        {\Probv(R_{i+1}|C_{i+1}Z_{i+1}\Sfnt{r}_{\leq i}\Sfnt{z}_{\leq i})
          q_{i+1}(C_{i+1}|D_{i+1}Z_{i+1})}\right)^{\beta}\middle|\Sfnt{r}_{\leq
        i}\Sfnt{z}_{\leq i}\right)\notag\\
    &=\Exp_{\nu_{i+1}}\left(F_{i+1}\left(     
        \frac{\Probv_{\nu_{i+1}}(C_{i+1}|Z_{i+1})}
        {q_{i+1}(C_{i+1}|D_{i+1}Z_{i+1})}\right)^{\beta}\right)
    \notag\\
    &\leq 1,
  \end{align}
  by the definition of soft PEFs, where we used
  Eqs.~\ref{eq:thm:suest_constr1} and~\ref{eq:thm:suest_constr2} to
  get the second line. By arbitrariness of $\Sfnt{r}_{\leq
    i}\Sfnt{z}_{\leq i}$, the claim follows.

  Since $\prod_{j=0}^{n-1}p_{j+1} = p_{\leq n}$, if we set the
  function $q$ in the definition Def.~\ref{def_soft_pef} to $p_{\leq
    n}$, we conclude that $T_{n}(\Sfnt{RZ})=\prod_{j=0}^{n-1}F_{j+1}$
  is a soft PEF with power $\beta$ for $\Sfnt{D}|\Sfnt{Z}$ and
  $\cH(\cC)$. Hence, chaining soft PEFs yields soft PEFs for standard
  models.  Since
  $\cU_{\beta}(\Sfnt{RZ},(1-\epsilon))=(T_{n}\epsilon)^{-1/\beta}$, to
  finish the proof of the theorem, we apply Lem.~\ref{lem:spef_supe}
  with $C$ there replaced by $\Sfnt{R}$ here, $D$ there with
  $\Sfnt{D}$ here and $Z$ there with $\Sfnt{Z}$ here.  The proof of
  Lem.~\ref{lem:spef_supe} shows that the softness witness
  $q(\Sfnt{R}|\Sfnt{D}\Sfnt{Z})$ for
  $\cU_{\beta}(\Sfnt{RZ},1-\epsilon)$ is also given by $p_{\leq n}$.
\end{proof}

\subsection{Effectiveness of PEF Constraints}
\label{sec:first_pef_construction}

The constraints on PEFs $F_{i}$ are linear, but it is not obvious that
this set of linear constraints has a practical realization as a linear
or semidefinite program.  Thm.~\ref{thm:prob_constr} below shows that
when $C=D$, it suffices to use the extreme points $\xtrm{\cC}$
of $\cC$, which implies that if $\cC$ is a known polytope, then a
finite number of constraints suffice. This includes the cases where
$\cC$ is characterized by typical settings and non-signaling
constraints for $(k,l,m)$ Bell-test configurations and
generalizes to cases where the settings distributions are not fixed
but constrained to a polytope of settings distributions. See
Sect.~\ref{subsec:freeforz}.  For the case where $C\not=D$, we
show that soft PEFs for $\xtrm{\cC}$ are soft PEFs for $\cC$. In
particular PEFs for $\xtrm{\cC}$ yield soft PEFs for $\cC$. When
$\cC$ is a polytope this yields an effective construction for soft
PEFs.

\begin{theorem}\label{thm:prob_constr}
  Let $\cC$ be a convex set of distributions of $CZ$.  Then for
  $\beta>0$ the family of inequalities
  \begin{equation}
    \begin{array}{rcll}
      F&\geq& 0 & ,\\
      \Exp_{\nu}\left(F(CZ)\nu(C|Z)^{\beta}\right)&\leq& 1 & \textrm{for $\nu\in\cC$}
    \end{array}
    \label{eq:probest_constr}
  \end{equation}
  is implied by the subset with $\nu\in\xtrm{\cC}$.
\end{theorem}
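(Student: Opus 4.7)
The plan is to recast the PEF constraint as the assertion that a certain functional of $\nu$ is bounded by $1$, and then to show that this functional is convex in $\nu$. Once convexity is established, the theorem is immediate: since $\cC$ is a convex (and, per the paper's standing conventions, closed and hence compact) subset of the finite-dimensional simplex of distributions of $CZ$, every $\nu \in \cC$ can be written as a convex combination $\nu = \sum_i \lambda_i \nu_i$ with $\nu_i \in \xtrm{\cC}$, and the convex functional, being bounded by $1$ on the $\nu_i$, is bounded by $1$ at $\nu$ as well. The nonnegativity constraint $F \geq 0$ carries no $\nu$-dependence and so is trivially implied by its restriction to extreme points.

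To set up the convexity argument, I would first rewrite the left-hand side of the PEF constraint using $\nu(c|z) = \nu(cz)/\nu(z)$:
\begin{equation*}
  \Exp_{\nu}\bigl(F(CZ)\nu(C|Z)^{\beta}\bigr)
  \;=\; \sum_{c,z} F(cz)\,\frac{\nu(cz)^{1+\beta}}{\nu(z)^{\beta}},
\end{equation*}
with the convention $0/0 = 0$, which is the correct continuous extension for $\beta > 0$. The quantities $a := \nu(cz)$ and $b := \nu(z) = \sum_{c'} \nu(c'z)$ are linear functions of $\nu$, so it suffices to show that the scalar function $\phi(a,b) := a^{1+\beta}/b^{\beta}$ is convex on $\{(a,b) : 0 \leq a \leq b\}$. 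This $\phi$ is precisely the perspective $b\, f(a/b)$ of $f(t) = t^{1+\beta}$, which is convex on $[0,\infty)$ since $\beta > 0$; the perspective construction preserves convexity on $\{b > 0\}$, and the extension to the boundary via $\phi(0,0) = 0$ is lower semicontinuous, hence still convex. Composing with the linear maps $\nu \mapsto (\nu(cz),\nu(z))$, multiplying by the nonnegative constant $F(cz)$, and summing over $cz$ then yields a convex function of $\nu$, as desired.

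The main obstacle is ensuring the perspective-function argument is airtight around the boundary where $\nu(z) = 0$ for some $z$. I would either check convexity directly by computing the Hessian of $\phi$ on the open region $\{a, b > 0\}$ (it is easily seen to be positive semidefinite) and passing to the closure by continuity, or cite the perspective-function result outright with the observation that since $f(0) = 0$ and $f$ is convex nondecreasing on $[0,\infty)$, the extension $\phi(0,0) = 0$ is consistent with convexity. Everything else is routine: given convexity of $\nu \mapsto \Exp_{\nu}(F(CZ)\nu(C|Z)^{\beta})$, finite-dimensional Minkowski--Carathéodory delivers the convex combination representation of $\nu \in \cC$ in terms of $\xtrm{\cC}$, and Jensen's inequality closes the argument.
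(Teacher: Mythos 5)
Your proposal is correct and takes essentially the same approach as the paper: both reduce the statement to joint convexity of $(a,b)\mapsto a^{1+\beta}/b^{\beta}$ on $\{0\le a\le b\}$ (with the value $0$ at $(0,0)$) applied to the linear maps $\nu\mapsto(\nu(cz),\nu(z))$, followed by decomposing $\nu\in\cC$ as a finite convex combination of extreme points. The only difference is cosmetic — you certify the key convexity via the perspective construction of $t\mapsto t^{1+\beta}$, while the paper's Lem.~\ref{lem:renybiconvex} verifies the same fact by an explicit second-derivative computation along segments — and your boundary treatment of $\nu(z)=0$ matches the paper's convention.
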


\begin{proof}
  For a given $\nu\in\cC$, we can write the constraint on $F$ as
  \begin{equation}
    \Exp_{\nu}\left(F(CZ)\nu(C|Z)^{\beta}\right) =
    \sum_{cz}F(cz)\nu(cz)^{1+\beta}/\nu(z)^{\beta} \leq 1.
  \end{equation}
  Thus, each $\nu$ determines a linear inequality for $F$.
  For distributions $\sigma$ on $CZ$ and $\rho$ on $Z$ with
  $\Supp(\sigma[Z])\subseteq\Supp(\rho)$ define
  \begin{equation}
    l_{\sigma|\rho}(cz) = \sigma(cz)^{1+\beta}/\rho(z)^{\beta},
  \end{equation}
  so that $l_{\nu|\nu}(cz)$ are the coefficients of $F$ in the
  inequality above.  If $\rho(z)=0$, then $\sigma(cz)\leq\sigma(z)=0$,
  so we set $l_{\nu|\rho}(cz)=0$, which is consistent since the power
  in the numerator is larger than that in the denominator.  Since $F$
  is non-negative, it suffices to show that for each $cz$, the
  coefficients $l_{\nu|\rho}(cz)$ are biconvex functions of $\nu$ and
  $\rho$.  If $\nu=\lambda\nu_{1}+(1-\lambda)\nu_{2}$ with
  $0\leq\lambda\leq 1$ and $\nu_{i}\in\cC$, then
  \begin{equation}
    l_{\nu|\nu}(cz)\leq
    \lambda l_{\nu_{1}|\nu_{1}}(cz)+(1-\lambda) l_{\nu_{2}|\nu_{2}}(cz).
    \label{eq:thm:prob_constr_1}
  \end{equation}
  Given that $\sum_{cz}F(cz)
  l_{\nu'|\nu'}(cz)\leq 1$ for $\nu'=\nu_{1}$ and $\nu'=\nu_{2}$, we find
  \begin{align}
    \sum_{cz}F(cz) l_{\nu|\nu}(cz)&\leq \sum_{cz}F(cz)(\lambda
    l_{\nu_{1}|\nu_{1}}(cz)+
    (1-\lambda) l_{\nu_{2}|\nu_{2}}(cz))\notag\\
    &\leq \lambda\sum_{cz}F(cz) l_{\nu_{1}|\nu_{1}}(cz)
    +(1-\lambda)\sum_{cz}F(cz) l_{\nu_{2}|\nu_{2}}(cz)\notag\\
    &\leq \lambda+(1-\lambda)=1.
  \end{align}
  This observation extends to arbitrary finite convex combinations by reduction
  to the above case. Consequently the constraints associated with
  extremal distributions suffice.

  Let $r(x|y) = x^{1+\beta}y^{-\beta}$, where we define $r(0|0)=0$.
  The desired convexity expressed in Eq.~\ref{eq:thm:prob_constr_1} is
  implied by the fact that $r(x|y)$ is
  biconvex in its arguments for relevant values, which is
  Lem.~\ref{lem:renybiconvex} below.
\end{proof}

\begin{lemma}
  \label{lem:renybiconvex}
  Define $r(x|y) = x^{1+\beta}y^{-\beta}$ with $\beta> 0$.  If $(x,y)$ is expressed
  as a finite convex combination
  $(x,y)=\sum_{i}\lambda_{i}(x_{i},y_{i})$ where $x_{i}\geq 0$,
  $y_{i}\geq 0$ and $y_{i}=0$ implies $x_{i}=0$, then $r(x|y)\leq
  \sum_{i}\lambda_{i}r(x_{i}|y_{i})$.
\end{lemma}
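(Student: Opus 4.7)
The plan is to recognize $r(x|y)$ as (essentially) the perspective of the one-variable convex function $f(t) = t^{1+\beta}$ and then apply Jensen's inequality, after handling the boundary cases where some $y_i$ vanish.

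First I would reduce to the case in which every $y_i$ appearing in the convex combination is strictly positive. Terms with $\lambda_i = 0$ contribute nothing to either side, so drop them. For any remaining index with $y_i = 0$, the hypothesis forces $x_i = 0$, so that $r(x_i|y_i) = r(0|0) = 0$ and the pair contributes $(0,0)$ to the convex combination. Dropping these indices changes neither $(x,y)$ nor $\sum_i \lambda_i r(x_i|y_i)$. After these reductions, either no indices remain (in which case $(x,y)=(0,0)$ and both sides are $0$), or all remaining $y_i>0$. In the latter case, $y = \sum_i \lambda_i y_i > 0$.

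Next, with $y>0$ and all surviving $y_i>0$, I would rewrite
\begin{equation}
    r(x|y) = y \left(\frac{x}{y}\right)^{1+\beta}, \qquad r(x_i|y_i) = y_i \left(\frac{x_i}{y_i}\right)^{1+\beta}.
\end{equation}
The key observation is that the coefficients $\alpha_i = \lambda_i y_i / y$ are non-negative and satisfy $\sum_i \alpha_i = \sum_i \lambda_i y_i / y = 1$, so they form a probability vector, and
\begin{equation}
    \frac{x}{y} = \frac{\sum_i \lambda_i x_i}{y} = \sum_i \alpha_i \frac{x_i}{y_i}.
\end{equation}

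Finally, since $\beta>0$, the function $f(t)=t^{1+\beta}$ is convex on $[0,\infty)$, so Jensen's inequality applied with weights $\alpha_i$ gives $(x/y)^{1+\beta} \leq \sum_i \alpha_i (x_i/y_i)^{1+\beta}$. Multiplying through by $y$ yields
\begin{equation}
    r(x|y) = y \left(\frac{x}{y}\right)^{1+\beta} \leq \sum_i \lambda_i y_i \left(\frac{x_i}{y_i}\right)^{1+\beta} = \sum_i \lambda_i r(x_i|y_i),
\end{equation}
as desired. The only delicate point is the bookkeeping around $y_i = 0$, which the hypothesis $y_i = 0 \Rightarrow x_i = 0$ together with the convention $r(0|0)=0$ resolves cleanly; the analytic content is just convexity of $t \mapsto t^{1+\beta}$.
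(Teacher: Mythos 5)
Your proof is correct, and it takes a genuinely different route from the paper. You identify $r(x|y) = y\,(x/y)^{1+\beta}$ as the perspective of the convex function $t \mapsto t^{1+\beta}$ and obtain joint convexity from Jensen's inequality with the reweighted coefficients $\alpha_i = \lambda_i y_i / y$; the boundary bookkeeping (dropping indices with $\lambda_i=0$ or $y_i=0$, noting that the surviving $\alpha_i$ still sum to one even though the surviving $\lambda_i$ need not) is handled correctly, and you rightly lean on the convention $r(0|0)=0$, which the paper states just before invoking the lemma rather than inside the lemma itself. The paper instead reduces to two-term convex combinations, disposes of the cases where $y_1$ or $y_2$ vanishes, and then computes the second derivative of $f(\lambda) = r(\lambda x_1 + (1-\lambda)x_2 \mid \lambda y_1 + (1-\lambda)y_2)$ explicitly, showing it is a nonnegative multiple of a perfect square (citing the sandwiched R\'enyi literature for context). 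Your argument handles arbitrary finite convex combinations in one stroke, avoids any differentiability or case analysis on the interior, and makes the structural reason for convexity (the perspective construction) transparent; the paper's calculus computation is more self-contained in the sense that it produces the explicit derivative formulas it reuses nowhere else, but is otherwise longer and less illuminating. Either proof suffices for the application in Theorem~\ref{thm:prob_constr}.
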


\begin{proof}
  This is a special case of the biconvexity of sandwiched R\'enyi
  powers, see~\cite{mueller-lennert:qc2013a,frank:qc2013a}.  To
  prove biconvexity of $r(x|y)$ directly, it suffices to consider
  convex combinations of two arguments, so let $x=\lambda
  x_{1}+(1-\lambda) x_{2}$ and $y=\lambda y_{1}+(1-\lambda) y_{2}$
  be convex combinations as expressed in the theorem statement.
  Consider
  \begin{equation}
    f(\lambda) = 
    r(\lambda x_{1}+(1-\lambda)x_{2}|\lambda y_{1} + (1-\lambda)y_{2}).
  \end{equation}
  If $y_{1}=y_{2}=0$, then $f(\lambda)=0$ for all $\lambda$, which
  is convex. If $y_{1}=0<y_{2}$, then $f(\lambda)=(1-\lambda)
  x_{2}^{1+\beta}/y_{2}^{\beta}$, which again is convex. By
  symmetry, so is $f(\lambda)$ for the case where $y_{2}=0<y_{1}$.  It
  therefore suffices to consider the case where $y_{1}>0$ and
  $y_{2}>0$. Now $f(\lambda)$ is continuous for $\lambda\in[0,1]$
  and smooth for $\lambda\in(0,1)$, so it suffices to show that
  $f''(\lambda)\geq 0$ as follows:
  \begin{align}
    f'(\lambda) &= (\lambda x_1+(1-\lambda)x_{2})^{\beta}(\lambda y_1+(1-\lambda)y_2)^{-\beta-1}\notag\\
    &\hphantom{=\;\;} \times \Big( (1+\beta) (x_{1}-x_{2})(\lambda y_1+(1-\lambda)y_{2}) + 
    (-\beta) (\lambda x_{1}+(1-\lambda)x_{2})(y_1-y_{2}) \Big)\label{eq:thm:prob_constr_2}\\
    f''(\lambda) &= 
    (\lambda x_{1}+(1-\lambda)x_{2})^{\beta-1}(\lambda y_1+(1-\lambda)y_{2})^{-\beta-2}\notag\\
    &\hphantom{=\;\;} \times \Big( \beta(1+\beta) (x_{1}-x_{2})^{2}(\lambda y_1+(1-\lambda)y_{2})^{2} 
    \notag\\
    &\hphantom{=\;\;} \hphantom{\times\Big(} + 2(-\beta)(1+\beta) (x_{1}-x_{2})(y_1-y_{2})(\lambda x_{1}+(1-\lambda) x_{2})(\lambda y_1+(1-\lambda)y_{2})\notag\\
    &\hphantom{=\;\;} \hphantom{\times\Big(} + (-\beta)(-1-\beta) (y_1-y_{2})^{2}(\lambda x_{1}+(1-\lambda )x_{2})^{2}\Big)\notag\\
    &= (\lambda x_{1}+(1-\lambda)x_{2})^{\beta-1}(\lambda y_1+(1-\lambda)y_{2})^{-\beta-2}\notag\\
    &\hphantom{=\;\;} \times \beta(1+\beta)\Big((x_{1}-x_{2})(\lambda y_1+(1-\lambda) y_{2}) -
    (y_1-y_{2})(\lambda x_{1}+(1-\lambda) x_{2})\Big)^{2},
  \end{align}
  which is a non-negative multiple of a square. Consequently $f(\lambda)$
  is convex in $\lambda$, and biconvexity of $r(x|y)$ follows. 
\end{proof}

\begin{corollary}
  If $F(CZ)$ is a PEF with power $\beta$ for $C|Z$ and $\xtrm{\cC}$,
  then $F(CZ)$ is a PEF with power $\beta$ for $C|Z$ and the
  convex closure of $\cC$.
\end{corollary}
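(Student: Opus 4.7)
The plan is to apply Theorem~\ref{thm:prob_constr} with the convex set taken to be $\cCvx(\cC)$ rather than $\cC$ itself. That theorem says the family of PEF inequalities on any convex set $\cC'$ is implied by its restriction to $\xtrm{\cC'}$. Applied to $\cC' = \cCvx(\cC)$, this reduces the corollary to showing that the hypothesis, that $F$ satisfies the PEF inequality on $\xtrm{\cC}$, already forces the inequality at every point of $\xtrm{\cCvx(\cC)}$.

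First I would use that the ambient space is finite-dimensional, since all value spaces are finite, so $\cCvx(\cC)$ is a compact convex subset of a probability simplex. By the Krein--Milman theorem (or a Carath\'eodory-style argument) every extreme point of $\cCvx(\cC)$ lies in $\overline{\cC}$. Moreover, if such an extreme point $v$ actually belongs to $\cC$, then $v$ is automatically an extreme point of $\cC$ as well: any nontrivial decomposition $v = \lambda v_1 + (1-\lambda)v_2$ with $v_1,v_2 \in \cC \subseteq \cCvx(\cC)$ would contradict extremality in the larger set. Hence, under the normal assumption that $\cC$ is closed, one has $\xtrm{\cCvx(\cC)} \subseteq \xtrm{\cC}$, and combining this with Theorem~\ref{thm:prob_constr} closes the argument.

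For full generality, when $\cC$ need not be closed, I would handle the residual case by continuity. The PEF constraint can be written as
\begin{equation*}
  \sum_{cz} F(cz)\,\nu(cz)^{1+\beta}\nu(z)^{-\beta} \leq 1,
\end{equation*}
which, under the convention $0^{1+\beta}\cdot 0^{-\beta}=0$ used in the proof of Theorem~\ref{thm:prob_constr}, is continuous in $\nu$ on the probability simplex (the power $1+\beta$ in the numerator exceeds $\beta$ in the denominator, so the coefficients stay bounded as $\nu(z)\to 0$). Consequently the hypothesis passes from $\xtrm{\cC}$ to its closure, which contains $\xtrm{\cCvx(\cC)}$ by the previous paragraph, completing the reduction.

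The main obstacle is really just this bookkeeping between $\xtrm{\cC}$ and $\xtrm{\cCvx(\cC)}$ together with the mild closure issue; the substantive analytic content, the biconvexity of the coefficient function proved in Lemma~\ref{lem:renybiconvex}, has already been harvested inside Theorem~\ref{thm:prob_constr}, so the corollary should drop out as a short consequence.
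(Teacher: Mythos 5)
Your main argument is correct and is essentially the route the paper intends: the corollary is stated without proof as an immediate consequence of Thm.~\ref{thm:prob_constr}, applied to the compact convex set $\cCvx(\cC)$ once one knows its extreme points already satisfy the hypothesis. Your bookkeeping — Milman's theorem giving $\xtrm{\cCvx(\cC)}\subseteq\overline{\cC}$, plus the observation that extremality in the larger set transfers to $\cC$ — is exactly the right justification in the paper's setting, where models are closed and bounded (hence compact, so Minkowski--Carath\'eodory makes every point of $\cCvx(\cC)$ a finite convex combination of its extreme points, which is all that the proof of Thm.~\ref{thm:prob_constr} actually handles). The continuity remark about $\nu\mapsto\sum_{cz}F(cz)\nu(cz)^{1+\beta}\nu(z)^{-\beta}$ on the simplex is also correct and consistent with the convention used in that proof.

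The one step that does not go through is in your final ``full generality'' paragraph: the claim that $\xtrm{\cCvx(\cC)}\subseteq\overline{\xtrm{\cC}}$ for non-closed $\cC$ is not what the previous paragraph establishes (Milman only gives containment in $\overline{\cC}$, and extremality transfers only for extreme points that actually lie in $\cC$), and it is false in general: for a relatively open convex set $\cC$ one has $\xtrm{\cC}=\emptyset$ while $\cCvx(\cC)$ has many extreme points. In that generality the corollary's hypothesis becomes vacuous while its conclusion is not, so closedness of $\cC$ — which the paper assumes of its models — is genuinely needed rather than a removable technicality; your argument is complete exactly in that closed case, and you should simply state that assumption rather than try to dispense with it by continuity.
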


The same observation holds for soft PEFs.

\begin{theorem}
  \label{thm:softpefxtrm}
  If $F(CZ)$ is a soft PEF with power $\beta$ for $D|Z$ and $\xtrm{\cC}$,
  then $F(CZ)$ is a soft PEF with power $\beta$ for $D|Z$ and the
  convex closure of $\cC$.
\end{theorem}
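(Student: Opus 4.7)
The plan is to lift the pointwise witness functions $q_i$ that exist at the extreme points $\nu_i\in\xtrm{\cC}$ to a witness $q$ at any convex combination $\nu=\sum_i\lambda_i\nu_i\in\cCvx(\cC)$, and then pass from the soft PEF inequality at each $\nu_i$ to the one at $\nu$ by applying the joint convexity of $\phi(a,b)=a^{1+\beta}/b^\beta$ from Lem.~\ref{lem:renybiconvex} pointwise in $cz$. This mirrors the role biconvexity played in Thm.~\ref{thm:prob_constr}, but now a witness $q$ must be constructed from the $q_i$'s rather than being absent.

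First I would fix $\nu\in\cCvx(\cC)$, express it as $\sum_i\lambda_i\nu_i$ with $\nu_i\in\xtrm{\cC}$, and let $q_i$ be the witness supplied by the hypothesis at each $\nu_i$. The natural candidate witness at $\nu$ is
\[
q(c|D(c)z)\;=\;\frac{1}{\nu(z)}\sum_i\lambda_i\,\nu_i(z)\,q_i(c|D(c)z),
\]
with $q(c|dz)=0$ when $d\neq D(c)$ and the value at $z$ with $\nu(z)=0$ chosen arbitrarily (it contributes nothing to the expectation at $\nu$). The design principle is that $\nu(z)q(c|D(c)z)$ is then the $\lambda_i$-weighted convex combination of $\nu_i(z)q_i(c|D(c)z)$. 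The required normalization $\sum_c q(c|dz)\leq 1$ follows by exchanging sums and using $\sum_{c:D(c)=d}q_i(c|D(c)z)\leq 1$.

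Next I would apply Lem.~\ref{lem:renybiconvex} pointwise in $cz$ with $a_i=\nu_i(cz)$ and $b_i=\nu_i(z)q_i(c|D(c)z)$. By construction $\sum_i\lambda_i a_i=\nu(cz)$ and $\sum_i\lambda_i b_i=\nu(z)q(c|D(c)z)$, so the lemma yields
\[
\phi\bigl(\nu(cz),\nu(z)q(c|D(c)z)\bigr)\;\leq\;\sum_i\lambda_i\,\phi\bigl(\nu_i(cz),\nu_i(z)q_i(c|D(c)z)\bigr).
\]
Multiplying by $F(cz)\ge 0$ and summing over $cz$, the right-hand side becomes the $\lambda_i$-weighted combination of the soft PEF expectations at the $\nu_i$, each bounded by $1$; the left-hand side is precisely $\Exp_\nu\!\bigl(F(CZ)(\nu(C|Z)/q(C|DZ))^\beta\bigr)$, and it is therefore bounded by $1$, as desired.

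The main obstacle is the side condition $b_i=0\Rightarrow a_i=0$ in Lem.~\ref{lem:renybiconvex}. It holds automatically when $\nu_i(z)=0$, since then $\nu_i(cz)\leq\nu_i(z)=0$. The delicate case is $\nu_i(z)>0$ with $q_i(c|D(c)z)=0$: I would argue that for any such $cz$ with $F(cz)>0$, the soft PEF expectation at $\nu_i$ contains the summand $\nu_i(cz)F(cz)(\nu_i(c|z)/q_i(c|D(c)z))^\beta$, which is infinite unless $\nu_i(cz)=0$, so the hypothesis of Lem.~\ref{lem:renybiconvex} is forced to hold at every $cz$ where $F(cz)>0$. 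The terms with $F(cz)=0$ contribute $0\leq 0$ on both sides and may be discarded. This bookkeeping is the only step requiring care; once handled, the construction and biconvexity combine cleanly to yield the result.
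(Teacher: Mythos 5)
Your proposal is correct and follows essentially the same route as the paper: the same convex-combination witness $q(c|dz)=\sum_i\lambda_i q_i(c|dz)\nu_i(z)/\nu(z)$, the same pointwise application of Lem.~\ref{lem:renybiconvex}, and the same summation to conclude. Your treatment of the degenerate case $q_i(c|dz)=0$ matches the paper's observation that the soft-PEF inequality at $\nu_i$ forces $q_i(c|dz)>0$ whenever $\nu_i(cz)>0$, with your extra care about $F(cz)=0$ terms being a harmless refinement.
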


\begin{proof}
  Let $\nu_i\in\xtrm{\cC}$ and $\nu=\sum_i \lambda_i \nu_i$ be a
  finite convex combination of the $\nu_i$.
  Let $q_i(C|DZ)$ witness that $F$ is a soft PEF at
  $\nu_i$. From the joint convexity of $\nu^{1+\beta}/\sigma^\beta$
  according to Lem.~\ref{lem:renybiconvex}, for all $cdz$
  \begin{equation}
    \frac{\nu(cz)^{1+\beta}}{(\sum_i\lambda_i q_i(c|dz)\nu_i(z))^\beta}\leq
    \sum_i \lambda_i\frac{\nu_i(cz)^{1+\beta}}{(q_i(c|dz)\nu_i(z))^\beta}.
  \end{equation}
  To apply Lem~\ref{lem:renybiconvex}, note that the definition of
  soft PEFs implies that $q_{i}(c|dz)>0$ whenever $\nu_{i}(cz)>0$ as
  otherwise the defining inequality cannot be satisfied.  Define
  $q(c|dz)=\sum_i\lambda_iq_i(c|dz)\nu_i(z)/\nu(z)$ if 
  $\nu(z)>0$ and $q(c|dz)=0$ otherwise. We have
  \begin{equation}
    \sum_c q(c|dz)=\sum_i\lambda_i \sum_c q_i(c|dz)\nu_i(z)/\nu(z)
    \leq \sum_i\lambda_i \nu_i(z)/\nu(z)= \nu(z)/\nu(z)= 1.
  \end{equation}
  Now
  \begin{align}
    \frac{F(cz)}{q(c|dz)^\beta}\nu(c|z)^{\beta}\nu(cz)
    &= F(cz)\frac{\nu(cz)^{1+\beta}}{q(c|dz)^\beta\nu(z)^\beta}
    \notag\\
    &= F(cz) \frac{\nu(cz)^{1+\beta}}{(\sum_i\lambda_iq_i(c|dz)\nu_i(z))^\beta}
    \notag\\
    &\leq \sum_i\lambda_iF(cz) \nu_i(cz)^{1+\beta}/(q_i(c|dz)\nu_i(z))^\beta
    \notag\\
    &= \sum_i\lambda_i\frac{F(cz)}{q_i(c|dz)^\beta}
    \nu_i(c|z)^\beta\nu_i(cz).
  \end{align}
  Summing both sides over $dcz$ gives the desired inequality
  \begin{equation}
    \Exp_\nu\left(\frac{F(CZ)}{q(C|DZ)^\beta} \nu(C|Z)^\beta\right)
    \leq \sum_i\lambda_i \Exp_{\nu_i}\left(\frac{F(CZ)}{q_i(C|DZ)^\beta} \nu_i(C|Z)^\beta\right) \leq 1.
  \end{equation}
\end{proof}

\begin{corollary}
  If $F(CZ)$ is a PEF with power $\beta$ for $D|Z$ and $\xtrm{\cC}$,
  then $F(CZ)$ is a soft PEF with power $\beta$ for $D|Z$ and 
  the convex closure of $\cC$.
\end{corollary}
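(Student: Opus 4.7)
The plan is to chain together two results already established immediately above: Lemma~\ref{lem:pef_softpef} (every PEF is a soft PEF on the same model) and Theorem~\ref{thm:softpefxtrm} (a soft PEF on the extreme points extends to a soft PEF on the convex closure). Neither step requires new computation, so the corollary is essentially bookkeeping.

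More precisely, I would proceed as follows. First, by hypothesis $F(CZ)$ is a PEF with power $\beta$ for $D|Z$ and $\xtrm{\cC}$. Applying Lemma~\ref{lem:pef_softpef} pointwise — that is, for each $\nu\in\xtrm{\cC}$, producing the witness $q(C|DZ)=\nu(C|DZ)$ from the proof of that lemma — immediately upgrades $F$ to a soft PEF with power $\beta$ for $D|Z$ and $\xtrm{\cC}$. Then Theorem~\ref{thm:softpefxtrm}, applied directly, yields that $F$ is a soft PEF with power $\beta$ for $D|Z$ and $\cCvx(\cC)$, which is the claim.

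The only conceptual step worth flagging is that Lemma~\ref{lem:pef_softpef} is stated for an arbitrary model $\cC$, and its proof only uses the defining inequality for PEFs at each individual $\nu\in\cC$; so applying it with $\cC$ replaced by $\xtrm{\cC}$ requires no modification. Likewise, Theorem~\ref{thm:softpefxtrm} is precisely about passing from $\xtrm{\cC}$ to the convex closure, with no further conditions on $\cC$ needed. There is no real obstacle: the corollary is a one-line composition of the preceding two results and should be stated as such, with the brief remark that combining Lemma~\ref{lem:pef_softpef} and Theorem~\ref{thm:softpefxtrm} gives the result.
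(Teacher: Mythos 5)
Your proposal is correct and is exactly the paper's argument: apply Lem.~\ref{lem:pef_softpef} to see that $F$ is a soft PEF for $\xtrm{\cC}$, then invoke Thm.~\ref{thm:softpefxtrm} to pass to the convex closure. Nothing further is needed.
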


\begin{proof}
  By Lem.~\ref{lem:pef_softpef}, $F(CZ)$ is a soft PEF with power
  $\beta$ for $D|Z$ and $\xtrm{\cC}$, so we can apply
  Thm.~\ref{thm:softpefxtrm}.
\end{proof}

\subsection{Log-Probs and PEF Optimization Strategy}

Given the probability estimation framework, the number of random bits
that can be extracted is determined by the negative logarithm of the
final probability estimate after adjustments for error bounds and
extractor constraints. We focus on the quantities obtained from
Eq.~\ref{eq:upe_final}, which reflect the final values of the PEF
products. 

\begin{definition}
  Let $F_{i}$ be PEFs with power $\beta>0$ for $\cC_{i}$.
  The \emph{log-prob of $\Sfnt{F}$} is
  $\sum_{i}\log(F_{i}(c_{i}z_{i}))/\beta$. Given an error
  bound $\epsilon$, the \emph{net log-prob} is given by
  $\sum_{i}\log(F_{i}(c_{i}z_{i}))/\beta-\log(1/\epsilon)/\beta$.
  \label{def:net_log_prob}
\end{definition}

In the randomness generation protocols of Sect.~\ref{subsec:protocols}
with the UPEs of Sect.~\ref{subsec:upe_constructions}, the net
log-prob is effectively a ``raw'' entropy base $e$ available for
extraction. For the protocol to succeed, the net log-prob must exceed
$\sige\log(2)$, where $\sige$ is the extractor parameter for the input
min-entropy in bits.  In the protocol with banked randomness, the net
log-prob contributes almost fully to the input min-entropy provided to
the extractor.  In Sect.~\ref{sec:apps}, we further adjust the net
log-prob by subtracting the entropy used for settings choices and call
the resulting quantity the \emph{net entropy}. This does not yet take
into account the entropy needed for the extractor seed, nor the fact
that the number of output bits is smaller than the input min-entropy
according to the extractor constraints.

The log-prob and net log-prob are empirical quantities. For the
average quantities with respect to a distribution we consider expected
contributions per trial.

\begin{definition}
  If $\nu\in\cC$ is the probability distribution of $CZ$ at a
  trial and $F$ is a PEF with power $\beta>0$ for $\cC$, then we define
  the \emph{log-prob rate of $F$ at $\nu$} as
  \begin{equation}
    \cO_{\nu}(F;\beta)=\sum_{cz}\nu(cz)\log(F(cz))/\beta
    =\Exp_{\nu}(\log(F))/\beta.
  \end{equation}
\end{definition}

Let $\nu$ be a distribution of $CZ$.  If
$\mu(c_{i+1}z_{i+1}|\Sfnt{R}_{\leq i}\Sfnt{Z}_{\leq i}) =
\nu(C=c_{i+1},Z=z_{i+1})$ for each $i$ and $c_{i+1},z_{i+1}$, then the
trials are \emph{empirically i.i.d.} with distribution $\nu$.  That
is, they are i.i.d.~from the point of view of the user, but not
necessarily from the point of view of $\Pfnt{E}$.  In this case, the
expected log-prob after $n$ trials is given by $n\cO_{\nu}(F;\beta)$.
When using the term ``rate'', we implicitly assume a situation where
we expect many trials with the same distribution.  We omit the
parameters $\beta$ or $\nu$ when they are clear from context.  The
quantity $\cO(F)$ could be called the trial's nominal min-entropy
increase.  We refer to it as ``nominal'' because the final min-entropy
provided to the extractor has to be reduced by
$\log(1/\epsilon)/\beta$ (see Def.~\ref{def:net_log_prob}).  This
reduction diverges as $\beta$ goes to zero.

The log-prob rate $\cO(F)$ is a concave function of $F$, where $F$ is
constrained to a convex set through the inequalities determined by
$\cC$.  The problem of maximizing $\cO(F)$ is therefore that of
maximizing a concave function over a convex domain.  For $(k,l,m)$
Bell-test configurations with $k,l,m$ small and typical settings and
non-signaling constraints, it is feasible to solve this numerically.
However, the complexity of the full problem grows very rapidly with
$k,l,m$.

\subsection{Schema for Probability Estimation}

The considerations above lead to the following schematic probability
estimation protocol. We are given the maximum number of available
trials $n$, a probability estimation goal $q$ and the error bound
$\epse $.  In general, the protocol yields a soft UPE. The protocol
can take advantage of experimentally relevant operational details to
improve probability estimation by adapting PEFs while it is running.
In view of Thm.~\ref{thm:uest_constr} (if $C_{i}=D_{i}$) or
Thm.~\ref{thm:suest_constr} (if not), each trial's PEF can be adapted
by taking advantage of the values of the $R_{j}$ and $Z_{j}$ from
previous trials as well as any initial information that may be part of
$E$.  However, the values of the parameters $n$, $q$, $\epse$ and
$\beta$ in the protocol must be fully determined by initial
information and cannot be adapted later.
\begin{description}[\compact]
\item[Schema for Probability Estimation:]
\item[Given:] $n$, the maximum number of trials, $q$, the probability
  estimation goal, and $\epse$, the error bound.
\item[]
  \begin{description}[\compact]
  \item[1.] Set $T_{0}=1$. Choose $\beta>0$ based on prior data from
    device commissioning or other earlier uses of the devices.
  \item[2.] For each $i=0,\ldots,n-1$:
    \begin{description}[\compact]
    \item[2.1] After the $i$'th and before the $i+1$'th trial:
      \begin{description}[\compact]
      \item[2.1.1] Determine the set $\cC_{i+1}$ constraining
        the distributions at the $i+1$'th trial.
      \item[2.1.2] Estimate the distribution $\nu\in\cC_{i+1}$ of
        $C_{i+1}Z_{i+1}$ from previous measurements (see the remark
        after the schema).
      \item[2.1.3] Based on the estimate $\nu$ and other
          information available in $\Sfnt{R}_{\leq i}$ and
          $\Sfnt{Z}_{\leq i}$, determine a soft PEF $F_{i+1}$ with
        power $\beta$ for $D_{i+1}|Z_{i+1}$ and $\cC_{i+1}$, optimized
        for log-prob rate at $\nu$.
      \end{description}
    \item[2.2] Obtain the $i+1$'th trial result $c_{i+1}z_{i+1}$.
    \item[2.3] After the $i+1$'th trial, compute
      $T_{i+1}=T_{i}F_{i+1}(c_{i+1}z_{i+1})$.
    \item[2.4] If $U=(T_{i+1}\epse )^{-1/\beta}\leq q$, stop and return
      $U$.
    \end{description}
  \item[3.] Return $U=(T_{n}\epse )^{-1/\beta}$.  
  \end{description}
\end{description}
In this schema, we have taken advantage of the ability to stop early
while preserving the required coverage properties.  According to
Thm.~\ref{thm:suest_constr} an implementation of the schema returns
the value of an $\epsilon_h$-soft UPE suitable for use in a randomness
generation protocol with Thm.~\ref{thm:prot_chainlemmas},
Thm.~\ref{thm:estext_nofail}, or Thm.~\ref{thm:estext_fail}.

We remark that there are different ways of estimating the distribution
$\nu[C_{i+1}Z_{i+1}]$.  Our method works as along as the estimated
distribution is determined by $\Sfnt{R}_{\leq i}\Sfnt{Z}_{\leq i}$ and
initial information.  In practice, we estimate the distribution
$\nu\in\cC_{i+1}$ by maximum likelihood using the past trial results
$\Sfnt{C}_{\leq i}\Sfnt{Z}_{\leq i}$, where the likelihood is computed
with the assumption that these trial results are i.i.d. (see
Eq.~\ref{eq:maximum_likelihood} for details).  We stress that the use
of the i.i.d. assumption is only for performing the estimate.  The
probability estimation protocol and the derived randomness generation
protocol are sound regardless of how we obtain the estimate.

\section{Some PEF Properties and PEF Optimality}
\label{sec:pef_props}
\subsection{Dependence on Power}

\begin{lemma}
  \label{lm:power_decrease}
  If $F$ is a PEF with power $\beta$ for $\cC$, then for all
  $0<\gamma\leq 1$, $F^{\gamma}$ is a PEF with power $\gamma\beta$
  for $\cC$.  Furthermore, given a distribution $\rho\in \cC$, 
  the log-prob rate at $\rho$ is the same for $F$ and $F^{\gamma}$.
\end{lemma}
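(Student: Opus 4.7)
My plan is to reduce both claims to two elementary manipulations: a convexity inequality for the PEF property, and the homogeneity of logarithm for the rate identity.

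For the first claim, the key observation is the algebraic identity
\begin{equation}
F(cz)^{\gamma}\,\nu(D(c)|z)^{\gamma\beta} \;=\; \bigl(F(cz)\,\nu(D(c)|z)^{\beta}\bigr)^{\gamma},
\end{equation}
which rewrites the quantity whose expectation must be bounded in the definition of a PEF with power $\gamma\beta$ as the $\gamma$-th power of the quantity already known to have expectation at most $1$ (since $F$ is a PEF with power $\beta$). Non-negativity of $F^{\gamma}$ is immediate from $F\geq 0$. Because $\gamma\in(0,1]$, the map $x\mapsto x^{\gamma}$ is concave on $[0,\infty)$, so Jensen's inequality gives
\begin{equation}
\Exp_{\nu}\!\left(\bigl(F(CZ)\,\nu(D|Z)^{\beta}\bigr)^{\gamma}\right) \;\leq\; \Bigl(\Exp_{\nu}\!\bigl(F(CZ)\,\nu(D|Z)^{\beta}\bigr)\Bigr)^{\gamma} \;\leq\; 1^{\gamma}=1,
\end{equation}
for every $\nu\in\cC$, establishing the PEF property of $F^{\gamma}$ with power $\gamma\beta$.

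For the rate claim, I would simply compute
\begin{equation}
\cO_{\rho}(F^{\gamma};\gamma\beta) \;=\; \frac{\Exp_{\rho}(\log F^{\gamma})}{\gamma\beta} \;=\; \frac{\gamma\,\Exp_{\rho}(\log F)}{\gamma\beta} \;=\; \frac{\Exp_{\rho}(\log F)}{\beta} \;=\; \cO_{\rho}(F;\beta),
\end{equation}
using $\log F^{\gamma}=\gamma\log F$ and linearity of expectation. One minor thing to monitor is that $F$ can vanish, so $\log F$ may take the value $-\infty$; this is harmless under our convention that expectations are expanded as sums over values with positive probability, and in any case it yields the same (possibly $-\infty$) value on both sides.

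Honestly, there is no real obstacle here; both statements are immediate consequences of the concavity of $x\mapsto x^{\gamma}$ for $\gamma\in(0,1]$ and the scaling of the logarithm. The only point worth being slightly careful about is the Jensen step, which requires the base quantity to be non-negative so that fractional powers are well-defined, and this is guaranteed by $F\geq 0$ and $\nu(D|Z)\geq 0$.
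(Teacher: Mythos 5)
Your proof is correct and follows essentially the same route as the paper's: rewrite $F^{\gamma}\nu(D|Z)^{\gamma\beta}$ as $\bigl(F\,\nu(D|Z)^{\beta}\bigr)^{\gamma}$, apply concavity of $x\mapsto x^{\gamma}$ (Jensen), and use $\log F^{\gamma}=\gamma\log F$ for the rate identity. Nothing further is needed.
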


We refer to the transformation of PEFs in the lemma as ``power reduction''
by $\gamma$.

\begin{proof}
  We have, for all $\nu\in\cC$,
  \begin{equation}
    \Exp_{\nu}\left(F^{\gamma}(CZ)\nu(D|Z)^{\gamma\beta}\right)
    = \Exp_{\nu}\left(\left(F(CZ)\nu(D|Z)^{\beta}\right)^{\gamma}\right)
    \leq \left(\Exp_{\nu}\left(F(CZ)\nu(D|Z)^{\beta}\right)\right)^{\gamma} \leq 1,
  \end{equation}
  by concavity of $x\mapsto x^{\gamma}$. Hence $F^{\gamma}$ is a PEF with 
  power $\gamma\beta$ for $\cC$.  Accordingly,
  \begin{equation}
    \cO_{\rho}(F;\beta)=\Exp_{\rho}(\log(F))/\beta
    =\Exp_{\rho}(\gamma\log(F))/(\gamma\beta)=
    \Exp_{\rho}(\log(F^{\gamma}))/(\gamma\beta) = \cO_{\rho}(F^{\gamma};\gamma\beta).
  \end{equation}
\end{proof}

The lemma implies the following corollary:

\begin{corollary}
  The supremum of the log-prob rates at $\rho$ for all PEFs with power
  $\beta$ is non-increasing in $\beta$.
\end{corollary}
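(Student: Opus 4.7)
The plan is to derive this monotonicity directly from Lemma~\ref{lm:power_decrease} via a power-reduction argument. Given $\beta_1 \leq \beta_2$, I would set $\gamma = \beta_1/\beta_2 \in (0,1]$ and show that every log-prob rate achievable by a PEF with power $\beta_2$ is also achievable by a PEF with power $\beta_1$, from which the non-increasing property of the supremum follows by monotonicity.

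More concretely, I would start by fixing $\beta_1 \leq \beta_2$ and an arbitrary PEF $F$ with power $\beta_2$ for $\cC$. Applying Lemma~\ref{lm:power_decrease} with $\gamma = \beta_1/\beta_2$ yields that $F^{\gamma}$ is a PEF with power $\gamma \beta_2 = \beta_1$ for $\cC$, and moreover $\cO_{\rho}(F;\beta_2) = \cO_{\rho}(F^{\gamma};\beta_1)$. Thus the set of log-prob rates at $\rho$ realized by PEFs with power $\beta_2$ is a subset of those realized by PEFs with power $\beta_1$. Taking suprema on both sides yields the claim.

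There is essentially no obstacle here: the work has already been done in Lemma~\ref{lm:power_decrease}, which gives both the existence of a PEF at the smaller power and the equality of log-prob rates. The only thing to be careful about is ordering the comparison correctly (``non-increasing in $\beta$'' means the map $\beta \mapsto \sup_F \cO_\rho(F;\beta)$ decreases as $\beta$ grows, which matches the direction of the subset inclusion derived above). No additional estimates, optimization arguments, or case analyses are needed, so the proof should be a one-paragraph appeal to the lemma.
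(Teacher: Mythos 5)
Your proposal is correct and is exactly the argument the paper intends: the corollary is stated as an immediate consequence of Lemma~\ref{lm:power_decrease}, with the same power-reduction $\gamma=\beta_1/\beta_2$ turning any PEF of power $\beta_2$ into one of power $\beta_1$ with identical log-prob rate at $\rho$, so the supremum can only grow as the power shrinks.
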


Hence, to determine the best log-prob rate without regard to the error
bound, one can analyze PEFs in the limit where the power goes to $0$.

\subsection{PEFs and Entropy Estimation}
\label{subsec:entropy_estimator}

\begin{definition}
  An \emph{entropy estimator for $D|C$ and $\cC$} is a real-valued function
  $K$ of $CZ$ with the property that for all
  $\nu\in\cC$,
  \begin{equation}
    \Exp_{\nu}(K(CZ))\leq \Exp_{\nu}(-\log(\nu(D|Z))).
    \label{eq:def:entropy_estimator}
  \end{equation}
  The \emph{entropy estimate of $K$ at $\nu$} is
  $\Exp_{\nu}(K(CZ))$.
\end{definition}

Entropy estimators are the analog of affine min-tradeoff functions in
the entropy accumulation framework of
Ref.~\cite{arnon-friedman:2018}.  The following
results establish a close relationship between entropy estimators and
PEFs.  First we show that the set of entropy estimators contains the
log-$\beta$-roots of positive PEFs with power $\beta>0$.
\begin{theorem}\label{thm:pef_to_eest}
  Let $F$ be a function of $CZ$ such that $F\ge 0$.  If there exists
  $\beta>0$ such that $F^{\beta}$ is a PEF with power $\beta$ for
  $\cC$, then for all $\nu\in\cC$, $\Exp_{\nu}(\log(F(CZ)))\leq
  \Exp_{\nu}(-\log(\nu(D|Z)))$.
\end{theorem}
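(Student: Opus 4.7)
The plan is to apply Jensen's inequality to the concavity of $\log$, using the PEF defining inequality as the upper bound.

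First, I would rewrite the desired conclusion $\Exp_\nu(\log(F(CZ))) \leq \Exp_\nu(-\log(\nu(D|Z)))$ in the equivalent combined form
\begin{equation}
\Exp_\nu\bigl(\log(F(CZ)\,\nu(D|Z))\bigr) \leq 0,\notag
\end{equation}
after noting that we may restrict the expectations to the support of $\nu$, where $\nu(D|Z)>0$ and (by the PEF inequality) $F(CZ)>0$ can be assumed without loss of generality, so all logarithms are well-defined.

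Next, I would multiply by $\beta>0$ and pull the factor inside the logarithm, turning the target inequality into
\begin{equation}
\Exp_\nu\bigl(\log\bigl(F^\beta(CZ)\,\nu(D|Z)^\beta\bigr)\bigr) \leq 0.\notag
\end{equation}
The concavity of $\log$ then yields, via Jensen's inequality,
\begin{equation}
\Exp_\nu\bigl(\log\bigl(F^\beta(CZ)\,\nu(D|Z)^\beta\bigr)\bigr) \leq \log\Exp_\nu\bigl(F^\beta(CZ)\,\nu(D|Z)^\beta\bigr).\notag
\end{equation}

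Finally, I would invoke the hypothesis that $F^\beta$ is a PEF with power $\beta$ for $\cC$, which by Def.~\ref{def:pef} asserts exactly $\Exp_\nu(F^\beta(CZ)\nu(D|Z)^\beta) \leq 1$, so the right-hand side is at most $\log 1 = 0$. Dividing by $\beta$ recovers the entropy-estimator inequality. There is no real obstacle here; the only minor care needed is handling the zero-probability arguments (where $\nu(D|Z)=0$), which are excluded from the expectation sums by our convention that zero-probability values are omitted. The result can be viewed as the one-line observation that the PEF condition is, in the language of R\'enyi divergences, a bound on a sandwiched R\'enyi quantity whose $\beta\to 0$ limit is the Kullback--Leibler-type inequality asserted here, with Jensen providing the finite-$\beta$ version directly.
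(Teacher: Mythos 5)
Your proof is correct and takes essentially the same route as the paper, which applies the pointwise inequality $e^{x}\ge 1+x$ to $\Exp_{\nu}(F^{\beta}(CZ)\nu(D|Z)^{\beta})\le 1$ — the same one-step convexity estimate as your Jensen argument (Jensen for $\log$ amounts to $\log u\le u-1$ applied at $u=X/\Exp_{\nu}(X)$). One small quibble: $F>0$ is not implied by the PEF inequality, but your without-loss-of-generality reduction is harmless anyway, since if $F=0$ on a set of positive $\nu$-probability then $\Exp_{\nu}(\log(F(CZ)))=-\infty$ and the conclusion holds trivially.
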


\begin{proof}
  Suppose that $F^{\beta}$ is a PEF with power $\beta>0$ for $\cC$.
  For any distribution $\nu\in\cC$,
  \begin{align}
    1 &\geq \Exp_{\nu}\left(F^{\beta}(CZ)\nu(D|Z)^{\beta}\right)\notag\\
    &= \Exp_{\nu}\left(e^{\beta\log(F(CZ))+\beta \log(\nu(D|Z))}\right)\notag\\
    &\geq \Exp_{\nu}\left(1+\beta\log(F(CZ))+\beta\log(\nu(D|Z))\right)\notag\\
    &= 1+\beta\left(\vphantom{\Big|}\Exp_{\nu}(\log(F(CZ))) - \Exp_{\nu}(-\log(\nu(D|Z)))\right).
    \label{eq:thm:pef_to_eest1}
  \end{align}
  Here we used that  $e^{x}\ge 1+x$. It follows
  that $\Exp_{\nu}(\log(F(CZ)))\leq \Exp_{\nu}(-\log(\nu(D|Z)))$.
\end{proof}

According to Thm.~\ref{thm:pef_to_eest}, positive PEFs
$F^{\beta}$ with power $\beta$ define entropy estimators
$\log(F)$ whose entropy estimate at a distribution $\rho\in \cC$
is the log-prob rate of the PEF. Our next goal is to relate
log-prob rates for PEFs with power approaching $0$ to entropy
estimates. For this we need the following definition:

\begin{definition}\label{def:agr}
  The \emph{asymptotic gain rate at $\rho$ of $\cC$} is the supremum of
  the log-prob rates at $\rho$ achievable by PEFs with positive power
  for $\cC$.
\end{definition}
By the properties of PEFs, the asymptotic gain rate at $\rho$ of
$\cC$ is the same as that of $\cCvx(\cC)$.

The asymptotic gain rate is defined in terms of a supremum over
all PEFs, including those whose support is properly contained in
$\Rng(CZ)$.  The next lemma shows that the same supremum is obtained
for positive PEFs.

\begin{lemma}
  \label{lem:g=g+}
  Let $g$ be the asymptotic gain rate at $\rho$ of $\cC$ and
  define 
  \begin{equation}
    g_{+}=\sup\{\cO_{\rho}(F^{\beta};\beta): 
    \textrm{$F>0$ and $F^{\beta}$ is a PEF with power $\beta>0$ for $\cC$}\}.
  \end{equation}
  Then $g=g_{+}$ and the supremum of the entropy estimates at $\rho$ of
  entropy estimators for $\cC$ is at least $g$.
\end{lemma}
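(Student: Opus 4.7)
My plan is to establish $g_+ = g$ by separately proving the two inequalities, then derive the entropy-estimator bound from the positive-PEF characterization via Thm.~\ref{thm:pef_to_eest}.

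The inequality $g_+ \leq g$ is immediate: every positive PEF is a PEF, so the defining supremum for $g_+$ is taken over a subset of that for $g$. The substantive direction is $g \leq g_+$, which I would establish by a perturbation argument. Given any PEF $F$ with power $\beta>0$ for $\cC$ and any $\delta>0$, define
\begin{equation}
F'_\delta \;=\; \frac{F+\delta}{1+\delta}.
\end{equation}
This function is strictly positive. To check it remains a PEF with power $\beta$, I would use that for every $\nu\in\cC$ and every $d,z$ we have $\nu(d|z)\leq 1$, hence $\Exp_\nu(\nu(D|Z)^\beta)\leq 1$, so
\begin{equation}
\Exp_\nu\bigl(F'_\delta(CZ)\,\nu(D|Z)^\beta\bigr) \;=\; \frac{\Exp_\nu(F\,\nu(D|Z)^\beta)+\delta\,\Exp_\nu(\nu(D|Z)^\beta)}{1+\delta} \;\leq\; \frac{1+\delta}{1+\delta} \;=\; 1.
\end{equation}
Since $F+\delta\geq F$ pointwise, the log-prob rate satisfies $\cO_\rho(F'_\delta;\beta)\geq \cO_\rho(F;\beta)-\log(1+\delta)/\beta$. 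Letting $\delta\to 0$ shows $g_+\geq \cO_\rho(F;\beta)$; since $F$ was an arbitrary PEF with positive power, taking the supremum yields $g_+\geq g$.

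For the entropy-estimator claim, I would invoke Thm.~\ref{thm:pef_to_eest}: whenever $F>0$ and $F^\beta$ is a PEF with power $\beta>0$ for $\cC$, the function $K=\log F$ is an entropy estimator for $D|Z$ and $\cC$. Its entropy estimate at $\rho$ is $\Exp_\rho(\log F) = \Exp_\rho(\log F^\beta)/\beta = \cO_\rho(F^\beta;\beta)$. Thus every positive PEF contributing to the supremum defining $g_+$ produces an entropy estimator whose estimate equals the PEF's log-prob rate, so the supremum of entropy estimates at $\rho$ is at least $g_+=g$.

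The only subtlety worth flagging is handling the case when $\cO_\rho(F;\beta)=-\infty$ (which can occur if $F$ vanishes on some point of $\Supp(\rho)$); such $F$ can simply be ignored when taking the supremum, since the constant PEF $F\equiv 1$ already gives log-prob rate $0$, so $g\geq 0$ and the supremum is unaffected by discarding PEFs with rate $-\infty$. No other step should present difficulty.
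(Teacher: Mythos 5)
Your proof is correct and follows essentially the same route as the paper: you make an arbitrary PEF strictly positive by mixing it with the constant PEF $1$ (your $(F+\delta)/(1+\delta)$ is the paper's convex combination $(1-\delta)F^{\beta}+\delta$ in a different parametrization), bound the resulting loss in log-prob rate by $\log(1+\delta)/\beta\to 0$, and then obtain the entropy-estimator statement from Thm.~\ref{thm:pef_to_eest}. No gaps; your explicit handling of the $-\infty$ rate case is a harmless extra remark.
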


\begin{proof}
  It is clear that $g\geq g_{+}$.  Both $g$ and $g_{+}$ are
  non-negative because we can always choose $F=1$ (see
  Thm.~\ref{thm:pef_convex_comb} below).  It suffices to
  consider the non-trivial case $g>0$.  Suppose that $F$
  contributes non-trivially to the supremum defining the asymptotic
  gain rate, namely $F^{\beta}$ is a PEF with power $\beta>0$ and
  log-prob rate $g_{F}=\cO_{\rho}(F^{\beta};\beta)>0$. We can
  choose $F$ so that $g_{F}$ is arbitrarily close to $g$. Consider
  $G^{\beta}=(1-\delta)F^{\beta}+\delta$ with $\delta> 0$
  sufficiently small. Then $G^{\beta}$ is a positive PEF with
  power $\beta$ (see Thm.~\ref{thm:pef_convex_comb}) and has log-prob
  rate
  \begin{align}
    g_{G}=\cO_{\rho}(G^{\beta};\beta)
    &=\Exp_{\rho}\left(\log(G^{\beta})\right)/\beta\notag\\
    &=\Exp_{\rho}\left(\log((1-\delta)F^{\beta}+\delta)\right)/\beta\notag\\
    &\geq \Exp_{\rho}\left(\log((1-\delta)F^{\beta})\right)/\beta\notag\\
    &= g_{F}+ \log(1-\delta)/\beta,
  \end{align}
  by monotonicity of the logarithm.
  We have $\log(1-\delta)/\beta = O(\delta)$. 
  Consequently,
  \begin{equation}
    g_{G}\geq g_{F}+O(\delta).
  \end{equation}
  Since $\delta$ is arbitrary, we can find positive $G$ so that
  $G^{\beta}$ is a PEF with power $\beta$ and log-prob rate $g_{G}$ at
  $\rho$ bounded below by a quantity arbitrarily close to
  $g_{F}$. Since $g_{F}$ is itself arbitrarily close to $g$, we have
  $g=g_{+}$. The last statement of the lemma follows from
  Thm.~\ref{thm:pef_to_eest} and the definitions of asymptotic gain
  rate and entropy estimators.
\end{proof}

According to the lemma above, the supremum of the entropy
estimates at $\rho$ by entropy estimators is an upper bound for the
asymptotic gain rate.  The upper bound could be strict.  But the next
theorem implies that in fact, the supremum of the entropy estimates
gives exactly the asymptotic gain rate. 

\begin{theorem}\label{thm:eest_to_pef}
  Suppose $K$ is an entropy estimator for $\cC$.  Then the asymptotic
  gain rate at $\rho$ is at least the entropy estimate of $K$ at
  $\rho$ given by $\sigma=\Exp_{\rho}(K(CZ))$.
\end{theorem}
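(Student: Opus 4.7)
The plan is to construct, for each small $\beta>0$, an explicit PEF with power $\beta$ whose log-prob rate at $\rho$ approaches $\sigma$. The natural candidate is a normalized exponential of $K$. Set
\begin{equation}
M(\beta)=\sup_{\nu\in\cCvx(\cC)}\Exp_{\nu}\!\left(e^{\beta K(CZ)}\nu(D|Z)^{\beta}\right),\qquad H_{\beta}(cz)=\frac{e^{\beta K(cz)}}{M(\beta)}.
\end{equation}
Since $K$ is bounded on the finite set $\Rng(CZ)$ and $\nu\mapsto \Exp_{\nu}(e^{\beta K(CZ)}\nu(D|Z)^{\beta})=\sum_{cz}\nu(cz)^{1+\beta}\nu(z)^{-\beta}e^{\beta K(cz)}$ is continuous on the compact simplex (cf.\ the argument in Thm.~\ref{thm:prob_constr}), $M(\beta)$ is finite and attained, and by construction $H_{\beta}$ is a PEF with power $\beta$ for $\cCvx(\cC)$. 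I may work with $\cCvx(\cC)$ rather than $\cC$ because any entropy estimator for $\cC$ is automatically an entropy estimator for $\cCvx(\cC)$: $\nu\mapsto\Exp_{\nu}(K)$ is linear while $\nu\mapsto H(D|Z)_{\nu}=\Exp_\nu(-\log\nu(D|Z))$ is concave (by the conditioning-reduces-entropy argument), so Eq.~\eqref{eq:def:entropy_estimator} is preserved under convex combinations.

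The log-prob rate of $H_{\beta}$ at $\rho$ simplifies directly to
\begin{equation}
\cO_{\rho}(H_{\beta};\beta)=\frac{1}{\beta}\Exp_{\rho}(\log H_{\beta})=\Exp_{\rho}(K)-\frac{\log M(\beta)}{\beta}=\sigma-\frac{\log M(\beta)}{\beta},
\end{equation}
so everything reduces to showing $\log M(\beta)/\beta\to 0$ as $\beta\to 0^{+}$. For each fixed $\nu\in\cCvx(\cC)$, writing $J_{\nu}(cz)=K(cz)+\log\nu(d|z)$ on the support of $\nu$ (with the standard zero-probability convention), a second-order Taylor expansion of $e^{\beta J_{\nu}(cz)}$ gives
\begin{equation}
\Exp_{\nu}\!\left(e^{\beta J_{\nu}(CZ)}\right)\le 1+\beta\Exp_{\nu}(J_{\nu})+\beta^{2}\,(e-2)\,\Exp_{\nu}(J_{\nu}^{2})
\end{equation}
whenever $\beta J_{\nu}\le 1$. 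The first-order term is $\beta(\Exp_{\nu}(K)-H(D|Z)_{\nu})\le 0$ by the defining property of entropy estimators, so provided I can bound $\Exp_{\nu}(J_{\nu}^{2})$ uniformly in $\nu$, I obtain $M(\beta)\le 1+C\beta^{2}$, hence $\log M(\beta)/\beta=O(\beta)\to 0$, which finishes the proof.

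The main obstacle is the uniform-in-$\nu$ control of the second-order remainder, since $J_{\nu}(cz)=K(cz)+\log\nu(d|z)$ can be arbitrarily negative on points where $\nu(d|z)$ is small, and one must therefore verify that the expectation $\Exp_{\nu}(J_{\nu}^{2})$ stays bounded. I expect to handle this by splitting $\Exp_{\nu}(J_{\nu}^{2})\le 2K_{\max}^{2}+2\Exp_{\nu}(\log^{2}\nu(D|Z))$ and using that $p\mapsto\sum_{i}p_{i}\log^{2}p_{i}$ is continuous on the simplex of distributions over $\Rng(D)$ and hence attains a finite maximum $B_{|\Rng(D)|}$; this gives a bound on $\Exp_{\nu}(J_{\nu}^{2})$ depending only on $K_{\max}$ and $|\Rng(D)|$, independent of $\nu$. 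Restricting to $\beta\in(0,1/K_{\max})$ (so $\beta J_{\nu}\le\beta K_{\max}\le 1$ uniformly) then makes the Taylor inequality above valid, completes the uniform remainder estimate, and establishes the theorem.
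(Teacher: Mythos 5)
Your proof is correct, and its core is the same as the paper's: build a PEF proportional to $e^{\beta K}$, expand to second order in the power, use the defining inequality of entropy estimators to make the first-order term nonpositive, and bound the second-order remainder uniformly over the model so that the loss vanishes as $\beta\to 0^{+}$. Where you differ is in how the PEF constraint is enforced and how the remainder is controlled. The paper shifts the estimator, taking $G^{\gamma}=(e^{-\epsilon+K})^{\gamma}$ so that $f'(0)\leq-\epsilon<0$, and then makes $\gamma$ explicitly small enough (via the critical-point analysis of $g(x)=(a+\log x)^{2}x^{1+\gamma}$) that $f(\gamma)\leq 1$, yielding a PEF with log-prob rate exactly $\sigma-\epsilon$ and explicit constants such as $\gamma\leq\epsilon/(u|\Rng(C)|)$; you instead normalize by the worst case, $H_{\beta}=e^{\beta K}/M(\beta)$, which makes the PEF property automatic for every $\beta$, and push all the work into $\log M(\beta)/\beta\to 0$, handled by the elementary pointwise inequality $e^{x}\leq 1+x+(e-2)x^{2}$ for $x\leq 1$ together with the uniform bound on $x\mapsto x\log^{2}x$. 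Your remainder estimate is more elementary; the paper's buys explicit quantitative constants, which the asymptotic statement does not need. Three cosmetic points you should fix or note: (i) the displayed sum $\sum_{cz}\nu(cz)^{1+\beta}\nu(z)^{-\beta}e^{\beta K(cz)}$ is the $D=C$ special case (in general the weight is $\nu(d|z)^{\beta}=\nu(dz)^{\beta}/\nu(z)^{\beta}$), but you only invoke it for the attainment remark, which is unnecessary — finiteness, from $\nu(d|z)\leq 1$ giving $M(\beta)\leq e^{\beta K_{\max}}$, is all you use; (ii) in the second-moment bound $2K_{\max}^{2}$ should be $2\max_{cz}K(cz)^{2}$, since $K$ may take negative values of larger magnitude ($K_{\max}$ is the correct bound only in $\beta J_{\nu}\leq\beta K_{\max}\leq 1$); (iii) taking the supremum over $\cCvx(\cC)$, and hence your concavity argument extending the entropy-estimator inequality, is harmless but not needed — the supremum over $\cC$ already produces a PEF for $\cC$, which is all Def.~\ref{def:agr} asks for; and when $K_{\max}\leq 0$ the claim is trivial since the gain rate is nonnegative (take $F=1$), matching the paper's remark.
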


\begin{proof}
  To prove the theorem requires constructing families of PEFs with
  small powers whose log-prob rates approach the entropy
  estimate of $K$.  Let $k_{\max}=\max(K)$ and
  $k_{\min}=\min(K)$. We may assume that $k_{\max}> 0$ as otherwise,
  the entropy estimate is not positive.  For sufficiently small
  $\epsilon>0$, we determine $\gamma>0$ such that
  $G(CZ)^{\gamma}=(e^{-\epsilon+K(CZ)})^{\gamma}$ is a PEF with
  power $\gamma$ for $\cC$.  We require that $\epsilon<1/2$.  Consider
  $\nu\in\cC$ and define
  \begin{equation}
    f(\gamma)=
    \Exp_{\nu}\left(G(CZ)^{\gamma}\nu(D|Z)^{\gamma}\right) 
    =\sum_{cz} (e^{-\epsilon+K(cz)}\nu(d|z))^{\gamma}\nu(cz).
  \end{equation}
  $G^{\gamma}$ is a PEF with power $\gamma$ provided that
  $f(\gamma)\leq 1$ for all $\nu$.  We Taylor-expand $f(\gamma)$ at
  $\gamma=0$ for $\gamma\geq 0$ with a second-order remainder.
  \begin{align}
    f(0)&=1,\notag\\
    \frac{d}{d\gamma} f(\gamma) &= \sum_{cz}
    \log(e^{-\epsilon+K(cz)}\nu(d|z))
    (e^{-\epsilon+K(cz)}\nu(d|z))^{\gamma}\nu(cz)\notag\\
    &= \sum_{cz}(K(cz)+\log(\nu(d|z))-\epsilon)
    (e^{-\epsilon+K(cz)}\nu(d|z))^{\gamma}\nu(cz),\notag\\
    \frac{d}{d\gamma} f(\gamma)\Bigr|_{\gamma=0} &=
    \sum_{cz}(K(cz)+\log(\nu(d|z))-\epsilon)
    \nu(cz)\notag\\
    &=\Exp_{\nu}(K(CZ)+\log(\nu(D|Z))) -\epsilon\notag\\
    &\leq -\epsilon,\\
    \frac{d^{2}}{d\gamma^{2}} f(\gamma) &=
    \sum_{cz}(K(cz)+\log(\nu(d|z))-\epsilon)^{2}
    (e^{-\epsilon+K(cz)}\nu(d|z))^{\gamma}\nu(cz).
  \end{align}
  The second derivative is positive, so $f(\gamma)$ is convex
  and
  \begin{equation}
    f(\gamma)\leq 1-\gamma\epsilon+\int_{0}^{\gamma}\frac{d^{2}}{d\gamma^{2}}
    f(\gamma)\Bigr|_{\gamma=\gamma'}(\gamma-\gamma') d\gamma'.
  \end{equation}
  To bound the remainder term, we use $\nu(cz)\leq\nu(dz)=\nu(d|z)\nu(z)$, 
  $e^{-\epsilon}\leq 1$ and $e^{K(CZ)\gamma}\leq e^{k_{\max}\gamma}$  for
  \begin{align}
    \frac{d^{2}}{d\gamma^{2}} f(\gamma)
    &\leq
    \sum_{cz}(K(cz)+\log(\nu(d|z))-\epsilon)^{2}
    \nu(d|z)^{1+\gamma}e^{k_{\max}\gamma}\nu(z).
  \end{align}
  As a function of $\nu(d|z)$ (which is bounded by $1$)
  the summand has a maximum value determined by the
  maximum of $g(x)=(a+\log(x))^{2}x^{1+\gamma}$ for $0<x\leq 1$
  with $a=K(cz)-\epsilon$.
  The critical points of $g(x)$ are obtained from solving
  \begin{equation}
    0 = \frac{d}{dx} g(x) =
    x^{\gamma}(a+\log(x))( (1+\gamma)(a+\log(x)) + 2).
  \end{equation}
  The solutions $x=0$ and $\log(x)=-a$ are minima with $g(x)=0$.
  The remaining solution is obtained from
  $\log(x_{0})=-(a+2/(1+\gamma))$.  The candidate maxima are at this
  solution and at the boundary $x_{1}=1$. 
  The values of $g$ at these points are
  \begin{align}
    g(x_{0})&= \frac{4}{(1+\gamma)^{2}}e^{-(a(1+\gamma)+2)},\notag\\
    g(x_{1})&= a^{2}.
  \end{align}
  If $x_{0}>1$, the solution at $x_{0}$ is irrelevant.  The
  condition $x_{0}\leq 1$ is equivalent to $-(a(1+\gamma)+2)\leq 0$,
  in which case $g(x_{0})\leq 4$.  We have $a^{2}\leq
  \max((k_{\max}-\epsilon)^{2},(k_{\min}-\epsilon)^{2})$, so let
  $u=\max(4,k_{\max}^{2},k_{\min}^{2}+|k_{\min}|+1/4)$, which is a
  loose upper bound on the maximum of $g(x)$ for $0<x\leq 1$. 
  For the bound, we used that $k_{\max}>0$ and $0<\epsilon<1/2$ imply
  $(k_{\max}-\epsilon)^{2}\leq \max(1/4,k_{\max}^{2})$.
  Returning to bounding
  $f$, we get
  \begin{align}
    \frac{d^{2}}{d\gamma^{2}} f(\gamma)
    &\leq
    \sum_{cz}u e^{k_{\max}\gamma}\nu(z)\notag\\
    &=u e^{k_{\max}\gamma}\sum_{c} 1 = u e^{k_{\max}\gamma}|\Rng(C)|,
  \end{align}
  from which
  \begin{align}
    f(\gamma) &\leq 1-\epsilon\gamma
    + \int_{0}^{\gamma} u e^{k_{\max}\gamma'}|\Rng(C)|(\gamma-\gamma') d\gamma'\notag\\
    &\leq 1-\epsilon\gamma + u|\Rng(C)|e^{k_{\max}\gamma}\int_{0}^{\gamma}(\gamma-\gamma')d\gamma'\notag\\
    &= 1-\epsilon\gamma  + u |\Rng(C)|e^{k_{\max}\gamma}\gamma^{2}/2.
  \end{align}
  To ensure that $f(\gamma)\leq 1$
  it suffices to satisfy
  \begin{equation}
    \gamma\leq \frac{2\epsilon e^{-k_{\max}\gamma}}{u|\Rng(C)|}.
  \end{equation}
  Let $w$ be the right-hand side of this inequality.  To remove the
  dependence of $w$ on $\gamma$ while maintaining $f(\gamma)\leq 1$,
  we reduce the upper bound on $\gamma$ in a few steps.  Since
  $e^{-k_{\max}\gamma}\geq 1-k_{\max}\gamma$,
  \begin{equation}
    w\geq w'=\frac{2\epsilon (1-k_{\max}\gamma)}{u|\Rng(C)|}.
  \end{equation}
  If $\gamma\leq w'$, then
  $\gamma\leq 2\epsilon/(u|\Rng(C)|)$,
  so we can substitute this in the right-hand side for 
  \begin{equation}
    w'\geq w''=\frac{2\epsilon (1-2k_{\max}\epsilon/(u|\Rng(C)|))}{u|\Rng(C)|}.
  \end{equation}
  For $\epsilon<u|\Rng(C)|/(4k_{\max})$, this simplifies further to
  $w''\geq w'''=\epsilon/(u|\Rng(C)|)$ by substituting the bound on
  $\epsilon$ for the second occurrence of $\epsilon$ in the
  expression for $w''$. The bound on $\epsilon$ is satisfied since
  $\epsilon<1/2$ is already assumed and from
  $u\geq\max(4,k_{\max}^{2})$, $u/(4k_{\max})\geq 1/2$.  We now
  require 
  \begin{equation}
    \gamma\leq w'''=\frac{\epsilon}{u|\Rng(C)|}=
    \frac{\epsilon}{\max(4,k_{\max}^{2},k_{\min}^{2}+|k_{\min}|+1/4)|\Rng(C)|},
  \end{equation}
  which ensures that $f(\gamma)\leq 1$.
  Since the bound on $f(\gamma)$ is independent of $\nu\in\cC$,
  $G^{\gamma}$ is a PEF with power $\gamma$ and the log-prob rate of
  $G^{\gamma}$ at $\rho$ is
  \begin{align}
    \cO_{\rho}(G^{\gamma};\gamma) &= \cO_{\rho}(e^{-\epsilon \gamma} e^{K \gamma};\gamma) \notag\\
    &=\Exp_{\rho}(K(CZ)-\epsilon)\notag\\
    &= \sigma-\epsilon,
  \end{align}
  which
  approaches $\sigma$ as $\epsilon$ goes to zero, as required to
  complete the proof of the theorem.
\end{proof}

\subsection{Error Bound Tradeoffs}

So far, the results of this section have ignored the contribution of
the error bound $\epse $ to the probability estimate,
giving the appearance that arbitrarily small powers are optimal. For a
given finite number of trials, or if $\epse $ is intended to
decrease exponentially with output min-entropy, the optimal power for
PEFs is bounded away from zero.  This is because $\epse $
increases the probability estimate by a factor of
$\epse ^{-1/\beta}$, which diverges as $\beta$ goes to zero.  We can 
analyze the situation for
the case where $\epse =e^{-\kappa \sigma n}$, where $\sigma$ is the
log-prob rate $\cO_{\rho}(F;\beta)$ at a given distribution
$\rho\in\cC$.  Per trial, we get a \emph{net log-prob rate} of
\begin{equation}
  \cO_{\rho}(F;\kappa;\beta) 
  = \cO_{\rho}(F;\beta)-\kappa\cO_{\rho}(F;\beta)/\beta
  = (1-\kappa/\beta)\Exp_{\rho}(\log(F(CZ)))/\beta.
  \label{eq:netlogprobdef}
\end{equation}

We can consider the family $F^{\gamma}$ of PEFs with power
$\gamma\beta$ obtained by power reduction of $F$. The net log-prob
rate at power $\gamma\beta$ is
\begin{equation}
  \cO_{\rho}(F^{\gamma},\kappa;\gamma\beta) =  \left(1- 
    \frac{\kappa}{\gamma\beta}\right)\cO_{\rho}(F;\beta).
\end{equation}
Consequently, the net log-prob rate is never improved by power
reduction of a given PEF.  However, there are usually PEFs with higher
net log-prob rates but lower powers. There is therefore a tradeoff
between the log-prob rate and the term $\kappa/\beta$ that we
expect to be optimized at some finite non-zero $\beta$. This effect is
demonstrated by example in Sect.~\ref{sec:apps}, see
Fig.~\ref{fig:rosenfeld_lpg(beta)}.

We remark that in the above definitions, we may consider
$\kappa=\kappa(n)$ as a function of $n$.  

\subsection{Optimality of PEFs}
\label{subsect:optimality}

Consider the situation where $R_{i}Z_{i}E_{i}$ are 
i.i.d.~with a distribution $\nu$ such that for all $e$,
$\nu[C_{i}Z_{i}|E_{i}=e_{i}]$ is in $\cC$. Here we have structured the
RV $E$ as a sequence RV, $\Sfnt{E}$.  We call this the
\emph{i.i.d.~scenario} with respect to $\nu$, where we specify $\nu$ as a
distribution of $RZE$ without indices.  We consider the error bound to
be $e^{-o(n)}$.  Our results so far establish that if the
asymptotic gain rate at $\rho=\nu[CZ]$ is $g$, we can certify
smooth conditional min-entropy and extract near-uniform random bits
at an asymptotic rate arbitrarily close to $g$. This is
formalized by the next theorem.

\begin{theorem}
  \label{thm:gainrate_asymptotics}
  Let $g$ be the asymptotic gain rate at $\rho$ of $\cC$ and
  $\epsilon_{n}=e^{-\kappa(n) n}$ with $\kappa(n)=o(1)$. Assume
  the i.i.d.~scenario. Then for any $\delta>0$, there exists a PEF
  $G^{\beta}$ with power $\beta$ for $\cC$ such that with asymptotic
  probability one, the net log-prob after $n$ trials with PEFs
  $F_{i}=G(C_{i}Z_{i})^{\beta}$ and error bound $\epsilon_{n}$
  exceeds $(g-\delta)n$. 
\end{theorem}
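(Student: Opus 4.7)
The plan is to combine the definition of asymptotic gain rate with the weak law of large numbers and the assumption that the error-bound penalty is sub-exponential. First I would unwind the target quantity: by Def.~\ref{def:net_log_prob}, the net log-prob after $n$ trials with $F_i = G(C_iZ_i)^\beta$ and error bound $\epsilon_n = e^{-\kappa(n)n}$ equals
\begin{equation*}
\sum_{i=1}^n \log(G(C_iZ_i)) \;-\; \frac{\kappa(n)\,n}{\beta}.
\end{equation*}
So the goal reduces to showing that with probability $\to 1$, this quantity exceeds $(g-\delta)n$.

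Next I would invoke Lem.~\ref{lem:g=g+} to pick a \emph{strictly positive} witness: there exist $\beta>0$ and $G>0$ such that $G^\beta$ is a PEF with power $\beta$ for $\cC$ and
\begin{equation*}
\cO_\rho(G^\beta;\beta) = \Exp_\rho(\log G) \;\geq\; g-\delta/2.
\end{equation*}
Positivity matters here because it guarantees $\log G$ is finite on the (finite) range of $CZ$ and hence bounded, which makes the tail bound below trivial.

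Then I would translate to the i.i.d.~scenario. Since $R_iZ_iE_i$ are i.i.d.~with $\nu[C_iZ_i\mid E_i=e]\in\cC$, marginalizing over $E_i$ shows the $C_iZ_i$ are themselves i.i.d.~with distribution $\rho$, so the random variables $X_i = \log G(C_iZ_i)$ are i.i.d.~with mean $\mu_X = \Exp_\rho(\log G)\geq g-\delta/2$ and bounded range $[\log\min G,\log\max G]$. By the weak law of large numbers,
\begin{equation*}
\Prob\!\left(\tfrac{1}{n}\sum_{i=1}^n X_i \;\geq\; \mu_X - \delta/4\right) \;\xrightarrow[n\to\infty]{}\; 1,
\end{equation*}
so on this high-probability event $\sum_i X_i \geq (g-3\delta/4)n$.

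Finally I would absorb the error-bound term. Because $\beta$ is fixed and $\kappa(n)=o(1)$, we have $\kappa(n)n/\beta = o(n)$, so for all sufficiently large $n$, $\kappa(n)n/\beta \leq (\delta/4)n$. Combining this with the previous display gives net log-prob $\geq (g-3\delta/4)n - (\delta/4)n = (g-\delta)n$ with probability tending to $1$, which is the claim. There is no substantive obstacle here: the only point requiring care is ensuring $\log G$ is integrable/bounded, which is handled by using the positive-PEF strengthening in Lem.~\ref{lem:g=g+} together with finiteness of $\Rng(CZ)$; everything else is a clean application of WLLN plus $\kappa(n)=o(1)$.
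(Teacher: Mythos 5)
Your proof is correct and follows essentially the same route as the paper's: choose a PEF whose log-prob rate is within a fraction of $\delta$ of $g$, apply the weak law of large numbers to the bounded i.i.d.\ sum $\sum_i\log G(C_iZ_i)$, and absorb the penalty $\kappa(n)n/\beta=o(n)$ for large $n$. The only differences are the $\delta$-bookkeeping and your explicit appeal to Lem.~\ref{lem:g=g+} to get a strictly positive $G$ (hence bounded $\log G$), a point the paper handles implicitly when it asserts the summands are bounded.
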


\begin{proof}
  By the definition of asymptotic gain rate, there exists a PEF
  $G^{\beta}$ with power $\beta$ such that
  $\Exp(\log(G))=g-\delta/3$. We may assume $g-\delta/3>0$.  The RV
  $S_{n}=\sum_{i=1}^{n}\log(F_{i})/\beta=\sum_{i=1}^{n}\log(G(C_{i}Z_{i}))$
  is a sum of bounded
  i.i.d. RVs, so according to the weak law of
  large numbers, the probability that $S_{n}\geq (g-2\delta/3)n$
  goes to $1$ as $n\rightarrow\infty$.  The net log-prob is given by
  $S_{n}-\log(1/\epsilon)/\beta = S_{n}-\kappa(n)n/\beta$. Since
  $\kappa(n)=o(1)$, for $n$ sufficiently large, $\kappa(n)/\beta\leq
  \delta/3$.  Thus with asymptotic probability $1$, the net log-prob
  after $n$ trials exceeds $(g-\delta)n$. 
\end{proof}

We claim that the asymptotic gain rate $g$ at $\rho$ is equal 
to the minimum of the conditional entropies $H(D|ZE;\nu)$ over all distributions
$\nu$ of $CZE$ such that $\nu[CZ]=\rho$, where the conditional
entropy base $e$ is defined by
\begin{align}
  H(D|ZE;\nu)  &= \sum_{ze}H(\nu[D|ze])\nu(ze)\notag\\
  &= \Exp_{\nu}\left(-\log\left(\nu(D|ZE)\right)\right).
\end{align}
According to the asymptotic equipartition property
(AEP)~\cite{tomamichel:qc2009a} specialized to the case of finite
classical-classical states, the infimum of the conditional entropies
is the optimal randomness-generation rate. In this sense, our method
is optimal. The claim is established by the next theorem.

\begin{theorem} 
  \label{thm:gainrate_optimality}
  Let $g $ be the asymptotic gain rate at $\rho$ of $\cC$.  For all
  distributions $\nu$ of $CZE$ with $\nu[CZ]=\rho[CZ]$, we have $g\leq
  H(D|ZE;\nu)$.  For $|\Rng(E)|$ sufficiently large, there is a
  distribution $\nu$ of $CZE$ with $\nu[CZ]=\rho[CZ]$ such that
  $g=H(D|ZE;\nu)$.
\end{theorem}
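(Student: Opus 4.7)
Plan: I pass through entropy estimators, whose connection with PEFs is already established. Combining Thm.~\ref{thm:eest_to_pef} (every entropy estimate lower-bounds $g$) with Lem.~\ref{lem:g=g+} (the supremum of entropy estimates is at least $g$) gives the identity
\[
g = \sup_K \Exp_\rho(K(CZ)),
\]
where $K$ ranges over entropy estimators for $\cC$. I would read the statement with the implicit side condition $\nu[CZ|e]\in\cC$ for every $e$, matching the i.i.d.~scenario definition earlier in the paper; without this constraint the infimum of $H(D|ZE;\nu)$ trivializes (take $E$ to reveal $C$), contradicting the claim.

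\textbf{First inequality.} For any entropy estimator $K$ and any such $\nu$, I apply the defining inequality $\Exp_{\nu[CZ|e]}(K(CZ)) \le \Exp_{\nu[CZ|e]}(-\log \nu(D|Ze))$ conditionally on each $e$ and average over $\nu[E]$. Because $\nu[CZ] = \rho$, the left-hand side averages to $\Exp_\rho(K(CZ))$; the right-hand side averages to $H(D|ZE;\nu)$. Taking the supremum over $K$ gives $g \le H(D|ZE;\nu)$.

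\textbf{Attainment.} Let $f(\mu) = H(D|Z;\mu)$, continuous on the compact set $\cC$ (with $0\log 0 = 0$). Consider its lower convex envelope on $\cCvx(\cC)$,
\[
f^\vee(\rho)=\inf\Bigl\{\textstyle\sum_e \lambda_e f(\mu_e) : \mu_e\in\cC,\ \lambda_e\ge 0,\ \sum_e\lambda_e=1,\ \sum_e\lambda_e\mu_e=\rho\Bigr\}.
\]
By Fenchel duality in finite dimensions, $f^\vee(\rho)$ equals the supremum of affine functions $L$ on the simplex of distributions of $CZ$ dominated by $f$ on $\cC$, evaluated at $\rho$. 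Each such $L$ has the form $\mu\mapsto\Exp_\mu(K(CZ))$, and the domination condition on $\cC$ is precisely the entropy estimator property, so $f^\vee(\rho) = g$. Now apply Carath\'eodory's theorem to the epigraph of $f$ in $\mathbb{R}^{d+1}$, where $d=\dim(\mathrm{aff}(\cC))$: the epigraph is closed by compactness of $\cC$ and continuity of $f$, so $(\rho, g)$, lying in its convex hull, is a convex combination $\sum_{e=1}^m \lambda_e (\mu_e, t_e)$ of at most $m\le d+2$ epigraph points. Defining $\nu(cze) = \lambda_e \mu_e(cz)$ yields $\nu[CZ]=\rho$, $\nu[CZ|e]=\mu_e\in\cC$, and $H(D|ZE;\nu) = \sum_e\lambda_e f(\mu_e) \le \sum_e\lambda_e t_e = g$, which combined with the first inequality forces $H(D|ZE;\nu) = g$. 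It suffices that $|\Rng(E)| \ge d+2$.

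\textbf{Main obstacle.} The delicate step is the Fenchel-duality identification of $f^\vee(\rho)$ with $\sup_K \Exp_\rho(K(CZ))$ together with the claim that $(\rho, g)$ lies in the convex hull (not merely its closure) of the epigraph of $f$. Both follow from standard convex analysis because $\cC$ is compact in a finite-dimensional simplex and $f$ is continuous, but the argument must be written so that the case $\rho\in\cCvx(\cC)\setminus\cC$ is handled and the Carath\'eodory bound yields a genuine probability distribution on $E$ with the required properties.
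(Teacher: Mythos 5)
Your proposal is correct and follows essentially the same route as the paper's proof: the paper also reads the first claim with the implicit constraint $\nu[CZ|e]\in\cC$ (made explicit in the summary section), proves it by averaging the entropy-estimator inequality over $e$ and invoking $g=\sup_K\Exp_\rho(K)$, and proves attainment by identifying $g$ with the convex-roof/lower convex envelope $h_{\min}(\rho)$ of $\sigma\mapsto H(D|Z;\sigma)$ via affine minorants (your Fenchel-duality step, the paper's supporting-hyperplane-with-$\epsilon$ step) and Carath\'eodory with at most $d+2$ terms defining $\nu(cze)=\lambda_e\mu_e(cz)$. The only cosmetic difference is that the paper first reduces the decomposition to extreme points of $\cC$ using concavity of conditional entropy, whereas you apply Carath\'eodory directly to the epigraph over $\cC$, which works equally well.
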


\begin{proof}
  Suppose that $\nu$ is a distribution of $CZE$ such that
  $\nu[CZ]=\rho$.  Define $\rho_{e}=\nu[CZ|E=e]\in\cC$ and
  $\lambda_{e}=\nu(e)$.  Then, $\rho=\sum_{e}\lambda_{e}\rho_{e}$.
  Consider an arbitrary entropy estimator $K$ for $\cC$ and write
  $f(\sigma)=\Exp_{\sigma}(K(CZ))$ for its entropy estimate at
  $\sigma$. By definition, $f(\sigma)\leq H(D|Z;\sigma)$ for all
  $\sigma\in\cC$.  Since $f(\sigma)$ is linear in $\sigma$, we have
  \begin{equation} 
    f(\rho)=\sum_{e}\lambda_{e}f(\rho_{e})\leq \sum_{e}\lambda_{e}H(D|Z;\rho_{e})
    = H(D|ZE;\nu).
  \end{equation}
  By arbitrariness of the entropy estimator, the supremum of the
  entropy estimates at $\rho$ is bounded by $H(D|ZE;\nu)$, so by
  Lem. \ref{lem:g=g+}, we have $g\leq H(D|ZE;\nu)$.

  For the second part of the theorem, since the asymptotic gain
  rate of $\cC$ is the same as that of $\cCvx(\cC)$, without loss of
  generality assume that $\cC$ is convex closed.  For any
  distribution $\sigma\in\cC$ define
  \begin{equation}
    h_{\min}(\sigma)=\inf \left\{\sum_{e}\lambda_{e}H(D|Z;\sigma_{e}):
      \sigma_{e}\in\cC,\lambda_{e}\geq 0,\sum_{e}\lambda_{e}=1,\sum_{e}\lambda_{e}\sigma_{e}=\sigma\right\}.
  \end{equation}
  We claim that $h_{\min}(\rho)$ is the supremum of entropy
  estimates at $\rho$ for $\cC$ and that the infimum in the
  definition of $h_{\min}(\rho)$ is achieved by a sum involving a
  bounded number of terms.  The conditional entropy is concave in
  the joint distribution of its variables (see, for example,
  Ref.~\cite{nielsen:qc2001a}, Cor. 11.13, which is readily
  specialized to the classical case).  It follows that if one of the
  $\sigma_e$ contributing to the sum defining $h_{\min}(\sigma)$ is
  not extremal, we can replace it by a convex combination of
  extremal distributions to decrease the value of the sum.  Thus, we
  only have to consider $\sigma_e\in\xtrm{\cC}$ for defining
  $h_{\min}$. It follows that $h_{\min}(\sigma)$ is the convex roof
  extension of the function $\sigma\mapsto H(D|Z;\sigma)$ on
  $\xtrm{\cC}$.  Convex roof extensions are defined in
  Ref.~\cite{uhlmann:qc1998a}.  See this reference for a proof that
  $h_{\min}$ is convex.  In fact, the graph of $h_{\min}$ on $\cC$
  is the lower boundary of the convex closure of the set
  $\{(\sigma,H(D|Z;\sigma )):\sigma\in\xtrm{\cC}\}$.  Specializing
  $h_{\min}(\sigma)$ to the case $\sigma=\rho$, since the dimension
  is finite we can apply Carath\'eodory's theorem and express
  $h_{\min}(\rho)$ as finite convex combinations
  $\sum_{e}\lambda_{e}H(D|Z;\rho_{e})$ with $\rho_{e}\in\xtrm{\cC}$.
  The number of terms required is at most $d+2$, where $d$ is the
  dimension of $\cC$.  Since $h_{\min}$ is convex and has a closed
  epigraph, for any $\epsilon>0$ there exists an affine function $f$
  on $\cC$ such that $f(\rho)\geq h_{\min}(\rho)-\epsilon$ and the
  graph of $f$ is below that of $h_{\min}$ (See
  Ref.~\cite{boyd:qc2004a}, Sect. 3.3.2 and Exercise~3.28).  We can
  extend $f$ to all distributions $\sigma$ of $CZ$ and express
  $f(\sigma)$ as an expectation $f(\sigma)=\Exp_{ \sigma}(K(CZ))$
  for some real-valued function $K$ of $CZ$.  Relevant existence and
  extension theorems can be found in textbooks on convex analysis,
  topological vector spaces or operator theory. For example, see
  Ref.~\cite{kadison:qf1997a}, Ch. 1.  We now have that for all
  $\sigma\in\cC$,
  \begin{equation}
    \Exp_{\sigma}(K(CZ))\leq h_{\min}(\sigma)\leq H(D|Z;\sigma)
    = \Exp_{\sigma}(-\log(\sigma(D|Z))).
  \end{equation}
  It follows that $K$ is an entropy estimator and that its entropy
  estimate at $\rho$ is $h_{\min}(\rho)-\epsilon$. By arbitrariness
  of $\epsilon$, $h_{\min}(\rho)$ is the supremum of entropy
  estimates at $\rho$ for $\cC$, and hence $g=h_{\min}(\rho)$ by
  Lem. \ref{lem:g=g+} and Thm.~\ref{thm:eest_to_pef}. 
  From above, we can write
  $h_{\min}(\rho)=\sum_{e=1}^{d+2}\lambda_{e}H(D|Z;\rho_{e})$, with
  $\lambda_{e}\geq 0$, $\sum_{e=1}^{d+2}\lambda_{e}=1$ and
  $\rho_{e}\in\xtrm{\cC}$. We can set $\Rng(E)=\{1,\ldots,d+2\}$ and
  define $\nu$ on $CZE$ by $\nu(cze)=\rho_{e}(cz)\lambda_{e}$.  Then
  $H(D|ZE;\nu)=\sum_{e}\lambda_{e}H(D|Z;\rho_{e})=h_{\min}(\rho)
  =g$, which is what we aimed to prove.
\end{proof}

\stepcounter{equation}

\section{Additional PEF Constructions}
\label{sec:pef_constructions}

Effectiveness of PEF constraints as discussed in
Sect.~\ref{sec:first_pef_construction} provides a practical way to
construct PEFs when the convex closure of the set $\cC$ has a finite
number of extreme points. We demonstrate this construction in
Sect.~\ref{sec:apps}.  In addition, we can construct PEFs by the
strategies discussed in the following two subsections.

\subsection{PEFs from Maximum Probability Estimators}
\label{sec:maxe_pefs}

A strategy for constructing PEFs is to determine functions $F$ that
solve the inequalities
\begin{equation}
  \begin{array}{rcll}
    F&\geq& 0 & ,\\
    \Exp_{\nu}(F(CZ)\max_{dz}\nu(d|z)^{\beta})&\leq& 1 & \textrm{for $\nu\in\xtrm{\cC}$}.
  \end{array}
  \label{eq:probest_constr_max}
\end{equation}
Any such $F$ is a PEF with power $\beta$ for $\cC$.  The next theorem
provides families of PEFs satisfying these inequalities. Since
the expectations of such PEFs witness the maximum conditional
probability $\max_{dz}\nu(d|z)$ in a trial, the corresponding
probability estimators may be referred to as \emph{maximum
probability estimators}.  

\begin{theorem}\label{thm:pef_uniformbnd}
  Suppose that $B$ is a function of $CZ$ such that
  \begin{equation}
    1-\Exp_{\nu}(B(CZ)) \geq \max_{dz}\nu(d|z)
    \textrm{\ for all $\nu\in\cC$}.\label{eq:maxprobbnd}
  \end{equation}
  Let $\alpha<1$, $\beta>0$, $0\leq \lambda\leq 1$ and define
  \begin{equation}
    F(CZ)=(1-\alpha)^{-\beta}\left(1+\beta\frac{B(CZ)-\alpha}{1-\alpha}\right).
    \label{eq:approxbeta}
  \end{equation}
  If $F'=1+\lambda(F-1)\geq 0$, then $F'$ is a PEF with power $\beta$
  for $\cC$ satisfying Eq.~\ref{eq:probest_constr_max}.
\end{theorem}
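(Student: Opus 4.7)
My approach is to reduce the required PEF constraint to a one-variable calculus exercise, exploiting the fact that $m(\nu) := \max_{dz}\nu(d|z)$ depends only on $\nu$ and hence factors out of expectations over $CZ$. Since $F' = \lambda F + (1-\lambda)$ is an affine combination of $F$ and the constant function $1$, and since $m(\nu) \leq 1$, I will first argue that establishing $m(\nu)^\beta\Exp_\nu(F(CZ)) \leq 1$ for all $\nu\in\cC$ is enough: plugging this bound together with $m(\nu)^\beta\leq 1$ into the identity
\begin{equation}
  \Exp_\nu\bigl(F'(CZ)\, m(\nu)^\beta\bigr) = \lambda\, m(\nu)^\beta\Exp_\nu(F(CZ)) + (1-\lambda)\,m(\nu)^\beta
\end{equation}
yields $\Exp_\nu(F'(CZ)m(\nu)^\beta)\leq 1$, and then the pointwise bound $\nu(D|Z)\leq m(\nu)$ gives the PEF inequality $\Exp_\nu(F'(CZ)\nu(D|Z)^\beta)\leq 1$. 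Together with the hypothesized $F'\geq 0$, this shows $F'$ is a PEF with power $\beta$ for all of $\cC$, not just for $\xtrm{\cC}$.

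Next, because $F$ is affine in $B(CZ)$, I can evaluate $\Exp_\nu(F(CZ))$ by simply substituting $b := \Exp_\nu(B(CZ))$ for $B(CZ)$, giving
\begin{equation}
  \Exp_\nu(F(CZ)) = (1-\alpha)^{-\beta}\left(1+\beta\frac{b-\alpha}{1-\alpha}\right).
\end{equation}
The hypothesis Eq.~\ref{eq:maxprobbnd} says $m(\nu)\leq 1-b$, so $m(\nu)^\beta\leq (1-b)^\beta$. Introducing $t := (b-\alpha)/(1-\alpha)$ gives $(1-b)/(1-\alpha) = 1-t$, and the task reduces to proving the one-variable inequality
\begin{equation}
  g(t) := (1-t)^\beta(1+\beta t) \leq 1.
\end{equation}
The constraints $b\leq 1$ and $\alpha<1$ force $t\leq 1$, so this is the range that matters.

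The one-variable inequality is then a routine computation: $g(0)=1$ and
\begin{equation}
  g'(t) = -\beta(\beta+1)\,t\,(1-t)^{\beta-1},
\end{equation}
so $g$ is strictly increasing on $(-\infty,0)$ and strictly decreasing on $(0,1)$, attaining its maximum $1$ uniquely at $t=0$; at $t=1$ the function vanishes, and for $t\to -\infty$ it diverges to $-\infty$. In every case $g(t)\leq 1$ on $(-\infty,1]$.

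The main obstacle here is minor once one recognizes that $F$ is the first-order Taylor approximation of $(1-B)^{-\beta}$ at $B=\alpha$ and that the relevant Bernoulli-type inequality is $(1-t)^\beta(1+\beta t)\leq 1$. The only other point that deserves a brief check is the motivation for passing from $F$ to $F'$: since $F$ can take negative values when $B(CZ)$ drops below $\alpha - (1-\alpha)/\beta$, the convex combination with the constant $1$ (parameterized by $\lambda$) is needed to enforce the non-negativity required by the PEF definition, and my calculation above preserves the bound under this convex combination because $1$ is itself a trivial PEF.
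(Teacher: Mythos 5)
Your proof is correct and follows essentially the same route as the paper's: both use $F'\geq 0$ together with the pointwise bound $\nu(D|Z)\leq\max_{dz}\nu(d|z)$ to reduce the PEF condition to $\Exp_{\nu}(F')\max_{dz}\nu(d|z)^{\beta}\leq 1$, split off the trivial PEF $1$ via the convex combination in $\lambda$, and close with the same Bernoulli/tangent-line inequality — you verify $(1-t)^{\beta}(1+\beta t)\leq 1$ by differentiation where the paper invokes convexity of $(1-x)^{-\beta}$ and its tangent at $x=\alpha$, a purely cosmetic difference. One small point: the chained step $m(\nu)^{\beta}\Exp_{\nu}(F)\leq(1-b)^{\beta}\Exp_{\nu}(F)$ tacitly assumes $\Exp_{\nu}(F)\geq 0$; when $\Exp_{\nu}(F)<0$ the target bound $m(\nu)^{\beta}\Exp_{\nu}(F)\leq 1$ is trivial (and $b\leq 1$ itself deserves the one-line remark that it follows from Eq.~\ref{eq:maxprobbnd} since $\max_{dz}\nu(d|z)\geq 0$), so nothing breaks, but noting the case split — or multiplying in the paper's order, bounding $\Exp_{\nu}(F)\leq(1-b)^{-\beta}$ first and then multiplying by $m(\nu)^{\beta}\geq 0$ — would make it airtight.
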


A reasonable choice for $\alpha$ in the theorem is $\alpha=\bar
b=\Exp_{\rho}(B(CZ))$ with $\rho$ our best guess for the true
distribution of the trial. The inequality in Eq.~\ref{eq:maxprobbnd}
and the expression for $F$ suggests that we should maximize $\bar b$
for the best results at $\rho$. To optimize the log-prob rate or
the net log-prob rate, we can then vary $\beta$ and $\alpha$.  If the
condition $F'\geq 0$ is not satisfied, we can either reduce $\lambda$
in the definition or replace $B$ by $\gamma B$ and $\alpha$ by $\gamma
\alpha$ for an appropriate $\gamma\in(0,1)$. The latter replacement
preserves the validity of Eq.~\ref{eq:maxprobbnd}.
Thm.~\ref{thm:agrfree} shows that PEFs obtained by these methods
for a given $B$ witness an asymptotic gain rate of at least
$-\log(1-\bar b)$, which justifies the goal of maximizing $\bar b$
and is what we would hope for given the interpretation of the
right-hand-side of Eq.~\ref{eq:maxprobbnd} as a worst-case
conditional max-prob.

Functions $B$ satisfying the conditions in the theorem with $\bar b>0$
for a given non-LR distribution are readily constructed for a
large class of Bell-test configurations.  See the discussions after
the proofs of this and the next theorem. The family of PEFs
constructed accordingly contains PEFs with good log-prob rates,
as witnessed by Thm.~\ref{thm:agrfree} below, which quantifies the
performance of $B$ in terms of $\bar b$.  The family can also be used
as a tool for proving exponential randomness expansion, see
Thm.~\ref{thm:expexp}.

\begin{proof}
  We use the following general inequality: For $x<1$,
  \begin{align}
    (1-x)^{-\beta} &= \left(1-\alpha - (x-\alpha)\right)^{-\beta}\notag\\  
    &= (1-\alpha)^{-\beta}\left(1-\frac{x-\alpha}{1-\alpha}\right)^{-\beta}\notag\\
    &\geq (1-\alpha)^{-\beta}\left(1+\beta\frac{x-\alpha}{1-\alpha}\right),
    \label{eq:powertangent}
  \end{align}
  since the right-hand-side defines the tangent line of the graph of the
  convex function $(1-x)^{-\beta}$ at $x=\alpha$. Provided $F'\geq 0$,
  given any $\nu\in \cC$ we can compute
  \begin{align}
    \Exp_{\nu}\left(F'(CZ)\nu(D|Z)^{\beta}\right) &\leq
    \Exp_{\nu}(F'(CZ))\max_{dz}\nu(d|z)^{\beta}\notag\\
    &= \Exp_{\nu}(1+\lambda(F(CZ)-1))\max_{dz}\nu(d|z)^{\beta}\notag\\
    &= (1-\lambda)\max_{dz}\nu(d|z)^{\beta}
    +\lambda\Exp_{\nu}(F(CZ))\max_{dz}\nu(d|z)^{\beta}\notag\\
    &\leq (1-\lambda) + \lambda\Exp_{\nu}(F(CZ))\max_{dz}\nu(d|z)^{\beta}.
    \label{eq:thm:pef_ub0}
  \end{align}
  To continue, with Eq.~\ref{eq:powertangent} we bound
  \begin{align}
    \Exp_{\nu}(F(CZ))&=
    \Exp_{\nu}\left((1-\alpha)^{-\beta}\left(1+\beta
        \frac{B(CZ)-\alpha}{1-\alpha}\right)\right)\notag\\
    &= \left((1-\alpha)^{-\beta}\left(1+\beta
        \frac{\Exp_{\nu}(B(CZ))-\alpha}{1-\alpha}\right)\right)\notag\\
    &\leq (1-\Exp_{\nu}(B(CZ)))^{-\beta}.
  \end{align}
  To apply Eq.~\ref{eq:powertangent}, note that
  Eq.~\ref{eq:maxprobbnd} implies $\Exp_{\nu}(B(CZ))<1$.
  Substituting in Eq.~\ref{eq:thm:pef_ub0} and applying
  Eq.~\ref{eq:maxprobbnd} gives
  \begin{equation}
    \Exp_{\nu}\left(F'(CZ)\nu(D|Z)^{\beta}\right) \leq
    (1-\lambda)+\lambda(1-\Exp_{\nu}(B(CZ)))^{-\beta}\max_{dz}\nu(d|z)^{\beta}
    \leq 1.
    \label{eq:thm:pef_uniformbnd1}
  \end{equation}
  Since $\nu$ is an arbitrary distribution in $\cC$, we conclude
  that if $F'\geq 0$, it is a PEF with power $\beta$ and, in
  consideration of the first line of Eq.~\ref{eq:thm:pef_ub0}, it
  satisfies Eq.~\ref{eq:probest_constr_max}.
\end{proof}

For $(2,2,2)$ Bell tests with known settings probabilities, starting
from a Bell function $B_{0}$ with positive expectation at $\rho$, one
can determine $m>0$ such that $B=B_{0}/2m$ satisfies
Eq.~\ref{eq:maxprobbnd}.  Computationally, $m$ can be found by
checking the constraints of Eq.~\ref{eq:maxprobbnd} at extremal
$\nu[C|Z]$.  This construction was exploited in
Ref.~\cite{bierhorst:qc2017a}. See the lemma in the proof of the
``Entropy Production Theorem'' in this reference, where setting
$B=(T-1)/2m$ with respect to the notation there defines a function
satisfying Eq.~\ref{eq:maxprobbnd}.  This observation about
non-trivial Bell functions is a consequence of the fact
that
positive expectations of such Bell functions witness the presence of a
Popescu-Rohrlich (PR) box in the distribution, and such a box has
maximum outcome probability $1/2$ for each setting $z$.  (See
Sect.~\ref{sec:apps} for the definition of PR boxes.)

For $\lambda=1$,
Thm.~\ref{thm:pef_uniformbnd} can be generalized to a weighted form
with minor modifications to the proof.

\begin{theorem}\label{thm:pef_uniformbnd2}
  Let $\gamma$ be a positive function of $DZ$.
  Suppose that $B$ is a function of $CZ$ such that
  \begin{equation}
    1-\Exp_{\nu}(B(CZ)) \geq \max_{dz}\gamma(dz)\nu(d|z)
    \textrm{\ for all $\nu\in\cC$}.\label{eq:wmaxprobbnd}
  \end{equation}
  Let $\alpha<1$, $\beta>0$ and define
  \begin{equation}
    F(CZ)=\gamma(DZ)^{\beta}(1-\alpha)^{-\beta}
    \left(1+\beta\frac{B(CZ)-\alpha}{1-\alpha}\right).
    \label{eq:wapproxbeta}
  \end{equation}
  If $F\geq 0$, then $F$ is a PEF with power $\beta$ for $\cC$.
\end{theorem}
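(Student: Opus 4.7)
The plan is to mimic the proof of Thm.~\ref{thm:pef_uniformbnd}, absorbing the weight $\gamma$ into the worst-case conditional probability factor rather than carrying $\lambda$ and the convex combination. Fix any $\nu\in\cC$. Since $D$ is determined by $C$, the weight $\gamma(DZ)$ is a function of $CZ$, so I can write
\begin{equation*}
F(CZ)\nu(D|Z)^{\beta}
= (\gamma(DZ)\nu(D|Z))^{\beta}(1-\alpha)^{-\beta}
\left(1+\beta\frac{B(CZ)-\alpha}{1-\alpha}\right).
\end{equation*}
The assumption $F\geq 0$ together with $\gamma>0$ forces the bracketed tangent-line factor to be non-negative for every value of $CZ$, so the whole integrand is non-negative and I may pull out the maximum of $(\gamma(DZ)\nu(D|Z))^{\beta}$:
\begin{equation*}
\Exp_{\nu}\left(F(CZ)\nu(D|Z)^{\beta}\right)
\leq \max_{dz}(\gamma(dz)\nu(d|z))^{\beta}\cdot
\Exp_{\nu}\!\left((1-\alpha)^{-\beta}\left(1+\beta\frac{B(CZ)-\alpha}{1-\alpha}\right)\right).
\end{equation*}

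Next I would linearize the inner expectation and invoke the tangent-line inequality Eq.~\ref{eq:powertangent} in the reverse direction with $x=\Exp_{\nu}(B(CZ))$. The hypothesis Eq.~\ref{eq:wmaxprobbnd} together with $\gamma>0$ on the support of $\nu$ guarantees $\Exp_{\nu}(B(CZ))<1$, so the tangent bound applies and yields
\begin{equation*}
\Exp_{\nu}\!\left((1-\alpha)^{-\beta}\left(1+\beta\frac{B(CZ)-\alpha}{1-\alpha}\right)\right)
=(1-\alpha)^{-\beta}\left(1+\beta\frac{\Exp_{\nu}(B(CZ))-\alpha}{1-\alpha}\right)
\leq (1-\Exp_{\nu}(B(CZ)))^{-\beta}.
\end{equation*}

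Combining the two displays and then applying the weighted maximum-probability bound Eq.~\ref{eq:wmaxprobbnd} raised to the power $\beta$,
\begin{equation*}
\Exp_{\nu}\!\left(F(CZ)\nu(D|Z)^{\beta}\right)
\leq \max_{dz}(\gamma(dz)\nu(d|z))^{\beta}\,(1-\Exp_{\nu}(B(CZ)))^{-\beta}\leq 1.
\end{equation*}
Since $\nu\in\cC$ was arbitrary, $F$ is a PEF with power $\beta$ for $\cC$.

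I do not expect a serious obstacle: this is a direct generalization in which dropping the convex-combination parameter $\lambda$ (trivially $\lambda=1$) is harmless because positivity of $F$ is imposed as a hypothesis, and the only algebraic change from the unweighted theorem is that the factor pulled out by the maximum is $(\gamma(dz)\nu(d|z))^{\beta}$ rather than $\nu(d|z)^{\beta}$. The one subtle point worth flagging is the need to verify non-negativity of the tangent-line factor on the support of $\nu$ before pulling out the maximum; this is where the joint assumptions $F\geq 0$ and $\gamma>0$ are used, and it is the only place where the $\lambda$-free formulation differs in spirit from the earlier proof.
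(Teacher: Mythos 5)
Your proof is correct and follows essentially the same route as the paper's: combine $\gamma(DZ)^{\beta}$ with $\nu(D|Z)^{\beta}$, pull out $\max_{dz}(\gamma(dz)\nu(d|z))^{\beta}$ using non-negativity of the tangent-line factor (which, as you note, follows from $F\geq 0$ and $\gamma>0$), apply linearity and the tangent-line bound Eq.~\ref{eq:powertangent} at $x=\Exp_{\nu}(B(CZ))<1$, and finish with Eq.~\ref{eq:wmaxprobbnd}. Your explicit flagging of the non-negativity needed for the pull-out step and of $\Exp_{\nu}(B(CZ))<1$ only makes explicit what the paper leaves implicit.
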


\begin{proof}
  It suffices to adjust the sequence of inequalities leading to
  Eq.~\ref{eq:thm:pef_uniformbnd1} as follows
  \begin{align}
    \Exp_{\nu}\left(F(CZ)\nu(D|Z)^{\beta}\right) 
    &=      \Exp_{\nu}\left((1-\alpha)^{-\beta}\left(1+\beta
        \frac{B(CZ)-\alpha}{1-\alpha}\right)(\gamma(DZ)\nu(D|Z))^{\beta}\right) 
    \notag\\
    &\leq
    \Exp_{\nu}\left((1-\alpha)^{-\beta}\left(1+\beta
        \frac{B(CZ)-\alpha}{1-\alpha}\right)\right)\max_{dz}(\gamma(dz)\nu(d|z))^{\beta}\notag\\
    &= \left((1-\alpha)^{-\beta}\left(1+\beta
        \frac{\Exp_{\nu}(B(CZ))-\alpha}{1-\alpha}\right)\right)
    \max_{dz}(\gamma(dz)\nu(d|z))^{\beta} \notag\\
    &\leq (1-\Exp_{\nu}(B(CZ)))^{-\beta}(\max_{dz}\gamma(d|z)\nu(d|z))^{\beta}\notag\\
    &\leq 1. 
  \end{align}
\end{proof}

Eq.~\ref{eq:wmaxprobbnd} defines a set of convex constraints on $B$.
Namely
\begin{equation}
  1-\Exp_{\nu}(B(CZ)) \geq \gamma(dz)\nu(d|z) \textrm{\ for all $dz$ and $\nu\in\cC$}.\label{eq:maxprobbnd_m}
\end{equation}
If we use $B$ to construct PEFs according to
Thm.~\ref{thm:pef_uniformbnd2}, a reasonable goal is to maximize $\bar
b = \Exp_{\rho}(B(CZ))$ subject to these constraints, where $\rho$ is
an estimate of the true distribution.  We remark that the asymptotic
gain rate witnessed by PEFs constructed according to
Thm.~\ref{thm:pef_uniformbnd2} from a given $B$ is at least
$-\log(1-\bar b)+\Exp_{\rho}\log(\gamma(DZ))$, see Thm.~\ref{thm:agrfree}.

To maximize $\bar b$, we can define $\tilde B=1-B$ and solve the
following problem:
\begin{equation}
  \begin{array}[b]{lll}
    \textrm{Minimize:}& \Exp_{\rho}(\tilde B(CZ))&\\
    \textrm{Subject to:}& \Exp_{\nu}(\tilde B(CZ)) \geq \gamma(dz)\nu(d|z) \textrm{\ for all $dz$ and $\nu\in\cC$}.
  \end{array}\label{eq:opt_bancal}
\end{equation}
Given an optimal $B$, one can use Eq.~\ref{eq:wapproxbeta}
to define PEFs, choosing parameters to optimize the log-prob rate
at $\rho$. If $F$ as constructed does not satisfy 
$F\geq 0$, we can replace $B$ by $\gamma' B$
and $\alpha$ by $\gamma' \alpha$ for an appropriate $\gamma'\in(0,1)$.
Alternatively, we can reduce the power $\beta$.

For general $\cC$, solving Eq.~\ref{eq:opt_bancal} may be difficult.
But suppose that the measurement settings distribution $\nu[Z]$ is
fixed, $\nu[Z]=\rho[Z]$ for all $\nu\in\cC$, and the conditional
distributions $\nu[C|Z]$ belong to a convex set $\cC_{C|Z}$ determined
by semidefinite constraints. Now
$\cC=\{\nu(C|Z)\rho(Z):\nu[C|Z]\in\cC_{C|Z}\}$, which is a special
case of the sets free for $Z$ defined in Sect.~\ref{subsec:freeforz}.
This is the standard situation for Bell configurations, in which case
Eq.~\ref{eq:opt_bancal} is related to the optimization problems
described in
Refs.~\cite{nieto:2014,bancal:2014,nieto-silleras:qc2016a}.  These
references define convex programs that determine the maximum available
min-entropy for one trial with distribution $\rho$. In fact, the
program given in Eq.~(8) of Ref.~\cite{bancal:2014} is related to the
dual of Eq.~\ref{eq:opt_bancal} when $\cC_{C|Z}$ is the set of quantum
realizable conditional probability distributions. To make the
relationship explicit and show that with the given assumptions,
Eq.~\ref{eq:opt_bancal} is effectively solvable, define $\hat B$ by
$\hat B(cz)=\tilde B(cz)\rho(z)$.  With explicit sums,
Eq.~\ref{eq:opt_bancal} becomes
\begin{equation}
  \begin{array}[b]{lll}
    \textrm{Minimize:}& \sum_{c'z'}\hat B(c'z')\rho(c'|z')&\\
    \textrm{Subject to:}& \sum_{c'z'}\hat B(c'z')\nu(c'|z') \geq \gamma(dz)\nu(d|z) \textrm{\ for all $dz$ and $\nu[C|Z]\in\cC_{C|Z}$}.
  \end{array}\label{eq:opt_bancalsums}
\end{equation}
Note that $\nu[d|z]$ is a linear function of $\nu[C|Z]$. Let $\hat
B_{\min}$ minimize the objective function of
Eq.~\ref{eq:opt_bancalsums} with minimum value $1-\bar b$.  Such a
solution exists since $\hat B(cz)=\max(\gamma)\rho(z)$ is feasible,
which implies that the minimum value of the objective function is less
than or equal to $\max(\gamma)$. Moreover, the constraints imply that
the minimum is positive.  By construction, $(\hat B_{\min},-1)$ is in
the cone $\cD^{*}$ dual to the closed convex cone $\cD$ generated by
$\cD_{1}=\{(\nu[C|Z],\gamma(dz)\nu(d|z)):\nu[C|Z]\in\cC_{C|Z},dz\in\Rng(DZ)\}$.
Further, $\vec x=(\rho(C|Z),(1-\bar b))$ defines a supporting
hyperplane of $\cD^{*}$ at $(\hat B_{\min},-1)$: Since $(\hat
B_{\min},-1)\cdot (\rho(C|Z),(1-\bar b))=0$, it suffices to check that
$\vec x$ is in the dual of $\cD^{*}$.  Non-zero elements of $\cD^{*}$
are positive linear combinations of elements of the form $(\hat B,-1)$
for some $\hat B$ and $(\sigma,a)$ for some $\sigma$ and $a\geq 0$.
The first satisfies that $\hat B$ is a feasible solution of
Eq.~\ref{eq:opt_bancalsums} and by definition of $\bar b$, $(\hat
B,-1)\cdot \vec x=\sum_{c'z'}\hat B(c'z')\rho(c'|z')-(1-\bar b)\geq
0$. For the second form, we have $(\sigma,a)\cdot \vec x =
\sum_{c'z'}\sigma(c'z')\rho(c'|z')+a(1-\bar b)$. For all $dz$,
$\sum_{c'z'}\sigma(c'z')\rho(c'|z')\geq -a\gamma(dz)\rho(d|z)$, and
since $(\hat B_{\min},-1)\in\cD^{*}$, $(1-\bar b)\geq
\gamma(dz)\rho(d|z)$.  Hence $(\sigma,a)\cdot\vec x \geq
-a\gamma(dz)\rho(d|z) + a\gamma(dz)\rho(d|z)=0$, in consideration of
$a\geq 0$.  The dual of $\cD^{*}$ is $\cD$ again, so $\vec x\in\cD$.
Because of probability normalization, $\cD_{1}$ is contained in
several hyperplanes not containing the origin, so $\cD$ is
pointed. The intersection of any of these hyperplanes with $\cD$ is
$\cD_{1}$. Since the set $\cD_{1}$ is closed and bounded, it meets
every extremal ray of $\cD$ where it intersects these hyperplanes (Thm
1.4.5 of Ref.~\cite{kadison:qf1997a}).  Consequently,
$(\rho[C|Z],(1-\bar b))$ is a convex combination of elements of
$\cD_{1}$. Since we are in finite dimensions we can apply
Carath\'eodory's theorem and express $\rho[C|Z]$ as a finite convex
combination $\sum_{dzk}\lambda_{dzk}\nu_{dzk}(c'|z')=\rho(c'|z')$
satisfying $\sum_{dzk}\lambda_{dzk}\gamma(dz)\nu_{dzk}(d|z)=(1-\bar
b)$ with $\nu_{dzk}\in\cC_{C|Z}$, $\sum_{dzk}\lambda_{dzk}=1$ and
$\lambda_{dzk}\geq 0$.  If we define
$\lambda_{dz}=\sum_{k}\lambda_{dzk}$ and
$\nu_{dz}=\sum_{k}\lambda_{dzk}\nu_{dzk}/\lambda_{dz}$ if
$\lambda_{dz}>0$ (otherwise, $\nu_{dz}$ can be any member of the set
$\cC_{C|Z}$), then $k$ can be eliminated in the convex combination.
By construction, $(1-\bar b)$ is the maximum value of
$\sum_{dz}\lambda_{dz}\gamma(dz)\nu_{dz}(d|z)$ for any family
$\nu_{dz}\in\cC_{C|Z}$ and $\lambda_{dz}\geq 0$ satisfying that for
all $c'z'$, $\sum_{dz}\lambda_{dz}\nu_{dz}(c'|z')=\rho(c'|z')$ and
$\sum_{dz}\lambda_{dz}=1$. We can write
$\tilde\nu_{dz}=\lambda_{dz}\nu_{dz}$ to absorb the coefficients
$\lambda_{dz}$.  With this, the value of the following problem is
$(1-\bar b)$:
\begin{equation}
  \begin{array}[b]{lll}
    \textrm{Maximize:}& \sum_{dz}\gamma(dz)\tilde\nu_{dz}(d|z)&\\
    \textrm{Subject to:}& \sum_{dz}\tilde\nu_{dz}(c'|z')=\rho(c'|z')&
    \textrm{for all $c'z'$},\\
    & \tilde\nu_{dz}\in[0,\infty)\cC_{C|Z}& \textrm{for all $dz$}.
  \end{array}\label{eq:optorig_bancal}
\end{equation}
The set $[0,\infty)\cC_{C|Z}$ is the cone generated by $\cC_{C|Z}$.
If $\cC_{C|Z}$ is characterized by a semidefinite program, then so
is $[0,\infty)\cC_{C|Z}$ (by eliminating inhomogenous constraints in
the semidefinite program's standard form, see
Ref.~\cite{boyd:qc2004a}, Eq. (4.51)).  Eq.~\ref{eq:optorig_bancal}
can then be cast as a semidefinite program also and solved
effectively. By semidefinite-programming duality, this can then be
used to obtain effective solutions of Eq.~\ref{eq:opt_bancal},
provided the semidefinite program is formulated to satisfy strong
duality. For related programs, Ref.~\cite{nieto:2014} claim strong
duality by exhibiting strictly feasible solutions. 

We conclude with remarks on the relationship between the
optimization problem in Eq.~\ref{eq:optorig_bancal} and that in
Eq.~(8) of Ref.~\cite{bancal:2014} (referenced as ``P8'' below). To
relate Eq.~\ref{eq:optorig_bancal} to P8, set $\gamma(dz)=1$ and let
$\cC_{C|Z}$ be the set of quantum achievable conditional probability
distributions, for which there is a hierarchy of
semidefinite-programming relaxations~\cite{navascues:2007}.  Then
identify both $c$ and $d$ here with $ab$ there (so $c=d$), $z$ here
with $xy$ there, and $\tilde\nu_{dz}(c'|z')$ here with
$\rho(z)P_{cz}(c'|z')$ there, where
$P_{cz}(c'|z')=\sum_{\mathbf{\alpha\beta}:\mathbf{\alpha}_{z}\mathbf{\beta}_{z}=c}
P_{\mathbf{\alpha\beta}}(c'|z')$.  The objective function of
Eq.~\ref{eq:optorig_bancal} now matches that of P8.  But unless there
is only one setting, the equality constraints in
Eq.~\ref{eq:optorig_bancal} are a proper subset of those of P8.
Observe that the equality constraints of P8 when expressed in terms of
the variables $P_{cz}(c'|z')$ require that for each $z$ and $c'z'$,
$\sum_{c}P_{cz}(c'|z')=\rho(c'|z')$. For this, note that P8 includes
the constraints $\sum_{\mathbf{\alpha\beta}}
P_{\mathbf{\alpha\beta}}(c'|z')=\rho(c'|z')$ for all $c'z'$.  For any
$z$, the left-hand side can be written as
$\sum_{c}\sum_{\mathbf{\alpha\beta}:\mathbf{\alpha}_{z}\mathbf{\beta}_{z}=c}
P_{\mathbf{\alpha\beta}}(c'|z')=\sum_{c}P_{cz}(c'|z')$.  These
identities imply that
$\sum_{cz}\rho(z)P_{cz}(c'|z')=\sum_{z}\rho(z)\rho(c'|z')=\rho(c'|z')$,
but are stronger when $|\Rng(Z)|>1$.  Furthermore, according to P8,
the $P_{cz}$ must be expressed as sums of the
$P_{\mathbf{\alpha}\mathbf{\beta}}\in\cC_{C|Z}$ as specified above,
and this implies additional constraints.  As a result the optimal
value for P8 is in general smaller.

\subsection{Convex Combination}

Optimization over all PEFs with a given power can be highly demanding,
because both the size of the range of $CZ$ and the number of extreme
points of $\cC$ in a general $(k,l,m)$ Bell-test configuration are
large. Furthermore, the size of the range of $CZ$ determines the
dimension of the search space, so if this size is large and the amount
of data available for making an estimate of the true distribution is
limited, there is a risk of overfitting when optimizing the log-prob
rate for the estimated distribution at a given power
$\beta$. See Eq.~\eqref{eq:opt_pef} of Sect.~\ref{sec:apps} for the
explicit formulation of the optimization problem.  
However, in many cases we have a small set of candidate PEFs expected
to be helpful in a given situation, where the set was obtained in
earlier studies of the experimental devices before running any
protocols.  Then, the
optimization problem can be greatly simplified by means of the next
theorem.

\begin{theorem}
  \label{thm:pef_convex_comb}
  Let $(F_{i})_{i=1}^{r}$ be a collection of PEFs with power $\beta$
  and define $F_{0}=1$. Then every weighted average
  $F=\sum_{i=0}^{r}\lambda_{i}F_{i}$ with $\lambda_{i}\geq 0$
  and $\sum_{i}\lambda_{i}=1$ is a PEF with power $\beta$.
\end{theorem}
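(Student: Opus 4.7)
The plan is to reduce the claim to linearity of expectation after verifying that each ingredient is a legitimate PEF. First, I would observe that $F = \sum_{i=0}^{r} \lambda_i F_i$ is non-negative because each $F_i$ is non-negative and the coefficients $\lambda_i$ are non-negative, and the inequality defining a PEF is linear in $F$.

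The only nontrivial point is checking that $F_0 \equiv 1$ is itself a PEF with power $\beta$, so that one can include it on equal footing in the convex combination. For this, note that for any $\nu\in\cC$ and any $cz$, we have $\nu(D(c)|z)\leq 1$, and since $\beta>0$ this gives $\nu(D|Z)^{\beta}\leq 1$ pointwise. Thus
\begin{equation}
\Exp_{\nu}(F_0(CZ)\nu(D|Z)^{\beta}) = \Exp_{\nu}(\nu(D|Z)^{\beta}) \leq \Exp_{\nu}(1) = 1,
\end{equation}
which is the PEF inequality.

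Having handled $F_0$, for any $\nu\in\cC$ linearity of expectation yields
\begin{equation}
\Exp_{\nu}(F(CZ)\nu(D|Z)^{\beta}) = \sum_{i=0}^{r} \lambda_i \Exp_{\nu}(F_i(CZ)\nu(D|Z)^{\beta}) \leq \sum_{i=0}^{r} \lambda_i = 1,
\end{equation}
using that each $F_i$ (including $F_0$) is a PEF with power $\beta$ for $\cC$. Since $\nu\in\cC$ was arbitrary, $F$ satisfies Def.~\ref{def:pef}. There is no real obstacle here; the content of the theorem is essentially the observation that the defining constraints of PEFs are linear in $F$, so the set of PEFs with power $\beta$ is a convex set, and that the constant function $1$ lies in this set so it can absorb any slack in the coefficients (i.e., the weights need not sum to~$1$ on the nontrivial $F_i$'s alone).
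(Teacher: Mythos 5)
Your proof is correct and follows essentially the same route as the paper's: check that $F_0=1$ is itself a PEF (via $\nu(D|Z)^{\beta}\leq 1$) and then use that the PEF constraints are linear in $F$, so convex combinations of PEFs remain PEFs. Your version simply spells out the linearity-of-expectation step that the paper states more tersely.
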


\begin{proof}
  It suffices to observe that $F_{0}=1$ is a PEF for all powers,
  and the set of PEFs with power $\beta$ is convex closed 
  since it is defined by a family of linear constraints.
\end{proof}

Optimizing PEFs with respect to the coefficients of weighted averages
is efficient and less susceptible to overfitting issues.  If $\rho$ is
estimated by empirical frequencies from past trials, one can evaluate
the objective function directly on the past data, in which case the
technique is not limited to configurations with computationally
manageable $|\Rng(CZ)|$.  This strategy was proposed for optimizing
test factors for rejecting LR in Ref.~\cite{zhang:2013} and used in
Refs.~\cite{knill:qc2014a,christensen:qc2015a}.

\section{Reducing Settings Entropy}
\label{sec:reduce_entropy}

\subsection{Sets of Distributions that are Free for $Z$}
\label{subsec:freeforz}    

In our applications, the sets of distributions $\cC$ are
determined by constraints on the $Z$-conditional distributions of $C$
and separate constraints on the marginal distributions of $Z$.  We
first formalize this class of sets. 
As defined in Sect.~\ref{sec:prelims}, for any
RV $X$, $\cS_{X}=\{\rho: \text{$\rho$ is a distribution of $X$}\}$
and $\cCvx(\cX)$ is the convex closure of $\cX$.

\begin{definition}
  Let $\cS_{C|Z}=\{(\rho_{C|z})_{z\in\Rng(Z)}: \text{for each $z$,
    $\rho_{C|z}$ is a distribution of $C$}\}$. 
  The set of distributions $\cC$ of $CZ$ is \emph{free for $Z$} if
  there are closed convex sets $\cC_{C|Z}\subseteq\cS_{C|Z}$ and
  $\cC_{Z}\subseteq\cS_{Z}$ such that
  \begin{equation}\label{eq:freeforZ}
    \cC=\cCvx\left(\{\nu\in\cS_{CZ}: (\nu[C|Z=z])_{z}\in\cC_{C|Z} \textrm{\ and\ } \nu[Z]\in\cC_{Z}\}\right).
  \end{equation}
\end{definition}

In the case of Bell tests, $\cC_{C|Z}$ consists of the set of
non-signaling distributions, possibly satisfying additional quantum
constraints.  If $Z$ is a settings choice RV with known distribution
$\nu$, then $\cC_{Z}=\{\nu\}$.  For $\rho_{C|Z} =
(\rho_{C|z})_{z}\in\cS_{C|Z}$, we define $\rho_{C|Z} \rtimes\nu$ by
$(\rho_{C|Z}\rtimes\nu)(cz) = \rho_{C|z}(c)\nu(z)$.  With this, $\cC$
in Eq.~\ref{eq:freeforZ} can be written as
$\cCvx(\cC_{C|Z}\rtimes\cC_{Z})$.  If $\cC_{Z}=\{\nu\}$, then
$\cC_{C|Z}\rtimes\cC_{Z}=\cC_{C|Z}\rtimes\nu$ is already convex
closed, so we can omit the convex-closure operation.

The extreme points of $\cC_{C|Z}\rtimes\nu$ consist of the set of
$\rho_{C|Z}\rtimes\nu$ with $\rho_{C|Z}$ extremal in $\cC_{C|Z}$.  In
general, we have:

\begin{lemma}\label{lm:cvxrtimes}
  $\xtrm{\cCvx(\cC_{C|Z}\rtimes\cC_{Z})}\subseteq\overline{\xtrm{\cC_{C|Z}}}\rtimes\overline{\xtrm{\cC_{Z}}}$
\end{lemma}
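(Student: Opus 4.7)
The plan is to combine Milman's partial converse to the Krein-Milman theorem with a direct extremality argument, both of which are simplified by the fact that the underlying event spaces are finite.

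First I would observe that $\cC_{C|Z}$ and $\cC_Z$ are compact, as they are closed subsets of products of simplices on finite value spaces.  Since the map $(\rho_{C|Z},\nu)\mapsto \rho_{C|Z}\rtimes\nu$ is bilinear (hence continuous), the image $\cC_{C|Z}\rtimes\cC_Z$ is the continuous image of a compact set and is therefore compact, in particular closed.

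Second, let $\mu\in\xtrm{\cCvx(\cC_{C|Z}\rtimes\cC_Z)}$.  By Milman's partial converse to Krein-Milman applied to the compact set $K=\cC_{C|Z}\rtimes\cC_Z$, every extreme point of $\cCvx(K)$ lies in $\overline{K}=K$.  Hence $\mu=\rho_{C|Z}\rtimes\nu$ for some $\rho_{C|Z}\in\cC_{C|Z}$ and $\nu\in\cC_Z$.

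Third, I would apply Krein-Milman together with Carath\'eodory's theorem in finite dimensions to write $\rho_{C|Z}$ as a finite convex combination $\sum_i\lambda_i\rho^{(i)}_{C|Z}$ with $\rho^{(i)}_{C|Z}\in\xtrm{\cC_{C|Z}}$ and $\lambda_i>0$.  Bilinearity gives $\mu=\sum_i\lambda_i(\rho^{(i)}_{C|Z}\rtimes\nu)$, and since each summand lies in $\cC_{C|Z}\rtimes\cC_Z\subseteq\cCvx(\cC_{C|Z}\rtimes\cC_Z)$, extremality of $\mu$ forces $\rho^{(i)}_{C|Z}\rtimes\nu=\mu$ for each $i$.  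Choosing any such $i$ replaces $\rho_{C|Z}$ by an extreme element.  Running the same argument on $\nu$ with the now-extreme first factor fixed shows $\mu$ equals $\rho^{(i)}_{C|Z}\rtimes\nu^{(j)}$ for some $\nu^{(j)}\in\xtrm{\cC_Z}$, which sits inside $\overline{\xtrm{\cC_{C|Z}}}\rtimes\overline{\xtrm{\cC_Z}}$ as required.

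The main obstacle is purely a bookkeeping one: one has to be sure that the decomposition applied to the first factor $\rho_{C|Z}$ yields products $\rho^{(i)}_{C|Z}\rtimes\nu$ that lie in the original non-convex set $\cC_{C|Z}\rtimes\cC_Z$ (so that extremality of $\mu$ in the convex closure can be invoked), and that the subsequent replacement of $\nu$ does not disturb the already-extreme first factor.  Both are automatic from bilinearity of $\rtimes$, so once the compactness/Milman step puts $\mu$ in $\cC_{C|Z}\rtimes\cC_Z$ rather than just in its closure, the rest is a direct extremality argument and there is no serious difficulty.
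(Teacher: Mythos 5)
Your proposal is correct. It shares the paper's two main ingredients --- compactness of $\cC_{C|Z}\rtimes\cC_{Z}$ via continuity of the bilinear map, and the Milman-type fact that a compact set contains the extreme points of its convex closure --- but organizes them differently. The paper never "localizes" an individual extreme point: it uses bilinearity to show that $\cCvx(\cC_{C|Z}\rtimes\cC_{Z})$ and $\cCvx\bigl(\overline{\xtrm{\cC_{C|Z}}}\rtimes\overline{\xtrm{\cC_{Z}}}\bigr)$ coincide as convex sets, and then applies the Milman/Kadison--Ringrose theorem once, to the closed bounded set $\overline{\xtrm{\cC_{C|Z}}}\rtimes\overline{\xtrm{\cC_{Z}}}$, which immediately gives the stated inclusion. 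You instead apply Milman to $\cC_{C|Z}\rtimes\cC_{Z}$ itself to write an extreme point as $\rho_{C|Z}\rtimes\nu$, and then refine each factor by a direct extremality argument: decompose $\rho_{C|Z}$ (and subsequently $\nu$) into finitely many extreme points via Minkowski/Carath\'eodory, use linearity of $\rtimes$ in each argument to exhibit $\mu$ as a convex combination inside $\cCvx(\cC_{C|Z}\rtimes\cC_{Z})$, and invoke extremality of $\mu$ to collapse the combination. This buys you a slightly sharper conclusion than the lemma as stated, namely $\xtrm{\cCvx(\cC_{C|Z}\rtimes\cC_{Z})}\subseteq\xtrm{\cC_{C|Z}}\rtimes\xtrm{\cC_{Z}}$ with no closures, which is consistent with (and complements) the paper's remark after the lemma that every member of $\xtrm{\cC_{C|Z}}\rtimes\xtrm{\cC_{Z}}$ is extremal; the paper's route is shorter and avoids the pointwise bookkeeping but only delivers the closed version. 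Your reliance on finite dimensionality (so that convex hulls of compact sets are compact and Carath\'eodory applies) matches the setting of the paper, so there is no gap.
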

Here, the overline on the right-hand side denotes topological closure.  

\begin{proof}
  $\cC_{C|Z}$ and $\cC_{Z}$ are bounded closed sets with finite
  dimension, hence compact. The operation $\rtimes$ is continuous and
  therefore maps bounded closed sets to bounded closed sets. So
  $\cC_{C|Z}\rtimes\cC_{Z}$ is bounded and closed, as is
  $\overline{\xtrm{\cC_{C|Z}}}\rtimes\overline{\xtrm{\cC_{Z}}}$, which
  is contained in $\cC_{C|Z}\rtimes\cC_{Z}$.  From the last property,
  $\cCvx(\cC_{C|Z}\rtimes\cC_{Z})\supseteq
  \cCvx\left(\overline{\xtrm{\cC_{C|Z}}}\rtimes\overline{\xtrm{\cC_{Z}}}\right)$.

  By bilinearity of $\rtimes$, we have
  $\cC_{C|Z}\rtimes\cC_{Z}\subseteq
  \cCvx(\overline{\xtrm{\cC_{C|Z}}}\rtimes\overline{\xtrm{\cC_{Z}}})$.
  Since convex closure is idempotent,
  $\cCvx(\cC_{C|Z}\rtimes\cC_{Z})\subseteq
  \cCvx(\overline{\xtrm{\cC_{C|Z}}}\rtimes\overline{\xtrm{\cC_{Z}}})$.
  We conclude that these two convex sets are identical.
  Every bounded closed set contains the extreme points of its convex
  closure (Thm 1.4.5 of
  Ref.~\cite{kadison:qf1997a}).  Accordingly,
  $\xtrm{\cCvx(\cC_{C|Z}\rtimes\cC_{Z})}\subseteq
  \overline{\xtrm{\cC_{C|Z}}}\rtimes\overline{\xtrm{\cC_{Z}}}$.
\end{proof}

We note that the members of $\xtrm{\cC_{C|Z}}\rtimes\xtrm{\cC_{Z}}$
are extremal. This can be seen as follows: Let the member 
$\rho_{C|Z} \rtimes\nu$ of $\xtrm{\cC_{C|Z}}\rtimes\xtrm{\cC_{Z}}$ be a finite
convex combination of members of $\cC_{C|Z}\rtimes\cC_{Z}$, written as $\rho_{C|Z}
\rtimes\nu=\sum_{i}\lambda_{i}\rho_{C|Z;i} \rtimes\nu_{i}$.  By
marginalizing to $Z$ and by extremality of $\nu$, it follows that
$\nu_{i}=\nu$ for all $i$.  Given this, and extremality of
$\rho_{C|Z}$, we can also conclude that $\rho_{C|Z;i}=\rho_{C|Z}$.
Thus the inclusion in Lem.~\ref{lm:cvxrtimes} becomes an equality if
we topologically close the left-hand side. 
In particular, if $\cC_{C|Z}$ and
$\cC_{Z}$ are polytopes, then the right-hand side is finite and
therefore $\cCvx(\cC_{C|Z}\rtimes\cC_{Z})$ is also a polytope with the
expected extreme points, a fact that we exploit for finding PEFs by
our methods.

We fix the convex set $\cC_{C|Z}\subseteq\cS_{C|Z}$ for the remainder
of this section.  For brevity, instead of referring to properties
``for $\cC=\cCvx(\cC_{C|Z}\rtimes\cC_{Z})$'', we just say ``for
$\cC_{Z}$'', provided $\cC_{C|Z}$ is clear from context.

\subsection{Gain Rates for Biased Settings}

For randomness expansion, it is desirable to minimize the entropy of
$\mu[Z]$, since this is the main contributor to the number of random
bits required as input to the protocol. The other contributor is the
seed required for the extractor, but good extractors already use
relatively little seed. We expect that reducing the entropy of
$\mu[Z]$ may reduce the asymptotic gain rate. However, in the
case where we have uniform bounds on the maximum probability as
required for Thm.~\ref{thm:pef_uniformbnd2}, we can show that the
reduction is limited.

\begin{theorem}\label{thm:agrfree}
  Let $\rho_{C|Z}\in\cC_{C|Z}$.  Let $\gamma$ be a positive
  function of $DZ$.  Let $B$ be a function of $CZ$ satisfying
  Eq.~\ref{eq:wmaxprobbnd} for $\cC_{Z}=\{\Unif_{Z}\}$.  Assume
  that $\bar b=\Exp_{\rho_{C|Z}\rtimes\Unif_{Z}}(B)>0$. Let $\nu$ be a
  distribution of $Z$ with $p_{\min}=\min_{z}\nu(z)>0$.  Then the
  asymptotic gain rate at $\rho_{C|Z}\rtimes\nu$ for $\cC_{Z}=\{\nu\}$
  is at least $-\log(1-\bar
  b)+\Exp_{\rho_{C|Z}\rtimes\nu}\log(\gamma(DZ))$.
\end{theorem}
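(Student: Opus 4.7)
My plan is to reduce the biased-settings situation to the already-understood uniform case by reweighting $B$ with the inverse of the settings density, and then apply Thm.~\ref{thm:pef_uniformbnd2} and take the power-to-zero limit.

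First, I would define $B'(cz) = B(cz)/(|\Rng(Z)|\,\nu(z))$, which is well-defined since $p_{\min} > 0$. For any $\sigma'\in\cC_{C|Z}\rtimes\{\nu\}$ with conditional $\sigma'[C|Z]\in\cC_{C|Z}$, set $\sigma_u=\sigma'[C|Z]\rtimes\Unif_Z$; the factor $\nu(z)$ cancels, yielding $\Exp_{\sigma'}(B'(CZ))=\Exp_{\sigma_u}(B(CZ))$, while $\sigma'(d|z)=\sigma_u(d|z)$. Since $\sigma_u \in \cC_{C|Z}\rtimes\{\Unif_Z\}$, the hypothesis on $B$ yields $1-\Exp_{\sigma'}(B'(CZ))\ge \max_{dz}\gamma(dz)\sigma'(d|z)$, so $B'$ satisfies Eq.~\ref{eq:wmaxprobbnd} on $\cC_{C|Z}\rtimes\{\nu\}$ with $\gamma$ unchanged. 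The same cancellation gives $\bar b' := \Exp_{\rho_{C|Z}\rtimes\nu}(B'(CZ))=\bar b$.

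Next, for $\alpha = \bar b$ and small $\beta > 0$, I would apply Thm.~\ref{thm:pef_uniformbnd2} with $B'$ in place of $B$ to obtain
$$F_\beta(CZ) = \gamma(DZ)^\beta(1-\bar b)^{-\beta}\left(1+\beta\frac{B'(CZ)-\bar b}{1-\bar b}\right),$$
which is a PEF with power $\beta$ for $\cC_{C|Z}\rtimes\{\nu\}$. Non-negativity follows for $\beta$ below a threshold depending on the bounded range of $B'$, which is finite because $|\Rng(CZ)|<\infty$ and $p_{\min}>0$.

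Finally, I would compute the log-prob rate of $F_\beta$ at $\rho_{C|Z}\rtimes\nu$:
$$\cO_{\rho_{C|Z}\rtimes\nu}(F_\beta;\beta)= \Exp\log\gamma(DZ)-\log(1-\bar b) + \frac{1}{\beta}\Exp\log\!\left(1+\beta\frac{B'(CZ)-\bar b}{1-\bar b}\right).$$
The expectation is a finite sum, and term-by-term $(1/\beta)\log(1+\beta x)\to x$, so the last term tends to $\Exp_{\rho_{C|Z}\rtimes\nu}[(B'(CZ)-\bar b)/(1-\bar b)] = 0$. Hence $\cO_{\rho_{C|Z}\rtimes\nu}(F_\beta;\beta)\to -\log(1-\bar b)+\Exp_{\rho_{C|Z}\rtimes\nu}\log\gamma(DZ)$, and by Def.~\ref{def:agr} this provides the claimed lower bound on the asymptotic gain rate. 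The main subtlety is to verify the clean cancellation that lets the bound on $B$ transfer to $B'$ with $\gamma$ carried over unchanged; everything else is a routine application of the PEF machinery already in hand.
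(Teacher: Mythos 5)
Your proposal is correct and follows essentially the same route as the paper's proof: reweight to $B'(CZ)=B(CZ)/(|\Rng(Z)|\nu(Z))$, observe that the expectation and the conditional max-probability bound transfer unchanged, apply Thm.~\ref{thm:pef_uniformbnd2} with $\alpha=\bar b$ to get the same PEF family $F_\beta$, and let $\beta\to 0_+$. The only difference is cosmetic: you take the limit of the residual term by pointwise convergence over the finite outcome space, whereas the paper uses an explicit second-order Taylor bound (which it later reuses quantitatively in Thm.~\ref{thm:frbeta_gain}); both are valid here.
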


\begin{proof}
  Let $q=1/|\Rng(Z)|$ and $w=-\min(0,\min(B))$.  Define
  $B'(CZ)=B(CZ)\Unif_{Z}(Z)/\nu(Z)=B(CZ)q/\nu(Z)$. For all
  $\sigma_{C|Z}\in\cS_{C|Z}$,
  \begin{align}
    \Exp_{\sigma_{C|Z}\rtimes \nu}(B') 
    &= \sum_{z} \Exp_{\sigma_{C|z}}(B'(Cz))\nu(z)\notag\\
    &= \sum_{z} \Exp_{\sigma_{C|z}}\left(B(Cz)q/\nu(z)\right)\nu(z)\notag\\
    &= \sum_{z} \Exp_{\sigma_{C|z}}(B(Cz))q\notag\\
    &=\Exp_{\sigma_{C|Z}\rtimes\Unif_{Z}}(B).\label{eq:thm:agrfree1}
  \end{align}
  By our assumptions on $B$, for all $\sigma_{C|Z}\in\cC_{C|Z}$,
  $1-\Exp_{\sigma_{C|Z}\rtimes\Unif_{Z}}B(CZ)\geq
  \max_{dz}\gamma(dz)\Prob_{\sigma_{C|Z}\rtimes\Unif_{Z}}(d|z)$, which is
  Eq.~\ref{eq:wmaxprobbnd}.  Since
  $\Prob_{\sigma_{C|Z}\rtimes\nu}(d|z)=\Prob_{\sigma_{C|Z}\rtimes\Unif_{Z}}(d|z)$,
  we can apply Eq.~\ref{eq:thm:agrfree1} to conclude that $B'$
  satisfies Eq.~\ref{eq:wmaxprobbnd} for $\cC_{Z}=\{\nu\}$.  According
  to Thm.~\ref{thm:pef_uniformbnd2},
  \begin{equation}
    F_{\beta}(CZ)=\gamma(DZ)^{\beta}(1-\bar b)^{-\beta}\left(1+\beta\frac{B'(CZ)-\bar b}{1-\bar b}\right)
  \end{equation}
  is a PEF with power $\beta$ for $\cC_{Z}=\{\nu\}$ provided that
  $F_{\beta}\geq 0$.  Note that Eq.~\ref{eq:wmaxprobbnd} implies $\bar
  b<1$.  Since $B'\ge -wq/p_{\min}$, for sufficiently small $\beta$,
  the condition $F_{\beta}\geq 0$ is satisfied. Specifically, we
  require that $\beta<1/a$, where we define $a = (wq/p_{\min}+\bar
  b)/(1-\bar b)$.  The asymptotic gain rate $g$ from the theorem
  statement satisfies
  \begin{equation}
    g \geq \limsup_{\beta\rightarrow 0_{+}} 
    \Exp_{\rho_{C|Z}\rtimes\nu}(\log(F_{\beta}))/\beta.\label{eq:gliminfbeta}
  \end{equation}
  The expression inside the limit is
  \begin{equation}
    \Exp_{\rho_{C|Z}\rtimes\nu}(\log(F_{\beta}))/\beta
    = -\log(1-\bar b) + \Exp_{\rho_{C|Z}\rtimes\nu}\log(\gamma(DZ)) +
    \frac{1}{\beta}\Exp_{\rho_{C|Z}\rtimes\nu}\log\left(1+\beta\frac{ B'(CZ)-\bar b}{1-\bar b}\right). \label{eq:log1mbarb}
  \end{equation}
  In general, given $\beta a<1$ and $x\geq-a$, we can
  approximate $\log(1+\beta x)$ from Taylor expansion around $x=0$ with a
  second-order remainder to get
  \begin{equation}
    \log(1+\beta x)\geq \beta x - \frac{1}{2(1-\beta a)^{2}}(\beta x)^{2},
    \label{eq:log1pbetax}
  \end{equation}
  since $|\frac{d^{2}}{dy^{2}}\log(1+y)|\leq 1/(1-\beta a)^{2}$ for
  $y\geq-\beta a$.  Let $X=(B'(CZ)-\bar b)/(1-\bar b)$. Then
  $\Exp_{\rho_{C|Z}\rtimes\nu}X = 0$. Let
  $v=\Exp_{\rho_{C|Z}\rtimes\nu}(X^{2})$, which is the variance of $X$.
  Substituting $X$ for $x$ in the inequality of
  Eq.~\ref{eq:log1pbetax} and applying expectations to both sides
  gives
  \begin{align}
    \Exp_{\rho_{C|Z}\rtimes\nu}\log\left(1+\beta X\right)
    &\geq -\frac{\beta^{2}v}{2(1-\beta a)^{2}}\notag\\
    &= -O(\beta^{2}),
    \label{eq:obeta2}
  \end{align}
  where the minus sign on the order notation emphasizes that the
  lower bound is negative.  The expression in Eq.~\ref{eq:log1mbarb}
  is therefore bounded below by $-\log(1-\bar b) +
  \Exp_{\rho_{C|Z}\rtimes\nu}\log(\gamma(DZ))-O(\beta)$
  which goes to $-\log(1-\bar b)+
  \Exp_{\rho_{C|Z}\rtimes\nu}\log(\gamma(DZ))$ as
  $\beta\rightarrow 0_{+}$.  In view of Eq.~\ref{eq:gliminfbeta},
  the theorem follows.  
\end{proof}

\subsection{Spot-Checking Settings Distributions}

To approach the asymptotic gain rate in Thm.~\ref{thm:agrfree}
requires small powers which negatively impact the net log-prob rate.
We analyze the effect on the net log-prob rate in the case where $\mu$
is a mixture of a deterministic distribution $\delta_{z_{0}}$ and
$\Unif_{Z}$.  This corresponds to using the same setting for most
trials and randomly choosing \emph{test} trials with uniform settings
distribution. This is referred to as a \emph{spot-checking}
strategy~\cite{miller_c:qc2014b}. Later we consider a protocol where
one random trial out of a block of $2^{k}$ trials has $Z$ uniformly
distributed.

To simplify the analysis, we take advantage of the fact that for
configurations such as those of Bell tests, we can hide the choice of
whether or not to apply a test trial from the devices. This
corresponds to appending a test bit $T$ to $Z$, where $T=1$ indicates
a test trial and $T=0$ indicates a fixed one, with $Z=z_{0}$.  The
set $\cC_{C|ZT}$ is obtained from $\cC_{C|Z}$ by constraining
$\mu[C|zt]$ to be independent of $t$ for each $z$ and
$(\mu[C|zt])_{z}\in\cC_{C|Z}$. For any $\nu_{C|Z}\in\cC_{C|Z}$ there is a
corresponding $\tilde\nu_{C|ZT}\in\cC_{C|ZT}$ defined by
$\tilde\nu_{C|zt}(c)=\nu_{C|z}(c)$ for all $c$, $z$ and $t$.  The map
$\nu_{C|Z}\mapsto \tilde\nu_{C|ZT}$ is a bijection.

Let $q=1/|\Rng(Z)|$ and $\rho_{C|Z}\in\cC_{C|Z}$.  Let $\nu_{r}$ be
the probability distribution of $ZT$ defined by $\nu_{r}(z1) = rq$ and
$\nu_{r}(z0)=(1-r)\delta_{z,z_{0}}$ for some value $z_{0}$ of
$Z$. Since we are analyzing the case where $r$ is small, we assume $r<
1/2$.  The entropy of the distribution $\nu_{r}$ is given by
$S(\nu_{r})=H(r)+r\log(1/q)$, where $H(r)=-r\log(r)-(1-r)\log(1-r)$.
Let $B$ be a function of $CZ$ satisfying Eq.~\ref{eq:maxprobbnd} for
$\cC_{Z}=\{\Unif_{Z}\}$ with $\bar b
=\Exp_{\rho_{C|Z}\rtimes\Unif_{Z}}(B)>0$ and $\bar v =
\Exp_{\rho_{C|Z}\rtimes\Unif_{Z}}((B-\bar b)^{2})$. 
From Eq.~\ref{eq:maxprobbnd}, it follows that
$\bar b<1$.  Define $B_{r}(CZT)$ by
\begin{align}
  B_{r}(CZ0) &= 0,\notag\\
  B_{r}(CZ1) &= B(CZ)\frac{1}{r}.
  \label{eq:def_B_r}
\end{align}
Let
\begin{equation}
  F_{r,\beta}(CZT) = (1-\bar b)^{-\beta}\left(1+\beta\frac{B_{r}(CZT)-\bar b}{1-\bar b}\right).
  \label{eq:pef_r_beta}
\end{equation}
Setting the function $B_{r}$ to zero when $\{T=0\}$ is convenient.
Because $F_{r,\beta}$ is related to the tangent line of
$(1-x)^{-\beta}$ at $x=\bar b$ and $B_{r}=0$ corresponds to $x=0$, the
expected value of $F_{r,\beta}$ is slightly below $1$ for non-test
trials.

\begin{theorem}\label{thm:frbeta_gain} There exist
  constants $d>0$, $d'\geq 0$ independent of $r$ such that for
  $\beta\leq dr$, $F_{r,\beta}$ as defined in Eq.~\ref{eq:pef_r_beta} 
  is a PEF with power $\beta$ for
  $\cC_{C|ZT}\rtimes\nu_{r}$, and its log-prob rate
  $g_{r,\beta}$ at $\tilde\rho_{C|ZT}\rtimes\nu_{r}$ is
  $g_{r,\beta}\geq -\log(1-\bar b)- d'\beta/r$.
\end{theorem}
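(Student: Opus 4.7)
The plan is to verify that $F_{r,\beta}$ fits the hypotheses of Thm.~\ref{thm:pef_uniformbnd} applied to the extended model $\cC_{C|ZT}\rtimes\nu_{r}$, then carry out a direct expectation calculation to extract the log-prob rate. First I would check that $B_{r}$ is a max-probability estimator for the enlarged model. For any $\sigma_{C|Z}\in\cC_{C|Z}$ with extension $\tilde\sigma_{C|ZT}$, a short calculation gives
\begin{equation}
\Exp_{\tilde\sigma_{C|ZT}\rtimes\nu_{r}}(B_{r}(CZT))
= \sum_{cz}(B(cz)/r)\sigma(c|z)\,rq
= \Exp_{\sigma_{C|Z}\rtimes\Unif_{Z}}(B),
\end{equation}
because the $t=0$ term vanishes by the definition in Eq.~\ref{eq:def_B_r}. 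Since $\max_{dzt}\tilde\sigma(d|zt)=\max_{dz}\sigma(d|z)$ and $B$ satisfies Eq.~\ref{eq:maxprobbnd} for $\cC_{Z}=\{\Unif_{Z}\}$, this transfers the max-probability bound to $B_{r}$ on the extended model.

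Next I would show non-negativity of $F_{r,\beta}$ for $\beta\leq dr$, choosing $d$ explicitly. Set $w=-\min(0,\min(B))$, so $B_{r}\geq -w/r$ on test trials and $B_{r}=0$ on fixed trials. The bound $1+\beta(B_{r}-\bar b)/(1-\bar b)\geq 0$ on the test-trial side reduces to $\beta(\bar b+w/r)\leq 1-\bar b$, which is implied by $\beta\leq dr$ with $d=(1-\bar b)/(w+\bar b)$ (using $r\le 1/2$). The fixed-trial side only requires $\beta\le (1-\bar b)/\bar b$, which is weaker. With non-negativity in hand, Thm.~\ref{thm:pef_uniformbnd} (taking $\alpha=\bar b$, $\lambda=1$, $B\leftarrow B_{r}$) gives the PEF claim.

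For the log-prob rate, write $X=(B_{r}(CZT)-\bar b)/(1-\bar b)$ so that
\begin{equation}
\frac{1}{\beta}\log F_{r,\beta}=-\log(1-\bar b)+\frac{1}{\beta}\log(1+\beta X).
\end{equation}
A split computation at $\tilde\rho_{C|ZT}\rtimes\nu_{r}$ gives $\Exp(X)=0$: the $t=0$ piece contributes $-\bar b(1-r)/(1-\bar b)$, and the $t=1$ piece contributes $+\bar b(1-r)/(1-\bar b)$ by the same manipulation as above. The lower bound $X\ge -(\bar b+w/r)/(1-\bar b)=:-C/r$ and the choice of $d$ ensure $\beta|X|$ is bounded away from $1$, so the quadratic Taylor bound $\log(1+\beta X)\geq \beta X-\tfrac{1}{2(1-dC)^{2}}(\beta X)^{2}$ applies. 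Taking expectations kills the linear term, leaving
\begin{equation}
\Exp(\log(1+\beta X))\geq -\frac{\beta^{2}}{2(1-dC)^{2}}\Exp(X^{2}).
\end{equation}

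The crucial step is bounding $\Exp(X^{2})$. The $t=0$ piece contributes $\bar b^{2}(1-r)/(1-\bar b)^{2}$, while the $t=1$ piece contributes $(1/((1-\bar b)^{2}r))\sum_{cz}(B(cz)-r\bar b)^{2}\rho(c|z)q$, which is at most $\Exp_{\rho\rtimes\Unif_{Z}}(B^{2})/((1-\bar b)^{2}r)$. Hence $\Exp(X^{2})\le D/r$ for a constant $D$ independent of $r$. Dividing by $\beta$ yields $g_{r,\beta}\geq -\log(1-\bar b)-d'\beta/r$ with $d'=D/(1-dC)^{2}$, as claimed. The main obstacle is precisely this $1/r$ scaling of the variance: it is what forces $\beta$ to be taken proportional to $r$ rather than only bounded, and one has to check that the constants $d,d'$ extracted depend only on $\bar b$ and $w$, not on $r$.
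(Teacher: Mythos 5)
Your proposal is correct and follows essentially the same route as the paper: transfer the maximum-probability bound from $B$ to $B_{r}$ on the extended model, invoke Thm.~\ref{thm:pef_uniformbnd} with $\alpha=\bar b$ after checking $F_{r,\beta}\geq 0$ for $\beta\leq dr$, and then lower-bound the log-prob rate via a second-order Taylor expansion of $\log(1+\beta X)$ using $\Exp(X)=0$ and a variance bound of order $1/r$ (your $\Exp_{\rho\rtimes\Unif_Z}(B^2)=\bar v+\bar b^2$ reproduces the paper's constant). The only loose end is the one you flag yourself: since $C=(r\bar b+w)/(1-\bar b)$ depends on $r$, you should bound it uniformly (e.g.\ $C\leq(\bar b+w)/(1-\bar b)$, or shrink $d$ as the paper does so that the remainder factor is at most $2$) so that $d'$ is genuinely independent of $r$.
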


In practical situations, we anticipate using numerical optimization to
determine the PEFs, which we expect to improve on the bounds
in the theorem.  However, we believe that the constants $d$ and $d'$
obtained in the proof are reasonable. They are given by
\begin{align}
  d &= \frac{1-\bar b}{2w+\bar b},\notag\\
  d' &=2\frac{\bar v + \bar b^{2}}{(1-\bar b)^{2}},
  \label{eq:variance_constants}
\end{align}
with $w=-\min(0,\min(B))$.

\begin{proof}
  The proof is a refinement of that of Thm.~\ref{thm:agrfree}.  For
  any $\sigma_{C|Z}\in\cS_{C|Z}$, consider the expectation of $B_{r}$
  with respect to $\tilde\sigma_{C|ZT}\rtimes\nu_{r}$.
  \begin{align}
    \Exp_{\tilde\sigma_{C|ZT}\rtimes\nu_{r}}(B_{r}) &=
    \sum_{z}\Exp_{\tilde\sigma_{C|z0}}B_{r}(Cz0)\nu_{r}(z0) +
    \sum_{z}\Exp_{\tilde\sigma_{C|z1}}B_{r}(Cz1)\nu_{r}(z1)
    \notag\\
    &= \sum_{z}\Exp_{\sigma_{C|z}}B_{r}(Cz1)qr
    \notag\\
    &= 
    \sum_{z}\Exp_{\sigma_{C|z}}B(Cz)q
    \notag\\
    &= \Exp_{\sigma_{C|Z}\rtimes\Unif_{Z}}B(CZ).\label{eq:thm:frbeta_gain1}
  \end{align}
  Since $\Prob_{\tilde\sigma_{C|ZT}\rtimes\nu_{r}}(d|zt)
  =\Prob_{\sigma_{C|Z}\rtimes\Unif_{Z}}(d|z)$ and $B$ satisfies
  Eq.~\ref{eq:maxprobbnd} for $\cC_{Z}=\{\Unif_{Z}\}$, so does $B_{r}$
  but for $\cC_{ZT}=\{\nu_{r}\}$.  Thus Thm.~\ref{thm:pef_uniformbnd}
  applies, and we have that $F_{r,\beta}$ as defined in
  Eq.~\eqref{eq:pef_r_beta} is a PEF with power $\beta$ for
  $\cC_{C|ZT}\rtimes\nu_{r}$, provided $\beta$ is small enough.  From 
  Eq.~\eqref{eq:def_B_r} and the fact $\min(B)\geq -w$, we
  have that $B_{r}\ge -w/r$.  From the proof of
  Thm.~\ref{thm:agrfree}, we can replace $wq/p_{\min}$ by $w/r$
  in the expression for $a$ there to see that it suffices to satisfy
  $\beta < 1/a = (1-\bar b)/(w/r+\bar b)$ in order to make sure
  $F_{r,\beta}\geq 0$. The upper bound can be
  estimated as
  \begin{equation}
    1/a=\frac{1-\bar b}{w/r+\bar b} = r\frac{1-\bar b}{w+\bar b r} 
    > r\frac{1-\bar b}{w+\bar b/2}
    \label{eq:thm:gainbnd:1acalc}
  \end{equation}
  given our assumption that $r<1/2$ and $\bar b>0$.  With foresight,
  we set $d=(1-\bar b)/(2w+\bar b)<1/(2ra)$.  With $\beta\leq dr$,
  this implies $F_{r,\beta}>0$ and $\beta a <1/2$.

  Next we lower bound
  $\Exp_{\tilde\rho_{C|ZT}\rtimes\nu_{r}}(\log(F_{r,\beta}))/\beta$ by
  the same strategy that we used in the proof of
  Thm.~\ref{thm:agrfree}.  The result is
  \begin{equation}
    g_{r,\beta}=\Exp_{\tilde\rho_{C|ZT}\rtimes \nu_{r}}(\log(F_{r,\beta}))/\beta
    \geq -\log(1-\bar b) - \frac{\beta v_{r}}{2(1-\beta a)^{2}},
    \label{eq:thm:gainbnd1}
  \end{equation}
  where here we define
  $v_{r}=\Exp_{\tilde\rho_{C|ZT}\rtimes\nu_{r}}X_{r}^{2}$ with
  $X_{r}=(B_{r}-\bar b)/(1-\bar b)$. Note that from
  Eq.~\ref{eq:thm:frbeta_gain1} and the definition of $\bar b$,
  $\Exp_{\tilde\rho_{C|ZT}\rtimes\nu_{r}}X_{r}=0$. An explicit expression 
  for $v_r$ is obtained as follows:
  \begin{align}
    v_{r} &= \Exp_{\tilde\rho_{C|ZT}\rtimes\nu_{r}}\left(X_{r}(CZT)^{2}\right) \notag\\
    &= \sum_{zt}\Exp_{\tilde\rho_{C|zt}}\left(X_{r}(Czt)^{2}\right)\nu_{r}(zt) \notag\\
    &= \sum_{zt}\Exp_{\rho_{C|z}}\left(X_{r}(Czt)^{2}\right)\nu_{r}(zt) \notag\\
    &= \sum_{z}\Exp_{\rho_{C|z}}\left(X_{r}(Cz0)^{2}\right)\nu_{r}(z0) 
    +\sum_{z}\Exp_{\rho_{C|z}}\left(X_{r}(Cz1)^{2}\right)\nu_{r}(z1)  \notag\\
    &= \Exp_{\rho_{C|z_{0}}}\left(X_{r}(Cz_{0}0)^{2}\right)(1-r)
    +\sum_{z}\Exp_{\rho_{C|z}}\left(X_{r}(Cz1)^{2}\right)qr \notag\\ 
    &= \frac{\bar b^{2}}{(1-\bar b)^{2}}(1-r)
    +\Exp_{\rho_{C|Z}\rtimes \Unif_{Z}}\left(X_{r}(CZ1)^{2}\right)r.\label{eq:thm:frbeta_gain2}
  \end{align}
  For the second term in the last line, compute
  \begin{align}
    \Exp_{\rho_{C|Z}\rtimes \Unif_{Z}}\left(X_{r}(CZ1)^{2}\right)
    &=\frac{1}{(1-\bar
      b)^{2}}\Exp_{\rho_{C|Z}\rtimes\Unif_{Z}}\left((B_{r}(CZ1)-\bar
      b)^{2}\right)
    \notag\\
    &= \frac{1}{(1-\bar b)^{2}} \Exp_{\rho_{C|Z}\rtimes\Unif_{Z}}
    \left((B(CZ)/r -\bar b)^{2}\right)
    \notag\\
    &= \frac{1}{(1-\bar b)^{2}}\left(
      \Exp_{\rho_{C|Z}\rtimes\Unif_{Z}}\left((B(CZ)/r-\bar b/r)^{2}\right) +
      (\bar b/r-\bar b)^{2}\right)\notag\\
    &= \frac{1}{(1-\bar b)^{2}}\left(\bar v/r^{2}+\bar b^{2}(1/r-1)^{2}\right)\notag\\
    &= \frac{\bar v+\bar b^{2}(1-r)^{2}}{r^{2}(1-\bar b)^{2}},
  \end{align}
  where for the third line we applied the identity
  $\Exp(U^{2})=\Exp((U-\Exp(U))^{2})+\Exp(U)^{2}$ relating the
  expectation of the square to the variance.  Combining this with
  Eq.~\ref{eq:thm:frbeta_gain2} gives
  \begin{equation}
    v_{r}r = \frac{\bar v+((1-r)^{2}+r(1-r))\bar b^{2}}{(1-\bar b)^{2}}
    = \frac{\bar v + (1-r)\bar b^{2}}{(1-\bar b)^{2}} 
    \leq \frac{\bar v +  \bar b^{2}}{(1-\bar b)^{2}}.
  \end{equation}
  Write $c$ for the right-hand side of this inequality, which is
  independent of $r$.  Substituting in Eq.~\ref{eq:thm:gainbnd1}, we
  get
  \begin{align}
    \Exp_{\tilde{\rho}_{C|ZT}\rtimes\nu_{r}}(\log(F_{r,\beta}))/\beta 
    &\geq  -\log(1-\bar b)-\frac{\beta c/r}{2(1-\beta a)^{2}}\notag\\
    &\geq -\log(1-\bar b)-2\beta c/r,
  \end{align}
  because our earlier choice for $d$ implies $\beta a<1/2$.  We now
  set $d'=2c$ to complete the proof of the theorem.  
\end{proof}   

From the proof above, we can extract the variance of $F_{r,\beta}$ at
$\tilde\rho_{C|ZT}\rtimes\nu_{r}$, from which we can obtain a bound on
the variance of $\log(F_{r,\beta})$. Since we refer to it below when
discussing the statistical performance of exponential randomness
expansion, we record the result here.

\begin{lemma}\label{lem:var_frb}
  The variance $v_{r,\beta}$ of $\log(F_{r,\beta})$ with $\beta\leq
  dr$ at $\tilde\rho_{C|ZT}\rtimes\nu_{r}$ satisfies
  \begin{equation}
    v_{r,\beta}\leq 4\log(2)^{2}\frac{\beta^{2}}{r}\frac{\bar v+\bar b^{2}}{(1-\bar b)^{2}}.
  \end{equation}
\end{lemma}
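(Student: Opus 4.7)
The plan is to reduce the variance computation directly to the second moment of $X_r = (B_r - \bar b)/(1-\bar b)$, which was already bounded in the proof of Thm.~\ref{thm:frbeta_gain}.

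First I would observe that $\log(F_{r,\beta}) = -\beta \log(1-\bar b) + \log(1+\beta X_r)$, so the constant $-\beta\log(1-\bar b)$ drops out under the variance and one has $v_{r,\beta} = \Var(\log(1+\beta X_r)) \le \Exp[(\log(1+\beta X_r))^2]$. The goal therefore reduces to bounding $\Exp[(\log(1+\beta X_r))^2]$ by $4\log(2)^2 \beta^2 \Exp[X_r^2]$ and then invoking the bound
\begin{equation*}
\Exp_{\tilde\rho_{C|ZT}\rtimes\nu_r}\!\left(X_r^2\right) = v_r \le \frac{1}{r}\,\frac{\bar v + \bar b^2}{(1-\bar b)^2},
\end{equation*}
which was established in the proof of Thm.~\ref{thm:frbeta_gain} (see the displayed equation just before Eq.~\ref{eq:variance_constants}).

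Next I would record the elementary pointwise inequality: for every $y \ge -1/2$,
\begin{equation*}
|\log(1+y)| \le 2\log(2)\,|y|.
\end{equation*}
For $y \ge 0$ this follows from $\log(1+y) \le y \le 2\log(2)\,y$. For $y \in [-1/2,0)$, writing $y' = -y \in (0,1/2]$, the function $y' \mapsto -\log(1-y')/y'$ is increasing and attains $2\log(2)$ at $y'=1/2$, so $|\log(1+y)| = -\log(1-y') \le 2\log(2)\,y' = 2\log(2)|y|$. The two endpoints $y=0$ and $y=-1/2$ are exactly the equality cases, which gives confidence that the constant $4\log(2)^2$ is tight for this argument.

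The hypothesis $\beta \le dr$ with $d = (1-\bar b)/(2w+\bar b)$ is what makes this pointwise bound applicable: the proof of Thm.~\ref{thm:frbeta_gain} (see Eq.~\ref{eq:thm:gainbnd:1acalc} and the paragraph after it) shows $\beta a < 1/2$, so $\beta X_r \ge -\beta a \ge -1/2$ almost surely. Applying the inequality with $y = \beta X_r$ and squaring yields $(\log(1+\beta X_r))^2 \le 4\log(2)^2 \beta^2 X_r^2$ pointwise; taking expectations and inserting the bound on $v_r$ gives
\begin{equation*}
v_{r,\beta} \le 4\log(2)^2 \beta^2 v_r \le 4\log(2)^2 \frac{\beta^2}{r}\,\frac{\bar v + \bar b^2}{(1-\bar b)^2},
\end{equation*}
which is the claim. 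There is no real obstacle here; the only non-routine step is the sharp constant $2\log(2)$ in the pointwise inequality, but this is a short one-variable calculus check rather than a genuine difficulty.
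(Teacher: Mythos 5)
Your proposal is correct and follows essentially the same route as the paper: the paper phrases the key step as a general bound $\Var(\log U)\leq 4\log(2)^{2}\Var(U)/\bar u^{2}$ for a positive RV with $U>\bar u/2$, applied to $U=F_{r,\beta}$ with $\bar u=(1-\bar b)^{-\beta}$, which after writing $U/\bar u=1+\beta X_{r}$ is exactly your pointwise inequality $|\log(1+y)|\leq 2\log(2)|y|$ for $y\geq -1/2$ combined with the second-moment bound and the estimate $v_{r}r\leq(\bar v+\bar b^{2})/(1-\bar b)^{2}$ from the proof of Thm.~\ref{thm:frbeta_gain}. The justification that $\beta a<1/2$ (so $\beta X_{r}\geq -1/2$) under $\beta\leq dr$ matches the paper's use of the same fact, so there is no gap.
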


\begin{proof}
  The variance of $F_{r,\beta}$ at $\tilde\rho_{C|ZT}\rtimes\nu_{r}$
  can be determined from $v_{r}$ in the proof of
  Thm.~\ref{thm:frbeta_gain}. It satisfies
  \begin{align}
    \Var(F_{r,\beta})&=(1-\bar b)^{-2\beta}\beta^{2}v_{r}\notag\\
    &\leq \frac{\beta^{2}}{r}\frac{\bar v+\bar b^{2}}{(1-\bar b)^{2(1+\beta)}}.
  \end{align}
  By construction of $F_{r,\beta}$, the mean of $F_{r,\beta}$ is
  $(1-\bar b)^{-\beta}$, and $F_{r,\beta}> (1-\bar b)^{-\beta}/2$.
  The latter follows from $\beta a<1/2$, noted where we defined
  $d$ after Eq.~\ref{eq:thm:gainbnd:1acalc}.  In
  general, for a positive RV $U$ with mean $\bar u$ and $U>\bar u/2$,
  \begin{align}
    \Var(\log(U)) &\leq \Exp\left((\log(U)-\log(\bar u))^{2}\right)\notag\\
    &=\Exp\left((\log(U/\bar u))^{2}\right)\notag\\
    &=\Exp\left((\log(1+(U-\bar u)/\bar u))^{2}\right)\notag\\
    &\leq 4\log(2)^{2}
    \Exp\left( ((U-\bar u)/\bar u)^{2}\right)\notag\\
    &= 4\log(2)^{2}\Var(U)/(\bar u)^{2}.
  \end{align}
  Here, the first step follows from $\Var(Y)\leq
  \Exp((Y-y_{0})^{2})$ for any $y_{0}$, where we substituted
  $\log(U)$ and $\log(\bar u)$ for $Y$ and $y_{0}$. In the
  second-last step, we used $|\log(1+y)|\leq 2\log(2)|y|$ for
  $y\geq -1/2$, substituting $(U-\bar u)/\bar u$ for $y$.
  Setting $U=F_{r,\beta}$ and $\bar u=\Exp(F_{r,\beta})=(1-\bar b)^{-\beta}$, 
  we get the inequality in the lemma.
\end{proof}

\subsection{Exponential Randomness Expansion}

For PEFs satisfying the conditions for Thm.~\ref{thm:frbeta_gain},
it is possible to achieve exponential expansion of the entropy of $Z$.

\begin{theorem}\label{thm:expexp}
  Let $F_{r,\beta}$ be the family of PEFs from
  Thm.~\ref{thm:frbeta_gain}.  Consider $n$ trials with model
  $\cH(\cC_{C|ZT}\rtimes\nu_{r})$ and a fixed error bound $\epse
  $. Consider $\rho_{C|Z}\in\cC_{C|Z}$ and let
  $g_{r,\beta}$ be the log-prob rate of $F_{r,\beta}$ at
  $\tilde\rho_{C|ZT}\rtimes\nu_{r}$, and assume that the
  trials are i.i.d.~for $CZT$ with this distribution.  We can choose
  $r=r_{n}$ and $\beta=\beta_{n}$ as functions of $n$ so that the
  expected net log-prob is $g_{\mathrm{net}}= n
  g_{r_{n},\beta_{n}}-\log(1/\epse )/\beta_{n} \geq n(-\log(1-\bar
    b))/3 = e^{\Omega(nS(\nu_{r_{n}}))}$, where $nS(\nu_{r_{n}})$ is
  the total input entropy of $\Sfnt{Z}\Sfnt{T}$.
\end{theorem}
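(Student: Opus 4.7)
The plan is to apply Theorem~\ref{thm:frbeta_gain} with $r_n$ scaling like $c/n$, which makes the input entropy logarithmic in $n$ while keeping the expected log-prob rate close to $L := -\log(1-\bar b) > 0$. Concretely, first I set $\alpha = \min(d,\, L/(3d'))$ so that $d'\alpha \le L/3$ and $\alpha \le d$; then fix a constant $c \ge \max\bigl\{1,\, 3\log(1/\epse)/(L\alpha)\bigr\}$; finally set $r_n = c/n$ and $\beta_n = \alpha c/n$. For $n$ sufficiently large, $r_n < 1/2$ and $\beta_n = \alpha r_n \le d r_n$, so the hypotheses of Theorem~\ref{thm:frbeta_gain} are met, and that theorem gives $g_{r_n,\beta_n} \ge L - d'\beta_n/r_n = L - d'\alpha \ge 2L/3$.

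Plugging these choices into the definition of $g_{\mathrm{net}}$ yields
\begin{equation}
g_{\mathrm{net}} = n g_{r_n,\beta_n} - \frac{\log(1/\epse)}{\beta_n}
\ge \frac{2nL}{3} - \frac{n\log(1/\epse)}{\alpha c}
\ge \frac{2nL}{3} - \frac{nL}{3} = \frac{nL}{3},
\end{equation}
where the second inequality uses the lower bound on $c$. For the input entropy, $nS(\nu_{r_n}) = nH(r_n) + nr_n\log(1/q)$ with $q = 1/|\Rng(Z)|$. Using $H(x) = -x\log x + x + O(x^2)$ as $x\to 0$, a routine expansion gives $nH(c/n) = c\log n - c\log c + c + O(c^2/n)$, so that $nS(\nu_{r_n}) = c\log n + O(1)$. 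Hence $e^{\gamma n S(\nu_{r_n})} = e^{O(1)}\, n^{\gamma c}$, and taking any $\gamma < 1/c$ makes this quantity eventually dominated by $nL/3$. That establishes $g_{\mathrm{net}} \ge nL/3 \ge e^{\Omega(n S(\nu_{r_n}))}$, as claimed.

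The main subtlety is bookkeeping of constants: $\alpha, c, d, d'$ all depend on $\bar b$, $\bar v$, $w$, and $\epse$ through Theorem~\ref{thm:frbeta_gain}, and one must check that $c$ can be chosen independently of $n$ (it can, because the constraint $c \ge 3\log(1/\epse)/(L\alpha)$ involves only the fixed error bound and the PEF parameters). No concentration argument is required, since the claim concerns the expectation only; the i.i.d.~hypothesis is used solely to identify the expectation of the total log-prob $\sum_{i=1}^{n}\log F_{r_n,\beta_n}(C_{i}Z_{i}T_{i})/\beta_n$ with $n g_{r_n,\beta_n}$, after which the subtraction of the error-bound term $\log(1/\epse)/\beta_n$ yields $g_{\mathrm{net}}$.
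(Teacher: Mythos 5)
Your proposal is correct and follows essentially the same route as the paper: take $r_n\propto 1/n$ and $\beta_n$ a fixed multiple ($\le d$) of $r_n$, pick that multiple to cap the $d'\beta/r$ loss at a third of $-\log(1-\bar b)$ and the constant in $r_n$ to cap the error-bound penalty at another third, yielding $g_{\mathrm{net}}\ge n(-\log(1-\bar b))/3$, then observe the settings entropy is only logarithmic in $n$. The only cosmetic difference is the last step, where the paper uses the bound $S(\nu_r)\le -2r\log r$ to write $n\ge c'e^{nS(\nu_{r_n})/(2c')}$ explicitly, whereas you Taylor-expand $nS(\nu_{r_n})=c\log n+O(1)$ and compare growth rates; both give the claimed $e^{\Omega(nS(\nu_{r_n}))}$.
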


The performance in the theorem statement may be compared to that
in Ref.~\cite{miller_c:qc2014b}, where the settings entropy for $n$
trials is $\Omega(\log^{3}(n))$ (see the end of Sect. 1.1 of the
reference).  With our terminology, this implies a net log-prob
$e^{O(S^{1/3})}$ where $S$ is the total input entropy, which is
subexponential in input entropy. It is also possible to match the
soundness performance of the ``one-shot'' protocol of
Ref.~\cite{miller_c:qc2014a}, Cor.~1.2, by modifying the proof below
with $r_{n}$ and $\beta_{n}$ proportional to $n^{-\omega}$ with
$0<\omega< 1$ instead of $\omega=1$. However, the referenced
results are valid for quantum side information.

The proof of Thm.~\ref{thm:expexp} given below computes explicit
constants for the orders given in the theorem statement. The
constants depend on the PEF construction in the previous section,
and the performance can always be improved by direct PEF
optimization. As a result we do not expect the constants to be
needed in practice.

The i.i.d.~assumption is a completeness requirement that we expect
to hold approximately if the devices perform as designed.  Explicit
constants expressed in terms of those in Thm.~\ref{thm:frbeta_gain}
are given with Eq.~\ref{eq:thm:expexp_explicit} below.  Notably, the
expected number of test trials is independent of $n$ and proportional
to $\log(1/\epse)$, which is what one would hope for in the absence of
special strategies allowing feed-forward of generated randomness from
independent devices, an example of which is the cross-feeding protocol
of Ref.~\cite{chung:2014}. However, since non-test trials contribute
negatively to the log-prob, this means that the full gain and the
error bound are witnessed by a constant number of test trials.  One
should therefore expect a large instance-to-instance variation in the
actual log-prob obtained, raising the question of whether the success
probability, or the expected number of trials before reaching a
log-prob goal, are sufficiently well behaved. We consider this
question after the proof.

\begin{proof}
  Let $d$ and $d'$ be the constants in Thm.~\ref{thm:frbeta_gain}.
  From this theorem, the expected net log-prob is bounded by
  \begin{align}
    g_{\mathrm{net}} &= n g_{r_{n},\beta_{n}}-\frac{\log(1/\epse)}{\beta_{n}}\notag\\
    &\geq n\left(-\log(1-\bar b)-\frac{d'\beta_{n}}{r_{n}}\right)-\frac{\log(1/\epse)}{\beta_{n}},
    \label{eq:thm:expexp_gtotal}
  \end{align}
  with $\bar b$ as defined for Thm.~\ref{thm:frbeta_gain}.  The input
  entropy per trial $S(\nu_{r})=H(r)+r\log(1/q)$ is bounded above by
  $-2r\log(r)$, provided we take $r\leq q/e$.  For this, note that
  $r\log(1/q)\leq r\log(1/(re))=-r\log(r)-r$ and $-(1-r)\log(1-r)\leq
  r$ since the function $x\mapsto -(1-x)\log(1-x)$ is concave for
  $0\leq x<1$, is equal to $0$ at $x=1$, and has slope $1$ at $x=0$.  If we choose
  $r_{n}=c'/n$ with $c'$ to be determined later, the expected number
  of test trials is $c'$ and the total input entropy satisfies
  $nS(\nu_{r_{n}}) \leq 2c'\log(n/c')$, or equivalently $n\geq
  c'e^{nS(\nu_{r_{n}})/(2c')}$.  For sufficiently large $n$,
  $r\leq q/e$ is satisfied.  If we then choose $\beta_{n} = c r_{n}= c
  c'/n$ with $c \leq d$ such that $g_{\mathrm{net}}$ grows linearly
  with $n$, we have accomplished our goal.  Write $g_{0}=-\log(1-\bar
  b)$.  Substituting the expressions for $\beta_{n}$ and $r_{n}$ in
  Eq.~\ref{eq:thm:expexp_gtotal},
  \begin{align}
    g_{\mathrm{net}}
    &\geq n g_{0}\left( 1-\frac{d' c}{g_{0}} - \frac{\log(1/\epse)}{cc' g_{0}}\right).
    \label{eq:thm:expexp_lowerbnd}
  \end{align}
  We first set $c=\min(d,g_{0}/(3d'))$, which ensures that
  $d'c/g_{0}\leq 1/3$. We then set $c'=3\log(1/\epse)/(cg_{0})$.  This
  gives the inequality
  \begin{equation}
    g_{\mathrm{net}} \geq n g_{0}/3 \geq (c'g_{0}/3)e^{nS(\nu_{r_{n}})/(2c')},
    \label{eq:thm:expexp_explicit}
  \end{equation}
  which implies the theorem.
\end{proof}

According to Lem.~\ref{lem:var_frb}, the variance at $\tilde\rho_{C|ZT}\rtimes\nu_{r}$  of the
quantity $n\log(F_{r_{n},\beta_{n}})/\beta$ whose expectation is the
expected log-prob for the parameters chosen in the proof above is given by
\begin{align}
  v=n\frac{v_{r_{n},\beta_{n}}}{\beta_{n}^{2}}
  &\leq 4\log(2)^{2}\frac{n}{r_{n}}\frac{\bar v+\bar b^{2}}{(1-\bar b)^{2}}\notag\\
  &= 4\log(2)^{2}\frac{n^{2}}{c'}\frac{\bar v+\bar b^{2}}{(1-\bar b)^{2}}\notag\\
  &=2\log(2)^{2}\frac{n^{2}d'}{c'}.\label{eq:log-prob_variance}
\end{align}
Here, to get the last line we used Eq.~\ref{eq:variance_constants}.
In view of Eq.~\ref{eq:thm:expexp_lowerbnd}, the expected log-prob is
at least $2ng_{0}/3$, with up to $1/2$ of that taken away by the
error-bound requirement. Here we are interested in the distribution of
the net log-prob, which depends on the actual trial results.  The
amount that is subtracted from the log-prob to get the net log-prob
for the error bound is independent of the trial results and given by
$f=n\log(1/\epse)/(cc')$.  By increasing $c'$ by a constant factor
while holding $c$ fixed, we can reduce $f$ and still achieve
exponential expansion, albeit with a smaller constant in the exponent.
The first two terms on the right-hand side of
Eq.~\ref{eq:thm:expexp_lowerbnd} are independent of $c'$ and provide a
lower bound on the expected log-prob.  These considerations imply that
if the variance is small enough compared to $(ng_{0}/6)^{2}$ (for
example), with correspondingly high probability the net log-prob
exceeds $ng_{0}/6$.  Small variance is implied by large $c'$, because
according to Eq.~\ref{eq:log-prob_variance}, the variance of the
log-prob is reduced if $c'$ is increased.  To be specific, let
$\kappa\geq\log(1/\epse)$. We can set
\begin{align}
  c' &= \frac{3\kappa}{c g_{0}}\notag\\
  &=\frac{3\kappa}{g_{0}\min(d,g_{0}/(3d'))}\notag\\
  &=\frac{ 9 d'\kappa}{g_{0}^{2}\min(3dd'/g_{0},1)},
\end{align}
from which
\begin{equation}
  v\leq 2\log(2)^{2}\frac{n^{2}g_{0}^{2}\min(3dd'/g_{0},1)}{9\kappa}
  \leq 2\log(2)^{2}\frac{n^{2}g_{0}^{2}}{9\kappa}.\label{eq:vkappa_bound}
\end{equation}
By Chebyshev's inequality, a conservative upper bound on the
probability that
$n\log(F_{r_{n},\beta_{n}})/\beta<2ng_{0}/3-\sqrt{v}/\lambda$ is
$\lambda^{2}$. Since at most $ng_{0}/3$ is subtracted to obtain the
net log-prob, if we require that $\sqrt{v}/\lambda\leq ng_{0}/6$, then
with probability at least $1-\lambda^{2}$ we get a net log-prob of at
least $ng_{0}/6$, which is sufficient for exponential expansion.
From Eq.~\ref{eq:vkappa_bound}, we can achieve this if we
have
\begin{equation}
  2\log(2)^{2}\frac{n^{2}g_{0}^{2}}{9\kappa\lambda^{2}}\leq \frac{n^{2}g_{0}^{2}}{6^{2}},
\end{equation}
that is, if $\kappa\geq 8\log(2)^{2}/\lambda^{2}$.  In terms of $c'$, this
means that $c'$ needs to be set to
$\max(3\log(1/\epse),24\log(2)^{2}/\lambda^{2})/(cg_0)$ for a
probability of at least $1-\lambda^{2}$ of having a net log-prob of $n
g_{0}/6$.

Better bounds on the probabilities above can be obtained by taking
into consideration the i.i.d.~assumptions. For example one can apply
Chernoff-Hoeffding bounds to improve the estimates.  Alternatively, we
can use the option to acquire trials until a desired net log-prob is
obtained with a conservative upper bound on the maximum number of
trials.

\subsection{Block-wise Spot-Checking}

For randomness expansion without having to decompress uniform bits for
generating biased distributions of $Z$, we suggest the following
spot-checking strategy consisting of blocks of trials.  Each block
nominally contains $2^{k}$ trials. One of them is chosen uniformly at
random as a test trial, for which we use the random variable $T_{s}$,
now valued between $0$ and $2^{k}-1$ with a uniform distribution.  The
$l$'th trial in a block uses $Z=z_{0}$ (distribution
$\delta_{z_{0}}$), unless $l=T_{s}+1$ in which case the distribution
of $Z$ is $\Unif_{Z}$. For analyzing these trials, observe that the
distribution of $Z$ at the $l$'th trial conditional on not having
encountered the test trial yet, that is, given that $\{l\leq T_{s}\}$,
is a mixture of $\delta_{z_{0}}$ and $\Unif_{Z}$, with the probability
of the latter being $1/(2^{k}-l+1)$. Thus we can process the trial
using an adaptive strategy, designing the PEF for the $l$'th trial
according to Thm.~\ref{thm:frbeta_gain}, where the distribution of $T$
has probability $r_{l}=1/(2^{k}-l+1)$ of $T=1$.  We call this the
\emph{ block-wise spot-checking} strategy.  Here we must choose a
single power for the PEFs independent of $l$.  Once the test trial is
encountered, the remaining trials in the block do not need to be
performed as we can, equivalently, set the PEFs to be trivial from
this point on. Of course, if there is verifiable randomness even when
the setting is fixed and known by the external entity and devices, we
can continue.  The average number of trials per block is therefore
$(2^k+1)/2$. Note that $r_{2^{k}-l+1}=1/l$ and the probability of
reaching the $2^{k}-l+1$'th trial is $l/2^{k}$. When we choose the
common power of the PEFs, the goal is to optimize the expected
log-prob per block in consideration of the way in which the settings
probabilities change as well as the anticipated trial statistics.

\section{Applications to Bell Tests}
\label{sec:apps}

\subsection{General Considerations}

All our applications are for the $(2,2,2)$ Bell-test configuration.
In this configuration, $Z=XY$ and $D=C=AB$ with $X,Y,A,B\in \{0,1\}$.
The distributions are free for $Z$. The set of conditional
distributions consists of those satisfying the non-signaling
constraints Eq.~\ref{eq:non-signaling_constraints}. We denote this set
by $\cN_{C|Z}$.  This includes distributions not achievable by
two-party quantum states, so additional quantum constraints can be
considered. There is a hierarchy of semidefinite programs to generate
such constraints due to Navascues, Pironio and
Ac\'{\i}n~\cite{navascues:2007} (the NPA hierarchy). Because
semidefinite programs define convex sets that are not polytopes and
have a continuum of extreme points, we cannot make direct use of the
NPA hierarchy.  In the absence of a semidefinite hierarchy for the PEF
constraints, we require the conditional distributions to be a polytope
defined by extreme points. In general, we can use the NPA hierarchy to
generate finitely many additional linear constraints to add to the
non-signaling constraints. The set of extreme points can then be
obtained by available linear programming tools and used for PEF
optimization. To keep the number of extreme points low, it helps to
add only constraints relevant to the actual distribution expected at a
trial. We do not have a systematic way to identify relevant
constraints.  Below, we consider additional quantum constraints based
on Tsirelson's bound~\cite{Tsirelson:1980}.  We note that one can
consider constructing PEFs via Thm.~\ref{thm:pef_uniformbnd} as
discussed after the theorem.  With this method one can take advantage
of the NPA hierarchy directly but is restricted to PEFs satisfying
Eq.~\ref{eq:probest_constr_max}.

The extreme points of $\cN_{C|Z}$ are given by the
deterministic LR distributions and variations of
Popescu-Rohrlich (PR) boxes~\cite{Popescu:1994}. The deterministic
LR distributions are parameterized by two functions $f_{A}:x\mapsto
f_{A}(x)\in\{0,1\}$ and $f_{B}:y\mapsto f_{B}(y)\in\{0,1\}$ defining
each station's settings-dependent outcome, giving distributions
\begin{equation}
  \nu_{f_{A},f_{B}}(ab|xy) = \knuth{f_{A}(x)=a}\knuth{f_{B}(y)=b}.
  \label{eq:lrf}
\end{equation}
This yields the $16$ LR extreme points. The PR boxes have
distributions $\nu_{g}$ defined by functions $g:xy\mapsto\{0,1\}$,
where $|\{xy|g(xy)=1\}|$ is odd, according to
\begin{equation}
  \nu_{g}(ab|xy) = \knuth{a=b\oplus g(xy)}/2,
  \label{eq:prf}
\end{equation}
with addition of values in $\{0,1\}$ modulo $2$. The function $g$
indicates for which settings pairs the stations' outcomes are perfectly
anti-correlated. There are $8$ extreme points in this class.

The simplest quantum constraint to add is Tsirelson's bound~\cite{Tsirelson:1980}, which can
be expressed in the form $\Exp(B(CZ))\leq (\sqrt{2}-1)/4$ for all
distributions compatible with quantum mechanics, with $B(CZ)$ given by
Eq.~\ref{eq:chshvariant} and fixing the settings distribution to be
uniform.  There are $8$ variants of Tsirelson's bound expressed as
\begin{equation}
  \Exp(B_{g}(CZ))\leq (\sqrt{2}-1)/4
  \label{eq:8tsir}
\end{equation}
corresponding to the $8$ PR boxes labelled by $g$, with $B_{g}(CZ)$ as
defined in the next paragraph.  Let $\cQ_{C|Z}$ be the set of
conditional distributions obtained by adding the eight conditional
forms of
Eq.~\ref{eq:8tsir} to the constraints.

\Pc{Justification of the version of Tsirelson's
  bound used here.

  Tsirelson's bound is usually expressed as
  \begin{equation}
    \bar c=\langle A_{0}B_{0}\rangle +\langle A_{0}B_{1}\rangle +\langle A_{1}B_{0}\rangle -\langle A_{1}B_{1}\rangle \leq 2{\sqrt {2}},\notag
  \end{equation}
  where $\langle A_{x}B_{y}\rangle = \Exp(4(A-1/2)(B-1/2)|xy)$ are the
  conventional correlations favored in traditional treatments of the
  CHSH inequality.  From Eq.~\ref{eq:chshvariant}, we have
  $\Exp(B(CZ))=\sum_{xy}\Exp(B(CZ)|xy)=
  -\sum_{xy}(-1)^{x\cdot y}\Exp(|A-B||xy)$ and
  \begin{equation}
    \Exp(|A-B||xy)\mu(xy)=\Exp\left(\vphantom{\big|}(1-4(A-1/2)(B-1/2))/2|xy\right)/4
    = (1-\langle A_{x}B_{y}\rangle)/8.\notag
  \end{equation}
  Summing the terms and identifying the resulting sum of the $\langle
  A_{x}B_{y}\rangle$ as $\bar c/8$, we get $\Exp(B(CZ))
  =(\bar c-2)/8\leq (\sqrt{2}-1)/4$ }

For any PR box $\nu_{g}$, let $p(g)=0$ or $p(g)=1$ according
to whether $|g^{-1}(1)|=1$ or $|g^{-1}(1)|=3$ and define
\begin{equation}
  B_{g}(xyab) =  -(-1)^{g(xy)+p(g)}\knuth{a\not=b\oplus p(g)}.
\end{equation}
This implies that the Bell function $B(CZ)$ in Eq.~\ref{eq:chshvariant}
can be expressed as $B(CZ)=B_{g:xy\mapsto x\times y}(CZ)$,
with $x\times y$ the numeric product of $x$ and $y$.  For any
distribution $\rho\in\cN_{C|Z}\rtimes\{\Unif_{Z}\}$, the expectation
of $\Exp_{\rho}(B_{g}(CZ))$ bounds the contribution of the PR box $\nu_{g}$
to $\rho$ in the sense that if $\rho = (1-p)\rho_{\LR}+p\nu_{g}$, with
$\rho_{\LR}$ being an LR distribution, then
$\Exp_{\rho}(B_{g}(CZ))\leq p/4$.  From Ref.~\cite{bierhorst:2016},
one determines that the bound is tight in the sense that if
$\Exp_{\rho}(B_{g}(CZ))=p/4$, then there exists an LR distribution
$\rho_{\LR}$ such that $\rho=(1-p)\rho_{\LR}+p\nu_{g}$. These LR
distributions are convex combinations of the eight deterministic LR
distributions for which $\Exp_{\nu_{f_{A},f_{B}}}(B_{g}(CZ))=0$.  We
observe that these are extremal, and all non-signaling distributions
can be expressed as convex combinations of extreme points involving at
most one PR box.  This makes it possible to explicitly determine the
extreme points of $\cQ_{C|Z}$.  They are the convex combinations
$(1-q)\nu_{f_{A},f_{B}}+q\nu_{g}$ with $q=\sqrt{2}-1$ that satisfy
Tsirelson's bound with equality on the one-dimensional line connecting
one of the above $8$ deterministic LR distributions to the corresponding
PR box.

We emphasize that probability estimation is not based on the
Bell-inequality framework. However, any PEF $F$ for a Bell-test
configuration has the property that $F-1$ is a Bell function and
therefore associated with a Bell inequality. This follows because the
conditional probabilities for the extremal LR distributions are either
$0$ or $1$. Therefore, regardless of the power $\beta$, the PEF
constraints from these extreme points are equivalent to the
constraints for Bell functions after subtracting $1$.  One can
construct PEFs from Bell functions by reversing this argument and
rescaling the resulting factor to satisfy the constraints induced by
non-LR extreme points. But the relationship between measures of
quality of Bell functions (such as violation signal-to-noise, winning
probability or statistical strength for rejecting LR) and those of
related PEFs is not clear.

\subsection{Applications and Methods}

We illustrate probability estimation with three examples. In the
first, we estimate the conditional probability distribution of the
outcomes observed in the recent loophole-free Bell test with entangled
atoms reported in Ref.~\cite{rosenfeld:qc2016a}. In the second, we
reanalyze the data used for extracting $256$ random bits uniform
within $0.001$ in Ref.~\cite{bierhorst:qc2017a}. Finally, we reanalyze
the data from the ion experiment of Ref.~\cite{pironio:2010}, which
was the first to demonstrate min-entropy estimation from a Bell test.
For the first example, we explore the log-prob rates and net
log-prob rates over a range of choices of parameters for PEF power,
error bound, unknown settings bias and test-trial probability. The
log-prob rates and net log-prob rates are determined directly
from the inferred probabilities of the outcomes, from which the expected net
log-prob can be inferred under the assumption that the trials are
i.i.d.  For the other two examples, we explicitly apply the
probability estimation schema and obtain the total log-prob for
different scenarios. In each scenario, the parameter choices are
made based on training data, following the protocol in
Ref.~\cite{bierhorst:qc2017a}. We show the certifiable min-entropy for
the reported data as a function of the error bound and compare to the
min-entropies reported in Refs.~\cite{bierhorst:qc2017a,pironio:2010}.

For determining the best PEFs, we can optimize the log-prob rate at
the estimated distribution $\nu$ given the power $\beta$. No matter
which of $\cN_{C|Z}$ or $\cQ_{C|Z}$ is considered, the free-for-$Z$
sets $\cC$ obtained accordingly have a finite number of extreme
points. We can therefore take advantage of the results in
Sect.~\ref{sec:first_pef_construction} to solve the optimization
problem effectively. Given the list of extreme points
$(\rho_{i})_{i=1}^{q}$ of the set $\cC$ and the estimated distribution
$\nu$ in $\cC$, the optimization problem can be stated as follows:
\begin{equation}
  \begin{array}[b]{lll}
    \textrm{Maximize:}& \sum_{cz}\log(F(cz))\nu(cz)&\\
    \textrm{Subject to:}& F(cz)\geq 0, & \forall cz \\
    & \sum_{cz}F(cz)\rho_{i}(c|z)^{\beta}\rho_{i}(cz) \leq 1, & \forall i.
  \end{array}\label{eq:opt_pef}
\end{equation}
Since the objective function is concave with respect to $F$ and
the constraints are linear, the problem can be solved by any algorithm
capable of optimizing non-linear functions with linear constraints on
the arguments. In our implementation, we use sequential quadratic
programming. Due to numerical imprecision, it is possible that the
returned numerical solution does not satisfy the second constraint in
Eq.~\ref{eq:opt_pef} and the corresponding PEF is not valid. In this
case, we can multiply the returned numerical solution by a
positive factor smaller than 1, whose value is given by the reciprocal
of the maximal left-hand side of the above second constraint at the
extreme points of $\cC$.  Then, the re-scaled solution is a valid PEF.

All examples involve inferring an experimental probability
distribution, in the first example from the published data, and in the
other two from an initial set of trials referred to as the \emph{training
set}. Due to finite statistics, the data's empirical frequencies are
unlikely to satisfy the constraints used for probability estimation.
It is necessary to ensure that the distribution for which log-prob
rates and corresponding PEFs are calculated satisfies the constraints,
because otherwise, the distribution is not consistent with the
physical model underlying the randomness generation process, and so
unrealistically high predicted log-prob rates may be obtained.  To
ensure self-consistency, before solving the optimization in
Eq.~\ref{eq:opt_pef}, we determine the closest constraints-satisfying
conditional distributions.  For this, we take as input the empirical
frequencies $f(cz)$ and find the most likely distribution $\nu$
of $CZ$ with $(\nu[C|z])_{z}$ in $\cQ_{C|Z}$ and $\nu(z)=f(z)$ for
each $z$. That is, we solve the following optimization problem:
\begin{equation}
  \begin{array}[b]{lll}
    \textrm{Maximize:} & \sum_{cz} f(cz)\log(\nu(c|z)/f(c|z)) & \\
    \textrm{Subject to:} & (\nu[C|z])_{z}\in \cQ_{C|Z}. & 
  \end{array}
  \label{eq:maximum_likelihood}
\end{equation}
The objective function is proportional to the logarithm of the ratio
of the likelihood of the observed results with respect to the
constrained distribution $\nu$ and that with respect to the
empirical frequencies $f$.  \Pc{The
  unadjusted likelihood ratio is given by:
  \begin{equation*}
    \sum_{cz} N_{cz}\log(\nu(cz)/f(cz))
    = \sum_{cz}N f(cz)\log(\nu(cz)/f(cz))
    = \sum_{cz}N f(cz)\log(\nu(c|z)/f(c|z)).
  \end{equation*}
  Here, $N$ is the number of observed trials, and $N_{cz}=f(cz)N$ is
  the number of trials with settings $z$ and outcomes $c$.  }  This
optimization problem involves maximizing a concave function over a
convex domain and can be solved by widely available tools.  Note that
in all cases, we ensure that the solution satisfies Tsirelson's
bounds.  Once $\nu(C|Z)$ has been determined, we use
$(\nu[C|z])_{z}\in\cC_{C|Z}$ for the predictions in the first example,
and for determining the best PEFs for the data to be analyzed in the
other two examples. In each case, we use the model for the
settings probability distribution relevant to the situation, not the
empirical one $f(z)=\nu(z)$.

An issue that comes up when using physical random number sources to
choose the settings is that these sources are almost but not quite
uniform. Thus, allowances for possible trial-by-trial biases in the
settings choices need to be made. We do this by modifying the
distribution constraints. Instead of having a fixed uniform settings
distribution, for any \emph{bias} $b$, we consider the set
\begin{equation}
  \cB_{V,b} = \{\nu : \textrm{$\nu$ is a distribution of $V$ and 
    for all $v$, $|\nu(v)-\Unif_{V}(v)|\leq b/|\Rng(V)|$} \}.
  \label{eq:cbvbdef}
\end{equation}  
For $\Rng(V)=\{0,1\}$, the extreme points of $\cB_{V,b}$ are
distributions $\nu$ of the form $\nu(v)=(1 \pm (-1)^{v}b)/2$.  For the
$(2,2,2)$ configuration, $Z=XY$ and we use
$\cC_{Z,b}=\cCvx(\cB_{X,b}\otimes\cB_{Y,b})$ for the settings
distribution. We then consider the free-for-$Z$ distributions given by
$\cN_{C|Z}\rtimes\cC_{Z,b}$ and $\cQ_{C|Z}\rtimes\cC_{Z,b}$.  Thus, we
allow any settings distribution that is in the convex span of
independent settings choices with limited bias. We do not consider the
polytope of arbitrarily biased joint settings distributions given by
$\cB_{Z,b}$ but note that $\cC_{Z,b}\subseteq \cB_{Z,2b+b^{2}}$ and
$\cB_{Z,b^{2}}\subseteq \cC_{Z,b}$. Another polytope that could be
used is $\cB'_{V,\epsilon}=\{\nu[V]:\nu(v)\geq\epsilon \textrm{ for
  all }v\}$.  We have $\cB_{Z,b}\subseteq\cB'_{Z,(1-b)/4}$.  It is
known that there are Bell inequalities that can be violated for
certain quantum states and settings-associated measurements when the
settings distribution is arbitrary in $\cB'_{Z,\epsilon}$ for all
$1/4\geq \epsilon>0$~\cite{puetz:qc2014b}.

\Pc{This is to show that $\cB_{Z,b^{2}}\subseteq\cC_{Z,b}$.

  The extreme points of $\cB_{Z,c}$ are the six probability
  distributions satisfying $\nu(xy)\in\{1/4\pm c/4\}$. The inversion
  maps $x\mapsto (1-x)$ and $y\mapsto (1-y)$ preserve the polytopes,
  as does the exchange map $xy\mapsto yx$.  So it suffices to consider
  the two extreme points with $\rho_{i}(xy)=1/4+c/4$ for
  $xy\in\{11,00\}$ (case $i=1$), or $xy\in\{11,10\}$ (case $i=2$). We
  show that they are in $\cC_{Z,b}$ for $c=b^{2}$.  The extreme points
  of $\cC_{Z,b}$ are given by
  $\nu_{uv}(xy)=(1+(-1)^{u+x}b)(1+(-1)^{v+y}b)/4$ with
  $u,v\in\{0,1\}$.  This has values $(1+2b+b^{2})/4$ for $xy=uv$,
  $(1-2b+b^{2})/4$ for $xy=(u-1)(v-1)$ and $(1-b^{2})/4$ otherwise.
  We can get $\rho_{1}$ at $c=b^{2}$ with $(\nu_{00}+\nu_{11})/2$.
  The extreme point $\rho_{2}$ at $c=b$ is obtained from
  $(\nu_{11}+\nu_{10})/2$.  }

\subsection{Exploration of Parameters for a Representative 
  Distribution of the Outcomes}

We switch to logarithms base $2$ for all log-probs and entropies in
the explorations to follow. Thus all explicitly given numerical values
for such quantities can be directly related to bits without
conversion. For clarity, we indicate the base as a subscript throughout.

For the first example, we use outcome table XII from the
supplementary material of Ref.~\cite{rosenfeld:qc2016a} to determine a
representative distribution $\rho_{\textrm{atoms}}$ for
state-of-the-art loophole-free Bell tests involving entangled atoms.
This experiment involved two neutral atoms in separate traps at a
distance of $398\,m$.  They were prepared in a heralded entangled
state before choosing uniformly random measurement settings and
determining the outcomes.  The experiment was loophole-free and showed
violation of LR at a significance level of $1.05\times
10^{-9}$. The outcome table shows outcome counts for two different
initial states labeled $\Psi^{+}$ and $\Psi^{-}$. We used the
counts for the state $\Psi^{-}$, which has better violation. We
determined the optimal conditional distribution satisfying $\cQ_{C|Z}$
as explained in the previous section, which gives the
settings-conditional distribution $\rho_{\textrm{atoms}}$
shown in Table~\ref{tab:rosenfeld_dist}.

\begin{table}
  \caption{Inferred settings-conditional distribution $\rho_{\textrm{atoms}}(ab|xy)$.
    Shown are the settings-conditional probabilities of the outcomes
    estimated from all trials with the prepared state $\Psi^{-}$ 
    reported in Ref.~\cite{rosenfeld:qc2016a}. This is 
    the $\cQ_{C|Z}$-satisfying distribution
    based on the outcomes from $27683$ heralded trials. 
    Each column is for one 
    combination of outcomes $ab$, and the rows correspond to the settings combination
    $xy$. Because the probabilities are settings-conditional,
    each row's probabilities add to $1$, except for rounding
    error. We give the numbers at numerical precision for traceability. They are
    used for determining PEFs, not to make a statement about the actual
    distribution of the outcomes in the experiment. Statistical error does not affect
    the validity of the PEFs obtained 
    (see the paragraph after Eq.~\ref{eq:opt_pef}).
  }
  \label{tab:rosenfeld_dist}
  \begin{equation}
    \begin{array}{|ll||l|l|l|l|}
      \hline
      &ab&00&10&01&11\\
      xy&&&&&\\
      \hline\hline
      00&&
      0.114583230563265&   0.408949785618886&   0.369310344143205&   0.107156639674644
      \\
      10&&
      0.399140705802719&   0.124392310379432&   0.111262723543957&  0.365204260273892
      \\
      01&&
      0.102208313465938&   0.403210760398438&   0.381685261240533&   0.112895664895092
      \\
      11&&
      0.127756153189431&   0.377662920674945&   0.382647276157245&  0.111933649978380
      \\
      \hline
    \end{array}\notag
  \end{equation}
\end{table}

We begin by exploring the relationship between power,
$\textrm{log}_2$-prob rate and net $\textrm{log}_2$-prob
rate. Fig.~\ref{fig:rosenfeld_lpg(beta)} shows the optimal
$\textrm{log}_2$-prob rates and net $\textrm{log}_2$-prob rates as a
function of power $\beta$ for a few choices of error bound $\epse$. As
expected, the $\textrm{log}_2$-prob rates decrease as the power $\beta$ increases. The
asymptotic gain rates can be inferred from the values approached as
$\beta$ goes to zero. The inferred asymptotic gain rates are
approximately $0.088$ and $0.191$ for $\cN_{C|Z}$ and $\cQ_{C|Z}$,
respectively.  For small $\beta$, constraints $\cQ_{C|Z}$ improve the
$\textrm{log}_2$-prob rates by around a factor of $2$ over
$\cN_{C|Z}$. The net $\textrm{log}_2$-prob rates show the trade-off
between error bound and $\textrm{log}_2$-prob rate, with clear maxima
at $\beta>0$.

\begin{figure}
  \begin{center}
    \includegraphics[scale=0.8,viewport=5.5cm 8cm 17cm 20.5cm]{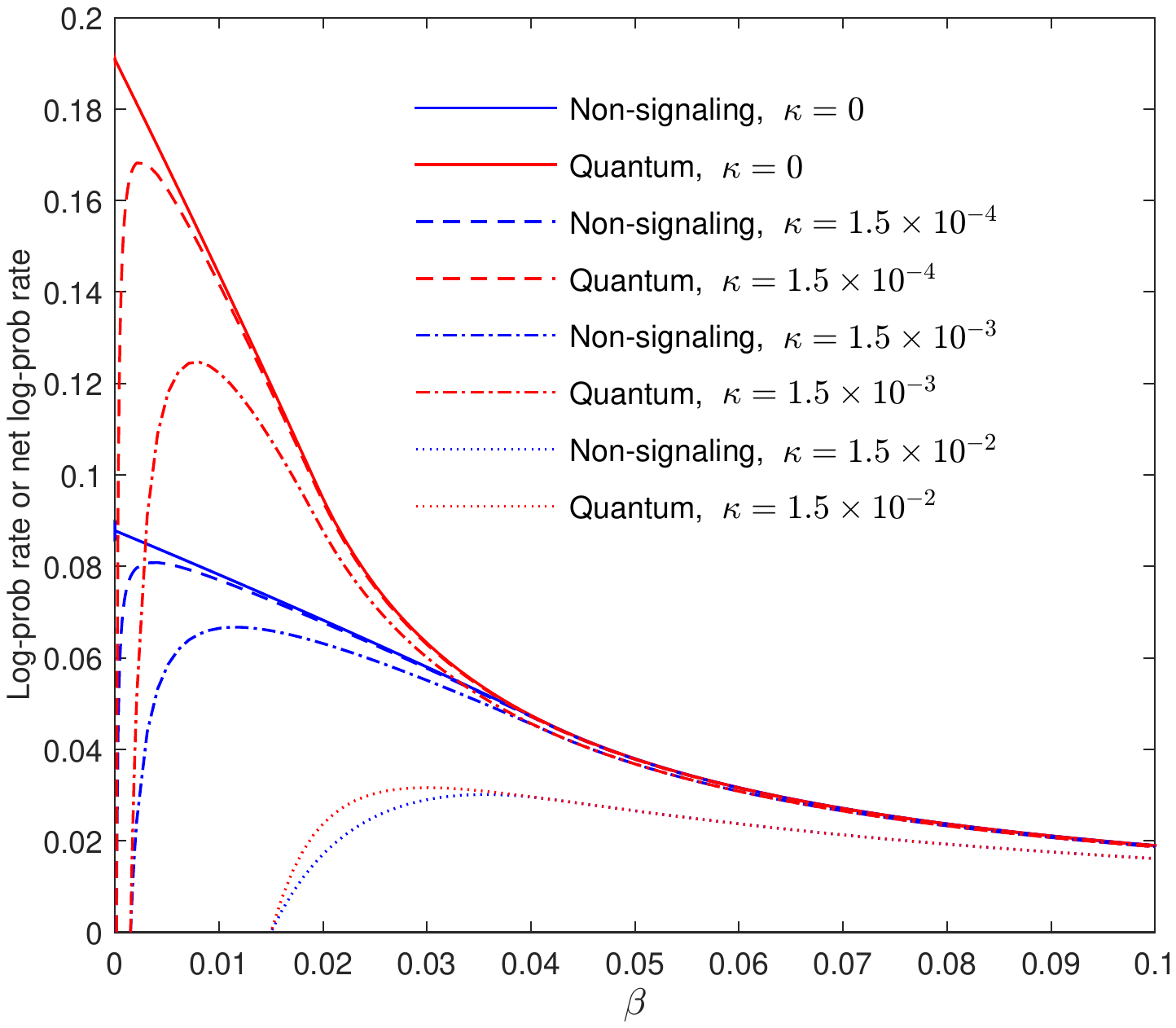}
  \end{center}
  \caption{$\textrm{Log}_2$-prob rates (labelled with $\kappa=0$) at 
    $\rho_{\textrm{atoms}}$ for different
    powers $\beta$. The rates are shown for non-signaling $\cN_{C|Z}$
    and quantum $\cQ_{C|Z}$ (Tsirelson's bounds only) constraints.
    Also shown are the net $\textrm{log}_2$-prob rates according
    to Eq.~\ref{eq:netlogprobdef} for error bounds
    $\epse =2^{-\kappa\sigma n}$, where $\sigma$ is the $\textrm{log}_2$-prob
    rate and for three positive values of $\kappa$. 
    Note that the constant $\kappa$ can
    be interpreted as a $\textrm{log}_2$-error-bound rate with respect to the 
    nominally available entropy for randomness generation.}
  \label{fig:rosenfeld_lpg(beta)}
\end{figure}

Another view of the trade-off is obtained from the expected net
$\textrm{log}_2$-prob obtained after $27683$ trials at different final
error bounds, where we optimize the expected $\textrm{log}_2$-prob
with respect to power $\beta$.  This is shown in
Fig.~\ref{fig:rosenfeld_lp(eps)}. The expected net $\textrm{log}_{2}$-prob is
essentially the amount of min-entropy that we expect to certify by
probability estimation in measurement outcomes from the $27683$ trials
in the experiment of Ref.~\cite{rosenfeld:qc2016a}. The number of
near-uniform bits that could have been extracted from this
experiment's data is substantially larger than that shown in
Fig.~\ref{fig:Maryland_logprob} for another Bell test using two ions.

\begin{figure}
  \begin{center}
    \includegraphics[scale=0.8,viewport=5.5cm 8cm 17cm 20.5cm]{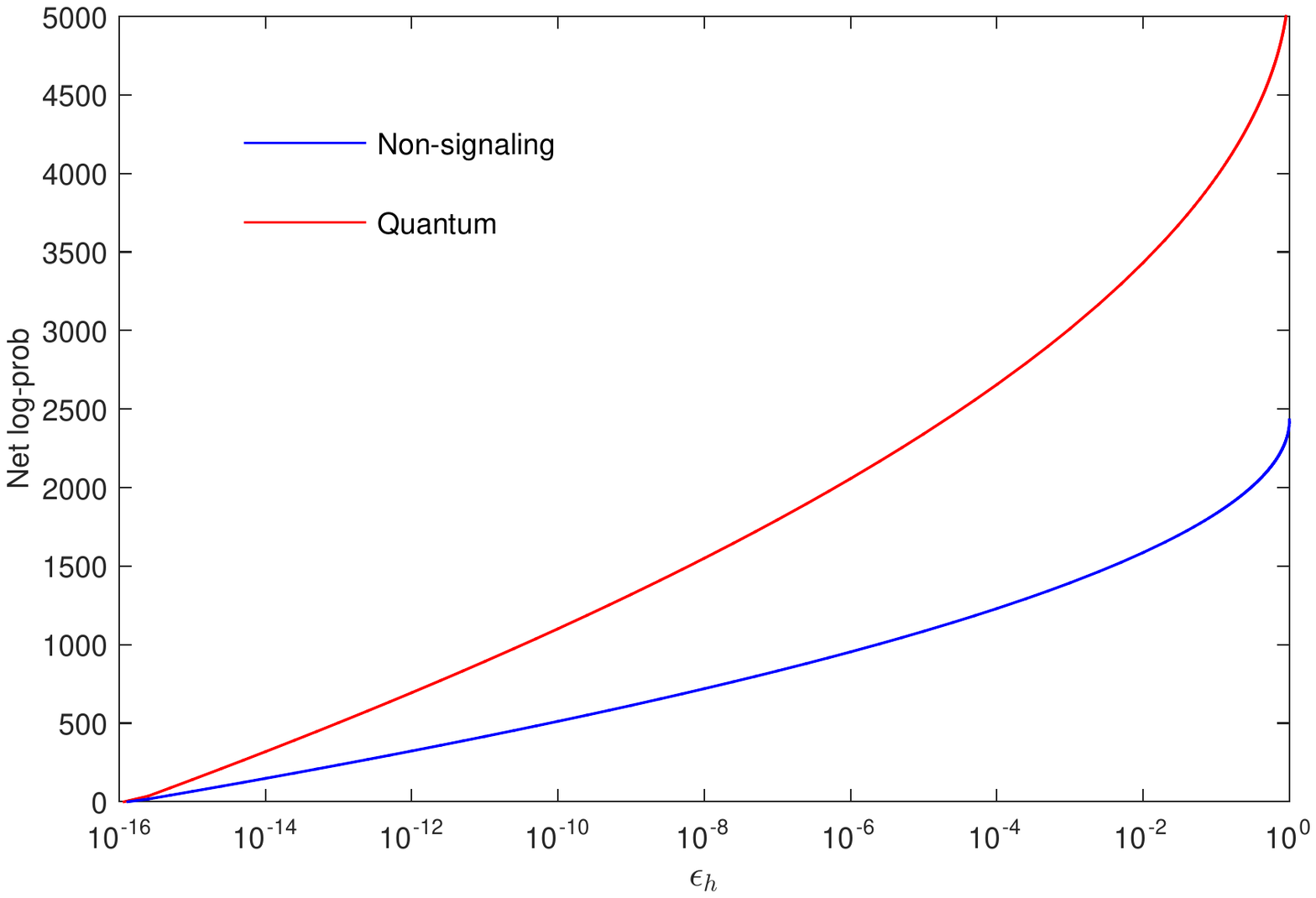}
  \end{center}
  \caption{Expected net $\textrm{log}_2$-probs for $27683$
    i.i.d.~trials at $\rho_{\textrm{atoms}}$ as a function of error
    bound. The number of trials is as reported in
    Ref.~\cite{rosenfeld:qc2016a} for the state
    $|\Psi^{-}\rangle$. }
  \label{fig:rosenfeld_lp(eps)}
\end{figure}

We determined the robustness of the asymptotic gain rates and the
$\textrm{log}_2$-prob rates against biases in the settings random
variables $X$ and $Y$, with the settings distribution constrained by
$\cC_{Z,b}$ as explained in the previous section.  The bias is bounded
but may vary within these bounds from trial to trial.
Fig.~\ref{fig:rosenfeld_agr(bias)} shows how the asymptotic gain rate
depends on bias. While the maximum bias tolerated in this case is
less than $0.05$, based on the results of Ref.~\cite{puetz:qc2014b},
we expect that there are quantum states and measurement settings for
which all biases strictly below $1$ can be tolerated in the ideal
case. See also Ref.~\cite{kessler:qc2017a}.
Fig.~\ref{fig:rosenfeld_lp(beta,bias)} shows the dependence of the
$\textrm{log}_2$-prob rate on power $\beta$ at three representative
biases.

\begin{figure}
  \begin{center}
    \includegraphics[scale=0.8,viewport=5.5cm 9cm 17cm 20.5cm]{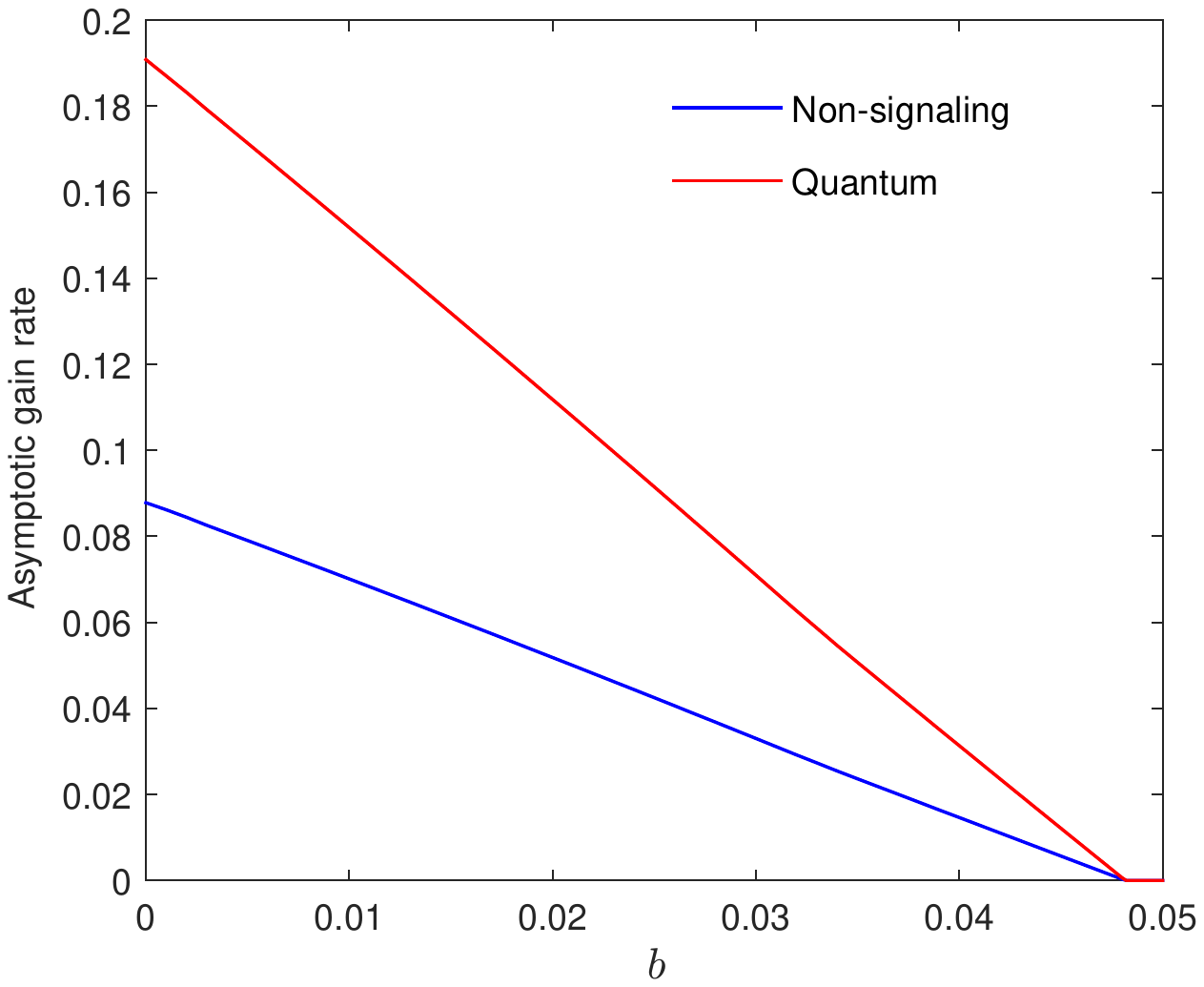}
  \end{center}
  \caption{Asymptotic gain rate (base $2$) as a function of bias at
    $\rho_{\textrm{atoms}}$.  This is the $\textrm{log}_2$-prob rate
    achieved for power $\beta\rightarrow 0_{+}$.  The probability
    distribution of the settings $Z=XY$ at each trial is in the convex
    closure of the distributions where $X$ and $Y$ are independent and
    their probabilities lie in the interval $[(1-b)/2,(1+b)/2]$. Note
    that except for these convex constraints, the probabilities can
    vary arbitrarily from trial to trial.}   
  \label{fig:rosenfeld_agr(bias)}
\end{figure}

\begin{figure}
  \begin{center}
    \includegraphics[scale=0.8,viewport=5.5cm 8cm 17cm 20.5cm]{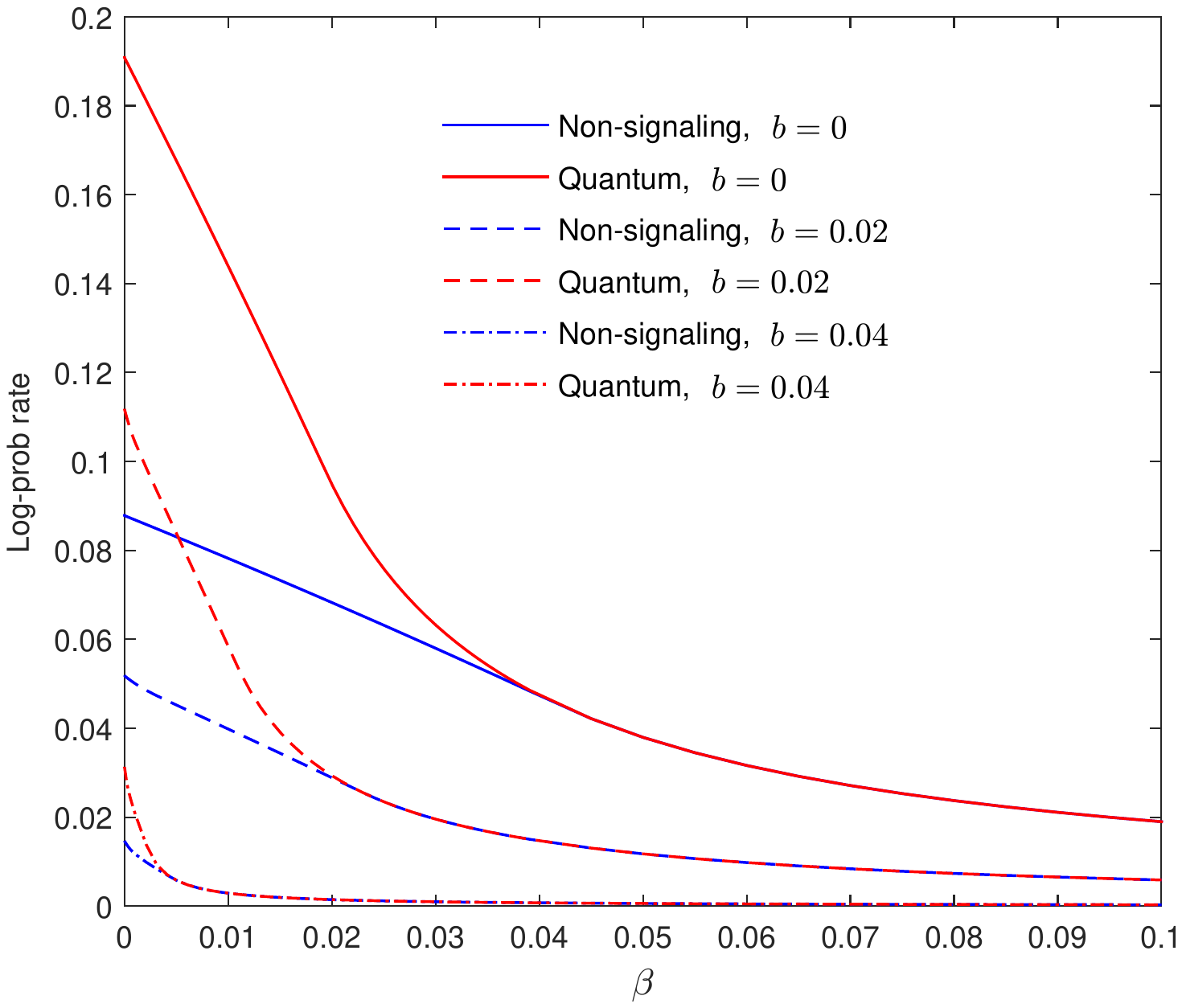}
  \end{center}
  \caption{$\textrm{Log}_2$-prob rates for representative biases at
    $\rho_{\textrm{atoms}}$. Here we show the $\textrm{log}_2$-prob rate
    as a function of power $\beta$ at three representative biases.
    See also Fig.~\ref{fig:rosenfeld_agr(bias)}.}
  \label{fig:rosenfeld_lp(beta,bias)}
\end{figure}

Finally, we consider the challenge of producing more random bits than
are consumed. This requires using a strategy to minimize the entropy
used for settings.  We assume i.i.d.~trials, with settings distributed
according to
\begin{equation}
  \nu_{r}(xy)=(1-r)\knuth{xy=11}+r/4,
  \label{eq:spot_check_prob}
\end{equation}
which uses uniform settings distribution with probability $r$ and 
a default setting $11$ with probability $1-r$.  Let
$S(r)=-(3r/4)\log_{2}(r/4)-(1-3r/4)\log_{2}(1-3r/4)$ be the base-$2$
entropy of $\nu_{r}$, $\sigma(\beta)$ the $\textrm{log}_2$-prob rate for a
given PEF with power $\beta$ at $\rho_{\textrm{atoms}}$, and
$\epse =2^{-\kappa}$ the desired error bound parameterized in
terms of the $\kappa$ and independent of $n$.  If we optimistically
assume that all the available min-entropy is extractable, then the
expected net number of random bits after $n$ trials is given by
\begin{equation}
  \sigma_{\textrm{net}}(n,\kappa,\beta,r)=n\sigma(\beta)-\kappa/\beta-n S(r).
\end{equation}
We refer to $\sigma_{\textrm{net}}$ as the expected \emph{net entropy}.  At given
values of $n$ and $\kappa$, we can maximize the expected net entropy over
$\beta>0$ and $r>0$.  Randomness expansion requires that
$\sigma_{\textrm{net}}\geq 0$.  Here we do not account for the seed
requirements or the extractor constraints.
Fig.~\ref{fig:rosenfeld_sigmanet} shows the expected net entropy as a function
of $n$ at $\rho_{\textrm{atoms}}$.  Of interest here
are the \emph{break-even} points, which are where the expected net entropy becomes
positive.  For this, consider the break-even equation
$n\sigma(\beta)-\kappa/\beta-n S(r)=0$. Equivalently,
$\kappa=\beta(\sigma(\beta)-S(r))n$, and to minimize the number of trials $n$
required for break-even, it suffices to maximize $\beta(\sigma(\beta)-S(r))$,
independently of $\kappa$.  (This independence motivates our choice of
parametrization of $\epse$.)  Given the maximizing values of $\beta$
and $r$, the critical value for $n$ at the break-even point is
expressed as
\begin{equation}
  n_c=\frac{\kappa}{\beta(\sigma(\beta)-S(r))}.
\end{equation}
One can see that $n_c$ scales linearly with $\kappa$.  Following this
strategy, we obtained the values of the parameters at the
break-even points.  They are shown in Table~\ref{tab:rosenfeld_sigmanet}.

\begin{figure}
  \begin{center}
    \includegraphics[scale=0.8,viewport=5.5cm 8cm 17cm 20.5cm]{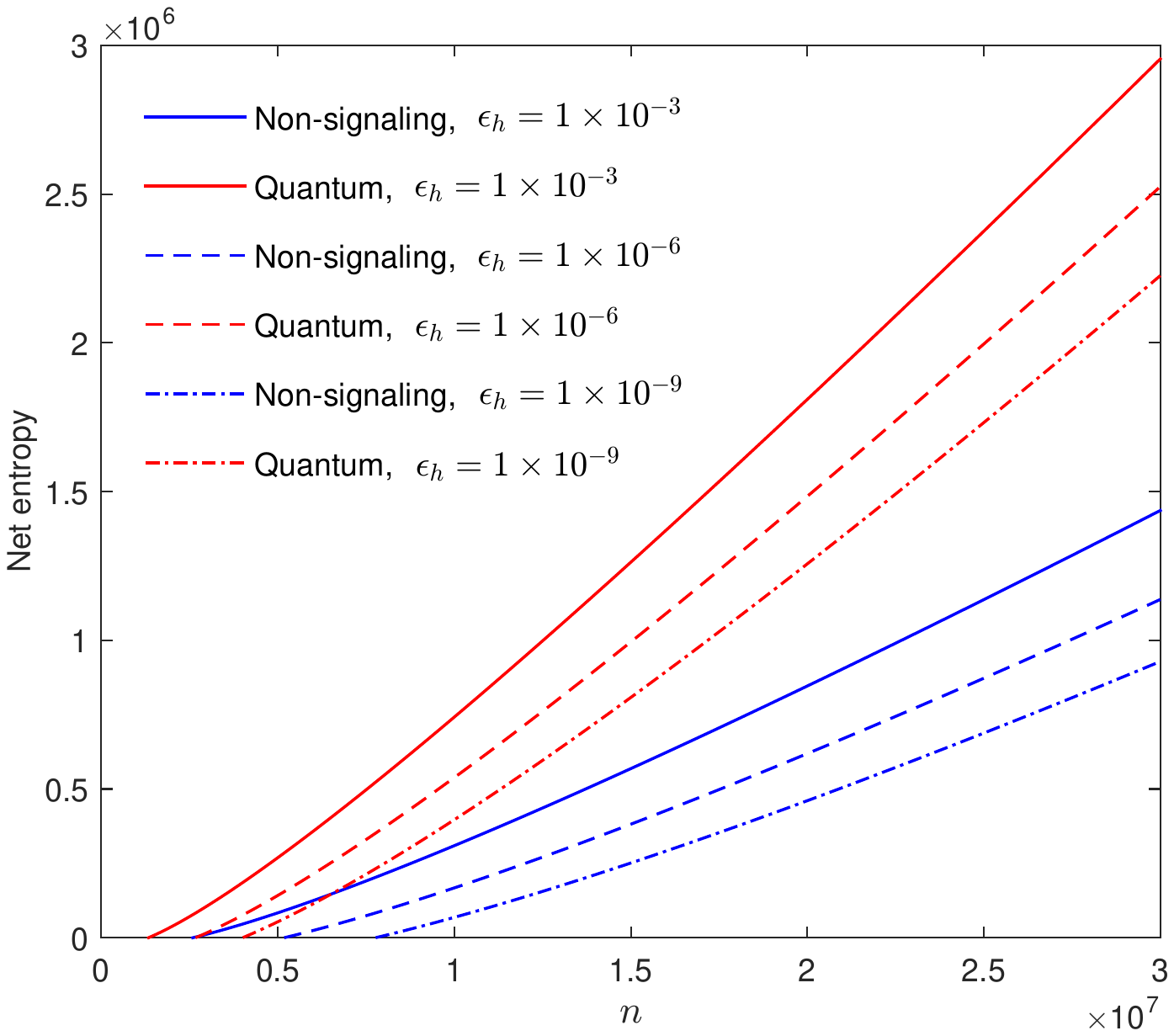}
  \end{center}
  \caption{Expected net entropy (in bits) at $\rho_{\textrm{atoms}}$. We
    optimized the power $\beta$ and the probability $r$ according to
    Eq.~\eqref{eq:spot_check_prob} given the number of trials $n$ and
    the error bound $\epse $ as explained in the text.  The break-even
    points are where the curves cross zero.  The parameters for those
    points are in Table~\ref{tab:rosenfeld_sigmanet}.}
  \label{fig:rosenfeld_sigmanet}
\end{figure}

\begin{table}
  \caption{Parameters for the break-even points for randomness expansion
    with $\rho_{\textrm{atom}}$.
    These are the parameters for the points where the curves in
    Fig.~\ref{fig:rosenfeld_sigmanet} cross zero.
  } 
  \label{tab:rosenfeld_sigmanet}
  \begin{equation}
    \begin{array}{|l|l|l|l|l|l|l|}
      \hline
      &\epse&n_c&\beta&r&\sigma\\
      \hline\hline
      \cN_{C|Z}&
      1\times 10^{-3}&2588821&&&\\
      &1\times 10^{-6}&5177642&1.3624\times 10^{-4}&3.7451\times 10^{-3}&0.060561\\
      &1\times 10^{-9}&7766462&&&\\
      \hline
      \cQ_{C|Z}&
      1\times 10^{-3}&1333781&&&\\
      &1\times 10^{-6}&2667562&1.4945\times 10^{-4}&7.3516\times 10^{-3}&0.108035\\
      &1\times 10^{-9}&4001343&&&\\

      \hline
    \end{array}
    \notag
  \end{equation}  
\end{table}

\subsection{Randomness from an Optical Loophole-Free Bell Test}

Ref.~\cite{bierhorst:qc2017a} describes the generation of $256$ bits
of randomness within $0.001$ of uniform from the optical loophole-free
Bell test of Ref.~\cite{shalm:2015}, assuming only non-signaling
constraints for certification. The data set used is labelled XOR 3
(pulses 3-9) and exhibits a $p$-value against LR of $2.03\times
10^{-7}$ according to the analysis in Ref.~\cite{shalm:2015}. It
consists of about $1.82\times 10^{8}$ trials with a low violation of
LR per trial. The protocol of Ref.~\cite{bierhorst:qc2017a} was
developed specifically to obtain randomness from Bell-test data with
little violation per trial. It implicitly involves probability
estimation. Table~\ref{tab:krister_dist} shows the closest 
distribution in $\cQ_{C|Z}$ inferred from the training set consisting
of the first $50\times 10^{6}$ trials of XOR 3.

\begin{table}
  \caption{Settings-conditional distribution inferred for the XOR 3 training data. The training set consists of the first $50\times 10^{6}$ trials.
    The table is organized as described in the caption of Table~\ref{tab:rosenfeld_dist}.}
  \label{tab:krister_dist}
  \begin{equation}
    \begin{array}{|ll||l|l|l|l|}
      \hline
      &ab&00&10&01&11\\
      xy&&&&&\\
      \hline\hline
      00&&0.999596756631154 &  0.000106695746779 &  0.000100495174505 &  0.000196052447562
      \\
      10&& 0.999039892488787 &  0.000663559889146 &  0.000086780398739 &  0.000209767223328
      \\
      01&&0.998967962694884  &  0.000089505202208 &  0.000729289110776 &  0.000213242992132
      \\
      11&& 0.998187653168081 &  0.000869814729010 &  0.000939019719445 &  0.000003512383464
      \\
      \hline
    \end{array}\notag
  \end{equation}
\end{table}

The analysis in Ref.~\cite{bierhorst:qc2017a} left open the question
of how much min-entropy one can certify in XOR 3 given bias in the
settings choices. As reported in Ref.~\cite{shalm:2015}, small biases
were present in the experiment, due to fluctuations in temperature
that affected the physical random sources used and issues with the
high-speed synchronization electronics. A plot of the net
$\textrm{log}_2$-prob achieved as a function of error bound at a few
representative biases is shown in Fig.~\ref{fig:peters_logprob}.  We
conclude that randomness certification from the XOR 3 data set is
robust against biases substantially larger than the average ones
inferred in Ref.~\cite{shalm:2015}. The robustness is similar to that
reported in Ref.~\cite{shalm:2015} for the $p$-values against LR.  To
get the results in Fig.~\ref{fig:peters_logprob}, we determined the
optimal PEFs and powers according to the conditional distribution in
Table~\ref{tab:krister_dist}. Then we applied the PEFs obtained to the
remaining trials (the \emph{analysis} trials).  Here, we did not adapt
the PEFs while processing the analysis trials.

We also determined the asymptotic gain rates from the conditional
distribution in Table~\ref{tab:krister_dist}. The dependence of the
asymptotic gain rate on the bias parameter $b$ is shown in
Fig.~\ref{fig:gain-bias-xor3}.  Notably, the robustness against bias is
similar to that for $\rho_{\textrm{atoms}}$ shown in
Fig.~\ref{fig:rosenfeld_agr(bias)}. This may be due to both
experiments being designed for the same family of Bell inequalities.
For higher Bell-test bias tolerance,
different inequalities and measurement settings need to be
used~\cite{puetz:qc2014b}.

\begin{figure}
  \begin{center}
    \includegraphics[scale=0.8,viewport=5.5cm 8cm 17cm 20.5cm]{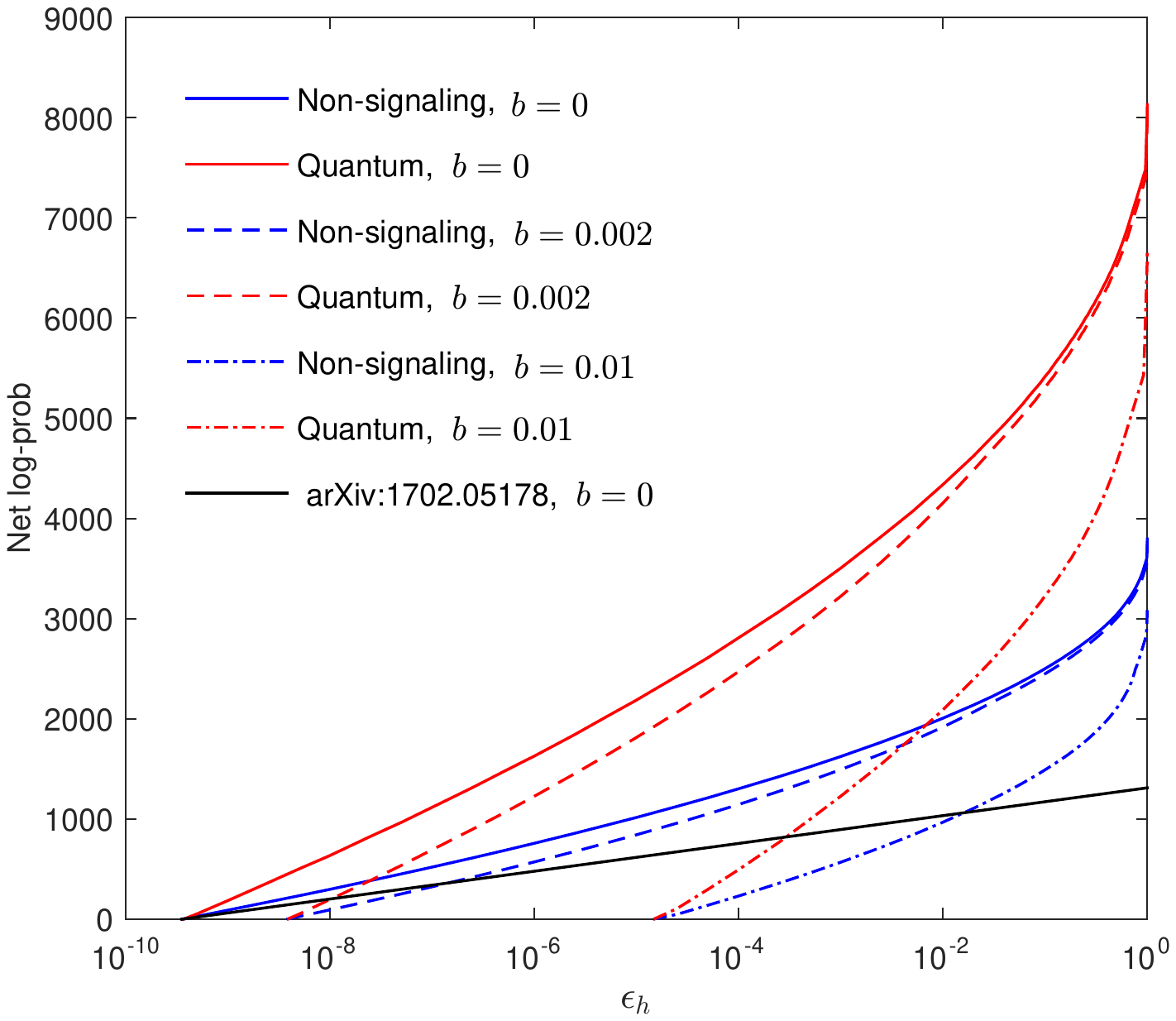}    
  \end{center}
  \caption{Net $\textrm{log}_2$-probs achieved in the XOR 3 data set
    from Ref.~\cite{shalm:2015}. This is the net $\textrm{log}_2$-prob
    achieved by probability estimation applied to the analysis set
    after determining the best PEF and power according to the
    conditional distribution inferred from the training set.  The
    error bound is an input parameter. The curves show the achieved
    net $\textrm{log}_2$-probs for non-signaling ($\cN_{C|Z}$) and quantum
    ($\cQ_{C|Z}$) constraints and three representative biases.
    Here, we performed probability estimation according to
    Eq.~\ref{eq:upe_max}. The curve that is lowest on the right is
    the net $\textrm{log}_2$-prob reported in
    Ref.~\cite{bierhorst:qc2017a}, where only non-signaling
    constraints were exploited. }
  \label{fig:peters_logprob}
\end{figure}

\begin{figure}
  \begin{center}
    \includegraphics[scale=0.8,viewport=5.5cm 8cm 17cm 20.5cm]{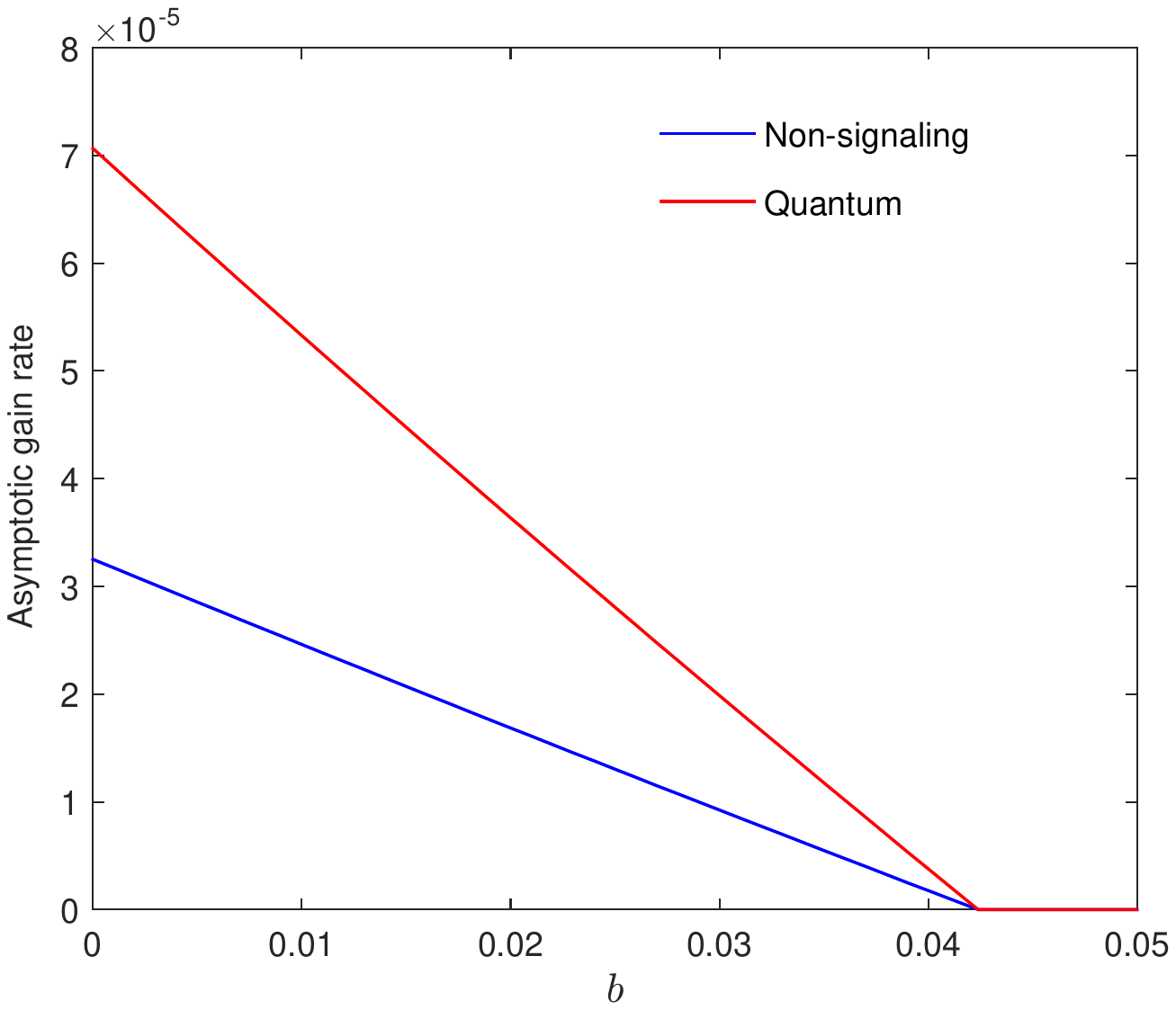}    
  \end{center}
  \caption{Asymptotic gain rate (base $2$) as a function of bias estimated from
    the training data in the XOR 3 data set in Ref.~\cite{shalm:2015}
    for non-signaling and quantum constraints. }
  \label{fig:gain-bias-xor3}
\end{figure}

\subsection{Reanalysis of ``Random Numbers Certified by Bell's Theorem~\cite{pironio:2010}''}

Finally, we applied the procedure that we used in reanalyzing the XOR
3 data set in the previous section to the data from the experiment
reported in Ref.~\cite{pironio:2010}. This experiment involved two
ions separated by about $1$ meter in two different ion traps. As a
Bell test, the experiment closed the detection loophole, 
but was subject to the locality loophole. The authors certified a
min-entropy of $42$ bits with respect to classical side information at an error
bound of $0.01$ with $3016$ trials. They assumed quantum constraints 
(implemented by the NPA hierarchy) for this certification. For the 
reanalysis, we set aside a training
set consisting of the first $1000$ trials to estimate
a constraints-satisfying conditional distribution. The distribution
obtained is given in Table~\ref{tab:pironio_dist}.  We optimized PEFs
and their powers as described in the previous section and applied them
to the remaining 2016 trials, the analysis set.  The result is shown in
Fig.~\ref{fig:Maryland_logprob}.  We expect that if it had been
possible to optimize the PEFs based on the calibration experiments
preceding the $3016$ trials, then the $\textrm{log}_2$-probs obtained would have
been approximately $50\,\%$ larger, assuming that the trial statistics
were sufficiently stable.

\begin{table}
  \caption{Settings-conditional distribution inferred for
    the experiment reported in Ref.~\cite{pironio:2010}. 
    The distribution is determined from the training set consisting
    of the first  $1000$ trials.
    The table is organized as described in the caption 
    of Table~\ref{tab:rosenfeld_dist}.}
  \label{tab:pironio_dist}
  \begin{equation}
    \begin{array}{|ll||l|l|l|l|}
      \hline
      &ab&00&10&01&11\\
      xy&&&&&\\
      \hline\hline
      00&&  0.395306466091468 & 0.117610486235535   &  0.093816274721118  &  0.393266772951878
      \\
      10&&  0.385009648861242 &  0.101263041214040  &  0.104113091951344  &  0.409614217973373
      \\
      01&&  0.411397337408393 &  0.101519614918611  &  0.097960367844259  &  0.389122679828738
      \\
      11&&  0.077378395334667 &  0.408894294740616  &  0.431979309917985  &  0.081748000006733
      \\
      \hline
    \end{array}\notag
  \end{equation}
\end{table}

\begin{figure}
  \begin{center}
    \includegraphics[scale=0.8,viewport=5.5cm 8cm 17cm 20.5cm]{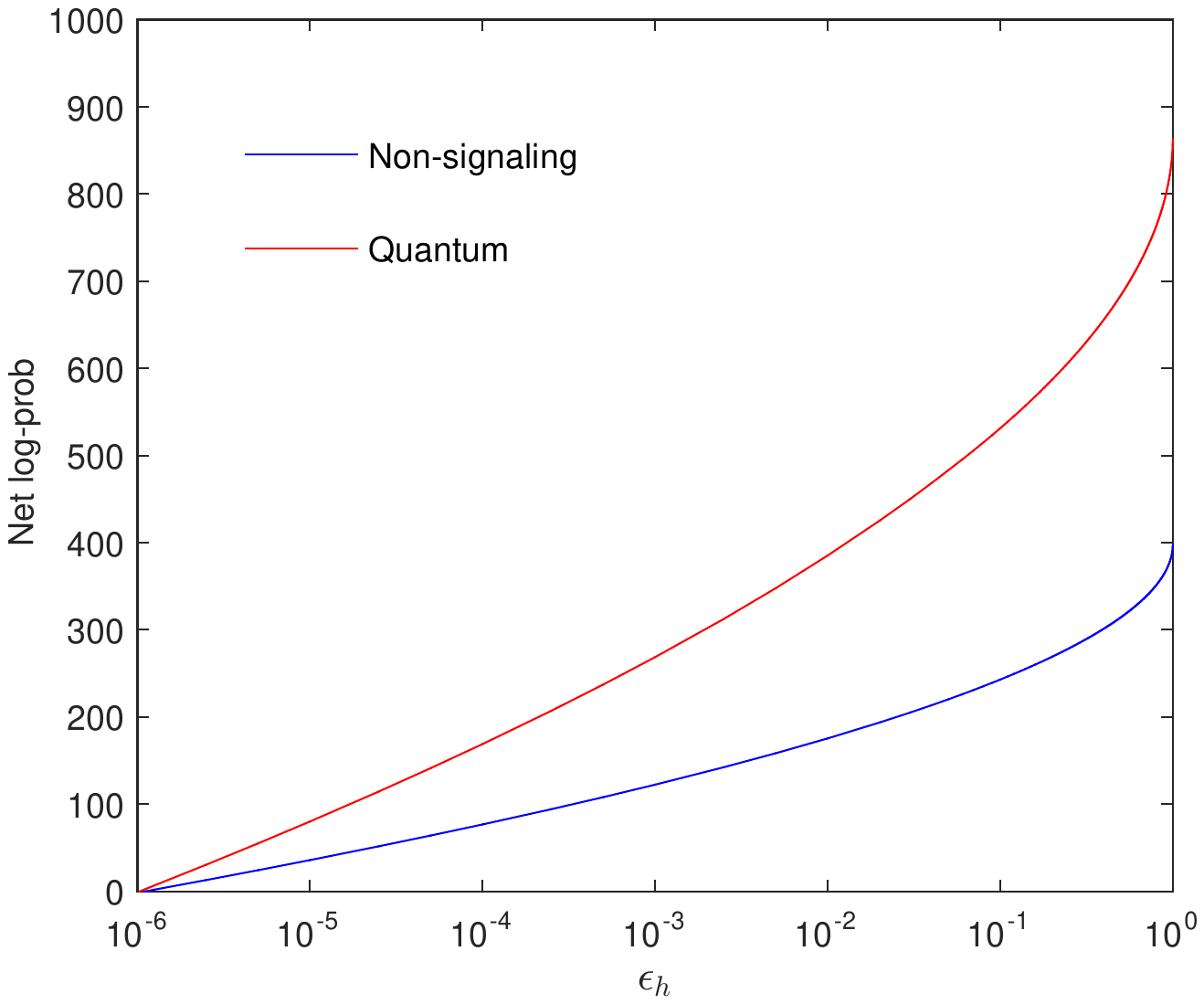}    
  \end{center}
  \caption{Net $\textrm{log}_2$-probs achieved in the data from
    Ref.~\cite{pironio:2010}. The analysis set consisted of the last
    $2016$ trials. Results for non-signaling and quantum constraints
    are shown. For these curves, we performed probability
    estimation according to Eq.~\ref{eq:upe_max}. The reference
    reported $42$ bits at $\epse =10^{-2}$ for quantum constraints.}
  \label{fig:Maryland_logprob}
\end{figure}

\begin{acknowledgments}
  We thank D. N. Matsukevich for providing the experimental data for
  Ref.~\cite{pironio:2010}, Yi-Kai Liu and Carl Miller for help with
  reviewing this paper, and Scott Glancy for discussions. This work
  includes contributions of the National Institute of Standards and
  Technology, which are not subject to U.S. copyright.
\end{acknowledgments}

\bibliography{xbytm}

{\small
  \begin{description}[\compact]
  \item[]\textbf{arXiv revision notes:}
  \item[]
    \begin{description}[\compact]
    \item[V1.] Original submission.
    \item[V2.] First revision.
    \item[]
      \begin{description}[\compact]
      \item[1.] Added missing lower bound on $p\delta$ in the
        statement of Thm.~\ref{thm:prot_chainlemmas}.  Corrected
        missing factor of $\delta$ in the proof of this theorem.
      \item[2.] Added comments in
        Sect.~\ref{sec:minentropy_extraction} to clarify that we do
        not need the extractor to be classical-proof or quantum-proof,
        but the TMPS extractor has these additional properties. We
        could have, but did not roll back the changes to the
        extractor constraints required for these properties.
      \item[3.]  Corrected the inclusion of sets of
        settings-distribution after Eq.~\ref{eq:cbvbdef} to
        $\cC_{Z,b}\subseteq \cB_{Z,2b+b^{2}}$.
      \item[4.] Changed the statement and proof of
        Thm.~\ref{thm:gainrate_optimality}, which implies optimality
        of the asymptotic gain rates. The proof's bounding hyperplane
        argument was corrected.
      \item[5.] Added comparison of the performance of exponential expansion
        to the literature after the statement of Thm.~\ref{thm:expexp}.
      \item[6.] Added a reference to Tsirelson's bound in
        Sect.~\ref{sec:apps}
      \item[7.] Changed the power $\beta$ to be positive.
      \item[8.] Added a summary section~\ref{s:somr} to highlight the main results.
      \item[9.] Editorial corrections throughout.
      \end{description}
    \end{description}
  \end{description}
}

\end{document}